\documentclass[11pt]{article}

\usepackage[margin=1in]{geometry}
\usepackage{amsmath,amssymb,amsthm,amsxtra}
\usepackage{mathrsfs, mathtools}
\usepackage{bbm}
\usepackage{bm}
\usepackage[shortlabels]{enumitem}
\usepackage{microtype}
\usepackage[T1]{fontenc}
\usepackage{xcolor}
\usepackage{comment}
\usepackage{algorithm}
\usepackage{algpseudocode}
\usepackage{float}
\usepackage{hyperref}

\numberwithin{equation}{section}

\newtheorem{theorem}{Theorem}[section]
\newtheorem*{theorem*}{Theorem}
\newtheorem{definition}[theorem]{Definition}
\newtheorem{lemma}[theorem]{Lemma}
\newtheorem{corollary}[theorem]{Corollary}

\newtheorem{remark}[theorem]{Remark}
\newtheorem{proposition}[theorem]{Proposition}

\newtheorem{fact}[theorem]{Fact}

\newtheorem{problem}[theorem]{Problem}

\newcommand{\cE}{\mathcal{E}}

\newcommand{\cG}{\mathcal{G}}

\newcommand{\cI}{\mathcal{I}}

\newcommand{\cN}{\mathcal{N}}

\newcommand{\cT}{\mathcal{T}}

\newcommand{\CC}{\mathbb{C}}

\newcommand{\EE}{\mathbb{E}}

\newcommand{\RR}{\mathbb{R}}
\newcommand{\E}{\operatorname{\mathbbm{E}}}

\newcommand{\C}{\mathbb{C}}

\newcommand{\1}{\mathbf{1}}

\newcommand{\set}[1]{\left\{#1\right\}}
\newcommand{\BigO}{\mathcal{O}}

\newcommand{\poly}{\mathrm{poly}}

\makeatletter
\DeclareRobustCommand\Equiv{\mathrel{%
  \mathchoice
    {\Equiv@\textfont\displaystyle{.45}}
    {\Equiv@\textfont\textstyle{.45}}
    {\Equiv@\scriptfont\scriptstyle{.5}}
    {\Equiv@\scriptscriptfont\scriptscriptstyle{.55}}
}}
\newcommand{\Equiv@}[3]{%
  \rlap{\raisebox{#3\fontdimen5#12}{$\m@th#2 = $}}%
  \raisebox{-#3\fontdimen5#12}{$\m@th#2 = $}%
}
\makeatother

\newcommand{\Var}{\operatorname{\mathsf{Var}}}

\newcommand{\bitm}{\begin{itemize}[leftmargin=*]}
\newcommand{\eitm}{\end{itemize}}

\newcommand{\benm}{\begin{enumerate}[leftmargin=*]}
\newcommand{\eenm}{\end{enumerate}}

\definecolor{forestgreen}{rgb}{0.13, 0.55, 0.13}

\newcommand{\abs}[1]{\left|#1\right|}

\newcommand{\inparen}[1]{\left(#1\right)}             
\newcommand{\inbraces}[1]{\left\{#1\right\}}           

\newcommand{\arcsinh}{\operatorname{arcsinh}}

\title{Computational Thresholds for Balanced and Fixed-Slice Independent Sets in Bipartite Graphs}
\author{Ijay Narang\thanks{School of Computer Science, Georgia Institute of Technology, inarang3@gatech.edu}
 \and
 Will Perkins\thanks{School of Computer Science, Georgia Institute of Technology, wperkins3@gatech.edu}
\and
Yuzhou Wang\thanks{School of Mathematics, Georgia Institute of Technology, ywang3694@gatech.edu}
\and
Timothy L.~H.~Wee\thanks{School of Mathematics, Georgia Institute of Technology, timothy.wee@gatech.edu}}
\date{\today}

\begin{document}
\maketitle

\begin{abstract}
Motivated by recent work of Kocurek, Oveis Gharan, and Tjowasi
\cite{kocurek2026sampling}, which gives an efficient sampling algorithm for the hard-core model on random
regular bipartite graphs by decomposing into fixed-size slices, we study the worst-case tractability of approximate counting and sampling of  fixed-size slices for  bipartite independent set problems.
Let \(G=(L\sqcup R,E)\) be a bipartite graph  with \(|L|=|R|=n\) and maximum degree $\Delta$. The fixed-slice problem asks to sample uniformly from independent sets satisfying \(|I\cap L|=\alpha_L n\) and
\(|I\cap R|=\alpha_R n\). We show that if the overall density $\alpha$ lies in the interval $(\frac{1}{\Delta}, \tfrac{1}{2})$, and the densities on the two sides are more balanced than the typical phase densities of a random $\Delta$-regular bipartite graph, then there is no FPRAS or efficient sampling scheme unless $\mathbf{NP}=\mathbf{RP}$. 

We then study a related fugacity model in which the  densities are not fixed,
but the independent set is required to be balanced between the two sides of the bipartition. For \(\lambda>0\), the balanced
hard-core model is the ordinary hard-core model with fugacity \(\lambda\),
conditioned on the event \(|I\cap L|=|I\cap R|\). We prove that this model has
the same computational threshold as the  hard-core model on general
bounded-degree graphs. That is, for every fixed \(\Delta\ge 3\), if
\(\lambda<\lambda_c(\Delta)\), then the balanced partition function admits an
FPTAS and the balanced hard-core distribution admits an efficient sampling
scheme. Conversely, if
\(\lambda>\lambda_c(\Delta)\), then no FPRAS or efficient sampler exists on this graph class unless
$\mathbf{NP}=\mathbf{RP}$. 
\end{abstract}

\thispagestyle{empty}

\newpage
\setcounter{page}{1}

\tableofcontents

\section{Introduction}
\label{secIntro}

The hard-core model originates from the study of lattice gas systems in statistical physics and has received much attention in theoretical computer science and related fields. Given a graph $G$ and fugacity $\lambda >0$, the partition function of the hard-core model is given by
$$Z_G(\lambda):= \sum_{I \in \cI_G} \lambda^{|I|},$$ where $\cI_G$ is the set of all independent sets of $G$. When $\lambda = 1$, the partition function equals $|\cI_G|$, the number of independent sets of $G$. Associated to $Z_G(\lambda)$ is the probability measure on $\cI_G$ given by 
$
    \mu_{G,\lambda}(I) = \frac{\lambda^{|I|}}{Z_G(\lambda)}.
$

Algorithmically, there are two main tasks associated with the hard-core model.  The first is to compute (exactly or approximately) $Z_G(\lambda)$; the second is to output an independent set with distribution (close to) $\mu_{G,\lambda}$.  Even in restricted settings (including bounded-degree or  bipartite graphs) computing $Z_G(\lambda)$ is $\#\textbf{P}$-hard and so most attention is focused on when it is possible to efficiently compute $Z_G(\lambda)$ to within an $\epsilon$ relative error.

More precisely, we collect the following definitions, which will be used throughout the paper. A \emph{fully polynomial-time approximation scheme} (FPTAS) is a deterministic algorithm that given $G$ and $\epsilon \in (0,1)$ outputs
\(\widehat Z\) satisfying
\[
(1-\epsilon) Z_G(\lambda)\le \widehat Z \le (1+\epsilon) Z_G(\lambda)
\]
in time polynomial in \(|V(G)|\) and \(1/\epsilon\).  A \emph{fully polynomial-time randomized approximation scheme} (FPRAS) is a randomized algorithm that outputs such an approximation with probability at least $2/3$ and runs in time polynomial in $|V(G)|$ and $1/\epsilon$.  An \emph{efficient sampling scheme}, in our context, is a randomized algorithm that runs in time polynomial in $|V(G)|$ and $1/\epsilon$, and outputs an independent set according to a distribution $\hat{\mu}$ satisfying $\|\mu_{G,\lambda}-\hat{\mu}\|_{\mathrm{TV}}<\epsilon.$

Let $\mathcal{G}_\Delta$  be the set of all graphs of maximum degree at most $\Delta$. For $G \in \mathcal{G}_\Delta$, the computational tractability of approximating $Z_G(\lambda)$ is well understood. If $\lambda < \lambda_c(\Delta)$, then there exists an FPTAS and efficient sampling scheme for the hard-core model~\cite{weitz2006counting}. On the other hand, if $\lambda > \lambda_c(\Delta)$, then there is no FPRAS or efficient sampling scheme unless $\mathbf{NP}=\mathbf{RP}$ \cite{Sly10,sly2012computational,galanis2016inapproximability}. Here $\lambda_c(\Delta) := \frac{(\Delta-1)^{\Delta-1}}{(\Delta-2)^\Delta}$ is known as the uniqueness threshold, as it
corresponds to the boundary between uniqueness and non-uniqueness of the infinite volume hard-core Gibbs measure on the  $\Delta$-regular tree (see, e.g.~\cite{georgii2011gibbs}).

When $G$ is restricted to be bipartite, however, proving hardness becomes substantially more delicate, since the \textbf{NP}-hard optimization problems used in hardness reductions become tractable on bipartite graphs. This difficulty is reflected in the status of approximately counting independent sets in bipartite graphs, known as $\#\textbf{BIS}$, which is one of the central problems in approximate counting. It  is neither known to admit an FPRAS nor known to be \textbf{NP}-hard to approximate \cite{DyerGoldbergGreenhillJerrum2004, DyerGoldbergJerrum2010}.

For bipartite $G \in \mathcal{G}_\Delta$, \cite{cai2016hardness} showed that approximating $Z_G(\lambda)$ is as hard as $\#\textbf{BIS}$ whenever $\lambda>\lambda_c(\Delta)$. The algorithmic side has seen some progress for special classes of instances including bipartite graphs with degree bounds on one side~\cite{liu2015fptas,chen2023uniquenessrapidmixingbipartite} and lattices~\cite{helmuth2019algorithmic,cannon2026pirogov} and expander graphs~\cite{JenssenKeevashPerkins2020,liao2019counting,jenssen2023approximately,jenssen2026refined} for large values of $\lambda$.

Additionally, random $\Delta$-regular bipartite graphs are canonical average-case instances.  Combined with the algorithmic results for $\lambda = \Omega(\log \Delta/\Delta)$~\cite{chen2022sampling}, following~\cite{JenssenKeevashPerkins2020,liao2019counting}, the recent work \cite{kocurek2026sampling} giving efficient algorithms for $\lambda = O(\Delta^{-1/2})$,
shows that, for random $\Delta$-regular bipartite graphs (for $\Delta$ a sufficiently large constant), one can efficiently
sample from the hard-core model for all fugacities. A key feature of the
algorithm and analysis of~\cite{kocurek2026sampling}  is the use of fixed-size slices. Instead of directly sampling from the hard-core measure $\mu_{G,\lambda}$, they sample from both the one-sided and two-sided slices of bipartite independent sets; that is, the hard-core measure conditioned on the number of left-occupied vertices being $k$ (the one-sided slice) and the hard-core measure conditioned on the number of left-occupied vertices being $k_L$ and the number of right-occupied vertices being $k_R$.

For a bipartite graph
$G=(L\sqcup R,E)$ with $|L|=|R|=n$ and densities $(\alpha_L,\alpha_R)\in[0,1]^2$, a fixed slice consists of the independent sets
\[ \mathcal{I}_{\alpha_L, \alpha_R}(G) := \{I \in \cI_G : 
    |I\cap L|= \lfloor \alpha_L n \rfloor, \hspace{0.1cm}|I\cap R|=\lfloor \alpha_R n \rfloor \}.
\]
We write $\mathrm{FixedSlice}(\alpha_L,\alpha_R)$ for the associated problems of approximately counting $Z^{\mathrm{slice}}_G(\alpha_L,\alpha_R):=
|\mathcal I_{\alpha_L,\alpha_R}|$ and approximately sampling from $\mu^{\mathrm{slice}}_{G,\alpha_L,\alpha_R}$, the uniform distribution on $\mathcal I_{\alpha_L,\alpha_R}$. The algorithmic analysis of \cite{kocurek2026sampling} proceeds by proving the spectral independence
estimates needed to obtain rapid mixing on the fixed $(\alpha_L, \alpha_R)$-slices used to sample from the hard-core model.

Thus, one may hope that sampling via fixed slices could lead to progress on $\#\textbf{BIS}$. Our first main result shows that this approach fails in the
worst case: there exist fixed bipartite instances for which the corresponding fixed-slice problem is \textbf{NP}-hard.

To state the main result, we need some notation related to densities.   For $\lambda > \lambda_c(\Delta)$, the hard-core model on the infinite $\Delta$-regular tree exhibits non-uniqueness.  Two distinct semi-translation-invariant Gibbs measures are obtained by taking the limit of finite Gibbs measures with even-occupied and odd-occupied boundary conditions respectively.  We write $\alpha_+(\lambda)$ and $\alpha_-(\lambda)$ for the probability the root of the tree is occupied under these respective measures.  When $\lambda \le \lambda_c(\Delta)$ we have $\alpha_+(\lambda) = \alpha_-(\lambda)$ since there is a unique infinite-volume Gibbs measure; when $\lambda> \lambda_c(\Delta)$, we have  $\alpha_+(\lambda) > \alpha_-(\lambda)$.  We define $\alpha(\lambda) := (\alpha_+(\lambda) + \alpha_-(\lambda))/2$.  This is also the limiting expected density of the hard-core model on the random $\Delta$-regular bipartite graph~\cite{sly2012computational}.  Next we let $\alpha_c(\Delta) := \alpha(\lambda_c(\Delta)) = \frac{1}{\Delta}$.  Finally note that the map $\lambda \in (\lambda_c, \infty) \mapsto \alpha(\lambda) \in (1/\Delta,1/2)$ is strictly increasing, and hence invertible on the relevant range. We denote its inverse by $\lambda(\alpha)$.

Let $\cG_{\Delta,\Delta}$ denote the class of bipartite graphs
$G=(L\sqcup R,E)$ with maximum degree at most $\Delta$ and $|L|=|R|$. 

\begin{theorem} \label{Thm:FixedDensityHardness}
Fix an integer $\Delta\ge 3$ and $\alpha_L,\alpha_R\in(0,1)$. Let $\alpha = (\alpha_L + \alpha_R)/2 $.  If  $\alpha\in(\alpha_c(\Delta),1/2)$ and
\begin{equation}
    \frac{\alpha_-(\lambda(\alpha))}{\alpha_+(\lambda(\alpha))}
    <
    \frac{\alpha_L}{\alpha_R}
    <
    \frac{\alpha_+(\lambda(\alpha))}{\alpha_-(\lambda(\alpha))},
\end{equation}
then, unless $\mathbf{NP}=\mathbf{RP}$, there is no FPRAS or efficient sampling scheme for $\mathrm{FixedSlice}(\alpha_L,\alpha_R)$ for  inputs $G \in \cG_{\Delta,\Delta}$.
\end{theorem}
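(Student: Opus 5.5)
We reduce from \textsc{Max-Cut} on $3$-regular graphs (or a suitable bounded-degree variant), following Sly's gadget framework \cite{Sly10,sly2012computational,galanis2016inapproximability}. The gadget is a random bipartite $\Delta$-regular-like graph $H$ on $2m$ vertices, with a few degree-deficient \emph{terminal} vertices on each side. Set $\lambda=\lambda(\alpha)>\lambda_c(\Delta)$. The second-moment analysis of Mossel--Weitz--Wormald and Sly shows that, with high probability over $H$, the hard-core measure on $H$ at fugacity $\lambda$ has two phases: in phase $+$ the sides have densities $(\alpha_+,\alpha_-)$, and in phase $-$ they have $(\alpha_-,\alpha_+)$. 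Conditioned on a phase, the terminal spins are nearly independent. For the slice we need the analogous statement with fixed sizes rather than a fugacity. The slice measure is the hard-core measure conditioned on $|I\cap L|,|I\cap R|$, and since the counts concentrate, the local-limit versions of the first- and second-moment computations should carry over. The sliced gadget then has typical densities governed by the per-side densities of the phases.

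The key step is the following. Given a Max-Cut instance $G_0$ on $N$ vertices, we form $G$ by replacing each vertex $v$ by a gadget copy $H_v$, with sides $L_v\sqcup R_v$. For each edge $uv$ of $G_0$ we connect terminals so that the endpoints prefer opposite phases; bipartiteness is kept by joining $L_u$ to $R_v$. The global side densities are fixed at $(\alpha_L,\alpha_R)$. If a $\theta$ fraction of gadgets (by size) is in phase $+$, the induced left density is $\theta\alpha_++(1-\theta)\alpha_-$, and similarly for the right. Averaging, $\alpha=(\alpha_++\alpha_-)/2$ holds automatically by the choice $\lambda=\lambda(\alpha)$. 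The hypothesis $\alpha_-/\alpha_+<\alpha_L/\alpha_R<\alpha_+/\alpha_-$ is exactly the condition that $\alpha_L$ lies strictly between $\alpha_-$ and $\alpha_+$. Hence there is a unique $\theta^*\in(0,1)$ for which the phase mixture realizes the slice.

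The fixed-slice constraint therefore forces the phase assignment to be a near-$\theta^*$-split of the vertices of $G_0$. The slice count is dominated by phase configurations $\sigma\in\{\pm\}^{V(G_0)}$ with $|\sigma^{-1}(+)|\approx\theta^*N$, each weighted by roughly $\exp(\beta\cdot\mathrm{cut}(\sigma))$ with $\beta>0$ a gadget constant. Small deviations from the target densities must be absorbed by within-phase fluctuations, at a cost that is polynomial, or bounded exponential in the deviation, and these can be controlled. Taking the gadget size large amplifies $\beta$, so an approximate slice count or sample yields the maximum $\theta^*$-balanced cut of $G_0$. If $\theta^*\ne1/2$ we pad $G_0$ with isolated dummy vertices, or use appropriate weights, to reduce from Max-Bisection. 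That problem is NP-hard to approximate within some constant on bounded-degree graphs, which gives the reduction. Sampling hardness follows as in Sly: a sample reveals the phases and hence a near-optimal cut. We also have to check that the gadget terminal degrees plus inter-gadget edges stay $\le\Delta$ and that $|L|=|R|$.

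The main obstacle is the slice version of the gadget analysis. We must show that, uniformly in the imposed per-gadget counts, the partition function of a gadget with fixed side sizes $(k_L,k_R)$ behaves like $e^{m\Phi(k_L/m,k_R/m)}$. Here $\Phi$ must have exactly two maximizers (the phases) along the line of fixed total density, and be strictly concave near them, so that the global slice constraint distributes mass only by phase mixture and not by moving gadgets into intermediate densities. This is precisely where $\alpha<1/2$ and $\alpha>1/\Delta$ enter: they make the tangent-line (convex-hull) structure of $\Phi$ have its two contact points at $(\alpha_\pm,\alpha_\mp)$. My first attempt would be to obtain $\Phi$ as the Legendre transform of the bipartite Bethe free energy in two fugacities $(\lambda_L,\lambda_R)$, and to use the known second-moment results at $\lambda_L=\lambda_R=\lambda(\alpha)$.
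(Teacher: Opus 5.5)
Your proposal has two genuine gaps: the inter-gadget coupling has the wrong sign, and the sliced-gadget analysis it rests on is left unproved and runs into a centering problem.

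\textbf{The coupling sign.} In any bipartite construction, every inter-gadget edge joins a terminal on the global left to one on the global right. The slice constraint only sees which \emph{global} side of each gadget is dense, so re-orienting individual gadgets does not help. With terminal marginals $q_+>q_-$, an edge of $H$ between equal phases contributes $\Gamma^{2k}=(1-q_+q_-)^{2k}$. An edge between opposite phases contributes $\bigl((1-q_+^2)(1-q_-^2)\bigr)^{k}=\Theta^{2k}$. Since $\Gamma^2-\Theta^2=(q_+-q_-)^2>0$, joining $L_u$ to $R_v$ makes the endpoints prefer \emph{equal} phases. So Max-Cut and Max-Bisection cannot be encoded this way.

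What the slice actually encodes is a \emph{minimum} cut with a prescribed number $\gamma h$ of plus phases ($\gamma$-\textsc{MEBC}, Fact~\ref{fact:gamma_MEBC_hard}). Only exact NP-hardness is available for that problem, so you cannot appeal to constant-factor inapproximability and must recover the cut value exactly. The paper does this with $k=\lfloor n^{3\theta/4}\rfloor$ matched terminal pairs per edge of $H$, so the factor $(\Theta/\Gamma)^{2k}$ per cut edge beats every $e^{O(h\log n)}$ error. It normalizes by the slice count of the decoupled graph $\widehat H^G$, which the assumed FPRAS also approximates. You need some such normalization for the unknown random gadget partition functions, and your sketch has none.

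\textbf{The sliced-gadget analysis.} The step you call the main obstacle is a sliced free energy $\Phi$ with two nondegenerate maximizers, together with slice versions of small subgraph conditioning and terminal near-independence, uniformly over $(k_L,k_R)$. You leave this unproved, and the existing gadget results do not supply it. Your claim that deviations from the target are absorbed at polynomial cost also fails. At $\lambda=\lambda(\alpha)$ your $\theta^*$ is pinned and need not make $\theta^*N$ an integer. In that case no phase vector centers the slice: a fractional-gadget mismatch is an order-$n$ deviation, suppressed by a factor $e^{-\Theta(n/h)}$. Seeing the $e^{\Theta(k)}$ cut signal beneath that would require sharp moderate-deviation asymptotics, uniform in $Y$ and $\tau$.

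\textbf{How the paper avoids this.} It uses two ideas.
\begin{itemize}
\item Every set in the slice has size $r_L+r_R$. Hence the slice count equals $\lambda_\star^{-(r_L+r_R)}$ times the $\lambda_\star$-weighted slice sum, for \emph{any} $\lambda_\star$. You can therefore work with the ordinary fugacity gadget and only bound $\mu_{\widehat H^G,Y,\tau}(\mathbf X=(r_L,r_R))$.
\item It adds $s\approx 2uhn$ isolated vertices, of density $p=\lambda_\star/(1+\lambda_\star)>\alpha$, and takes $\lambda_\star$ slightly below $\lambda(\alpha)$, so the gadgets' average density is $(a_++a_-)/2<\alpha$. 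Mixing gadgets and isolated vertices in ratio $1:u$ restores density $\alpha$. Varying $\lambda_\star$ then makes \eqref{eq:fixed_density_centering} hold with $\gamma$ rational and $\gamma h\in\mathbb Z$, which centers the slice exactly at $D(Y)=D_0$.
\end{itemize}

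The isolated vertices also give an independent binomial smoothing variable. Chebyshev on the core counts plus local binomial bounds then yield the lower bound $e^{-O(h\log n)}/(nh)$ in Lemma~\ref{lem:fix_poly_penalty}. The exponential-moment bounds give the $e^{-\Omega(n/h)}$ suppression of $D(Y)\ne D_0$ in Lemma~\ref{lem:fix_exp_penalty}. No two-dimensional analysis of $\Phi$ is needed.
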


Thus, fixed-slice decompositions do not by themselves alleviate worst-case
hardness in bipartite graphs: some fixed slices are already \textbf{NP}-hard to approximately count and sample. The hard
slices in our theorem are necessarily off the phase-aligned ratios that are
typical under the relevant hard-core Gibbs measure on random bipartite graphs, so this result does not
rule out slice-based algorithms for $\#\textbf{BIS}$. Rather, it identifies a
worst-case obstruction and motivates the question of which slices are
tractable.  We discuss the relationship between different regimes of fixed slices and how they relate to our hardness results and the work of \cite{kocurek2026sampling} in Section \ref{subsec:future}.

The fixed-slice model fixes both coordinates \((|I\cap L|,|I\cap R|)\).
We next consider a softer but still global constraint, in which only their
difference $B(I) :=|I\cap L|- |I\cap R|  $ is fixed.   A natural such constraint is \textit{balance}: we  require the independent set to use the two
sides equally. Indeed, balance constraints introduce a related source of hardness on bipartite graphs. Although the maximum independent set problem is polynomial-time solvable on bipartite graphs, maximum balanced independent set is \textbf{NP}-hard \cite{feige2002relations}. Moreover, \cite{perkins2024hardness} showed that, in random bipartite graphs, maximum balanced independent set exhibits the same kind of statistical--computational gap as maximum independent set in random graphs, namely hardness for local and low-degree algorithms.

Let $\cI_G^{\mathrm{bal}}$ denote the collection of balanced independent sets of $G$; that is $\cI_G^{\mathrm{bal}} = \{I \in \cI_G : 
    |I\cap L|= |I\cap R| \}$.
For $\lambda>0$, define the balanced partition function and the associated Gibbs distribution by
\begin{align*}
    Z_G^{\mathrm{bal}}(\lambda)
    &= \sum_{I\in \cI_G^{\mathrm{bal}}} \lambda^{|I|}, \\
    \mu_{G,\lambda}^{\mathrm{bal}}(I)
    &= \frac{\lambda^{|I|}}{Z_G^{\mathrm{bal}}(\lambda)},
    \qquad I\in \cI_G^{\mathrm{bal}}.
\end{align*}

Given this model, a natural question is whether approximating
$Z_G^{\mathrm{bal}}(\lambda)$ for $G\in\mathcal{G}_{\Delta,\Delta}$
exhibits the same hardness threshold as approximating the hard-core
partition function on general bounded-degree graphs. We answer this
question affirmatively by proving the following theorem.

\begin{theorem} \label{thm:comp_thresh} Fix an integer $\Delta \ge 3$ and a fugacity $\lambda > 0$. Then the following hold.
\begin{itemize}
    \item If $\lambda < \lambda_c(\Delta),$ there is an FPTAS for $Z_G^{\mathrm{bal}} (\lambda)$ and an efficient sampling scheme for $\mu_{G, \lambda}^{\mathrm{bal}}$ on inputs $G \in \mathcal{G}_{\Delta,\Delta}$.
    \item If $\lambda>\lambda_c(\Delta)$, then unless $\mathbf{NP}=\mathbf{RP}$, there is no FPRAS for $Z_G^{\mathrm{bal}}(\lambda)$ or efficient sampling scheme for $\mu_{G,\lambda}^{\mathrm{bal}}$ on inputs $G\in\mathcal{G}_{\Delta,\Delta}$.
\end{itemize}
\end{theorem}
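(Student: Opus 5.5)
For the algorithmic part ($\lambda<\lambda_c(\Delta)$), I would reduce the balanced partition function to the ordinary hard-core model with complex two-sided fugacities. Introduce a variable $z$ on the unit circle and set
\[
Z_G(\lambda z,\lambda z^{-1}) := \sum_{I\in\cI_G} \lambda^{|I|} z^{|I\cap L|-|I\cap R|}.
\]
Then
\[
Z_G^{\mathrm{bal}}(\lambda)=\frac{1}{2n+1}\sum_{j=0}^{2n} Z_G(\lambda\omega^j,\lambda\omega^{-j}),
\]
where $\omega=e^{2\pi i/(2n+1)}$. Each term is a complex-fugacity hard-core partition function with all fugacities of modulus $\lambda<\lambda_c$, but cancellation in the average makes additive errors dangerous. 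I would therefore use an alternative route. Write $Z^{\mathrm{bal}}=\sum_k Z_G^{\mathrm{slice}}(k,k)\lambda^{2k}$ and use the two-variable polynomial $P(x,y)=\sum_{a,b} N_{a,b}x^ay^b$. For $|x|,|y|<\lambda_c$, zero-freeness (Peters--Regts for a disk around $[0,\lambda_c)$, or the Shearer/tree-recursion contraction in complex polydisks) lets one compute, via Barvinok interpolation, the coefficients $N_{a,b}$ exactly up to degree $O(\log(n/\epsilon))$. That is not enough, however. The cleaner plan is instead to prove that the marginal distribution of $B(I)$ under $\mu_{G,\lambda}$ is a local CLT-type distribution, so that $\mu_{G,\lambda}(B=0)\ge 1/\poly(n)$. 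Then rejection sampling from Weitz's sampler gives an efficient sampler for $\mu^{\mathrm{bal}}$. Counting follows via $Z^{\mathrm{bal}}=Z_G(\lambda)\,\mu(B=0)$: estimate $Z_G(\lambda)$ by Weitz's FPTAS, and estimate $\mu(B=0)$ deterministically by computing the Fourier average above, where each $Z_G(\lambda e^{i\theta},\lambda e^{-i\theta})/Z_G(\lambda)$ is approximated via the correlation-decay recursion in a complex neighborhood. This is where the known zero-freeness of Peters--Regts in a complex neighborhood of $[0,\lambda_c)$ (together with the symmetric perturbation on the two sides) is used.

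The key estimate is the local limit theorem for $B$, which I would prove in three steps. First, $B$ is a sum of weakly dependent indicators. In uniqueness, strong spatial mixing gives $\Var(B)=\Theta(n)$ when $G$ has $\Theta(n)$ non-isolated vertices, and zero-freeness of $Z_G(\lambda e^{t},\lambda e^{-t})$ for complex $t$ in a strip gives cumulant bounds. Second, I would handle the characteristic function $\E[e^{i\theta B}]$ for $|\theta|$ away from $0$. This requires showing $|Z_G(\lambda e^{i\theta},\lambda e^{-i\theta})|\le e^{-c n\theta^2}Z_G(\lambda)$, which I would obtain by conditioning on most vertices and exploiting $\Omega(n)$ independent local fluctuations, as in Jain--Perkins--Sah--Sawhney local CLTs for the hard-core model. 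Third, isolated vertices and periodicity issues (for instance, $B$ constrained to a sublattice) must be addressed; with $|L|=|R|$ and free isolated vertices there is no parity obstruction. I expect this step, uniform decay of the characteristic function at the arithmetic scale, to be the main obstacle.

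For hardness ($\lambda>\lambda_c$), I would reduce from the hard-core model on general graphs of degree $\le\Delta$ via Sly's gadget, using a random bipartite $\Delta$-regular gadget whose phase encodes a spin. Given a hard instance $H$, build $G=G_H\sqcup\overline{G_H}$, where $\overline{G_H}$ is a mirror copy with $L$ and $R$ swapped. The sets $I\cup I'$ with $I$ in $G_H$ and $I'$ in $\overline{G_H}$ then satisfy $B(I\cup I')=B(I)-B(I')$, and balanced configurations have $Z^{\mathrm{bal}}=\sum_b Z_b^2$, where $Z_b$ is the contribution of $G_H$ with imbalance $b$. Since Sly's reduction shows that the phase configuration of $G_H$ (a max-cut type structure) dominates, and $\sum_b Z_b^2\ge Z^2/(2n+1)$, an approximation of $Z^{\mathrm{bal}}$, or a balanced sample restricted to one copy, recovers the phase assignment that solves max-cut with $1/\poly$ probability. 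This gives NP-hardness under randomized reductions. Hardness of sampling follows similarly, since the restriction of a balanced sample to $G_H$ is distributed proportionally to $Z_b\cdot\mu_{G_H}(\cdot)$, which still concentrates on near-max-cut phases.
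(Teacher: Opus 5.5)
Your proposal has two genuine gaps: one in the algorithmic half and one in the hardness half.

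\textbf{Algorithmic half.} Your plan hinges on $\mu_{G,\lambda}(B=0)\ge 1/\poly(n)$ for $G\in\mathcal{G}_{\Delta,\Delta}$, and this is false. A local CLT only gives $\Pr(B=0)\approx\sigma^{-1}N(\E B/\sigma)$, and $\E B$ is generically of order $n$ even when $|L|=|R|$. For example, take $n/\Delta$ disjoint stars $K_{1,\Delta}$ whose leaves form $L$ and whose centres lie in $R$, and pad $R$ with $n(1-1/\Delta)$ isolated vertices. A direct computation gives $\E B=\frac{\lambda((1+\lambda)^\Delta-1-\Delta\lambda)}{\Delta(1+\lambda)((1+\lambda)^\Delta+\lambda)}\,n$. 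Since $B$ is a sum of independent bounded contributions from the components, $\Pr(B=0)=e^{-\Omega(n)}$ for every $\lambda>0$. So rejection sampling from Weitz's sampler is exponentially slow. Your deterministic Fourier average fails too: the target is exponentially small, and you would need the integrand at large angles, e.g.\ near $\theta=\pi$, where both fugacities are near $-\lambda$, far outside the neighbourhood of $[0,\lambda_c)$ you invoke.

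The missing idea is an exponential tilt. Weight $I$ by $e^{tB(I)}$, i.e.\ use fugacities $\lambda e^{t}$ on $L$ and $\lambda e^{-t}$ on $R$; this leaves the law conditioned on $B=0$ unchanged. Then pick $t$ with $\E_t B=O(1)$; the paper shows such $t$ lies in a fixed compact window because $|L|=|R|$. The price is that one of $\lambda e^{\pm t}$ may exceed $\lambda_c(\Delta)$, so neither Weitz nor Peters--Regts applies. The paper therefore:
\begin{enumerate}
\item proves strong spatial mixing and complex zero-freeness for this two-fugacity model by exploiting bipartiteness: the two-level recursion $L\to R\to L$ in $\arcsinh\sqrt{x}$ coordinates contracts uniformly over all tilts, with the threshold governed by $\lambda_L\lambda_R=\lambda^2$;
\item proves the local CLT for $B$ under the tilted measure;
\item for counting, evaluates the characteristic function only at angles $O(\sqrt{\log(1/\epsilon)/n})$ and discards the rest via $|\E e^{iuB}|\le e^{-c\lambda nu^2}$. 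This suffices precisely because the centred target is $\Omega(n^{-1/2})$.
\end{enumerate}

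\textbf{Hardness half.} Your mirror identity $Z^{\mathrm{bal}}_{G_H\sqcup\overline{G_H}}=\sum_b Z_b^2\in[Z_{G_H}^2/(2n+1),\,Z_{G_H}^2]$ is correct. However, it only reduces approximating the \emph{ordinary} hard-core partition function of $G_H$ to the balanced problem. The mirror copy absorbs every imbalance, so balance places no constraint on the phases in $G_H$.

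For $G$ to lie in $\mathcal{G}_{\Delta,\Delta}$, $G_H$ itself must be bipartite, and then Sly's MAX-CUT encoding is lost. That encoding glues terminals on the \emph{same} side of adjacent gadgets, which creates odd cycles whenever $H$ is non-bipartite. Bipartite gluings instead induce a ferromagnetic (supermodular) interaction between phases, whose dominant configuration can be found by a min-cut computation. The hard-core model on bipartite graphs above $\lambda_c$ is only known to be \#\textbf{BIS}-hard (Cai et al.). So your construction yields at best \#\textbf{BIS}-hardness, not hardness unless $\mathbf{NP}=\mathbf{RP}$.

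The paper instead makes the balance constraint itself the source of hardness:
\begin{enumerate}
\item A single, un-mirrored construction glues gadgets ferromagnetically by $k$-matchings $T^L_x$--$T^R_y$ and $T^R_x$--$T^L_y$. An edge of $H$ then contributes $\Gamma^{2k}$ if uncut and $\Theta^{2k}<\Gamma^{2k}$ if cut.
\item Exact balance exponentially suppresses phase vectors with $D(Y)\neq0$, since their expected imbalance is $\Omega(n)$.
\item Added isolated vertices act as a smoothing variable, giving $\Pr(B=0)\ge e^{-O(h\log n)}/\sqrt{nh}$ when $D(Y)=0$.
\end{enumerate}
Hence $Z^{\mathrm{bal}}$, and the phase vector of a balanced sample, encode MIN-BISECTION.
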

For the positive algorithmic result, we in fact treat a larger class of graphs than $\mathcal{G}_{\Delta,\Delta}$: bipartite graphs of maximum degree $\Delta$ for which the ratio $|L|/|R|$ is not too far from $1$; see~\eqref{eq:tilde_graph_class} below.
We do not address the critical case $\lambda=\lambda_c(\Delta)$, but results for the hard-core model~\cite{chen2025rapid} suggest that this case should be tractable. 

The challenge to proving the positive algorithm result of Theorem \ref{thm:comp_thresh} is that the event \(|I\cap L|=|I\cap R|\) can be atypical and exponentially rare under the unconstrained hard-core measure, so naive rejection sampling may fail to be efficient. The main idea is to introduce an exponential tilt $e^{tB(I)}$ that makes the exact balanced event more typical. At a nearly centering tilt, balance has probability of order $n^{-1/2}$, making rejection sampling plausible.  This exponential tilt, however, takes us away from the usual hard-core model to a model with different fugacities for vertices in $L$ and $R$, and the main part of the analysis is showing that we can sample efficiently from this measure.

Above \(\lambda_c(\Delta)\), the same global constraint becomes a source of hardness. We build upon the phase coexistence gadget framework of Sly~\cite{Sly10} and extensions to models with global constraints in~\cite{davies2023approximately,carlson2022computational}. The hard-core model on such a  gadget (derived from random bipartite graphs) displays two phases: one in which the left side is more heavily occupied, and one in which the right side is more heavily occupied. By replacing each vertex with a gadget, the phase of each gadget can be used as a binary label. Exact balance then forces these labels to appear in equal numbers, while the edges between gadgets reward labelings with few crossing edges. This allows the balanced partition function to encode the bisection problem. 

Previous work on the worst-case complexity of approximate counting and sampling under global constraints include~\cite{davies2023approximately,jain2023optimal} on sampling independent sets of specified size on bounded degree graphs and~\cite{carlson2022computational,kuchukova2025fast} on sampling from the ferromagnetic Ising model at fixed magnetization on bounded degree graphs.

\subsection{Overview of the techniques}

\subsubsection{Algorithms below uniqueness}

We fix $\lambda < \lambda_c(\Delta)$ and consider the balanced hard-core model $\mu_{G,\lambda}^{\text{bal}}$ on $G \in \cG_{\Delta,\Delta}$. This model is equivalent to the usual hard-core model conditioned on the \emph{balance} $B(I)$ of an independent set being zero, where
\begin{equation}\label{eq:balance_definition}
    B(I):=|I\cap L|-|I\cap R|.
\end{equation}

Our strategy revolves around a relaxed version of this balanced model: a \emph{tilted} hard-core model with partition function
\begin{align}\label{eq:tiltedHardcorePartition_intro}
     Z_G(\lambda;t) := \sum_{I \in \cI_G} \lambda^{\abs{I}} e^{t B(I)} = \sum_{I \in \cI_G} \inparen{\lambda e^t}^{\abs{I \cap L}} \inparen{\lambda e^{-t}}^{\abs{I \cap R}},
\end{align}
which is a bivariate hard-core model with fugacities $\lambda e^t$ and $\lambda e^{-t}$ on $L$ and $R$ respectively. Write $\mu_{G,\lambda,t}$ for the tilted measure associated to $Z_{G}(\lambda;t)$. The link is that $e^{tB(I)} = 1$ whenever $B(I) = 0$, so the conditional law of $\mu_{G,\lambda,t}$ given $B(I) = 0$ is exactly $\mu_{G,\lambda}^{\text{bal}}$.

The tilt $t$ gives us flexibility to favor occupation in $L$ and $R$ by setting $t$ to be  positive or  negative respectively. A monotonicity argument shows that there is a unique $t^*$ for which $\EE_{\mu_{G,\lambda,t^*}} B(I) = 0$.

Our sampling algorithm is relatively straightforward. At a high level, the steps are:
\begin{enumerate}[(i)]
    \item We first perform noisy binary search by empirical sampling from the tilted measure to find a $\widetilde{t}$ that  approximates $t^*$, in the sense $|\EE_{\mu_{G,\lambda,\widetilde{t}}} B(I) | = O(1)$.
    \item Then we rejection sample from $\mu_{G,\lambda,\widetilde{t}}$ until we obtain an independent set $I$ with $B(I) = 0$.
\end{enumerate}

The proof that this sampling scheme for $\mu_{G,\lambda}^{\text{bal}}$ can be made efficient uses several technical ingredients. We show that sampling from $\mu_{G,\lambda,t}$ can be done efficiently by a self-avoiding walk (SAW) tree approach building upon \cite{weitz2006counting}. We defer technical definitions of the SAW tree and its boundary conditions to Section~\ref{sec:sampling_from_tilted}. Nevertheless, we highlight informally a key step which is to establish a form of correlation decay, \emph{strong spatial mixing} (SSM), on the SAW tree.

For a boundary condition \(\tau\) on the SAW tree rooted at vertex \(v\), let \(R_v^\tau\) denote the resulting root occupation ratio, i.e.~the ratio of the probabilities that the root is occupied to unoccupied.

\begin{theorem*}[Informal version of Theorem \ref{thm:ssm}]
Fix $\lambda<\lambda_c(\Delta)$. For every fixed $t$, the tilted hard-core model \eqref{eq:tiltedHardcorePartition_intro} satisfies SSM on the SAW tree. That is, if two boundary conditions $\tau,\tau'$ first differ at distance $\ell$ from the root, then
\[
    |R_v^\tau-R_v^{\tau'}|
    \le
    C e^{-c\ell},
\]
where $C,c>0$ depend only on $\Delta,\lambda,t$.
\end{theorem*}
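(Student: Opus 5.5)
We need SSM for the bivariate hard-core model on the SAW tree, with fugacities $\lambda_L=\lambda e^t$ on $L$ and $\lambda_R=\lambda e^{-t}$ on $R$. The approach is to use the tree recursion and collapse two levels, so that we reduce to a one-type recursion with an effective fugacity.

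On the SAW tree $T$ rooted at $v$, each vertex inherits the side ($L$ or $R$) of the vertex it copies, and the sides alternate along tree edges. For a vertex $u$ with children $u_1,\dots,u_d$, where $d\le \Delta-1$ except at the root where $d\le\Delta$, the occupation ratios satisfy
\[
R_u=\lambda_u\prod_{i=1}^{d}\frac{1}{1+R_{u_i}},
\]
with the boundary condition fixing $R$ to $0$ or $\infty$ at pinned vertices.

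Composing two levels (say $u\in L$, whose children lie in $R$ and grandchildren in $L$) gives a map in which $\lambda_R$ and $\lambda_L$ appear together. The key observation is that a $\lambda e^{\pm t}$ alternating two-level recursion is conjugate to the symmetric one. Set $\tilde R_u=R_u e^{-t}$ for $u\in L$ and $\tilde R_u=R_u e^{t}$ for $u\in R$. This rescaling does not linearize the $1+R$ terms, so I would not rely on it. Instead, I would run the standard potential-function contraction argument for the hard-core model below uniqueness (Li--Lu--Yin / Restrepo et al.\ style, with potential $\Phi(x)=\operatorname{arcsinh}(\sqrt{x})$, hence the macro), applied to the two-step recursion. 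Bounded $t$ only changes constants.

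The concrete steps are as follows.

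\begin{enumerate}
\item Derive the one-step error propagation in the $\Phi$-metric. Writing $y=\Phi(R)$, the derivative of $\Phi(R_u)$ with respect to $\Phi(R_{u_i})$ is bounded by $\frac{\sqrt{R_u}}{\sqrt{1+R_u}}\cdot\frac{\sqrt{R_{u_i}}}{\sqrt{1+R_{u_i}}}$, up to the standard algebra.
\item Compose two steps, an $L$ level followed by an $R$ level. Then the product of the two factors pairs a $\lambda_L$-fixed-point quantity with a $\lambda_R$ one.
\item Use the a priori bounds $R_u\le\lambda_u$ and AM--GM (concavity) to bound the two-step aggregate contraction. Summing over the at most $(\Delta-1)^2$ grandchildren, the two-step contraction is at most a quantity that is maximized in the symmetric configuration. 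That maximum is governed by the product $\lambda_L\lambda_R=\lambda^2$, i.e.\ by the same quantity as the untilted model, which is $<1$ precisely when $\lambda<\lambda_c(\Delta)$.
\item Having obtained a two-step contraction factor $\kappa<1$ in the $\ell_\infty$-weighted $\Phi$-metric, iterate over $\lfloor \ell/2\rfloor$ double levels. The disagreement at depth $\ell$ then propagates to the root as $C\kappa^{\ell/2}$.
\item Convert back from $\Phi$ to $R$ at the root, which is Lipschitz since $R_v\le \lambda\max(e^t,e^{-t})\Delta$-bounded. Pinned vertices at $0$ or $\infty$ are handled in the standard way: they start the recursion one level down with bounded values.
\end{enumerate}

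The main obstacle is step 3: showing that the two-level contraction with unequal fugacities is still $<1$ exactly under $\lambda<\lambda_c(\Delta)$, rather than under a stronger condition like $\max(\lambda_L,\lambda_R)<\lambda_c$. The fallback is this. The informal statement allows constants depending on $t$, but the threshold itself must be independent of $t$. If the symmetric-maximizer argument fails, I would instead analyze the two-step map of the $\Delta-1$-ary tree at its fixed point directly. Uniqueness for the alternating bivariate hard-core model holds iff the two-periodic fixed point is stable, and its derivative is $\frac{(\Delta-1)^2 R_L^*R_R^*}{(1+R_L^*)(1+R_R^*)}$. I would then show this is $<1$ via a Weitz-style comparison that the worst case over trees of degree $\le\Delta-1$ is the regular tree. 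This recovers the threshold in terms of $\lambda_L\lambda_R$ and $\lambda_c$.
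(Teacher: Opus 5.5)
Your outline has the same architecture as the paper's proof: the potential $\phi(x)=\operatorname{arcsinh}\sqrt{x}$, the two-level $L\to R\to L$ recursion, a concavity reduction to equal grandchild inputs, iteration, and conversion back to ratios. Your one-step derivative is also correct. But step~3, which you flag yourself, is the whole content of the theorem, and the reason you give for it is false as stated.

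For a fixed tilt, the equal-input two-level contraction factor is \emph{not} a function of $\lambda_L\lambda_R$ alone, and it is not the untilted one. Substitute $\lambda_{\dagger^c}=y_0(1+x_1)^b$ and $\lambda_\dagger=\lambda^2/\lambda_{\dagger^c}$. The squared derivative then becomes $\Theta_{a,b,\lambda}(y_0,x_1)$, and the supremum over all tilts and all $x_1$ equals $\sup_{y_0,x_1}\Theta_{a,b,\lambda}$. Only in that sense does only $\lambda^2$ matter. The unique interior maximizer is $y_0=\omega/(a+1)$, $x_1=(\omega+1)/(2b)$ with $\lambda^2=\Lambda_{a,b}(\omega)$. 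For $a=b=\Delta-1$, one checks that this point lies on the $t=0$ curve $y_0=\lambda(1+x_1)^{-b}$ only when $\lambda=\lambda_c(\Delta)$. So for $\lambda<\lambda_c(\Delta)$ some nonzero tilt has a strictly worse contraction factor than the untilted model, and you cannot import the untilted bound.

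What is actually needed is $\sup_{y_0>0,\,x_1\ge0}\Theta_{a,b,\lambda}<1$ for all $1\le a,b\le\Delta-1$, including $a\ne b$, since children may have different numbers of grandchildren. The paper proves this as follows. At a critical point, $\Theta=\Xi_{a,b}(\omega)$ and $\lambda^2=\Lambda_{a,b}(\omega)$, with both increasing in $\omega$. Hence $\Theta\ge1$ would force $\lambda^2\ge B(a,b)$. An implicit-function computation shows $B(a,b)$ is decreasing in $a$ and $b$, and $B(\Delta-1,\Delta-1)=\lambda_c(\Delta)^2$.

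Likewise, ``AM--GM'' is not enough for the equal-input reduction. The paper applies Jensen twice, in the variables $\log(1+x_{1,i,j})$ and $\log(1+y_i)$, and the second application rests on a nontrivial concavity computation.

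Your fallback does not close the gap either. Stability of the two-periodic fixed point on the $(\Delta-1)$-ary tree controls the two-step derivative only at that fixed point. SSM on SAW trees of arbitrary shape needs a contraction bound at every input $x_1\ge0$ and every arity pattern. Also, Weitz's ``regular tree is the worst case'' comparison is a theorem about the univariate hard-core model, and you would have to re-prove it for alternating activities.

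A smaller point: the SAW-tree root may have $\Delta$ children, so the two-level contraction, valid only for arities at most $\Delta-1$, cannot be started there. The paper first applies a one-level Lipschitz bound $\Delta\Lambda/(1+\Lambda)$ at the root and then iterates from its children.
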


In particular, if \(t\) is restricted to a fixed compact interval, the constants \(C,c\) may be chosen uniformly over all graphs, roots, and boundary conditions. Consequently, truncating the SAW tree at logarithmic depth gives a polynomial-time oracle for the marginal occupation probabilities in the tilted model. A standard self-reduction argument then yields an efficient sampler for $\mu_{G,\lambda,t}$.

 The argument is rather delicate and standard estimates do not apply: while the base fugacity $\lambda$ is below $\lambda_c(\Delta)$,  one of the tilted fugacities $\lambda e^t$ and $\lambda e^{-t}$ may rise above this threshold. We instead exploit the bipartite structure and show that the occupation-ratio recursion on the SAW tree is contractive in a two-level sense, corresponding to a recursion on $G$ that maps from $L \to R \to L$ or vice versa. We show this contraction after passing to the coordinate system induced by the potential function $x \mapsto \arcsinh \sqrt{x}$ (also used in \cite{sinclair2017spatialmixingconnectiveconstant,chen2026zerofreenesshardcoreconnective}).

For \(G\in\cG_{\Delta,\Delta}\), the centering tilt $t^*$ can indeed be shown to live in a compact interval that does not scale with $\abs{V(G)}$. In addition to being essential for the SSM arguments, this compactness allows us to establish a zero-freeness result for a complex version of $Z_G(\lambda;t)$ in some region around the real line. The latter leads to a local central limit theorem for $B(I)$, which in turn gives the acceptance probability estimate
\[
    \Pr_{\mu_{G,\lambda,\widetilde{t}}}\{B(I)=0\}=\Omega\!\left(|V(G)|^{-1/2}\right).
\]
This estimate provides the guarantee that rejection sampling succeeds in polynomial time.

We now turn to the FPTAS for $Z_G^{\text{bal}}(\lambda)$. The tilted hard-core model plays a key role again, along with the identity
\[
    Z_G^{\mathrm{bal}}(\lambda)
    =
    Z_G(\lambda;t)\Pr_{\mu_{G,\lambda,t}}\{B(I)=0\}.
\]
The three tasks are thus (1) to find a nearly centered tilt \(\widetilde t\) so that the  probability factor is inverse-polynomially large, (2) to give an FPTAS for \(Z_G(\lambda;\widetilde t)\), and (3) to approximate this probability factor efficiently.

In task (1) we replace the noisy binary search used in the sampling procedure with a more complicated deterministic procedure, based on deterministic estimates of $\E_{\mu_{G,\lambda,t}}B(I)$. Tasks (2) and (3) essentially recycle the ingredients from the sampling proof, with task (3) ``algorithmizing'' the local CLT proof, in the spirit of \cite{jain2021approximatecountingsamplinglocal}.

\begin{remark}
Chen--Liu--Yin \cite{chen2023uniquenessrapidmixingbipartite} study bivariate hard-core models on bipartite graphs related to the tilted model $Z_G(\lambda;t)$. They allow for a larger class of bipartite graphs requiring only a degree bound on one side. They give a uniqueness condition phrased in terms of the fixed points of a map $x \mapsto \lambda_L\left(1+\lambda_R(1+x)^{-w}\right)^{-(\Delta - 1)}$, where $(\lambda_L,\lambda_R)$ are the left and right fugacities, and where $w > 0$ is a branching parameter for the (possibly) unbounded degree side. Under this condition, they prove spectral independence, influence decay, and rapid mixing.

Our tilted measure analysis pertains to a subfamily of such models, where $(\lambda_L,\lambda_R)=(\lambda e^t,\lambda e^{-t})$ traces a curve $\lambda_L \lambda_R = \lambda^2$ in the bivariate fugacity space. Here we prove auxiliary SSM results for the associated SAW tree, zero-freeness for the tilted partition function $Z_G(\lambda;t)$, and local CLT results for the balance variable $B(I)$, in order to handle the exact balance model $Z^{\text{bal}}_{G}(\lambda)$. Furthermore, our analysis is tight, as witnessed by $\lambda_L = \lambda_R = \lambda_c(\Delta)$. 
\end{remark}

\subsubsection{Hardness above uniqueness}

Above the uniqueness threshold, both hardness results are proved by adapting the phase-coexistence gadget framework of \cite{Sly10, carlson2022computational}.
Briefly, the gadget is obtained from a random bipartite $\Delta$-regular graph by a modification that creates designated terminal vertices. In the non-uniqueness regime $\lambda>\lambda_c(\Delta)$, the hard-core measure on this gadget has two dominant phases: in the $(+)$-phase the left side is more heavily occupied, while in the $(-)$-phase the right side is more heavily occupied. 
Given an input graph $H$, we replace each vertex $x\in V(H)$ by a gadget copy $G_x$ and connect the terminals of different copies according to the edges of $H$. The phase vector $Y=(Y_x)_{x\in V(H)}\in\{+,-\}^{V(H)}$ then plays the role of a spin configuration on $H$.

For the balanced model, the global constraint $|I\cap L|=|I\cap R|$ forces the phase vector to be nearly balanced. Informally, before the inter-gadget edges are imposed, phase vectors with unequal numbers of $(+)$-phase and $(-)$-phase gadgets have exponentially small contribution after conditioning on exact balance, while balanced phase vectors retain enough mass to dominate the constrained partition function. Thus the exact balance constraint converts the possible phase vectors into bisections of the input graph.

The inter-gadget edges then encode the cut objective. 
Terminal occupations are approximately independent under the phase-conditioned gadget measures, so the probability that all inter-gadget edges are legal factors over the edges of $H$. 
With our choice of terminal matchings, an edge whose endpoints have the same phase contributes a larger factor than an edge whose endpoints have opposite phases. 
Hence, among balanced phase vectors, the dominant contribution to the balanced partition function comes from phase vectors minimizing the number of cut edges. 
This allows us to recover the minimum bisection value from sufficiently accurate multiplicative approximations to the balanced partition function.

The fixed-slice hardness proof reuses the same construction and terminal compatibility calculation. 
The difference is that the constraint now fixes the two side occupations separately, rather than only their difference. 
We choose the gadget fugacity and the number of isolated vertices so that the target slice is centered around phase vectors with a prescribed number of plus phases. 
The analogue of the balance point probability estimate is now a two-dimensional point probability estimate for $(|I\cap L|,|I\cap R|)$. 
Phase vectors with the wrong number of plus phases are exponentially suppressed, while correctly centered phase vectors retain enough mass. 
The same inter-gadget compatibility factor then encodes the corresponding fixed-cardinality cut problem.

\subsection{Fixed-Size Sampling and Future Directions} \label{subsec:future}

\label{subsec:fixed-slices-random-graphs}

Schematically, the decomposition used in \cite{kocurek2026sampling} can be viewed as
\begin{align} \label{eq:shayan_decomp}
Z_G(\lambda)
&\approx
\sum_{\substack{I\in\cI_G:\\
|I\cap X|\le \alpha_0 |X|,\ |I\cap Y|\le \alpha_0 |Y|}}
\lambda^{|I|} +
\sum_{\alpha_0 |X|<k\le \beta_0 |X|}
\sum_{\substack{S\subseteq X:\\ |S|=k}}
\lambda^k(1+\lambda)^{|Y\setminus N(S)|} \nonumber \\
&+
\sum_{\alpha_0 |Y|<k\le \beta_0 |Y|}
\sum_{\substack{S\subseteq Y:\\ |S|=k}}
\lambda^k(1+\lambda)^{|X\setminus N(S)|}
\end{align}
where
\[
    \alpha_0 := \frac{\log \Delta}{(2+o_\Delta(1))\Delta},
    \qquad
    \beta_0 := \Theta(\lambda)
\]

The first sum (low-low) is approximated by sampling from each fixed two-sided slice via a down-up walk. The second and third sums are approximated by running a different down-up walk on one-sided slices. When $G$ is a random $\Delta$-regular graph, their analysis shows that these walks mix in polynomial time. 

Now, consider the hardness result of Theorem \ref{Thm:FixedDensityHardness}. In particular, for each fixed $0<\epsilon<1$, the low-low slices $(\rho_L,\rho_R)=((1-\epsilon)\rho,(1+\epsilon)\rho)$ are hard in the worst-case by Theorem~\ref{Thm:FixedDensityHardness} for $\rho_\epsilon(\Delta)<\rho\le \alpha_0/(1+\epsilon)$, where $\rho_\epsilon(\Delta):=\inf\{\rho>\alpha_c(\Delta):\alpha_-(\lambda(\rho))/\alpha_+(\lambda(\rho))<(1-\epsilon)/(1+\epsilon)\}$, while phase-aligned slices are not ruled out by our theorem.

It would be interesting to understand exactly which fixed slices are hard. One future direction is an understanding of the complement region. That is the following problem,

\begin{problem} \label{prob:reg}
    Fix an integer $\Delta\ge 3$ and $\alpha\in(\alpha_c(\Delta),1/2)$. If $(\alpha_L + \alpha_R)/2 = \alpha$ and
\begin{equation}
    \frac{\alpha_L}{\alpha_R} \in \bigg( 0, 
    \frac{\alpha_-(\lambda(\alpha))}{\alpha_+(\lambda(\alpha))} \bigg] \cup \bigg[    \frac{\alpha_+(\lambda(\alpha))}{\alpha_-(\lambda(\alpha))}, \infty \bigg),
\end{equation} does $\mathcal{I}_{\alpha_L, \alpha_R}$ admit an FPRAS and efficient sampling scheme?
\end{problem}

In particular, an answer to Problem \ref{prob:reg} would give an explicit boundary for the computability of the fixed-slice problem above the uniqueness threshold. Currently, it is unclear if the boundary is the true threshold for computability or an artifact of our proof.

Lastly, we remark that the comparison between Theorem \ref{Thm:FixedDensityHardness} and the algorithm of \cite{kocurek2026sampling} should be understood with the quantifiers in mind. The fixed-slice algorithmic results of \cite{kocurek2026sampling} are proved in the random regular setting and in the large-\(\Delta\) regime, whereas Theorem~\ref{Thm:FixedDensityHardness} is a worst-case hardness result valid for every fixed \(\Delta\ge 3\).

\subsection{Organization}

The rest of the paper is organized as follows. In Section~\ref{sec:below_uniqueness}, we prove correlation decay for the tilted measures, establish zero-freeness and a local central limit theorem for the balance variable, and combine these ingredients to obtain the sampler for $\mu^{\text{bal}}_{G,\lambda}$ below the uniqueness threshold.

In Section~\ref{sec:fptas_balanced_partition}, we prove the FPTAS below the uniqueness threshold. The proof recycles the sampling ingredients, replacing the noisy binary search by deterministic bisection and using Fourier inversion to approximate the probability of exact balance.

In Section~\ref{sec:hardness}, we prove the hardness result above the uniqueness threshold. We construct the bipartite gadgets, analyze their phase behavior under the balanced constraint, and give the reduction from the minimum bisection problem. Lastly, in Section~\ref{sec:density_hardness}, we study the fixed-slice model and prove hardness for approximately counting and sampling fixed slices at prescribed densities above the tree uniqueness density.

\section{\texorpdfstring{Sampling when \(\lambda < \lambda_c(\Delta)\)}{lambda below uniqueness}}
\label{sec:below_uniqueness}

The goal of this section is to prove the sampling assertion in
Theorem~\ref{thm:comp_thresh}. In fact, we will prove the assertion for the
slightly larger class of bipartite graphs, for a fixed constant $\gamma \geq 1$
\begin{equation}\label{eq:tilde_graph_class}
\widetilde{\cG}_{\Delta,\Delta}^{\gamma}
    :=
    \left\{
        G=(L\sqcup R,E):
        \Delta(G)\le \Delta,\quad
        L,R\neq\varnothing,\quad
        \frac{\max\{|L|,|R|\}}{\min\{|L|,|R|\}}\leq \gamma
    \right\}.
\end{equation}

\begin{proposition}
\label{prop:bounded_imbalance_sampler}
Fix \(\Delta\ge3\), \(\lambda<\lambda_c(\Delta)\), and $\gamma \geq 1$. Then the balanced
hard-core distribution \(\mu_{G,\lambda}^{\mathrm{bal}}\) admits an efficient
sampling scheme on inputs
\(G\in\widetilde{\cG}_{\Delta,\Delta}^{\gamma}\).
\end{proposition}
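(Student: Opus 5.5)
The plan is the tilt-and-reject scheme from the overview, in three components.

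First, an efficient sampler for the tilted measure \(\mu_{G,\lambda,t}\) for every \(t\) in a fixed compact interval \([-T,T]\), with \(T\) depending only on \(\Delta,\lambda,\gamma\). This comes from Theorem~\ref{thm:ssm}. For \(|t|\le T\) the SSM constants \(C,c\) are uniform, so truncating Weitz's SAW tree at depth \(O(\log(n/\epsilon))\) gives each occupation ratio up to a \(1\pm\epsilon/n\) factor in polynomial time. The usual vertex-by-vertex self-reduction (pin vertices one at a time; pinned graphs remain bipartite with degree at most \(\Delta\), and SSM is insensitive to pinnings) then gives a sampler within TV distance \(\epsilon\).

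Second, localizing and finding the centering tilt. Since \(\frac{d}{dt}\E_{\mu_{G,\lambda,t}}B=\Var_{\mu_{G,\lambda,t}}B>0\), the function \(f(t)=\E_t B\) is strictly increasing, so \(t^*\) is unique. To show \(|t^*|\le T\), I will use the imbalance bound \(\gamma\). For large \(t\), each \(u\in L\) whose neighbourhood is unoccupied is occupied with probability \(\lambda e^t/(1+\lambda e^t)\). Each \(v\in R\) is occupied with probability at most \(\lambda e^{-t}\). Conditioning on a maximal independent set in \(L\)'s neighbourhood structure gives \(\E_t|I\cap L|\ge c(t)|L|/(\Delta+1)\) with \(c(t)\to1\). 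Meanwhile \(\E_t|I\cap R|\le \lambda e^{-t}|R|\le \lambda e^{-t}\gamma|L|\). So \(f(T)>0\) for \(T=T(\Delta,\lambda,\gamma)\), and symmetrically \(f(-T)<0\).

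Noisy binary search on \([-T,T]\) then locates \(\tilde t\) with \(|f(\tilde t)|=O(1)\). Each step estimates \(f(t)\) to additive accuracy \(O(1)\) using \(O(n)\) samples, since \(\Var B=O(n)\), and the interval is halved \(O(\log n)\) times. To convert interval length into control of \(f\), I need \(f'(t)=\Var_t B=\Theta(n)\) uniformly on \([-T,T]\). The upper bound follows from SSM by summing correlations. The lower bound \(\Omega(n)\) comes from conditioning on \(I\) outside a family of \(\Omega(n/\Delta^2)\) vertices at pairwise distance at least \(3\), each of which has conditionally nondegenerate occupancy with bounded-away probabilities. So a window of length \(O(1/n)\) around \(t^*\) suffices.

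Third, rejection sampling from \(\mu_{G,\lambda,\tilde t}\) until \(B(I)=0\). The output law is exactly \(\mu^{\mathrm{bal}}_{G,\lambda}\), up to the TV error of the tilted sampler. Efficiency requires \(\Pr_{\tilde t}[B=0]=\Omega(n^{-1/2})\), which needs a local CLT for \(B\). The route I would take is through zero-freeness. I would show that \(Z_G(\lambda;t+i\theta)\) has no zeros for \(|\theta|\le\delta\), \(t\in[-T,T]\), by extending the contraction of Theorem~\ref{thm:ssm} in \(\arcsinh\sqrt{x}\) coordinates to a complex neighbourhood of the real ratios. This gives control of the cumulants of \(B\), hence a Gaussian approximation of the characteristic function \(\E e^{i\theta B}\) for \(|\theta|\le\delta\). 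For \(\delta\le|\theta|\le\pi\), I would bound \(|\E e^{i\theta B}|\le e^{-cn}\) using the same \(\Omega(n)\) family of separated vertices, each contributing a factor bounded away from \(1\) in modulus. Fourier inversion then gives \(\Pr[B=0]=(1+o(1))/\sqrt{2\pi\Var B}\cdot e^{-f(\tilde t)^2/(2\Var B)}=\Omega(n^{-1/2})\).

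The main obstacle is this local CLT step, specifically uniform complex zero-freeness near the real interval when one of the tilted fugacities exceeds \(\lambda_c\). I would attack it by perturbing the real two-level contraction in a complex neighbourhood with compactness in \(t\).
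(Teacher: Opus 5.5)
Your plan is the paper's proof in all essentials. It uses SSM with constants uniform over a compact tilt window plus Weitz's self-reduction for the tilted sampler, and a $\gamma$-dependent window for $t^*$ from one-vertex marginal bounds and monotonicity of $m(t)$. It then runs noisy bisection and rejection sampling. The $\Omega(n^{-1/2})$ acceptance bound comes from a local CLT: its low-frequency part rests on complex zero-freeness obtained by perturbing the real two-level contraction using compactness in $t$ (exactly Lemma~\ref{lem:zeroFreeness_twoLevel_complex_thickening}), and its high-frequency part rests on a linear-size family of separated vertices.

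The one real divergence is in the low frequencies. You read the Gaussian approximation of $\E e^{i\theta B}$ directly off cumulant bounds $\kappa_j=O_j(n)$ from the zero-free neighbourhood. The paper instead passes through the Michelen--Sahasrabudhe CLT and the smoothing step of Jain et al.\ (Lemma~\ref{lem:low_fourier_control}). Your route is more direct, and either suffices.

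Two sub-steps do not go through as you wrote them, though both are easily repaired.

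(i) The bound $\Var B=O(n)$ does not follow from summing correlations with SSM of the form in Theorem~\ref{thm:ssm}. The rate $c$ there can be much smaller than $\log(\Delta-1)$ (it degenerates as $\lambda\to\lambda_c(\Delta)$), so $\sum_v Ce^{-c\,d(u,v)}$ need not be $O(1)$. Take it instead from zero-freeness, as in Lemma~\ref{lem:balance_variance_bounds}: all roots of the polynomial $\zeta^{|R|}Z_G(\lambda;t,\zeta)$ lie at distance at least $\rho$ from $1$. Your cumulant bounds already contain this. For the bisection alone, the trivial bound $\Var B\le n^2$ with a final window of width $n^{-2}$ would also do.

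(ii) For the high frequencies, conditioning on $I$ outside a distance-$3$ family $S$ does not give each vertex a factor bounded away from $1$. A vertex of $S$ with an occupied neighbour contributes exactly $1$, so the bound depends on the random number of unblocked vertices of $S$. Without an exponential lower-tail estimate for that number, averaging over the outside configuration gives only $|\E e^{i\theta B}|\le 1-c$, which swamps the $\Theta(n^{-1/2})$ main term.

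The fix is to use a family at pairwise distance at least $4$ (Lemma~\ref{lem:dist_4_set}) and condition only on $I$ at distance at least $2$ from $S$. The stars $\{v\}\cup N(v)$ are then disjoint, non-adjacent and conditionally independent. Each star's contribution to $B$ takes the values $0$ and $\varepsilon_v=\pm1$ with probabilities bounded below uniformly in the outside configuration. That gives a factor at most $1-c\theta^2$ per star, hence $e^{-cn}$ for $\delta\le|\theta|\le\pi$. This is Lemma~\ref{lem:high_fourier_control}.
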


\begin{remark}
The proposition above strengthens the sampling assertion of
Theorem~\ref{thm:comp_thresh}, since
\(\cG_{\Delta,\Delta}\subseteq\widetilde{\cG}_{\Delta,\Delta}^{\gamma}\).
The same extension will be proved for the FPTAS in
Section~\ref{sec:fptas_balanced_partition}. The condition on the bipartition ratio in
\(\widetilde{\cG}_{\Delta,\Delta}^\gamma\) keeps the centering tilt in a  fixed compact interval.
Graphs whose bipartition sizes differ by a factor larger than, say, $\Delta^\Delta$ can  be
handled by combining the tilting framework with the algorithms for unbalanced
bipartite hard-core models of Cannon and
Perkins~\cite{cannon2019counting}; we omit that analysis here.
Alternatively, it seems plausible that the present approach could be extended
to tilts \(t=t(n)\) outside a fixed compact interval, but this would require
suitable modifications to the zero-freeness and local central limit arguments.
In particular, the uniform linear variance lower bound in
Lemma~\ref{lem:balance_variance_bounds} is then no longer available. We have
not pursued this extension.
\end{remark}

Throughout this section, fix
\(G \in\widetilde{\cG}_{\Delta,\Delta}^\gamma\) and write
\(n:=|V(G)|=|L|+|R|\). Recall from
Section~\ref{secIntro} the balance \(B(I)\) in
\eqref{eq:balance_definition}, the tilted partition function
\(Z_G(\lambda;t)\) in \eqref{eq:tiltedHardcorePartition_intro}, and its
associated tilted measure \(\mu_{G,\lambda,t}\). We write
\(\lambda_L(t)=\lambda e^t\), \(\lambda_R(t)=\lambda e^{-t}\), and often
abbreviate \(\mu_{G,\lambda,t}\) as \(\mu_{\lambda,t}\) when \(G\) is clear from context.

After possibly exchanging the two sides of the bipartition, we assume throughout
this section that \(|L|\ge |R|\). This exchange leaves the balanced independent
sets, and hence \(Z_G^{\mathrm{bal}}(\lambda)\), unchanged; it only sends
\(B\) to \(-B\) and \(t\) to \(-t\). For 
\(G\in\widetilde{\cG}_{\Delta,\Delta}^\gamma\), under this convention we have
\begin{equation}
    1\le \frac{|L|}{|R|}\leq\gamma.
    \label{eq:bounded_imbalance_regime}
\end{equation}

Our approach for sampling from \(\mu^{\mathrm{bal}}_{G,\lambda}\) is given in
Algorithm~\ref{alg:balanced_hardcore_sampler_template}. We defer the explicit
choice of parameters, including the tilt window \(\cT\) and the
\(\mathsf{TiltedSampler}\) subroutine, to
the subsequent subsections. Briefly, the algorithm first uses
empirical samples from the tilted measures \(\mu_{\lambda,t}\) to binary search
for a tilt \(\widetilde t\) whose expected balance is close to zero. We then
rejection sample from \(\mu_{\lambda,\widetilde t}\) until we obtain a sample
\(I\) with \(B(I)=0\).

The correctness of the rejection step follows directly from the definition of
the tilted measure. Under \(\mu_{\lambda,t}\), every independent set \(I\)
receives weight \(\lambda^{|I|}e^{tB(I)}\). On the event \(B(I)=0\), the tilt
factor is equal to \(1\), so the conditional law of \(\mu_{\lambda,t}\) given
\(B(I)=0\) exactly coincides with $\mu^{\text{bal}}_{G,\lambda}$.

\begin{algorithm}[!ht]
\caption{Balanced hard-core sampler template}
\label{alg:balanced_hardcore_sampler_template}
\begin{algorithmic}[1]
\Require A bipartite graph \(G=(L\sqcup R,E)\), activity
\(\lambda<\lambda_c(\Delta)\), accuracy parameter \(\epsilon>0\)
\Ensure A sample from the balanced hard-core model, up to total variation error
\(\epsilon\)
\State \label{line:balanced_sampler_template_regime_choice}Choose a tilt window \(\mathcal T\), a \(\mathsf{TiltedSampler}\) subroutine, centering tolerance \(\theta\), and parameters \(J,N,M,\tau\).
\State Run \(J\) steps of binary search on \(\mathcal T\) to find an approximate zero of \(m(t):=\E_{\mu_{\lambda,t}}B(I)\):
\Statex \begin{minipage}{0.95\linewidth}
\begin{enumerate}[leftmargin=2em,itemsep=0pt,topsep=2pt]
    \item At current midpoint \(t\) of $\cT$, call \(\mathsf{TiltedSampler}\) \(N\)
    times with TV error at most \(\tau\) to generate independent sets
    \(I_1,\dots,I_N\).
    \item Compute \(\widehat m=\frac{1}{N}\sum_{i=1}^N B(I_i)\).
    \item If \(|\widehat m|\le\theta\), set \(\widetilde t=t\) and stop; if
    \(\widehat m>0\), keep the lower half of the window; if \(\widehat m<0\),
    keep the upper half.
\end{enumerate}
\end{minipage}
\State If the search does not stop early, let \(\widetilde t\) be the midpoint of the final window.
\State Call \(\mathsf{TiltedSampler}\) \(M\) times with tilt \(\widetilde t\)
and total variation error at most \(\tau\), obtaining samples
\(I'_1,\dots,I'_M\).
\State Return the first \(I'_i\) satisfying \(B(I'_i)=0\). If no such sample
appears, return \(\varnothing\).
\end{algorithmic}
\end{algorithm}

At a high level, to prove that Algorithm~\ref{alg:balanced_hardcore_sampler_template} fulfills
the sampling assertion in Theorem~\ref{thm:comp_thresh}, we establish the
following three facts.
\begin{enumerate}[leftmargin=*]
    \item The binary search requires only polynomially many samples and returns a
    nearly centered tilt
    \(\widetilde t\), meaning that
    \(m(\widetilde t)=\E_{\mu_{\lambda,\widetilde t}}B(I)\) is sufficiently
    close to zero. This follows from monotonicity of \(m(t)\) and concentration
    of the empirical estimates used in the search.
    \item Every call to \(\mathsf{TiltedSampler}\) can be implemented in
    polynomial time for the tilts queried by the search. For
    \(G\in\widetilde{\cG}_{\Delta,\Delta}^\gamma\), these tilts remain in a fixed compact
    window. Strong spatial mixing on the SAW tree throughout this window then
    yields an efficient sampler for the tilted measure.
    \item The rejection step succeeds with polynomially many trials. Concretely,
    we show the key acceptance probability estimate
        \[
        \Pr_{\mu_{\lambda,\widetilde{t}}}\{B(I)=0\}
        =
        \Omega\!\left(\frac1{\sqrt n}\right).
    \]
    This will follow from a zero-freeness result of the tilted partition function
    \(Z_G(\lambda;t)\) in a complex neighborhood of the compact tilt window,
    which yields a local central limit theorem for \(B(I)\). Consequently,
    \(M=\Theta(\sqrt n\log(1/\epsilon))\) rejection trials suffice.
\end{enumerate}

Point 1 is relatively straightforward and is treated in
Section~\ref{sec:sampling_preliminaries}. Points 2 and 3 constitute the main
work and are treated in the subsequent subsections.

\subsection{Preliminaries}
\label{sec:sampling_preliminaries}

We record two elementary facts. The first controls the
empirical means appearing in the search step of Algorithm \ref{alg:balanced_hardcore_sampler_template}.

\begin{lemma}
\label{lem:empirical-balance-concentration}
Fix a bipartite graph \(G=(L\sqcup R,E)\) on \(n\) vertices, a tilt
\(t\in\mathbb R\), an accuracy parameter \(\eta>0\), and a confidence
parameter \(\xi\in(0,1)\). Let
$
N\ge 2n^2\eta^{-2}\log(2/\xi),
$
and let \(I_1,\dots,I_N\) be independent (exact) samples from the tilted hard-core
measure \(\mu_{\lambda,t}\). Then
\[
\Pr\!\left(\left|\frac1N\sum_{j=1}^N B(I_j)-\E_{\mu_{\lambda,t}}[B(I)]\right|\ge \eta\right)
\le \xi.
\]
\end{lemma}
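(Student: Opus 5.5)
This is a direct application of Hoeffding's inequality to bounded i.i.d.\ random variables. For each sample, the balance satisfies
\[
B(I_j)=|I_j\cap L|-|I_j\cap R| \in [-|R|,\,|L|],
\]
an interval of length \(|L|+|R|=n\). (The cruder symmetric bound \(|B(I_j)|\le n\), which gives an interval of length \(2n\), also suffices, as explained below.) Since \(I_1,\dots,I_N\) are independent exact samples from \(\mu_{\lambda,t}\), the variables \(X_j:=B(I_j)\) are i.i.d.\ with mean \(\E_{\mu_{\lambda,t}}[B(I)]\). This mean is finite, since \(G\) is finite and the support of \(\mu_{\lambda,t}\) is finite.

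Hoeffding's inequality for i.i.d.\ variables taking values in an interval of length \(c\) gives
\[
\Pr\!\left(\left|\frac1N\sum_{j=1}^N X_j-\E X_1\right|\ge\eta\right)\le 2\exp\!\left(-\frac{2N\eta^2}{c^2}\right).
\]
With \(c=n\), the right-hand side is \(2\exp(-2N\eta^2/n^2)\). The hypothesis \(N\ge 2n^2\eta^{-2}\log(2/\xi)\) gives \(2N\eta^2/n^2\ge 4\log(2/\xi)\ge\log(2/\xi)\), so the bound is at most \(\xi\).

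If one instead uses the cruder range \([-n,n]\), so \(c=2n\), the exponent becomes \(-N\eta^2/(2n^2)\). The hypothesis again gives \(N\eta^2/(2n^2)\ge\log(2/\xi)\), so the bound is still at most \(\xi\). The constant \(2\) in the hypothesis was evidently chosen to make exactly this cruder version go through.

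There is no real obstacle here. The only points to check are that the sample range has length at most \(2n\), and that the samples are exact and independent as stated. The effect of using approximate samples with TV error \(\tau\) is presumably handled separately, by a coupling and union bound over the \(N\) samples.
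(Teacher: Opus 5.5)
Your proposal is correct and matches the paper's argument: the paper likewise applies Hoeffding's inequality with the crude range $B(I_j)\in[-n,n]$, getting the bound $2\exp\bigl(-N\eta^2/(2n^2)\bigr)\le\xi$ under the stated hypothesis on $N$. Your sharper observation that $B(I_j)\in[-|R|,|L|]$ has range $n$ is also valid and only strengthens the conclusion.
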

\begin{proof}
For each \(j\), the random variable \(B(I_j)\) takes values in the interval \([-n,n]\), since
$
-n\le |I_j\cap L|-|I_j\cap R|\le n.
$
Therefore, using that $\sum_{j=1}^N (2n)^2 = 4Nn^2$, Hoeffding's inequality gives
\[
\Pr\!\left(\left|\frac1N\sum_{j=1}^N B(I_j)-\E_{\mu_{\lambda,t}}[B(I)]\right|\ge \eta\right)
\le
2\exp\!\left(
-\frac{2N^2\eta^2}{\sum_{j=1}^N (2n)^2}
\right)
\le \xi.\qedhere
\]
\end{proof}

Our next result records the existence and uniqueness of the centering tilt. Here and throughout, we use the notation
\begin{equation}\label{eq:m_expectedBalance_def}
    m(t):=\E_{\mu_{\lambda,t}}[B(I)].
\end{equation}

\begin{lemma}\label{lem:unique_centering_tilt}
For every bipartite graph
$G=(L\sqcup R,E)$ with maximum degree at most $\Delta$ and with
\(L,R\neq\varnothing\), and every $\lambda>0$, there exists a unique \(t^*\in\mathbb R\) such that
\[
m(t^*)=0.
\]
\end{lemma}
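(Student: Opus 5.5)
We will treat \(m(t)\) as the derivative of the log tilted partition function. Indeed, \(Z_G(\lambda;t)=\sum_{I\in\cI_G}\lambda^{|I|}e^{tB(I)}\) is a finite sum of exponentials in \(t\) with positive coefficients. Hence \(\log Z_G(\lambda;t)\) is smooth (real-analytic) in \(t\), and
\[
\frac{d}{dt}\log Z_G(\lambda;t)=m(t),\qquad \frac{d^2}{dt^2}\log Z_G(\lambda;t)=\Var_{\mu_{\lambda,t}}[B(I)].
\]
So \(m\) is continuous and nondecreasing. The plan is to prove three things: (a) strict monotonicity, via a positive variance; (b) \(m(t)\to |L|>0\) as \(t\to+\infty\); and (c) \(m(t)\to -|R|<0\) as \(t\to-\infty\). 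Existence then follows from the intermediate value theorem, and uniqueness from strict monotonicity.

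For (a), we note that \(\mu_{\lambda,t}\) gives positive mass to every independent set, since \(\lambda>0\). Because \(L\neq\varnothing\), we can pick \(u\in L\). The sets \(\varnothing\) and \(\{u\}\) are both independent, with \(B(\varnothing)=0\) and \(B(\{u\})=1\). Thus \(B\) is not almost surely constant, and \(\Var_{\mu_{\lambda,t}}[B(I)]>0\) for every \(t\). It follows that \(m'(t)>0\) everywhere and \(m\) is strictly increasing.

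For (b), as \(t\to+\infty\) the weight \(\lambda^{|I|}e^{tB(I)}\) is dominated by the sets maximizing \(B\). Since \(B(I)\le |I\cap L|\le |L|\), with equality exactly when \(I=L\) (which is independent because \(G\) is bipartite), \(B\) has the unique maximizer \(I=L\). Every other set has \(B(I)\le |L|-1\). Their total relative weight is therefore at most \(2^{n}\max(1,\lambda)^n\lambda^{-|L|}e^{-t}\to0\), so \(\mu_{\lambda,t}\) converges to the point mass at \(L\). Since \(|B|\le n\) is bounded, we get \(m(t)\to |L|>0\). Symmetrically, as \(t\to-\infty\) the measure concentrates on \(I=R\), giving \(m(t)\to -|R|<0\) because \(R\neq\varnothing\). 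Continuity of \(m\) then yields some \(t^*\) with \(m(t^*)=0\), and strict monotonicity makes it unique. The degree bound is not needed here.

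There is no real obstacle. The only point requiring care is the limit computation in (b). A simpler alternative is to avoid limits altogether: the convex function \(\log Z_G(\lambda;t)\) is bounded below by \(\log(\lambda^{|L|}e^{t|L|})\), which forces \(\lim_{t\to\infty} m(t)\ge |L|\) by convexity, and the analogous bound holds as \(t\to-\infty\). Either route suffices.
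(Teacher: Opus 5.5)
Your proof is correct and follows the same route as the paper. Both use $m'(t)=\Var_{\mu_{\lambda,t}}(B(I))>0$ for strict monotonicity, then the limits $m(t)\to|L|$ as $t\to\infty$ and $m(t)\to-|R|$ as $t\to-\infty$, then continuity. You additionally supply details the paper leaves implicit: the two-set witness $\varnothing,\{u\}$ for positive variance, and the concentration of $\mu_{\lambda,t}$ on $I=L$ and on $I=R$.
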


\begin{proof}
By direct differentiation, we have $m'(t) = \operatorname{Var}_{\mu_{\lambda,t}}(B(I)) =: \sigma_t^2$.
Since $B$ is not constant when $L, R \ne  \emptyset$, \(\sigma_t^2>0\). As \(t\to-\infty\), we have
\(m(t)\to-|R|<0\), while as \(t\to\infty\), we have
\(m(t)\to|L|>0\). Thus, continuity and strict monotonicity give a unique
\(t^*\) such that \(m(t^*)=0\).
\end{proof}

\subsection{Marginal occupation probabilities via the SAW tree}
\label{sec:sampling_from_tilted}

We begin by showing that there exists a poly-time algorithm that can approximate the marginal occupation probabilities of $\mu_{\lambda,t}$ within additive error $\pm \frac{\epsilon}{n}$. By standard self-reducibility arguments (see e.g.~Weitz~\cite[Section 5]{weitz2006counting}; cf.~Algorithm~\ref{alg:tilted-sampler} and the discussion following it), this will yield an $\epsilon$-approximate poly-time sampler. Our algorithm for approximating the marginals will be via the self-avoiding walk (SAW) tree~\cite{weitz2006counting}.

In what follows, we write $\boldsymbol\lambda=(\lambda_u)_{u\in V}$ for a general vector of vertex activities and $\mu_{G,\boldsymbol\lambda}$ for the corresponding hard-core measure. The tilted measure $\mu_{\lambda,t}$ on bipartite $G = (L \sqcup R, E)$ is the special case obtained by taking $\lambda_u=\lambda_L(t)$ on $L$ and $\lambda_u=\lambda_R(t)$ on $R$.

\begin{definition}[Depth-$L$ truncated SAW tree]
Fix a graph $G=(V,E)$, a root $v\in V$, vertex activities
$(\lambda_u)_{u\in V}$, and an ordering of the neighbors of each vertex.
The self-avoiding walk tree $T_{\mathrm{SAW}}(G,v)$ is the tree of all paths originating at root $v = v_0$, where each path is built recursively as follows. Given a self-avoiding path $(v_0,\dots,v_k)$, consider the neighbors $w$ of $v_{k}$ except the predecessor $v_{k-1}$. If there are no neighbors, the path terminates and $v_k$ is an ordinary leaf. If $w \notin \inbraces{v_0,\dots,v_k}$, extend the path to $(v_0,\dots,v_k,w)$. If instead $w=v_j$ for some $j<k$, the child is a
(terminal) boundary leaf. This boundary leaf is fixed to be occupied if, in the
ordering at $w$, the closing edge $\{v_k,w\}$ is larger than the edge
$\{w,v_{j+1}\}$ that starts the cycle, and is fixed to be unoccupied
otherwise. Each non-boundary copy of a vertex $u$ in $T_{\mathrm{SAW}}(G,v)$ has activity $\lambda_u$.

For integer $L\ge 0$, the depth-$L$ truncation $T_{\mathrm{SAW}}^{(L)}(G,v)$ is
obtained from $T_{\mathrm{SAW}}(G,v)$ by keeping only vertices at distance
at most $L$ from the root, together with the induced occupied/unoccupied boundary
conditions on any retained boundary leaves.
\end{definition}

A boundary condition $\tau$ is an assignment of occupied or unoccupied
states to a subset of vertices of $T_{\mathrm{SAW}}(G,v)$ not containing
the root.

Let
$\mu_{T_{\mathrm{SAW}}(G,v),\boldsymbol\lambda}^{\tau}$ denote the
hard-core measure on $T_{\mathrm{SAW}}(G,v)$ conditioned on $\tau$,
together with the occupied/unoccupied boundary conditions on the
cycle-closing leaves specified in the construction of the SAW tree.
Define the root occupation ratio by
\[
R_v^{\tau}
:=
\frac{
\mu_{T_{\mathrm{SAW}}(G,v),\boldsymbol\lambda}^{\tau}(\sigma_v=1)
}{
\mu_{T_{\mathrm{SAW}}(G,v),\boldsymbol\lambda}^{\tau}(\sigma_v=0)
}.
\]
When no additional boundary condition $\tau$ is imposed, we write $R_v$.
Thus,
\[
\mu_{T_{\mathrm{SAW}}(G,v),\boldsymbol\lambda}^{\tau}(\sigma_v=1)
=
\frac{R_v^\tau}{1+R_v^\tau}.
\]

By the SAW-tree identity of
Weitz~\cite[Section 3]{weitz2006counting}, the occupation ratio $R_v$ at the root of
$T_{\mathrm{SAW}}(G,v)$ equals the occupation ratio at $v$
in $G$, which is 
$
\mu_{G,\boldsymbol\lambda}(\sigma_v=1)/
\mu_{G,\boldsymbol\lambda}(\sigma_v=0).
$

The truncated SAW tree is used to approximate marginal probabilities in Algorithm \ref{alg:saw-trunc}.
\begin{algorithm}
\caption{Truncated SAW-tree marginal oracle}\label{alg:saw-trunc}
\begin{algorithmic}[1]
\Require A graph $G=(V,E)$, activities $(\lambda_u)_{u\in V}$,
a vertex $v\in V$, an accuracy parameter $\epsilon>0$.
\Ensure An estimate $\widetilde p_v$ for $\mu_{G,\boldsymbol\lambda}(\sigma_v=1)$

\State Choose a truncation depth
$
L\ge \frac1c \log\!\left(\frac{Cn}{\epsilon}\right)
$ for suitable constants $C,c > 0$ and construct the depth-$L$ truncated SAW tree $T_{\mathrm{SAW}}^{(L)}(G,v)$.

\State Fix every non-boundary vertex at depth exactly \(L\) to be unoccupied.

\State Compute occupation ratios bottom-up on
$T_{\mathrm{SAW}}^{(L)}(G,v)$ using
$
\displaystyle
R_x=\lambda_x\prod_{y\in \mathrm{children}(x)}\frac{1}{1+R_y},
$
with $R_x=0$ for leaves fixed unoccupied and $R_x=\infty$ for leaves fixed occupied.

\State Let $\widetilde R_v$ be the ratio computed at the root and \Return
$
\widetilde p_v=\frac{\widetilde R_v}{1+\widetilde R_v}.
$
\end{algorithmic}
\end{algorithm}

It is clear that Algorithm~\ref{alg:saw-trunc} is computationally efficient. Our next step is to show that it is applicable. We will accomplish this by showing that replacing
the full SAW tree by its depth-$L$ truncation has a negligible effect on the root ratio. This will follow from \emph{strong spatial
mixing} (SSM), which we define next.

\begin{definition}\label{def:ssm}
Fix a vertex $v\in V(G)$. We say that the model satisfies strong spatial mixing if there exist constants $C, c > 0$ such that whenever two boundary conditions $\tau$ and $\tau'$ on $T_{\mathrm{SAW}}(G,v)$ first differ only at distance at least $\ell$ from the root, their induced root occupation ratios differ by at most
\begin{equation}\label{eq:SSM}
|R_v^\tau-R_v^{\tau'}|
\le
C e^{-c\ell}.
\end{equation}
\end{definition}

\begin{remark}\label{remark:SSM_to_marginalOracle}
Once SSM \eqref{eq:SSM} is established, Algorithm~\ref{alg:saw-trunc} can be used as an $\epsilon/n$ marginal oracle for the root occupation probabilities $\mu_{G,\boldsymbol\lambda}(\sigma_v=1)=\frac{R_v}{1+R_v}$. This is seen as follows.

Let \(\tau^{(L)}\) denote the boundary condition on the full SAW tree
that fixes every non-boundary vertex at distance exactly \(L\) from the
root to be unoccupied. Then the truncated computation produces the root ratio $R_v^{\tau^{(L)}}$, while the true marginal corresponds to the root ratio $R_v$ on the full SAW tree. Since these two boundary conditions first differ only at distance at least $L$ from the root, \eqref{eq:SSM} applies and we have
\[
|R_v^{\tau^{(L)}}-R_v|
\le
C e^{-cL}.
\]

Since the map $R\mapsto R/(1+R)$ is $1$-Lipschitz on $[0,\infty)$, the same bound holds for the corresponding marginal:
$
    \left|
        \widetilde p_v-\mu_{G,\boldsymbol\lambda}(\sigma_v=1)
    \right|
    \le
    C e^{-cL}.
$
Therefore, choosing $
L\ge \frac{1}{c}\log\!\left(\frac{C n}{\epsilon}\right)$
ensures additive error at most $\epsilon/n$. Finally, because the maximum degree is at most $\Delta$, the depth-$L$ truncated SAW tree has at most
$
1+\Delta\sum_{j=0}^{L-1}(\Delta-1)^j
$
vertices, and therefore has size $O(\Delta^L)$. Since $\Delta$ is fixed and $L=O(\log(n/\epsilon))$, Algorithm~\ref{alg:saw-trunc} runs in time polynomial in $n/\epsilon$.
\end{remark}

We now specialize the preceding discussion to the tilted hard-core model \eqref{eq:tiltedHardcorePartition_intro} and verify \eqref{eq:SSM}.

\begin{theorem} \label{thm:ssm}
Fix $\Delta \ge 3$, $\lambda<\lambda_c(\Delta)$, and $t \in \mathbb R$. Then there exist constants $C, c > 0$, depending only on $(\Delta,\lambda,t)$, such that for every bipartite graph \(G=(L\sqcup R,E)\) of maximum degree at most
\(\Delta\), every vertex $v \in V(G)$, if $\tau,\tau'$ are two boundary conditions on $T_{\mathrm{SAW}}(G,v)$ and $\mathcal D\subseteq V(T_{\mathrm{SAW}}(G,v))$ is their disagreement set, then
\[
|R_v^\tau-R_v^{\tau'}|
\le
C e^{-c\, d_{T_{\mathrm{SAW}}}(v,\mathcal D)},
\]
where \(d_{T_{\mathrm{SAW}}}(v,\mathcal D)\) denotes the minimum distance in $T_{\mathrm{SAW}}(G,v)$ from the root \(v\) to a vertex of \(\mathcal D\).

Moreover, for every compact interval \(K\subset\mathbb R\), the constants
\(C,c\) may be chosen depending only on \(\Delta,\lambda,K\), uniformly
for all \(t\in K\).
\end{theorem}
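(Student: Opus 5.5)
The plan is to prove SSM by a potential-function contraction argument applied to \emph{two-step} recursions on the SAW tree. This exploits bipartiteness: the levels of $T_{\mathrm{SAW}}(G,v)$ alternate between copies of $L$-vertices, with activity $\lambda_L=\lambda e^t$, and copies of $R$-vertices, with activity $\lambda_R=\lambda e^{-t}$.

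\textbf{Step 1: the two-level recursion.} For a vertex $x$ with children $y_1,\dots,y_k$ ($k\le\Delta-1$ below the root, $k\le\Delta$ at the root), where each $y_i$ has children $z_{i1},\dots,z_{ik_i}$, the recursion reads
\[
R_x=F(\mathbf R_z):=\lambda_x\prod_{i}\Bigl(1+\lambda_{y}\prod_j (1+R_{z_{ij}})^{-1}\Bigr)^{-1}.
\]
Here $\lambda_x\lambda_y=\lambda^2$ regardless of $t$. Boundary-fixed vertices and leaves are handled in the standard way: occupied children force the parent to $0$, unoccupied children are deleted. So it suffices to treat the unfixed vertices.

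\textbf{Step 2: change of variables.} We pass to $\phi(R)=\arcsinh\sqrt R$, i.e.\ $\Phi(R)=\phi'(R)=1/(2\sqrt{R(1+R)})$, and bound the gradient norm of the conjugated map $\tilde F=\phi\circ F\circ\phi^{-1}$. The aim is
\[
\sum_{i,j}\frac{\Phi(R_x)}{\Phi(R_{z_{ij}})}\Bigl|\frac{\partial R_x}{\partial R_{z_{ij}}}\Bigr|\le 1-\delta
\]
for all admissible inputs $R_z\in[0,\lambda_x']$, with $\delta=\delta(\Delta,\lambda,K)>0$. The chain rule splits this into the product of single-level factors $\Phi(R_x)\frac{R_y}{1+R_y}\frac{1}{\Phi(R_y)}$ and $\Phi(R_y)\frac{R_z}{1+R_z}\frac1{\Phi(R_z)}$. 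Each single-level factor equals
\[
\sqrt{\frac{R_x}{1+R_x}}\sqrt{\frac{R_y}{1+R_y}}\cdot(\ldots),
\]
which is the form used by Sinclair--Srivastava--\v{S}tefankovi\v{c}--Yin.

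\textbf{Step 3: the key estimate, and the main obstacle.} In the standard univariate analysis, one level is bounded via concavity and symmetrization: the sum over $d$ children is maximized when all children are equal, giving $d\cdot\frac{\sqrt{R_x R_y}}{\sqrt{(1+R_x)(1+R_y)}}$ with $R_x=\lambda(1+R_y)^{-d}$, and this is $<1$ iff $\lambda<\lambda_c(d+1)$. Here a single level may fail when $\lambda_L>\lambda_c(\Delta)$. So I would symmetrize over both levels: take all $z$'s equal to $r$, all $y$'s equal, and bound $\Theta_L\Theta_R$, where $\Theta_{\lambda_x}(r)$ denotes the one-level factor evaluated at the fixed-point-type relations.

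The crucial claim is that the worst case of the product is governed by the two-step map $f_t(r)=\lambda_L(1+\lambda_R(1+r)^{-(\Delta-1)})^{-(\Delta-1)}$, which is the same form as in Chen--Liu--Yin. The claim to establish is that $f_t$ has a unique fixed point $\hat r$ with $|f_t'(\hat r)|$ strictly less than $1$ in potential coordinates. The natural route is to relate $f_t$ to the untilted $f_0$ by a conjugation: the substitution $r\mapsto$ an appropriate rescaling shows the two-step map's derivative at its fixed point equals that of the symmetric hard-core model with $\lambda^2=\lambda_L\lambda_R$, which is contractive iff $\lambda<\lambda_c(\Delta)$. I expect this to be the hardest part. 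Two ingredients are needed: (a) justifying that the maximum of the two-level gradient over non-symmetric inputs reduces to the symmetric fixed-point computation (via concavity of the appropriate function in the $\phi$-variables, as in Lemma 4.3-type arguments of Li--Lu--Yin/SSSY); and (b) verifying the uniqueness-region inequality for the bivariate map along the curve $\lambda_L\lambda_R=\lambda^2$. Compactness of $K$ and continuity in $t$ then make $\delta$ uniform.

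\textbf{Step 4: conclusion.} With contraction factor $1-\delta$ per two levels, we iterate down from the root to depth $d(v,\mathcal D)$, using the mean value theorem in $\phi$-coordinates. The $\phi$-values at the disagreement layer differ by at most a constant, since ratios are bounded by $\max(\lambda_L,\lambda_R)$ after one step. The top level or two, including the root with $\Delta$ children, contribute only constant Lipschitz factors. Converting back, $\phi^{-1}$ is Lipschitz on bounded ranges, which gives $|R_v^\tau-R_v^{\tau'}|\le C(1-\delta)^{\lfloor d/2\rfloor-1}$. Hence $c=-\tfrac12\log(1-\delta)$ and $C$ depends only on $\Delta,\lambda,K$.
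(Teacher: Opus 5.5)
Your architecture matches the paper's: the two-level recursion, the potential $\phi=\operatorname{arcsinh}\sqrt{\cdot}$, Jensen symmetrization, iterating a contraction two levels at a time with the $\Delta$-child root handled by a one-level Lipschitz bound, and converting back via $\phi^{-1}$. The gap is in Step 3(b), which is the heart of the proof.

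\textbf{The proposed conjugation does not exist.} Let $(\hat r,\hat s)$ be the fixed point of the tilted two-step map, with $\hat s=\lambda_R(1+\hat r)^{-d}$ and $d=\Delta-1$. The derivative there is invariant under any smooth change of coordinates and equals $d^2\hat r\hat s/((1+\hat r)(1+\hat s))$. This depends on $t$: as $t\to\infty$ one has $\hat r\approx\lambda_L$ and $\hat s\approx\lambda^2\lambda_L^{-d-1}\to0$, so the derivative tends to $0$, while the untilted value is a fixed positive number. Hence no conjugacy to the symmetric model can exist.

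\textbf{A fixed-point bound is the wrong quantity anyway.} A bound on the derivative at the fixed point is a uniqueness-type statement. Your Step 4 instead needs the symmetrized two-level gradient to be at most $1-\delta$ at \emph{every} admissible input, because arbitrary boundary conditions produce arbitrary grandchild ratios. The maximizer is generally not the fixed point. Already for one level, the squared $\phi$-derivative $d^2xf(x)/((1+x)(1+f(x)))$ with $f(x)=\lambda(1+x)^{-d}$ is maximized where $f(x)=dx-1$, not where $f(x)=x$. You also fix the exponent $\Delta-1$ at both levels, but vertices have arities $a,b\le d$, and you would need to prove that $(d,d)$ is extremal in the two-level setting.

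\textbf{The missing idea: take the supremum over inputs and over the tilt simultaneously.} Since $\lambda_\dagger\lambda_{\dagger^c}=\lambda^2$ and $\lambda_{\dagger^c}$ sweeps $(0,\infty)$ as $t$ ranges over $\mathbb R$, the intermediate ratio $y_0$ becomes a free variable. The paper therefore bounds $\sup_{y_0>0,\,x_1\ge 0}\Theta_{a,b,\lambda}(y_0,x_1)$, where $\Theta_{a,b,\lambda}=a^2b^2\lambda^2y_0^2x_1/[(1+y_0)^2(1+x_1)(\lambda^2+y_0(1+y_0)^a(1+x_1)^b)]$ is the squared equal-input two-level derivative.
\begin{itemize}
\item $\Theta_{a,b,\lambda}$ tends to $0$ at the boundary of the domain.
\item Its interior critical points are parametrized by $\omega=(a+1)y_0>1$ with $x_1=(\omega+1)/(2b)$.
\item At such a point the critical value is $\Xi_{a,b}(\omega)$ and $\lambda^2=\Lambda_{a,b}(\omega)$, and both are strictly increasing in $\omega$.
\item Hence the critical value is $<1$ whenever $\lambda^2<B(a,b):=\Lambda_{a,b}(\omega_{a,b})$, where $\Xi_{a,b}(\omega_{a,b})=1$.
\item Implicit differentiation shows $B$ is decreasing in $a$ and $b$. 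At $(d,d)$, $\omega_{d,d}=(d+1)/(d-1)$ gives $B(d,d)=\lambda_c(\Delta)^2$.
\end{itemize}
This yields a contraction factor $M<1$ uniform over all $t\in\mathbb R$; only the prefactor depends on $K$.

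\textbf{On your part (a).} The second Jensen step, over the $a$ children, is not supplied by the one-level concavity arguments of SSSY or Li--Lu--Yin. The paper needs concavity of the new function $v\mapsto(1-e^{-v})\sqrt{1-((e^v-1)/\lambda_{\dagger^c})^{1/b}}$, after padding unequal $b_i$ with zero inputs.
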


The rest of this section is devoted to proving Theorem \ref{thm:ssm}. We introduce some notation towards this end. On a rooted tree, suppose a vertex \(u\) is in \(\dagger \in \inbraces{L,R}\), has activity \(\lambda_\dagger\), and has $d$ children with occupation ratios \(R_1,\dots,R_d\), where $d \leq \Delta-1$. Then the hard-core recursion is
\begin{equation}\label{eq:tree_ratio_recursion_tilted}
R_u
=
\lambda_\dagger\prod_{i=1}^d\frac{1}{1+R_i}.
\end{equation}
We will study this recursive  map abstractly, so it is useful to introduce the one-level update maps
\[
F_{\dagger,d}(x_1,\dots,x_d)
:=
\lambda_\dagger\prod_{i=1}^d\frac{1}{1+x_i},
\qquad\text{and}\qquad
f_{\dagger,d}(x)
:=
\frac{\lambda_\dagger}{(1+x)^d},
\]
where \(f_{\dagger,d}\) is the specialization of \(F_{\dagger,d}\) when all child ratios are equal.

We will use the coordinate change
\[
\phi(x):=\operatorname{arcsinh}\sqrt{x},
\]
and show that the recursive maps are contractive in this new coordinate system.

The one-level update in \(\phi\)-coordinates for the equal-input setting is $g_{\dagger,d} = \phi \circ f_{\dagger,d} \circ \phi^{-1}$, i.e.
\[
g_{\dagger,d}(w)
=
\operatorname{arcsinh}\!\left(\sqrt{\lambda_\dagger}\,\operatorname{sech}^d w\right).
\]

Intuitively, it will be convenient to study contractivity of a self-map that goes from $\dagger \in \inbraces{L,R}$ to $\dagger^c$ and back to $\dagger$. On the tree, this corresponds to a two-level map going in a boundary-to-root direction. Suppose a particular vertex $v$ is in \(\dagger\), has \(a\) children, and each child has \(b\) children (so grandchildren of $v$). If all grandchild ratios are equal to \(x_1\), then the intermediate child ratio and the ratio at $v$ are respectively
\[
y_0:=f_{\dagger^c,b}(x_1)
=
\lambda_{\dagger^c}(1+x_1)^{-b},
\qquad
x_0:=f_{\dagger,a}(y_0)
=
\lambda_\dagger(1+y_0)^{-a}.
\]
In \(\phi\)-coordinates, define the two-level map
\[
h_{\dagger,a,b}
:=
g_{\dagger,a}\circ g_{\dagger^c,b}
=
\phi\circ f_{\dagger,a}\circ f_{\dagger^c,b}\circ\phi^{-1}.
\]
The derivative of $h_{\dagger,a,b}$ at \(w=\phi(x_1)\), in terms of the original coordinates, is
\begin{equation}\label{eq:SSM_hPrime_ab_derivative}
h'_{\dagger,a,b}(\phi(x_1))
=
ab \sqrt{\frac{x_1}{1+x_1}} \frac{y_0}{1+y_0}
\sqrt{\frac{x_0}{1+x_0}}.
\end{equation}
Our analysis centers around this function \(h'_{\dagger,a,b}\), which measures how much a perturbation at the grandchildren can affect the root after passing through two recursive updates.

The proof of Theorem \ref{thm:ssm} has three main ingredients. Proposition \ref{prop:two-level-equal-input} starts with the general multivariate two-level update obtained from the maps \(F_{\dagger,a}\) and \(F_{\dagger^c,b_i}\), with arbitrary grandchild ratios, and bounds its Jacobian norm by the equal-input function $h_{\dagger,a,b}$. Proposition \ref{prop:phi_contraction} then proves that the derivatives of $h_{\dagger,a,b}$ are uniformly smaller than \(1\) whenever \(\lambda<\lambda_c(\Delta)\) over the tilt parameter $t$. Finally, Theorem \ref{thm:ssm} is proved by iterating this two-level contraction to give exponential decay in the \(\phi\)-coordinates, whence Lemma \ref{lem:dif_coords_suffice} converts that decay back to the original coordinates.

We define analogous notation for the non-equal input recursions. Consider a rooted two-level bipartite tree fragment in which the root belongs to \(\dagger\), has $a$ children, and each child $i$ has $b_i$ children, with corresponding grandchild ratios \(x_{1,i,1},\dots,x_{1,i,b_i}\ge 0\). Define the child ratios $y_1,\dots,y_a$ and root ratio $x_0$ by
\begin{align}\label{eq:SSM_unequal_yi_x0_def}
    y_i=\lambda_{\dagger^c}\prod_{j=1}^{b_i}(1+x_{1,i,j})^{-1},
    \qquad
    x_0=\lambda_\dagger\prod_{i=1}^a(1+y_i)^{-1}.
\end{align}
In the corresponding \(\phi\)-coordinates write the grandchild, child, and root ratios as
\begin{align}\label{eq:SSM_unequal_qr_phiCoords_def}
  q_{1,i,j}=\phi(x_{1,i,j}),\qquad r_i=\phi(y_i),\qquad\text{and}\qquad q_0=\phi(x_0).
\end{align}
Given an arbitrary input of unequal number of grandchildren and unequal grandchild ratios $\inbraces{x_{1,i,j}}_{1 \leq i \leq a, 1 \leq j \leq b_i}$, we will now outline a construction of a corresponding equal-input effective grandchild ratio $x_1$. We also set \(b=\max_{1\le i\le a} b_i\), and set \(b=0\) if \(a=0\). In Proposition \ref{prop:two-level-equal-input} the influence of the grandchild ratios on the root ratio for arbitrary unequal inputs will be shown to be bounded by that of the equal input setting with grandchild ratio $x_1$, i.e.~by the magnitude of $h'_{\dagger,a,b}(\phi(x_1))$. In the case where \(a,b>0\) and both activities are positive, set
\begin{align}\label{eq:SSM_unequalRatio_to_equalRatio_x1_def}
    x_1 = \inparen{\frac{\lambda_{\dagger^c}}{y_0}}^{1/b} - 1, \qquad\text{where}\qquad y_0 = \left(\prod_{i=1}^a(1+y_i)\right)^{1/a} - 1.
\end{align}
This $y_0$ is chosen to preserve the root ratio so that $x_0 = \lambda_\dagger(1+y_0)^{-a}$, and the equal-input grandchild ratio $x_1$ is chosen to preserve this child ratio $y_0$ so that $y_0 = \lambda_{\dagger^c} (1+x_1)^{-b}$. In the degenerate cases \(a=0\), \(b=0\), or one of the two activities is \(0\), we take \(x_1=0\).

\begin{proposition}\label{prop:two-level-equal-input}
On a rooted two-level bipartite tree fragment with root in $\dagger$ having $a$ children, and each child $i$ having $b_i$ children, suppose the grandchild ratios are $\inbraces{x_{1,i,j}}_{1 \leq i \leq a, 1 \leq j \leq b_i}$. Then with $x_1$ defined in \eqref{eq:SSM_unequalRatio_to_equalRatio_x1_def}, we have
\[
\sum_{i=1}^a\sum_{j=1}^{b_i}
\left|
\frac{\partial q_0}{\partial q_{1,i,j}}
\right|
\le
h'_{\dagger,a,b}(\phi(x_1)).
\]
\end{proposition}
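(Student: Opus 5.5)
The plan is to compute the Jacobian entries exactly by the chain rule, and then use two Jensen-type steps: first over the grandchildren of a fixed child, then over the children.

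\emph{Chain rule.} Since $\phi'(x)=\bigl(2\sqrt{x(1+x)}\bigr)^{-1}$, we have $\partial x_0/\partial y_i=-x_0/(1+y_i)$ and $\partial y_i/\partial x_{1,i,j}=-y_i/(1+x_{1,i,j})$. Combining these gives
\[
\left|\frac{\partial q_0}{\partial q_{1,i,j}}\right|
=\sqrt{\frac{x_0}{1+x_0}}\;\frac{y_i}{1+y_i}\;\sqrt{\frac{x_{1,i,j}}{1+x_{1,i,j}}}.
\]
By construction of $y_0$ in \eqref{eq:SSM_unequalRatio_to_equalRatio_x1_def}, the factor $\sqrt{x_0/(1+x_0)}$ is common to both sides. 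The degenerate cases ($a=0$, $b=0$, or a zero activity) give $0\le 0$ trivially. It therefore remains to show
\[
\sum_{i=1}^a \frac{y_i}{1+y_i}S_i\;\le\; ab\,\frac{y_0}{1+y_0}\sqrt{\frac{x_1}{1+x_1}},
\qquad S_i:=\sum_{j=1}^{b_i}\sqrt{\frac{x_{1,i,j}}{1+x_{1,i,j}}}.
\]

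\emph{Step 1 (fixed child $i$).} Write $u_j=\log(1+x_{1,i,j})\ge 0$ and $G(u)=\sqrt{1-e^{-u}}$, so that each summand of $S_i$ equals $G(u_j)$. The constraint is $\sum_j u_j=U_i:=\log(\lambda_{\dagger^c}/y_i)\ge 0$. The function $G$ is concave on $[0,\infty)$, since $G'=e^{-u}/(2G)$ is decreasing, and $G(0)=0$. So we may pad with $b-b_i$ zero entries and apply Jensen with $b$ terms to get
\[
S_i\le b\,G(U_i/b)=b\sqrt{1-(y_i/\lambda_{\dagger^c})^{1/b}}.
\]

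\emph{Step 2 (over children).} Set $v_i=\log(1+y_i)$, so that $\sum_i v_i=a\log(1+y_0)=:a v_0$ by the definition of $y_0$. Define
\[
\psi(v)=(1-e^{-v})\sqrt{1-\bigl((e^v-1)/\lambda_{\dagger^c}\bigr)^{1/b}}
\qquad\text{on } [0,\log(1+\lambda_{\dagger^c})].
\]
After Step 1, the left side is at most $b\sum_i\psi(v_i)$. The right side equals $ab\,\psi(v_0)$, because $x_1$ was chosen exactly so that $(1+x_1)^{-1}=(y_0/\lambda_{\dagger^c})^{1/b}$. So it suffices that $\psi$ be concave, after which Jensen finishes the proof.

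\emph{Main obstacle: concavity of $\psi$.} I expect this to be the hard step. $\psi$ is a product of an increasing concave factor and a decreasing factor, so concavity is not automatic. I would first change variables to $s=(y/\lambda_{\dagger^c})^{1/b}\in[0,1]$. In this variable $\psi=\frac{\lambda s^b}{1+\lambda s^b}\sqrt{1-s}$, and $v=\log(1+\lambda s^b)$. I would then verify $\psi''(v)\le 0$ by direct differentiation, reducing it to the sign of an explicit polynomial-type expression in $(s,\lambda,b)$. If global concavity fails, the fallback is a Lagrange-multiplier argument. It would maximize $\sum_i\psi(v_i)$ subject to $\sum_i v_i=av_0$ and show that interior critical points must have all $v_i$ equal. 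Boundary points, where some $y_i$ equals $0$ or $\lambda_{\dagger^c}$, would be checked separately, since there $\psi$ vanishes and only helps the inequality.
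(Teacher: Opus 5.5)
Your proposal is correct and follows the paper's proof essentially step for step. It uses the same chain-rule formula, the same Jensen step with $\sqrt{1-e^{-u}}$ after padding by zero grandchildren, and the same second Jensen step with your $\psi$ (the paper's $K$). The concavity you flag as the main obstacle is exactly what the paper checks by direct differentiation in your variable $s$ (its $\alpha$). It obtains $\psi''(v)=-N(s)/\bigl(4\lambda_{\dagger^c}b^2s^b(1-s)^{3/2}(1+\lambda_{\dagger^c}s^b)\bigr)$ with $N$ a sum of positive terms. Equivalently, $\psi'(v)=\sqrt{1-s}/(1+\lambda_{\dagger^c}s^b)-s/\bigl(2b\sqrt{1-s}\bigr)$ is visibly decreasing in $s$, and $s$ increases with $v$, so the Lagrange-multiplier fallback is unnecessary.
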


\begin{proposition}\label{prop:phi_contraction}
For any $\lambda < \lambda_c(\Delta)$, we have
\[
\max_{\dagger\in\{L,R\}}\max_{0\le a,b\le \Delta-1}\ \sup_{t \in \RR} \sup_{x_1\ge 0}h'_{\dagger,a,b}(\phi(x_1))<1.
\]
\end{proposition}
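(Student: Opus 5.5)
The plan is to turn the supremum in Proposition~\ref{prop:phi_contraction} into a finite-dimensional optimization problem. The structural point is that the tilt enters only through the pair $(\lambda_\dagger,\lambda_{\dagger^c})$, which satisfies $\lambda_\dagger\lambda_{\dagger^c}=\lambda^2$. Since $a,b$ range over a finite set and $\dagger\in\{L,R\}$, it suffices to fix $a,b\ge 1$ and show
\[
S_{a,b}:=\sup_{t\in\RR}\ \sup_{x_1\ge 0}\ h'_{\dagger,a,b}(\phi(x_1))<1 .
\]
The cases $a=0$ or $b=0$ are trivial, since then $h'\equiv 0$.

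\emph{Step 1: reparametrize by the middle ratio.} I would take $y_0>0$ as the free variable instead of $x_1$. Then
\[
x_1=(\lambda_{\dagger^c}/y_0)^{1/b}-1 ,
\]
so the constraint $x_1\ge 0$ becomes $y_0\le\lambda_{\dagger^c}$, and $x_0=\lambda_\dagger(1+y_0)^{-a}$. Writing $p=y_0/(1+y_0)$, formula \eqref{eq:SSM_hPrime_ab_derivative} becomes
\[
h'^2=ab\cdot\Big[b\,\tfrac{x_1}{1+x_1}\,p\Big]\cdot\Big[a\,p\,\tfrac{x_0}{1+x_0}\Big]\Big/(ab),
\]
that is, $h'^2$ is a product of two one-level factors
\[
A:=b\,\tfrac{x_1}{1+x_1}\,p\qquad\text{and}\qquad B:=a\,p\,\tfrac{x_0}{1+x_0}.
\]
Each of these is the square of the $\phi$-coordinate derivative of a single hard-core update $x\mapsto \mu(1+x)^{-d}$ at a point and its image. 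Consequently $h'=\sqrt{AB}$. The first factor depends on $\lambda_{\dagger^c}=\lambda e^{\mp t}$, and the second on $\lambda_\dagger=\lambda e^{\pm t}$.

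\emph{Step 2: eliminate the tilt.} For fixed $y_0$, $A$ is increasing in $\lambda_{\dagger^c}$ and $B$ is increasing in $\lambda_\dagger$. Their product is therefore a function of $s=e^{t}$ along the hyperbola $\lambda_\dagger\lambda_{\dagger^c}=\lambda^2$.

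I would first show that $h'\to0$ as $t\to\pm\infty$, uniformly in $y_0$:
\begin{itemize}
\item If $\lambda_{\dagger^c}\to0$, then $y_0\le\lambda_{\dagger^c}$ forces $p\to0$.
\item If $\lambda_\dagger\to 0$, then $x_0\to 0$.
\end{itemize}
In both cases the prefactor is at most $ab$. So the supremum over $t$ is effectively over a compact set. The remaining task is then a pointwise bound, not just a strict inequality at each point.

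For the pointwise bound I would use the log-concavity identity for a single update: $\log(1+x_1)=\frac1b\log(\lambda_{\dagger^c}/y_0)$ and $\log(1+y_0)=\frac1a\log(\lambda_\dagger/x_0)$. Multiplying these, the two activities appear only through $\log\lambda_\dagger+\log\lambda_{\dagger^c}=2\log\lambda$. I would then apply AM--GM to $\sqrt{AB}$ to dominate it by an expression symmetric in the two levels, namely the two-level derivative of the untilted model ($t=0$) at a possibly different point. For the untilted model with branching $d=\max(a,b)\le\Delta-1$, the known analysis of the potential $\arcsinh\sqrt{x}$ applies (Sinclair--Srivastava--\v{S}tefankovi\v{c}--Yin and related work): for $\lambda<\lambda_c(\Delta)$, the one-level factor $d\sqrt{\tfrac{x}{1+x}\tfrac{f(x)}{1+f(x)}}$ is at most $1-\delta$ uniformly in $x$. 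Squaring this bound on the symmetric comparison expression would give $h'\le 1-\delta$. Monotonicity in $a$ and $b$ reduces everything to $a=b=\Delta-1$.

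\emph{Main obstacle.} The comparison with $t=0$ in Step 2 is the delicate step. A single tilted factor can exceed $1$ once $\lambda e^{|t|}>\lambda_c$, so the contraction must come from the pairing of the two factors, not from each factor separately. My first attempt would be explicit. I would fix $y_0$ and maximize $\log A+\log B$ over $t$ by calculus; since $\tfrac{d}{dt}$ of each term has a clean form in $x_1,x_0$, I expect the critical point to satisfy a symmetry condition matching the untilted fixed-point equation. If that fails, the fallback is to use compactness in $(t,y_0)$ from the tail estimates above, together with a strict pointwise inequality $h'<1$ checked via the fixed-point characterization of $\lambda_c(\Delta)$. By continuity, this yields a uniform $\sup<1$.
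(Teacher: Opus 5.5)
\textbf{The gap is Step~2.} Step~1, the tail estimates as $t\to\pm\infty$, and the elimination of $t$ through $\lambda_\dagger\lambda_{\dagger^c}=\lambda^2$ are fine; the last is exactly Lemma~\ref{lem:h_dagger_theta_reduction}. There is one constant slip: with your $A,B$ one has $h'^2=ab\,AB$. The squared one-level $\phi$-derivatives are $b^2p\frac{x_1}{1+x_1}$ and $a^2p\frac{x_0}{1+x_0}$.

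Step~2 carries the whole content of the proposition, and nothing in the proposal establishes it. The mechanisms you suggest fail:
\begin{enumerate}
\item Plain AM--GM, $\sqrt{AB}\le(A+B)/2$, cannot help, since one tilted factor can exceed $1$.
\item Domination by ``the two-level derivative of the untilted model at a possibly different point'' is false. So is the expectation that the optimal tilt satisfies the untilted fixed-point symmetry. Take $\Delta=3$, $a=b=2$, $\lambda^2=\Lambda_{2,2}(2)=1225/432$, so $\lambda\approx1.68<\lambda_c(3)=4$. The paper's critical-point formulas give $\sup_{t,x_1}h'^2=\Xi_{2,2}(2)=64/175\approx0.366$. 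This is attained only at $(y_0,x_1)=(2/3,3/4)$, i.e.\ at $\lambda_{\dagger^c}=y_0(1+x_1)^2=49/24\neq\lambda$, so $t\neq0$. On the untilted curve $y_0(1+x_1)^2=\lambda$ the maximum of $h'^2$ is strictly smaller, about $0.360$. In fact, for $a=b=\Delta-1$ the optimal tilt is $0$ only at $\lambda=\lambda_c(\Delta)$.
\item The fallback is circular. Compactness only reduces $\sup<1$ to the pointwise inequality $h'<1$ on the whole $(t,x_1)$-range, which is the statement itself. ``Checked via the fixed-point characterization'' is not an argument.
\item The reduction to $a=b=\Delta-1$ is only asserted. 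In the paper it is the nontrivial fact that $B(a,b)=\Lambda_{a,b}(\omega_{a,b})$ is decreasing.
\end{enumerate}

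\textbf{What is missing is a valid way to pair the two levels.} The paper does this explicitly. With $(y_0,x_1)$ as free variables, $\Theta_{a,b,\lambda}$ vanishes on the boundary. Interior critical points satisfy $\omega=(a+1)y_0$ and $x_1=(\omega+1)/(2b)$, with value $\Xi_{a,b}(\omega)$ and $\lambda^2=\Lambda_{a,b}(\omega)$. Monotonicity of $\Xi_{a,b}$, $\Lambda_{a,b}$ and $B(a,b)$ then turns $\Xi\ge1$ into $\lambda\ge\lambda_c(\Delta)$.

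Your ``symmetric comparison'' can also be rescued, but by a different inequality from the one you name.
\begin{itemize}
\item Bound each tilted one-level factor by its supremum over inputs, $G_d(\mu):=\sup_{x\ge0}d^2\frac{x}{1+x}\frac{\mu}{(1+x)^d+\mu}$. This gives $h'^2\le G_b(\lambda_{\dagger^c})\,G_a(\lambda_\dagger)$.
\item The maximizer satisfies $\mu=(1+x)^d(dx-1)$, with $G_d(\mu)=\frac{d(dx-1)}{1+x}$. Along this curve $\frac{\mathrm{d}\log G_d}{\mathrm{d}\log\mu}=\frac{1}{dx}$, which is positive and decreasing. Hence $\log G_d$ is increasing and concave in $\log\mu$, and $G_d(\mu)G_d(\nu)\le G_d(\lambda)^2$ whenever $\mu\nu=\lambda^2$.
\item $G_a\le G_d$ for $a\le d$: reparametrize by $(1+x')^{a+1}=(1+x)^a$ and use concavity of $u\mapsto1-e^{-u}$.
\item $G_{\Delta-1}(\lambda_c(\Delta))=1$, at $x=1/(\Delta-2)$.
\end{itemize}
Together these give $h'\le G_{\Delta-1}(\lambda)<1$, uniformly in $t$ and $x_1$. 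Some such concavity statement, or the paper's critical-point analysis, is exactly the ingredient your proposal lacks. As written, no step of it uses $\lambda<\lambda_c(\Delta)$ in the presence of a tilt.
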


\begin{lemma} \label{lem:dif_coords_suffice}
Suppose there exist constants \(C_0,c>0\) such that for any two boundary conditions \(\tau,\tau'\),
\[
\bigl|\phi(R_v^\tau)-\phi(R_v^{\tau'})\bigr|
\le
C_0 e^{-c\, d(v,\mathcal D)},
\]
where \(\mathcal D\) is the set on which the boundary conditions differ. Suppose also that there is a uniform bound $
0\le R_v^\tau,\; R_v^{\tau'} \le R_{\max}$.
Then there exists a constant \(C' = C'(R_{\max},C_0) > 0\) such that
\[
|R_v^\tau-R_v^{\tau'}|
\le
C' e^{-c\, d(v,\mathcal D)}.
\]
\end{lemma}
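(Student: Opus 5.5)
The claim is that closeness in the $\phi$-coordinates transfers to closeness in the original coordinates. The plan is to use that $\phi^{-1}$ is Lipschitz on the bounded range of values $\phi$ takes on $[0,R_{\max}]$, and then apply the mean value theorem.

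First, compute the inverse explicitly. Since $\phi(x)=\operatorname{arcsinh}\sqrt{x}$, we have $\phi^{-1}(w)=\sinh^2 w$ for $w\ge 0$. Its derivative is
\[
\frac{d}{dw}\sinh^2 w = 2\sinh w\cosh w = \sinh(2w).
\]
This is nonnegative and increasing on $[0,\infty)$.

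Next, use the a priori bound. Because $0\le R_v^\tau,R_v^{\tau'}\le R_{\max}$ and $\phi$ is increasing, both $\phi(R_v^\tau)$ and $\phi(R_v^{\tau'})$ lie in $[0,w_{\max}]$ with $w_{\max}:=\operatorname{arcsinh}\sqrt{R_{\max}}$.

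On this interval $\phi^{-1}$ is Lipschitz with constant
\[
\sup_{0\le w\le w_{\max}}\sinh(2w)=\sinh(2w_{\max})=2\sqrt{R_{\max}(1+R_{\max})}.
\]
The last equality uses $\sinh(2w)=2\sinh w\cosh w$ together with $\cosh^2=1+\sinh^2$.

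By the mean value theorem,
\[
|R_v^\tau-R_v^{\tau'}| = \bigl|\phi^{-1}(\phi(R_v^\tau))-\phi^{-1}(\phi(R_v^{\tau'}))\bigr| \le 2\sqrt{R_{\max}(1+R_{\max})}\,\bigl|\phi(R_v^\tau)-\phi(R_v^{\tau'})\bigr|.
\]
Combining this with the hypothesis gives the conclusion with $C'=2\sqrt{R_{\max}(1+R_{\max})}\,C_0$, with the same rate $c$.

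There is essentially no obstacle in this lemma. The only point needing care is that $\phi$ itself is not Lipschitz at $0$, because of the square root; we only need the reverse direction, and $\phi^{-1}$ is smooth there.

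In the application to Theorem~\ref{thm:ssm}, the nontrivial input is the uniform bound $R_{\max}$. It follows from the recursion \eqref{eq:tree_ratio_recursion_tilted}, which gives $R\le\max(\lambda e^{t},\lambda e^{-t})$, and this is uniform for $t$ in a compact $K$. Boundary leaves fixed occupied have ratio $\infty$, but the root is never such a leaf, so its ratio is finite and bounded by this quantity.
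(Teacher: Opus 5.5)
Your proof is correct and is essentially the paper's argument: the mean value theorem applied to $\phi^{-1}(w)=\sinh^2 w$ on the compact interval $[0,\phi(R_{\max})]$. The only difference is that you compute the Lipschitz constant explicitly as $\sinh(2\phi(R_{\max}))=2\sqrt{R_{\max}(1+R_{\max})}$, whereas the paper only notes that it is finite by smoothness and compactness.
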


\begin{proof}
Since \(R_v^\tau,R_v^{\tau'}\in[0,R_{\max}]\) and the inverse map \(\phi^{-1}(s)=\sinh^2s\) is smooth on the compact interval \([0,\phi(R_{\max})]\), we have
$
L_{R_{\max}}:=\sup_{0\le s\le \phi(R_{\max})}\left|\left(\phi^{-1}\right)'(s)\right|<\infty.
$
The mean value theorem gives
\[
\bigl|R_v^\tau-R_v^{\tau'}\bigr|
\le
L_{R_{\max}}\,\bigl|\phi(R_v^\tau)-\phi(R_v^{\tau'})\bigr|
\le
L_{R_{\max}} C_0 e^{-c\,d(v,\mathcal D)}. \qedhere
\]
\end{proof}

\begin{proof}[Proof of Theorem \ref{thm:ssm}]
In this proof, abbreviate \(d:=\Delta-1\) and
\(\Lambda:=\max\{\lambda_L(t),\lambda_R(t)\}\).

Fix $v\in V(G)$ and let \(T_{\mathrm{SAW}}=T_{\mathrm{SAW}}(G,v)\). We first prove exponential decay in the \(\phi\)-coordinates on rooted bipartite trees of forward degree at most \(d\), and then apply this estimate to the descendant subtrees of \(T_{\mathrm{SAW}}\).

We use the recursion two levels at a time. Set
$
M:=\max_{\dagger}\max_{0\le a,b\le d}
\sup_{s\in\mathbb R}\sup_{x_1\ge0}
h'_{\dagger,a,b}(\phi(x_1)),
$
where the fugacities in \(h_{\dagger,a,b}\) are evaluated at tilt \(s\).
By Proposition~\ref{prop:phi_contraction}, \(M<1\), and
Proposition~\ref{prop:two-level-equal-input} implies that any two-level
update whose root and children have at most \(d\) children satisfies
\begin{equation}\label{eq:two_level_phi_jacobian_bound}
\sum_{i,j}
\left|
\frac{\partial q_0}{\partial q_{1,i,j}}
\right|
\le M.
\end{equation}

Let \(T\) be any rooted bipartite tree of forward degree at most \(d\).
Although a vertex fixed to be occupied by a boundary condition has ratio
\(+\infty\), every vertex \(u\) not fixed by the boundary condition
satisfies \(0\le R_u\le\lambda_u\le\Lambda\). For \(u\in T\) and
\(h\ge1\), define
\[
D_h(u):=
\sup_{\eta,\eta'}
\bigl|\phi(R_u^\eta)-\phi(R_u^{\eta'})\bigr|,
\]
where the supremum is over all pairs of boundary conditions
\(\eta,\eta'\) supported on descendants of \(u\), and which agree
on every vertex within graph distance strictly less than \(h\) from
\(u\). Thus \(D_h(u)\) measures the effect of boundary conditions below
a vertex \(u\) which is not itself fixed.

We claim that for every \(u\) and every \(h\ge1\),
\begin{equation}\label{eq:Dh_recursion}
D_{h+2}(u)\le M\max_{w:\,\operatorname{dist}_T(u,w)=2}D_h(w),
\end{equation}
with the right-hand side interpreted as \(0\) if \(u\) has no grandchildren.

To prove the claim, fix a pair \(\eta,\eta'\) appearing in the
supremum defining \(D_{h+2}(u)\). These boundary conditions agree at
every child and grandchild of \(u\). If a child of \(u\) is fixed to be
occupied, then \(R_u^\eta=R_u^{\eta'}=0\), and there is nothing to
prove. A child fixed to be unoccupied has ratio \(0\) under both
boundary conditions, and its factor in the recursion at \(u\) is
\((1+0)^{-1}=1\), so that child branch may be omitted.

Now let \(w\) be a grandchild of \(u\) whose parent is not fixed. Since
\(w\) is at distance \(2\) from \(u\), the restrictions of
\(\eta,\eta'\) to the descendant subtree rooted at \(w\) agree on every
vertex within distance strictly less than \(h\) from \(w\). If \(w\) is
not fixed, these restrictions are supported on proper descendants of
\(w\), and hence form a pair included in the supremum defining
\(D_h(w)\). Otherwise, \(w\) is fixed to be occupied by both boundary
conditions or fixed to be unoccupied by both.

In the notation of \eqref{eq:SSM_unequal_yi_x0_def}, if \(w\) is fixed
to be unoccupied, then \(x_{1,i,j}=0\), so its factor
\((1+x_{1,i,j})^{-1}=1\) in the recursion for \(y_i\) may be omitted. If
\(w\) is fixed to be occupied, then \(x_{1,i,j}=+\infty\), so this
factor is \(0\) and hence \(y_i=0\); consequently,
\((1+y_i)^{-1}=1\) in the recursion for \(x_0\), and the entire child
branch may be omitted. Thus every remaining grandchild ratio is finite,
and its two \(\phi\)-coordinates differ by at most \(D_h(w)\).

If \(u\) has no grandchildren there is nothing to show; otherwise write \(w_{i,j}\) for the grandchild corresponding to the ratio \(q_{1,i,j}\). Let \(\mathbf q_1\) and \(\mathbf q_1'\) be the two grandchild (relative to $u$) \(\phi\)-coordinate vectors induced by \(\eta\) and \(\eta'\), and write \(q_0(\mathbf q_1)\) for the root \(\phi\)-coordinate obtained from the two-level recursion. The fundamental theorem of calculus along \(\gamma(t):=\mathbf q_1'+t(\mathbf q_1-\mathbf q_1')\), \(0\le t\le1\), together with \eqref{eq:two_level_phi_jacobian_bound}, gives
\begin{align*}
\bigl|\phi(R_u^\eta)-\phi(R_u^{\eta'})\bigr|&=|q_0(\mathbf q_1)-q_0(\mathbf q_1')|
= \abs{\int_0^1 \sum_{i,j}\frac{\partial q_0}{\partial q_{1,i,j}}(\gamma(t))(q_{1,i,j}-q'_{1,i,j})\,dt}
\le M\max_{i,j}D_h(w_{i,j}).
\end{align*}
Taking the supremum over admissible \(\eta,\eta'\) proves \eqref{eq:Dh_recursion}.

For \(h\ge1\), define \(A_h:=\sup_{T,u}D_h(u)\), where the supremum
runs over all rooted bipartite trees of forward degree at most \(d\),
with the given activities, and all vertices \(u\). Then
\eqref{eq:Dh_recursion} gives
\[
A_{h+2}\le M A_h,\qquad h\ge1.
\]
Since the boundary conditions do not fix \(u\), we have
\(0\le R_u^\eta\le\Lambda\), and hence \(A_1\le\phi(\Lambda)\).
Since \(A_h\) is nonincreasing in \(h\), for every \(h\ge1\),
\begin{equation}\label{eq:Ah_decay}
A_h\le A_1M^{\lfloor(h-1)/2\rfloor}.
\end{equation}
Since \(M<1\) by Proposition~\ref{prop:phi_contraction}, the right-hand
side decays exponentially.

It remains to return to the actual SAW tree. The root \(v\) of \(T_{\mathrm{SAW}}\) may have as many as \(\Delta\) children, but each child subtree has forward degree at most \(d=\Delta-1\). Let \(\ell:=d_{T_{\mathrm{SAW}}}(v,\mathcal D)\). If \(\mathcal D=\varnothing\), there is nothing to prove. Otherwise,
since neither boundary condition fixes \(v\), we have \(\ell\ge1\).
The case \(\ell=1\) is covered by the bound
\(|\phi(R_v^\tau)-\phi(R_v^{\tau'})|\le\phi(\Lambda)\), after increasing
the prefactor. If \(\ell\ge2\), we first apply a one-level mean value
bound at the root (this nuisance arises because the SAW tree root may
have \(\Delta\) children, instead of \(\Delta-1\) children.) Since
\(\ell\ge2\), any child fixed by either boundary condition is fixed in
the same way by both. A child fixed to be occupied under both conditions
forces both root ratios to be \(0\), while a child fixed to be
unoccupied under both may be omitted from the recursion. Thus we may
assume that the remaining child roots are not fixed, and so their ratios
lie in \([0,\Lambda]\). A direct one-level derivative calculation shows
that the \(\ell_\infty\) Lipschitz constant of the root update in
\(\phi\)-coordinates is at most
$
L_1:=\Delta\,\frac{\Lambda}{1+\Lambda}.
$
The boundary conditions induced on each remaining child subtree first differ at distance at least \(\ell-1\) from that child. Therefore, using \eqref{eq:Ah_decay},
\[
\bigl|\phi(R_v^\tau)-\phi(R_v^{\tau'})\bigr|
\le
L_1 A_{\ell-1}
\le
C_0 e^{-c\,\ell}
\]
for constants \(C_0<\infty\) and \(c>0\) depending only on \((\Delta,\lambda,t)\). Lemma \ref{lem:dif_coords_suffice}, applied with \(R_{\max}=\Lambda\), converts this estimate from the \(\phi\)-coordinates back to original ratios, yielding the desired
$
|R_v^\tau-R_v^{\tau'}|
\le
C e^{-c\,d_{T_{\mathrm{SAW}}}(v,\mathcal D)}.
$
Finally, if \(t\) ranges over a fixed compact interval \(K\), then
\(\Lambda\) is uniformly bounded, while the contraction factor \(M<1\)
is independent of \(t\) by Proposition~\ref{prop:phi_contraction}.
Hence all the constants above may be chosen uniformly for \(t\in K\).
\end{proof}

\paragraph{Proof of Proposition~\ref{prop:two-level-equal-input}.}
\begin{proof}[Proof of Proposition~\ref{prop:two-level-equal-input}]
If \(a=0\), \(b=0\), or one of the two activities is \(0\), there is nothing to prove. Pad each child \(i\) with dummy grandchildren \(x_{1,i,j}=0\) for \(j>b_i\). This does not change \(y_i\), \(x_0\), or the Jacobian sum. Thus we may assume without loss every child has exactly \(b\) grandchildren.

A direct chain-rule computation gives
\begin{equation} \label{eq:l1_norm_jacobian}
\sum_{i=1}^a \sum_{j=1}^b
\left|
\frac{\partial q_0}{\partial q_{1,i,j}}
\right|
=
\sqrt{\frac{x_0}{1+x_0}}
\sum_{i=1}^a
\frac{y_i}{1+y_i}
\sum_{j=1}^b
\sqrt{\frac{x_{1,i,j}}{1+x_{1,i,j}}}.
\end{equation}
The argument to pass to equal inputs is essentially two applications of Jensen's inequality.

First fix \(i\). We will use a change of variables \(u_j:=\log(1+x_{1,i,j})\). Define $F(u) = \sqrt{1-e^{-u}}$ and note that $F$ is concave on \(u\ge0\). Thus by Jensen's inequality, we have
\[
\sum_{j=1}^b \sqrt{\frac{x_{1,i,j}}{1+x_{1,i,j}}}
= \sum_{j=1}^{b} F(u_j) \le
b \cdot F\!\inparen{  \frac{1}{b} \sum_{j=1}^{b} u_j }
= b\sqrt{\frac{x_{1,i}}{1+x_{1,i}}},
\]
where we define
\begin{align*}
    x_{1,i}:=\left(\prod_{j=1}^b(1+x_{1,i,j})\right)^{1/b} - 1.
\end{align*}
We may understand $x_{1,i}$ as the equalized grandchild ratios of the $i$-th child, chosen so that the ratio $y_i = \lambda_{\dagger^c}(1+x_{1,i})^{-b}$ is preserved. Thus \eqref{eq:l1_norm_jacobian} becomes
\begin{equation} \label{l1_norm_jacobian_one_sub}
\sum_{i=1}^a\sum_{j=1}^b
\left|
\frac{\partial q_0}{\partial q_{1,i,j}}
\right|
\le
b\sqrt{\frac{x_0}{1+x_0}}
\sum_{i=1}^a
\frac{y_i}{1+y_i}\sqrt{\frac{x_{1,i}}{1+x_{1,i}}}.
\end{equation}

Next set \(v_i:=\log(1+y_i)\). The $i$-th summand in \eqref{l1_norm_jacobian_one_sub} is
\[
K(v_i):=(1-e^{-v_i})
\sqrt{1-\left(\frac{e^{v_i}-1}{\lambda_{\dagger^c}}\right)^{1/b}}.
\]
We claim that $K$ is concave. This is seen as follows. Define
$
\alpha=\left(\frac{e^v-1}{\lambda_{\dagger^c}}\right)^{1/b}\in(0,1),
$
then one has
\[
K''(v)
=
-\frac{N(\alpha)}
{4\lambda_{\dagger^c} b^2 \alpha^b (1-\alpha)^{3/2}(1+\lambda_{\dagger^c}\alpha^b)},
\]
where
\[
\begin{aligned}
N(\alpha)
&=
\lambda_{\dagger^c}^2 \alpha^{2b+1}(2-\alpha)
+4\lambda_{\dagger^c} b^2 \alpha^b(1-\alpha)^2
+2\lambda_{\dagger^c} b\, \alpha^{b+1}(1-\alpha) \\
&\qquad\qquad\qquad
+2\lambda_{\dagger^c} \alpha^{b+1}(2-\alpha)
+\alpha\bigl(2b(1-\alpha)+(2-\alpha)\bigr)>0.
\end{aligned}
\]
Thus by Jensen's inequality again,
\[
\sum_{i=1}^a
\frac{y_i}{1+y_i}\sqrt{\frac{x_{1,i}}{1+x_{1,i}}}
\le
a\,\frac{y_0}{1+y_0}\sqrt{\frac{x_1}{1+x_1}},
\]
where $y_0$ and $x_1$ are defined as in \eqref{eq:SSM_unequalRatio_to_equalRatio_x1_def}.
Note that \(x_0=\lambda_\dagger(1+y_0)^{-a}\). Substituting into \eqref{l1_norm_jacobian_one_sub}, we see from \eqref{eq:SSM_hPrime_ab_derivative} that this yields the desired
\[
\sum_{i=1}^a\sum_{j=1}^b
\left|
\frac{\partial q_0}{\partial q_{1,i,j}}
\right|
\le
ab \sqrt{\frac{x_1}{1+x_1}} \frac{y_0}{1+y_0}
\sqrt{\frac{x_0}{1+x_0}}
=h'_{\dagger,a,b}(\phi(x_1)). \qedhere
\]
\end{proof}

\paragraph{Proof of Proposition~\ref{prop:phi_contraction}.}

In light of Proposition \ref{prop:two-level-equal-input}, to show contraction of the two-level map for arbitrary grandchild ratios, we can focus our attention on the equal-input scalar derivative \(h'_{\dagger,a,b}\) in \eqref{eq:SSM_hPrime_ab_derivative}. It is equivalent and slightly cleaner to bound the square of this derivative, \(\bigl(h'_{\dagger,a,b}(\phi(x_1))\bigr)^2\), which can be written as
\begin{equation} \label{eq:theta_defn}
\Theta_{a,b,\lambda}(y_0,x_1)
:=
a^2b^2\,
\frac{\lambda^2 y_0^2 x_1}
{(1+y_0)^2(1+x_1)\bigl(\lambda^2+y_0(1+y_0)^a(1+x_1)^b\bigr)}.
\end{equation}

For fixed \(t\) and \(x_1\), the value of \(y_0\) is determined. The point of the next lemma is that, after taking the supremum over all tilts, we may use \(y_0\) itself as the free parameter.

\begin{lemma} \label{lem:h_dagger_theta_reduction}
Let $\lambda>0$ and $\dagger \in \inbraces{L,R}$ be fixed. Then,
\[
\sup_{t\in\RR}\ \sup_{x_1\ge 0}\bigl(h'_{\dagger,a,b}(\phi(x_1))\bigr)^2
=
\sup_{y_0>0,\ x_1\ge 0}\Theta_{a,b,\lambda}(y_0,x_1).
\]
\end{lemma}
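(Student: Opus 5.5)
We rewrite the square of \eqref{eq:SSM_hPrime_ab_derivative} in terms of $(y_0,x_1)$, eliminating $x_0$ and the individual fugacities. Then we check that, as $t$ ranges over $\mathbb R$, the pair $(y_0,x_1)$ sweeps out exactly the region $\{y_0>0,\ x_1\ge 0\}$. The key structural fact is that the product $\lambda_\dagger\lambda_{\dagger^c}=\lambda e^{t}\cdot\lambda e^{-t}=\lambda^2$ does not depend on $t$.

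\emph{Step 1: algebraic identity.} Fix $t$, $a,b\ge1$ and $x_1\ge0$. The equal-input recursion gives $y_0=\lambda_{\dagger^c}(1+x_1)^{-b}$ and $x_0=\lambda_\dagger(1+y_0)^{-a}$. Squaring \eqref{eq:SSM_hPrime_ab_derivative} yields
\[
\bigl(h'_{\dagger,a,b}(\phi(x_1))\bigr)^2=a^2b^2\,\frac{x_1}{1+x_1}\,\frac{y_0^2}{(1+y_0)^2}\,\frac{x_0}{1+x_0}.
\]
Write $x_0/(1+x_0)=\lambda_\dagger/(\lambda_\dagger+(1+y_0)^a)$ and multiply numerator and denominator by $\lambda_{\dagger^c}$. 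Since $\lambda_{\dagger^c}=y_0(1+x_1)^b$ and $\lambda_\dagger\lambda_{\dagger^c}=\lambda^2$, this becomes
\[
\frac{x_0}{1+x_0}=\frac{\lambda^2}{\lambda^2+y_0(1+y_0)^a(1+x_1)^b}.
\]
Substituting gives exactly $\Theta_{a,b,\lambda}(y_0,x_1)$ from \eqref{eq:theta_defn}, evaluated at the $y_0$ determined by $(t,x_1)$.

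\emph{Step 2: the ranges coincide.} For fixed $x_1\ge0$, the map $t\mapsto y_0=\lambda e^{\mp t}(1+x_1)^{-b}$ (the sign depending on $\dagger$) is a bijection from $\mathbb R$ onto $(0,\infty)$. Hence the set $\{(y_0(t,x_1),x_1): t\in\mathbb R,\ x_1\ge0\}$ equals $(0,\infty)\times[0,\infty)$. Taking suprema of the same function over the same set gives the claimed equality.

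\emph{Step 3: degenerate cases.} When $a=0$ or $b=0$, the left side is identically $0$ by \eqref{eq:SSM_hPrime_ab_derivative}, and $\Theta$ carries the factor $a^2b^2$, so both sides are $0$. Nothing in the argument is a genuine obstacle. The one point needing care is Step 2: one must notice that the $t$-dependence enters only through $\lambda_{\dagger^c}$ once $\lambda_\dagger$ has been eliminated via $\lambda_\dagger\lambda_{\dagger^c}=\lambda^2$, and that this makes $y_0$ a free positive parameter.
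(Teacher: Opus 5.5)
Your proof is correct and follows the paper's route. At fixed $x_1$ you trade the tilt $t$ bijectively for $y_0=\lambda_{\dagger^c}(1+x_1)^{-b}\in(0,\infty)$, and you use $\lambda_\dagger=\lambda^2/\lambda_{\dagger^c}$ to remove the remaining $t$-dependence. Your Step~1 also checks the identity $(h'_{\dagger,a,b})^2=\Theta_{a,b,\lambda}$, which the paper only asserts in the text before the lemma; Step~3 is harmless but redundant, since Steps~1--2 already cover $a=0$ or $b=0$.
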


\begin{proof}
Fix \(x_1\ge0\). For a given tilt \(t\), the intermediate ratio in the equal-input two-level recursion is
\[
y_0=\lambda_{\dagger^c}(1+x_1)^{-b}.
\]
As \(t\) ranges over \(\RR\), the activity \(\lambda_{\dagger^c}\) ranges over all positive real numbers, and therefore, for this fixed \(x_1\), the corresponding \(y_0\)'s range over all of \((0,\infty)\).

Conversely, given any \(y_0>0\) and \(x_1\ge0\), set
$
\lambda_{\dagger^c}:=y_0(1+x_1)^b$
and $
\lambda_{\dagger}:=\lambda^2/\lambda_{\dagger^c}
$
which is realized by a unique tilt \(t\). This produces exactly the prescribed pair \((y_0,x_1)\). Hence maximizing over \(t\) and \(x_1\) is equivalent to maximizing over \((y_0,x_1)\).
\end{proof}

We first characterize the interior critical points of $\Theta_{a,b,\lambda}$. A key simplification is that the value of \(\Theta_{a,b,\lambda}\) at a critical point can be written as a one-variable function.

\begin{lemma}
Suppose $(y_0,x_1)$ is an interior critical point of $\Theta_{a,b,\lambda}(y_0,x_1)$, with $y_0>0$ and $x_1>0$. Let \(\omega:=(a+1)y_0\). Then \(\omega>1\), and
\[
\Theta_{a,b,\lambda}(y_0,x_1)
=
\frac{a^2b^2\,\omega^2(\omega-1)}
{(a+1+\omega)^2(2b+\omega+1)}.
\]
\end{lemma}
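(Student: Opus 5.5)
The plan is to write down the two first-order conditions for $\log\Theta_{a,b,\lambda}$ and observe that each one is linear in the single combination $P:=y_0(1+y_0)^a(1+x_1)^b$. I will abbreviate the denominator factor as $D:=\lambda^2+P$. Then
\[
\log\Theta_{a,b,\lambda}=\text{const}+2\log y_0-2\log(1+y_0)+\log x_1-\log(1+x_1)-\log D .
\]
At an interior critical point both partial derivatives vanish, and $a,b\ge1$ may be assumed since otherwise $\Theta\equiv0$.

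First I take the $x_1$-derivative. Using $\partial_{x_1}P=bP/(1+x_1)$, the condition reads
\[
\frac{1}{x_1(1+x_1)}=\frac{bP}{(1+x_1)D},
\qquad\text{that is,}\qquad D=b\,x_1P.
\]
Next I take the $y_0$-derivative. Using $\partial_{y_0}\log P=\frac{1}{y_0}+\frac{a}{1+y_0}=\frac{1+(a+1)y_0}{y_0(1+y_0)}$, the condition reads
\[
\frac{2}{y_0(1+y_0)}=\frac{P\,(1+\omega)}{y_0(1+y_0)D},
\qquad\text{that is,}\qquad 2D=(1+\omega)P,
\]
with $\omega=(a+1)y_0$.

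Comparing the two conditions gives $x_1=(1+\omega)/(2b)$. Moreover $D-P=\lambda^2>0$ forces $D/P=(1+\omega)/2>1$, which is exactly $\omega>1$. This is the first claim, and it is also the only place where $\lambda>0$ enters.

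For the value, I rewrite
\[
\Theta_{a,b,\lambda}=a^2b^2\Bigl(\frac{y_0}{1+y_0}\Bigr)^2\frac{x_1}{1+x_1}\cdot\frac{\lambda^2}{D}.
\]
The key simplification is that $\lambda^2/D=(D-P)/D=1-\frac{2}{1+\omega}=\frac{\omega-1}{\omega+1}$, so $\lambda$ disappears entirely. Substituting $y_0=\omega/(a+1)$ gives $\frac{y_0}{1+y_0}=\frac{\omega}{a+1+\omega}$. Substituting $x_1=(1+\omega)/(2b)$ gives $\frac{x_1}{1+x_1}=\frac{1+\omega}{2b+1+\omega}$. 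The factors $(1+\omega)$ then cancel, yielding the stated formula $\frac{a^2b^2\omega^2(\omega-1)}{(a+1+\omega)^2(2b+\omega+1)}$.

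The main obstacle is purely bookkeeping: correctly differentiating $\log D$ through $P$, and recognizing that both stationarity equations should be expressed as ratios $D/P$ rather than solved for $y_0$ and $x_1$ separately. Once that is done, the elimination of $\lambda$ through $\lambda^2=D-P$ is immediate.
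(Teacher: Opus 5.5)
Your proof is correct and follows the paper's route: your stationarity conditions $D=bx_1P$ and $2D=(1+\omega)P$ are exactly the paper's equations $\lambda^2=T(bx_1-1)$ and $2\lambda^2=T((a+1)y_0-1)$ with $T=P$, and your identity $\lambda^2/D=(\omega-1)/(\omega+1)$ spells out the evaluation step that the paper leaves implicit. One small correction: $a,b\ge1$ is an implicit hypothesis of the lemma, not a without-loss-of-generality reduction, because when $a=0$ or $b=0$ we have $\Theta\equiv0$, so every point is critical and $\omega>1$ can fail; the paper's proof makes the same tacit assumption.
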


\begin{proof}
At an interior critical point, we may set the logarithmic derivatives of \(\Theta_{a,b,\lambda}\) to zero. A direct differentiation gives the two equations
\begin{equation}\label{eq:SSM_Theta_ab_criticalEquationsForLambda}
\lambda^2=T(bx_1-1),
\qquad
2\lambda^2=T\bigl((a+1)y_0-1\bigr),
\end{equation}
where $T:=y_0(1+y_0)^a(1+x_1)^b$. Solving these equations yields \(y_0=\omega/(a+1)\) and \(x_1=(\omega+1)/(2b)\). Evaluating $\Theta_{a,b,\lambda}$ at these values gives the critical objective value.
Since $T>0$, the identity
$
2\lambda^2=T(\omega-1)
$
implies \(\omega>1\).
\end{proof}

We now use this reduction to one variable to show that every interior critical value stays below \(1\) throughout the subcritical regime.

\begin{lemma} \label{lem:theta_contraction_interior fixed_points}
Let \(d:=\Delta-1\), let \(1\le a,b\le d\), and let \(0<\lambda < \lambda_c(\Delta)\). If \((y_0,x_1)\) is any interior critical point of \(\Theta_{a,b,\lambda}\), then
\[
\Theta_{a,b,\lambda}(y_0,x_1)<1.
\]
\end{lemma}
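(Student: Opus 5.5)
The plan is to turn the critical-point information into a one-parameter inequality in $\omega$. The same critical equations \eqref{eq:SSM_Theta_ab_criticalEquationsForLambda} that gave $y_0=\omega/(a+1)$ and $x_1=(\omega+1)/(2b)$ also determine $\lambda$. Substituting into $2\lambda^2=T(\omega-1)$ gives
\[
\lambda^2=\Lambda_{a,b}(\omega):=\frac{\omega(\omega-1)}{2(a+1)}\Bigl(1+\frac{\omega}{a+1}\Bigr)^{a}\Bigl(1+\frac{\omega+1}{2b}\Bigr)^{b}.
\]
On $\omega>1$ every factor of $\Lambda_{a,b}$ is positive and increasing, so $\Lambda_{a,b}$ is a strictly increasing bijection $(1,\infty)\to(0,\infty)$. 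The hypothesis $\lambda<\lambda_c(\Delta)$ therefore says exactly that $\omega<\omega_{a,b}$, where $\Lambda_{a,b}(\omega_{a,b})=\lambda_c(\Delta)^2$. Write $\Psi_{a,b}(\omega)$ for the closed-form critical value from the previous lemma. The goal becomes: $\Psi_{a,b}(\omega)<1$ for all $1<\omega<\omega_{a,b}$.

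\textbf{Step 1: the extremal case $a=b=d$.} I would first check that the bound is sharp there. At $\lambda=\lambda_c(\Delta)$ the tree fixed point is $x=1/(d-1)$, which suggests taking $\omega=(d+1)/(d-1)$, so that $y_0=x_1=1/(d-1)$. A direct substitution confirms two things:
\begin{itemize}
\item $\Lambda_{d,d}\bigl(\tfrac{d+1}{d-1}\bigr)=d^{2d}/(d-1)^{2d+2}\cdot(d-1)^{2}\cdots=\lambda_c(\Delta)^2$, i.e.\ this point lies on the critical curve;
\item $\Psi_{d,d}\bigl(\tfrac{d+1}{d-1}\bigr)=1$, since $\omega-1=\tfrac{2}{d-1}$, $a+1+\omega=\tfrac{d(d+1)}{d-1}$ and $2b+\omega+1=\tfrac{2d^2}{d-1}$.
\end{itemize}
So the claim is tight at $a=b=d$, $\lambda=\lambda_c$, consistent with the remark that the analysis is sharp.

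\textbf{Step 2: monotonicity of $\Psi_{a,b}$ in $\omega$.} Next I would show $\Psi_{a,b}$ is increasing on $\omega>1$ via its logarithmic derivative,
\[
\frac{2}{\omega}+\frac{1}{\omega-1}-\frac{2}{a+1+\omega}-\frac{1}{2b+\omega+1}>0,
\]
which holds because $\frac{2}{\omega}>\frac{2}{a+1+\omega}$ and $\frac{1}{\omega-1}>\frac{1}{2b+\omega+1}$. Together with the monotonicity of $\Lambda_{a,b}$, Steps 1 and 2 settle $a=b=d$: $\omega<\omega_{d,d}=\tfrac{d+1}{d-1}$ gives $\Psi_{d,d}(\omega)<1$.

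\textbf{Step 3: general $1\le a,b\le d$ (the main obstacle).} Since $\Psi_{a,b}$ increases in $\omega$ and tends to $a^2b^2>1$ as $\omega\to\infty$ (for $(a,b)\ne(1,1)$), there is a unique $\hat\omega_{a,b}$ with $\Psi_{a,b}(\hat\omega_{a,b})=1$. It then suffices to prove
\[
\Lambda_{a,b}(\hat\omega_{a,b})\ \ge\ \lambda_c(\Delta)^2 ,
\]
because then $\omega<\omega_{a,b}\le\hat\omega_{a,b}$. When $a=b=1$ we have $\Psi_{1,1}<1$ always, so there is nothing to prove.

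The difficulty is that the two monotonicities in $(a,b)$ pull in opposite directions. Decreasing $a$ or $b$ decreases $\Psi$, which pushes $\hat\omega_{a,b}$ up. But it also decreases $\Lambda_{a,b}$ at fixed $\omega$. So neither comparison alone reduces to $a=b=d$.

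What I would try first is to parametrize the curve $\{\Psi_{a,b}=1\}$ explicitly. The equation $a^2b^2\omega^2(\omega-1)=(a+1+\omega)^2(2b+\omega+1)$ is cubic in $\omega$, and I would treat $a,b$ as real variables in $[1,d]$. Along that curve, I would show that $\log\Lambda_{a,b}(\hat\omega_{a,b})$ is nonincreasing in $a$ and in $b$, by implicit differentiation. This would reduce everything to the corner $a=b=d$, where Step 1 gives equality with $\lambda_c(\Delta)^2$.

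If the signs in that implicit differentiation are messy, my fallback is a tree-based bound. Note that $\lambda_c(\Delta)\le\lambda_c$ for the corresponding $(a,b)$-biregular tree. I would then identify $\Lambda_{a,b}(\hat\omega_{a,b})$ with the squared critical activity of the two-step recursion $x\mapsto f_{\dagger,a}\circ f_{\dagger^c,b}$ under the constraint $\lambda_L\lambda_R=\lambda^2$. That quantity is at least the $(d,d)$ value by monotonicity of uniqueness thresholds in the branching numbers.
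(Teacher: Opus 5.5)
You have set up the same reduction as the paper. At an interior critical point, $\Theta=\Psi_{a,b}(\omega)$ (the paper's $\Xi_{a,b}$) and $\lambda^2=\Lambda_{a,b}(\omega)$, both strictly increasing in $\omega$. So everything reduces to showing $\Lambda_{a,b}(\hat\omega_{a,b})\ge\lambda_c(\Delta)^2$, and you correctly verified equality at the corner $a=b=d$, $\hat\omega_{d,d}=(d+1)/(d-1)$. Your remark that $\Psi_{a,b}\to a^2b^2$ rather than $\infty$, so that $(1,1)$ must be set aside, is more careful than the paper's text.

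\textbf{The gap.} Step~3 carries essentially all the content of the lemma, and you only announce it. You say you ``would show'' that $\log\Lambda_{a,b}(\hat\omega_{a,b})$ is nonincreasing in $a$ and $b$ by implicit differentiation, but you never determine the sign, and you explicitly hedge on it.

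\textbf{Why the fallback fails.} At a critical point one has $x_0=\lambda^2/T=bx_1-1$. For $b\ge2$, a two-step fixed point $x_0=x_1$ forces $\hat\omega_{a,b}=(b+1)/(b-1)$. Substituting gives $\Psi_{a,b}\bigl(\tfrac{b+1}{b-1}\bigr)=a^2(b+1)^2/(ab-a+2b)^2$, which equals $1$ only when $a=b$. So for $a\neq b$ the extremal point is not a fixed point, and $\Lambda_{a,b}(\hat\omega_{a,b})$ is not a uniqueness threshold. Moreover, $\sup\Theta$ dominates the squared derivative at fixed points, since conjugation by $\phi$ preserves derivatives there. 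Hence $\Lambda_{a,b}(\hat\omega_{a,b})$ is at most the squared fixed-point threshold. That is the wrong direction for invoking monotonicity of uniqueness thresholds in the branching numbers.

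\textbf{The missing idea.} An exact identity makes the implicit differentiation collapse. With $F:=\log\Lambda_{a,b}(\omega)$ and $G:=\log\Psi_{a,b}(\omega)$, expanding term by term gives
\[
2\,\partial_\omega F=(\omega+1)\,\partial_\omega G\qquad\text{for all }\omega>1 .
\]
Hence along the level set $G=0$, where $\hat\omega$ is a smooth function of real $(a,b)$ with $ab>1$ because $\partial_\omega G>0$ (your Step~2), you get
\[
\frac{d}{db}\log\Lambda_{a,b}(\hat\omega_{a,b})=\partial_bF-\frac{\omega+1}{2}\,\partial_bG=\log\Bigl(1+\frac{\omega+1}{2b}\Bigr)-\frac{\omega+1}{b}<0,
\]
\[
\frac{d}{da}\log\Lambda_{a,b}(\hat\omega_{a,b})=\partial_aF-\frac{\omega+1}{2}\,\partial_aG=\log\Bigl(1+\frac{\omega}{a+1}\Bigr)-\frac{\omega+1}{a}<0,
\]
both by $\log(1+z)<z$. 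Moving monotonically from any integer $(a,b)\neq(1,1)$ up to $(d,d)$ stays in the region $ab\ge2$. This yields $\Lambda_{a,b}(\hat\omega_{a,b})\ge\Lambda_{d,d}(\hat\omega_{d,d})=\lambda_c(\Delta)^2$, which is exactly the paper's argument. With this computation inserted, your proof is complete.

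\textbf{A side correction.} Your heuristic that $\Lambda_{a,b}(\omega)$ increases in $a$ at fixed $\omega$ is false in general. The derivative $\partial_aF=\log\bigl(1+\frac{\omega}{a+1}\bigr)-\frac{\omega+1}{a+1+\omega}$ is negative at $a=3$, $\omega=2$. The sign only becomes uniform once you differentiate along the level curve.
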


\begin{proof}
For \(\omega>1\), define
\begin{align}
\Xi_{a,b}(\omega)
&:=
\frac{a^2b^2\,\omega^2(\omega-1)}{(a+1+\omega)^2(2b+\omega+1)} \\
\Lambda_{a,b}(\omega)
&:=
\frac{\omega(\omega-1)}{2(a+1)}
\left(\frac{a+1+\omega}{a+1}\right)^a
\left(\frac{2b+\omega+1}{2b}\right)^b.
\end{align}
At an interior critical point, the previous lemma gives
$
\Theta_{a,b,\lambda}(y_0,x_1)=\Xi_{a,b}(\omega)$
with $ \omega=(a+1)y_0
$, 
and \eqref{eq:SSM_Theta_ab_criticalEquationsForLambda} gives \(\lambda^2=\Lambda_{a,b}(\omega)\).
The function \(\Xi_{a,b}\) is strictly increasing on \((1,\infty)\), since
\[
\frac{d}{d\omega}\log \Xi_{a,b}(\omega)
=
\frac2\omega+\frac1{\omega-1}-\frac2{a+1+\omega}-\frac1{2b+\omega+1} > 0
\]
term-by-term. 
Likewise, \(\Lambda_{a,b}\) is strictly increasing on \((1,\infty)\), because
\[
\frac{d}{d\omega}\log \Lambda_{a,b}(\omega)
=
\frac1\omega+\frac1{\omega-1}+\frac{a}{a+1+\omega}+\frac{b}{2b+\omega+1}>0.
\]

Since $\Xi_{a,b}(1^+)=0$ and $\Xi_{a,b}(\omega)\to\infty$ as $\omega\to\infty$, there is a unique \(\omega_{a,b}>1\) with $\Xi_{a,b}(\omega_{a,b})=1.$ Define
\[
B(a,b):=\Lambda_{a,b}(\omega_{a,b}).
\]
We claim that \(B(a,b)\) is decreasing in both coordinates. Temporarily extend \(a,b\) to positive reals and set
\[
F(a,b,\omega):=\log \Lambda_{a,b}(\omega),
\qquad
G(a,b,\omega):=\log \Xi_{a,b}(\omega).
\]
Along \(G(a,b,\omega)=0\), write \(\omega=\omega(a,b)=\omega_{a,b}\). Then
$
\log B(a,b)=F(a,b,\omega(a,b)).
$
Since \(\partial_\omega G>0\), the implicit function theorem applies. Differentiating with respect to \(b\), we obtain
\[
\frac{\partial}{\partial b}\log B(a,b)
=
\partial_b F-\partial_\omega F\,\frac{\partial_b G}{\partial_\omega G}.
\]
A direct simplification gives
$
\frac{\partial}{\partial b}\log B(a,b)
=
\log\!\left(\frac{2b+\omega+1}{2b}\right)-\frac{\omega+1}{b}
<0,
$
where we used \(\log(1+z)<z<2z\). Similarly,
\[
\frac{\partial}{\partial a}\log B(a,b)
=
\partial_a F-\partial_\omega F\,\frac{\partial_a G}{\partial_\omega G}
=
\log\!\left(\frac{a+\omega+1}{a+1}\right)-\frac{\omega+1}{a} < 0.
\]
Thus \(B(a,b)\) is decreasing in both coordinates, and for $1 \leq a, b \leq d$,
\[
B(a,b)\ge B(d,d).
\]
On the other hand, for \(\omega^\star=(d+1)/(d-1)\), direct substitution gives
\(\Xi_{d,d}(\omega^\star)=1\), hence \(\omega_{d,d}=\omega^\star\) and
\[
B(d,d)=\Lambda_{d,d}(\omega^\star)
=\left(\frac{d^d}{(d-1)^{d+1}}\right)^2
=\lambda_c(\Delta)^2.
\]
Suppose, for contradiction, that \(\Xi_{a,b}(\omega)\ge1\). Then
\(\omega\ge\omega_{a,b}\), and monotonicity of \(\Lambda_{a,b}\) gives
$
\lambda^2=\Lambda_{a,b}(\omega)
\ge \Lambda_{a,b}(\omega_{a,b})
=B(a,b)
\ge B(d,d)=\lambda_c(\Delta)^2,
$
contradicting \(\lambda<\lambda_c(\Delta)\). Thus
\(\Xi_{a,b}(\omega)<1\) which completes the proof.
\end{proof}

\begin{proof}[Proof of Proposition~\ref{prop:phi_contraction}]
Write \(d:=\Delta-1\). The cases $a=0$ or $b=0$ are trivial, since then \(h'_{\dagger,a,b}\equiv 0\).

So fix \(1\le a,b\le d\). By Lemma \ref{lem:h_dagger_theta_reduction}, it is enough to show that
$
\sup_{y_0>0,\ x_1\ge0}\Theta_{a,b,\lambda}(y_0,x_1)
$
is strictly upper bounded by 1. The boundary cases cause no difficulty since from the formula for $\Theta_{a,b,\lambda}$, we have
$
\Theta_{a,b,\lambda}(y_0,x_1)\to 0
$
as $y_0\to 0$, as $x_1\to 0$, as $y_0\to\infty$, or as $x_1\to\infty$. Therefore the global supremum is attained at an interior critical point. By Lemma \ref{lem:theta_contraction_interior fixed_points} every such critical value is strictly less than \(1\). Thus $\sup_{t\in\RR}\sup_{x_1\ge 0}\bigl(h'_{\dagger,a,b}(\phi(x_1))\bigr)^2<1$. The proof finishes by taking a maximum over the finitely many choices of \(\dagger,a,b\).
\end{proof}

\subsection{Tilted sampling via the SAW tree}

Algorithm~\ref{alg:saw-trunc} estimates a single conditional marginal. We use the standard self-reduction for weighted independent sets~\cite[Section~5]{weitz2006counting}.

\begin{algorithm}[H]
\caption{Approximate sampler from the tilted hard-core measure}\label{alg:tilted-sampler}
\begin{algorithmic}[1]
\Require A bipartite graph $G=(L\sqcup R,E)$, an activity $\lambda<\lambda_c(\Delta)$, a fixed tilt $t\in\mathbb R$, and an accuracy parameter $\epsilon>0$
\Ensure A sample from $\mu_{\lambda,t}$ up to total variation error $\epsilon$

\State Fix an ordering $v_1,\dots,v_n$ of $V(G)$; set $G_1\gets G$ and $I\gets\varnothing$
\For{$i=1,\dots,n$}
    \State If $v_i\notin V(G_i)$, set $G_{i+1}\gets G_i$; otherwise use Algorithm~\ref{alg:saw-trunc} on $G_i$ with its inherited activities to obtain $\widetilde p_i$ with $\abs{\widetilde p_i-\mu_{G_i,\lambda,t}(\sigma_{v_i}=1)}\le\epsilon/n$, and draw $X_i\sim\operatorname{Bernoulli}(\widetilde p_i)$.
    \State In the latter case, if $X_i=1$, set $I\gets I\cup\{v_i\}$ and $G_{i+1}\gets G_i-N_{G_i}[v_i]$; if $X_i=0$, set $G_{i+1}\gets G_i-v_i$.
\EndFor
\State \Return $I$
\end{algorithmic}
\end{algorithm}

We claim that Algorithm~\ref{alg:tilted-sampler} is an
\(\epsilon\)-approximate sampler for \(\mu_{\lambda,t}\) running in
polynomial time. To see this, note that if the exact conditional marginals
\(p_i:=\mu_{G_i,\lambda,t}(\sigma_{v_i}=1)\) were used instead of
\(\widetilde p_i\), the procedure would output an exact sample
from \(\mu_{\lambda,t}\). Run this exact procedure and
Algorithm~\ref{alg:tilted-sampler} in parallel, using the same vertex ordering,
and denote their outputs by \(I^{\mathrm{ex}}\) and \(I^{\mathrm{ap}}\),
respectively.
Let \(X_i\) and \(\widetilde X_i\) be their decisions at step \(i\), and define
\[
A_{i-1}:=\{X_j=\widetilde X_j\text{ for every }j<i\}.
\]
Conditional on \(A_{i-1}\) and a common occupation/rejection history \(H_{i-1}=h\) for the vertices $v_1,\dots,v_{i-1}$, the residual
graphs agree. Let \(p_i(h)\) and \(\widetilde p_i(h)\) be the corresponding
exact and approximate marginals. Let
\(U_i\sim\operatorname{Unif}[0,1]\) and set
\(X_i=\mathbf 1\{U_i\le p_i(h)\}\) and
\(\widetilde X_i=\mathbf 1\{U_i\le\widetilde p_i(h)\}\). By
Theorem~\ref{thm:ssm} and
Remark~\ref{remark:SSM_to_marginalOracle}, $\Pr(X_i\ne\widetilde X_i\mid A_{i-1},H_{i-1}=h)
=|p_i(h)-\widetilde p_i(h)|\le\epsilon/n$. Since this holds uniformly over $h$, we obtain by law of total probability
\begin{equation}\label{eq:tilted-sampler-coupling}
\Pr(X_i\ne\widetilde X_i\mid A_{i-1})\le\epsilon/n.
\end{equation}
Let \(F_i:=A_{i-1}\cap\{X_i\ne\widetilde X_i\}\), the event that the first
disagreement occurs at step \(i\). Using the fact that the \(F_i\) are disjoint, and \eqref{eq:tilted-sampler-coupling}, we have
\begin{align*}
\left\|\text{Law}(I^{\mathrm{ap}})-\mu_{\lambda,t}\right\|_{\mathrm{TV}}
&\le \Pr(I^{\mathrm{ap}}\ne I^{\mathrm{ex}})
\le \Pr\bigl((X_1,\dots,X_n)\ne(\widetilde X_1,\dots,\widetilde X_n)\bigr) \\
&=\sum_{i=1}^n\Pr(F_i)
=\sum_{i=1}^n\Pr(A_{i-1})
  \Pr(X_i\ne\widetilde X_i\mid A_{i-1})
\le\sum_{i=1}^n\frac{\epsilon}{n}\le\epsilon.
\end{align*}
Finally, Theorem~\ref{thm:ssm} and
Remark~\ref{remark:SSM_to_marginalOracle} compute every
\(\widetilde p_i\) in polynomial time.

\subsection{Zero-freeness of the tilted partition function}
\label{sec:zeroFreenessTiltedPartitionFunction}

This section establishes a zero-freeness result for a complex version of $Z_G(\lambda;t)$. This will be used to establish a local CLT for the balance $B(I)$ under the tilted measure.

For \(\zeta\in\mathbb C\), define
\begin{equation}\label{eq:tiltedComplexHardcorePartitionFunction}
Z_G(\lambda;t,\zeta):=\sum_{I\in\cI_G}\lambda^{|I|}e^{tB(I)}\zeta^{B(I)},
\end{equation}
where \(B(I)\) is defined in \eqref{eq:balance_definition}. Thus
\(Z_G(\lambda;t)=Z_G(\lambda;t,1)\).
Equivalently, we consider left and right complex fugacities
$
\lambda_L(\zeta)=\lambda e^t\zeta$ and
$
\lambda_R(\zeta)=\lambda e^{-t}\zeta^{-1}
$ in \eqref{eq:tiltedHardcorePartition_intro}. The main result is as follows.

\begin{proposition}
\label{prop:zero_free_tilted_partition}
Fix \(\Delta\ge3\), \(\delta>0\), and \(t_0>0\). Let \(G\) be a bipartite graph with maximum degree \(\Delta\), let \(\lambda \in (0,\lambda_c(\Delta)  - \delta) \), and let $t \in [-t_0,t_0]$. Then there exists $\rho = \rho(\Delta,\delta,t_0) \in (0,1)$ so that
\[
Z_G(\lambda;t,\zeta) \neq 0 \quad \text{for all } \abs{\zeta-1} < \rho.
\]
\end{proposition}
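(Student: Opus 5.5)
We prove zero-freeness by the standard complex-contraction (Barvinok/Peters--Regts style) argument, recycling the two-level contraction in \(\phi\)-coordinates from the proof of Theorem~\ref{thm:ssm}. The starting point is the telescoping identity: order the vertices \(v_1,\dots,v_n\) and set \(G_i:=G-\{v_1,\dots,v_{i-1}\}\). Then
\[
Z_G(\lambda;t,\zeta)=\prod_{i=1}^n \frac{Z_{G_i}}{Z_{G_{i+1}}}=\prod_{i=1}^n\bigl(1+R_{v_i}(G_i)\bigr),
\]
where \(R_{v_i}(G_i)\) is the (complex) occupation ratio at \(v_i\) in \(G_i\) with activities \(\lambda_L(\zeta)=\lambda e^t\zeta\) and \(\lambda_R(\zeta)=\lambda e^{-t}\zeta^{-1}\). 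We argue by induction on \(|V(G)|\). Assume zero-freeness for all proper induced subgraphs. Then each ratio is well defined and equals the root value of the SAW-tree recursion \(R_x=\lambda_x\prod_y(1+R_y)^{-1}\), since Weitz's identity is algebraic and holds whenever the relevant subtree partition functions are nonzero. It therefore suffices to exhibit a set \(U\subset\mathbb C\), with \(-1\notin \overline U\), that contains every ratio. This gives \(1+R_{v_i}\neq 0\) at every step.

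\textbf{Construction of the invariant region.} By Proposition~\ref{prop:phi_contraction} and compactness, for \(\lambda\le\lambda_c-\delta\), \(|t|\le t_0\) and all \(0\le a,b\le\Delta-1\), the real two-level map in \(\phi\)-coordinates has Jacobian \(\ell_1\)-norm at most \(M<1\), uniformly. Proposition~\ref{prop:two-level-equal-input} gives this uniformly over unequal inputs. The real ratios also lie in a compact interval \([0,\Lambda]\) with \(\Lambda=\lambda e^{t_0}\). Let \(S\subset\mathbb R\) be the set of real \(\phi\)-values that two-level iterates can take, and define
\[
U:=\{z:\ \operatorname{dist}(\phi(z),\phi(S))<\varepsilon\}
\]
for a branch of \(\phi=\operatorname{arcsinh}\sqrt{\cdot}\) holomorphic near \((0,\Lambda]\). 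The point \(0\) needs care, because \(\sqrt{\cdot}\) is singular there. Note, however, that \(x\mapsto\lambda_\dagger(1+x)^{-1}\) is analytic, so we instead work with complex \(x\) near \([0,\Lambda]\) on odd levels. This issue is discussed further below.

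\textbf{Preservation of the region.} By continuity of the holomorphic two-level map \(\Phi_\zeta\) in \(\zeta\) and its inputs, together with the contraction \(M<1\) on the real set, a complex perturbation of size \(\varepsilon\) in the grandchildren's \(\phi\)-coordinates produces a root perturbation of at most \((M+O(\varepsilon))\varepsilon\). A change of \(\zeta\) within \(|\zeta-1|<\rho\) adds an error of \(O(\rho)\). Choosing \(\varepsilon\) small and then \(\rho\ll(1-M)\varepsilon\) keeps \(U\) invariant under two steps of the recursion. The one-step root with up to \(\Delta\) children, and the first level from leaves (ratio \(\lambda_\dagger(\zeta)\), close to real), are handled by a bounded Lipschitz factor exactly as in the proof of Theorem~\ref{thm:ssm}. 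Boundary leaves are fixed occupied or unoccupied and are pruned as there, which is harmless. Finally, since \(U\) lies in a small neighborhood of \([0,\Lambda]\), it stays away from \(-1\). The constants depend only on \((\Delta,\delta,t_0)\), which gives \(\rho=\rho(\Delta,\delta,t_0)\).

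\textbf{Main obstacle.} The hard step is making the complex perturbation of the multivariate, unequal-input two-level map rigorous. We need the derivative bound \(M\) to persist in a uniform complex neighborhood. Because \(\phi\) is not analytic at \(0\) and ratios can be near \(0\), the cleanest approach is to take as potential \(\Psi(x)=\phi(x)\) only on \([\eta,\Lambda]\), and to note that small ratios contribute little derivative. Concretely, the factor \(\sqrt{x/(1+x)}\) in \eqref{eq:l1_norm_jacobian} vanishes there. Alternatively one can use the mean-value form: for complex inputs, compute the root difference as an integral of the holomorphic Jacobian along the segment joining them. Its \(\ell_1\)-norm is within \(O(\varepsilon)\) of the real Jacobian norm at a nearby real point, which is at most \(M\) by Propositions~\ref{prop:two-level-equal-input} and~\ref{prop:phi_contraction}. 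Bounding this uniformly, using compactness of parameters and inputs in the closure of \(S\), is where the work lies.
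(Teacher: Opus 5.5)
Your architecture matches the paper's proof. You use the real two-level $\ell_1$-Jacobian gap in $\phi$-coordinates from Propositions~\ref{prop:two-level-equal-input} and~\ref{prop:phi_contraction}, thicken the real range, and choose the thickening first and then $\rho$ so that the gap absorbs the $\zeta$-error. You then use a ratio region bounded away from $-1$ and treat a vertex with $\Delta$ neighbors separately. Routing through the SAW tree and a complex Weitz identity, rather than the paper's induction over induced subgraphs $U\subseteq V(G)\setminus\{v_0\}$, is an acceptable variation, as is your Lipschitz treatment of the $\Delta$-child root.

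\textbf{The gap.} The step you label the main obstacle is the entire content of the paper's Lemma~\ref{lem:zeroFreeness_twoLevel_complex_thickening}, and you leave it open. Your region $U$ is defined by evaluating $\phi=\operatorname{arcsinh}\sqrt{\cdot}$ at complex points near $0$. Your perturbation bound $(M+O(\varepsilon))\varepsilon$ needs the two-level map to be holomorphic, with controlled derivatives, on a neighborhood of \emph{uniform} width around a compact real set. The fixes you sketch do not supply this:
\begin{itemize}
\item A potential on $[\eta,\Lambda]$ with fixed $\eta$ cannot give $\rho=\rho(\Delta,\delta,t_0)$. Here $\lambda$ ranges down to $0$ and every real ratio is at most $\lambda e^{t_0}$, so for small $\lambda$ all ratios fall below $\eta$.
\item Noting that the factor $\sqrt{x/(1+x)}$ is small near $0$ controls only the size of the real Jacobian. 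It gives neither a holomorphic map nor a well-defined region near the branch point.
\item The mean-value variant presupposes exactly the holomorphy in question.
\end{itemize}
Separately, uniformity of the real gap over $\lambda\le\lambda_c(\Delta)-\delta$ comes from monotonicity of $\Theta_{a,b,\lambda}$ in $\lambda$, not from compactness. The supremum in Proposition~\ref{prop:phi_contraction} is over a noncompact set.

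\textbf{The missing idea.} Never apply $\phi$ to a ratio. Keep $\phi$-coordinates as the state. Take the ratio region to be the image $\Omega=\phi^{-1}(D_\alpha)=\sinh^2(D_\alpha)$ of the $\alpha$-neighborhood $D_\alpha$ of $[0,W]$, so each ratio only needs some preimage. Use $1+\sinh^2 q=\cosh^2 q$ to write the two-level update as
\[
\mathcal H(\mathbf q)=\operatorname{arcsinh}\Bigl(\sqrt{\lambda_\dagger(\zeta)}\,\prod_{i}(1+y_i)^{-1/2}\Bigr),
\qquad
y_i=\lambda_{\dagger^c}(\zeta)\prod_{j}\operatorname{sech}^2 q_{1,i,j}.
\]
Square roots are now taken only of $1+y_i$, which stays near $[1,\infty)$, and of $\lambda_\dagger(\zeta)$, where $\sqrt{\lambda_\dagger(\zeta)}=\sqrt{\lambda}\,e^{\pm t/2}\zeta^{\pm 1/2}$. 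Hence $\mathcal H$ is analytic in $(\mathbf q,\zeta)$ on a fixed thickening. Together with its $q$-derivatives it is continuous down to $\lambda=0$, where it is constant.

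With this form the rest goes through:
\begin{itemize}
\item The parameter set $\lambda\in[0,(1-\epsilon)\lambda_c(\Delta)]$, $|t|\le t_0$, $\mathbf q^0\in[0,W]^{b_1+\cdots+b_a}$ is genuinely compact.
\item Uniform continuity upgrades the real gap $1-\eta_0$ to $1-\eta_0/2$ on $D_\alpha$ for $|\zeta-1|<\rho$.
\item Convexity of $D_\alpha$ and the fundamental theorem of calculus give $\mathcal H(D_\alpha^{\,b_1+\cdots+b_a})\subseteq D_\alpha$.
\item For small $\alpha$, $\Omega\subseteq\{\operatorname{Re}x>-1/2\}$.
\end{itemize}
Alternatively, you could treat small $\lambda$ separately with a Shearer-type polydisc bound, and for $\lambda$ bounded below use that non-pruned real ratios are bounded away from $0$. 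Some such argument must be supplied; as written, your proof is incomplete at its central step.
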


We introduce some notation. Fix $\epsilon>0$ and $t_0>0$, and write $d:=\Delta-1$. Set
\[
\lambda_{\max}:=(1-\epsilon)\lambda_c(\Delta)e^{t_0},
\qquad\text{and}\qquad
W:=\phi(\lambda_{\max}).
\]
For $\alpha>0$, write
\[
D_\alpha:=\inbraces{q\in\C:\operatorname{dist}(q,[0,W])<\alpha}.
\]

Consider a rooted bipartite tree fragment whose root is on side \(\dagger\in\{L,R\}\), has \(a\) children, and whose \(i\)-th child has \(b_i\) children, where $0 \leq a , b_i \leq d$.
Let $\mathbf b=(b_1,\dots,b_a)$. Suppose the grandchild ratios in $\phi$-coordinates are
$
\mathbf q_1=(q_{1,i,j})_{1\le i\le a,\ 1\le j\le b_i}
$.
In original coordinates, the corresponding child ratios would be
\[
y_i:=
\lambda_{\dagger^c}(\zeta)
\prod_{j=1}^{b_i}\operatorname{sech}^2 q_{1,i,j}.
\]
Thus the root ratio $q_0$ in $\phi$-coordinates is $\mathcal H_{\dagger,a,\mathbf b}^{\lambda,t,\zeta}(\mathbf q_1)$ given by
\begin{equation}\label{eq:zeroFreeness_twoLevel_update}
\mathcal H_{\dagger,a,\mathbf b}^{\lambda,t,\zeta}(\mathbf q_1)
:=
\operatorname{arcsinh}\left(
\sqrt{\lambda_\dagger(\zeta)}
\prod_{i=1}^a (1+y_i)^{-1/2}
\right),
\end{equation}
where we take the square-root branches which agree with the positive real branches at $\zeta=1$ and real nonnegative inputs.

A key step in the proof of Proposition \ref{prop:zero_free_tilted_partition} is extending the real-variable results from Section \ref{sec:sampling_from_tilted} into the complex plane. In particular, at the real \(\zeta=1\), Propositions \ref{prop:two-level-equal-input} and \ref{prop:phi_contraction} showed that the two-level hard-core recursion is uniformly contracting in $\phi$-coordinates whenever \(\lambda\) is bounded away from \(\lambda_c(\Delta)\). We will use continuity to show that, for complex $\zeta$ around $1$, the same two-level recursion maps a small complex neighborhood of the real interval \([0,W]\) back into itself. This complex neighborhood is chosen so that, after converting back by \(\phi^{-1}\), the ratios in original coordinates stay away from \(-1\).

\begin{lemma}\label{lem:zeroFreeness_twoLevel_complex_thickening}
There exist $\alpha,\rho>0$ such that
\begin{equation}\label{eq:zeroFreeness_phi_inverse_halfplane}
\phi^{-1}(\overline{D_\alpha})\subseteq \{x\in\C:\operatorname{Re}x>-1/2\},
\end{equation}
and the following holds. Whenever \(0\le \lambda\le (1-\epsilon)\lambda_c(\Delta)\), \(t\in[-t_0,t_0]\), \(|\zeta-1|<\rho\), \(\dagger\in\{L,R\}\), \(0\le a\le d\), and \(\mathbf b=(b_1,\dots,b_a)\) with \(0\le b_i\le d\), the map $\mathcal H_{\dagger,a,\mathbf b}^{\lambda,t,\zeta}$ is well defined and
\[
\mathcal H_{\dagger,a,\mathbf b}^{\lambda,t,\zeta}
\bigl(D_\alpha^{\,b_1+\cdots+b_a}\bigr)
\subseteq
D_\alpha.
\]
\end{lemma}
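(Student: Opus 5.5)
The plan is a perturbative continuity argument around the real contraction of Propositions~\ref{prop:two-level-equal-input} and~\ref{prop:phi_contraction}.

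\emph{Step 1: the half-plane condition.} Since \(\phi^{-1}(s)=\sinh^2 s\) is entire and maps \([0,W]\) into \([0,\lambda_{\max}]\subset\{\operatorname{Re}x\ge 0\}\), uniform continuity on a compact neighborhood of \([0,W]\) gives some \(\alpha_0>0\) such that \eqref{eq:zeroFreeness_phi_inverse_halfplane} holds for every \(\alpha\le\alpha_0\). Consequently, for \(q\in\overline{D_\alpha}\) we have \(\operatorname{sech}^2 q=1/(1+\sinh^2q)\) with \(|1+\sinh^2 q|\ge 1/2\). So all factors \(\operatorname{sech}^2 q_{1,i,j}\) are bounded, analytic, and close to their values at \(\operatorname{Re}q\).

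\emph{Step 2: the real contraction and a margin.} By Propositions~\ref{prop:two-level-equal-input} and~\ref{prop:phi_contraction}, applied with \(\lambda\le(1-\epsilon)\lambda_c\) and with the sup over all tilts, the real map \(\mathcal H:=\mathcal H^{\lambda,t,1}_{\dagger,a,\mathbf b}\) has \(\ell^1\) Jacobian norm at most some \(M<1\), uniformly in all parameters. Here \(M\) is uniform in \(\lambda\le(1-\epsilon)\lambda_c\), because \(\Theta\) is increasing in \(\lambda\) and Lemma~\ref{lem:theta_contraction_interior fixed_points} applies at \(\lambda=(1-\epsilon)\lambda_c\). The map \(\mathcal H\) also sends \([0,W]^{b}\) into \([0,\phi(\lambda_\dagger)]\subseteq[0,W]\).

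\emph{Step 3: extension to complex inputs.} Extend \(\mathcal H\) holomorphically to a neighborhood of \([0,W]^b\times\{\zeta=1\}\). The parameter set \(\lambda\in[0,(1-\epsilon)\lambda_c]\), \(t\in[-t_0,t_0]\), together with the finitely many \(a,\mathbf b\), is compact, so the derivatives are uniformly continuous. The Jacobian bound on real points then persists as \(\sum|\partial \mathcal H/\partial q|\le M'\) with \(M<M'<1\) on \(D_{\alpha}^{b}\times\{|\zeta-1|<\rho\}\) for small \(\alpha,\rho\). The arcsinh argument \(\sqrt{\lambda_\dagger}\prod(1+y_i)^{-1/2}\) stays near \([0,\sqrt{\lambda_{\max}}]\), away from the branch points \(\pm i\), and the \(1+y_i\) stay near the positive reals, so all branches are well defined.

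\emph{Step 4: mapping \(D_\alpha\) into itself.} Given \(\mathbf q\in D_\alpha^b\), let \(\mathbf p\in[0,W]^b\) be coordinatewise nearest points, so that \(|q-p|_\infty<\alpha\). Then
\[
|\mathcal H^{\zeta}(\mathbf q)-\mathcal H^{1}(\mathbf p)|\le |\mathcal H^{\zeta}(\mathbf q)-\mathcal H^{\zeta}(\mathbf p)|+|\mathcal H^{\zeta}(\mathbf p)-\mathcal H^{1}(\mathbf p)|\le M'\alpha+C\rho,
\]
where the constant \(C\) bounds \(\partial_\zeta\mathcal H\) uniformly. Since \(\mathcal H^1(\mathbf p)\in[0,W]\), choosing \(\rho<(1-M')\alpha/C\) gives distance less than \(\alpha\), which proves the inclusion.

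The main obstacle is Step 3. First, the mean value estimate on a complex product set needs convexity: \(D_\alpha\) is convex, so segments stay inside it. Second, the real bound must transfer uniformly, including near \(q=0\) and where \(\lambda\to0\). The square-root singularity of \(\phi\) at \(0\) is not an issue in the \(q\)-variables, since \(\sinh^2\) and \(\operatorname{sech}^2\) are entire. However, the outer arcsinh of \(\sqrt{\lambda_\dagger(\zeta)}\cdot(\ldots)\) is fine only if we treat \(\sqrt{\lambda_\dagger(\zeta)}=\sqrt{\lambda e^{t}}\,\zeta^{1/2}\) as an analytic factor. I would verify this explicitly, noting that \(\zeta^{\pm1/2}\) is analytic near \(1\).
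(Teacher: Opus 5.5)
Your proposal is correct and follows essentially the same route as the paper. Both arguments obtain a uniform real Jacobian gap from Propositions~\ref{prop:two-level-equal-input} and~\ref{prop:phi_contraction}, using monotonicity of $\Theta$ in $\lambda$ for uniformity, and both extend the gap to a complex thickening by compactness and uniform continuity before combining the mean-value estimate on the convex set $D_\alpha$ with a small $\zeta$-perturbation term via the triangle inequality. The only cosmetic difference is that you bound the $\zeta$-error through a uniform bound $C$ on $\partial_\zeta\mathcal H$ and take $\rho<(1-M')\alpha/C$, whereas the paper fixes $\alpha$ first and uses uniform continuity to make that error at most $\eta_0\alpha/2$.
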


\begin{proof}
We first state the real contraction result. At \(\zeta=1\), every real
occupation ratio is bounded by the activity at its vertex:
\[
R_u=\lambda_{\dagger(u)}\prod_{v\in N(u)}(1+R_v)^{-1}
\le \lambda_{\dagger(u)}
\le \lambda e^{t_0}
\le \lambda_{\max}.
\]
Thus the ratios in \(\phi\)-coordinate lie in
\([0,W]\). On this real domain, Propositions
\ref{prop:two-level-equal-input} and \ref{prop:phi_contraction}, together
with the monotonicity of $\Theta_{a,b,\lambda}$ in $\lambda$, imply that
there is a number $\eta_0=\eta_0(\Delta,\epsilon)>0$ such that, for every
real choice of parameters
\[
0\le \lambda\le (1-\epsilon)\lambda_c(\Delta),
\qquad
t\in\RR,
\qquad
0\le a\le d,
\qquad
0\le b_i\le d,
\]
and every real input vector $\mathbf q_1^0\in[0,W]^{b_1+\cdots+b_a}$,
\begin{equation}\label{eq:zeroFreeness_real_twoLevel_jacobian_gap}
\sum_{i=1}^a\sum_{j=1}^{b_i}
\left|
\frac{\partial}{\partial q_{1,i,j}}
\mathcal H_{\dagger,a,\mathbf b}^{\lambda,t,1}(\mathbf q_1^0)
\right|
\le 1-\eta_0.
\end{equation}
Indeed, if $a=0$ or $b_i=0$ for every $i$, then the left-hand side is zero. Otherwise, Proposition \ref{prop:two-level-equal-input} bounds the left-hand side by \(h'_{\dagger,a,b}(\phi(x_1))\) with $b=\max_i b_i$ for some $x_1\ge0$, and Proposition \ref{prop:phi_contraction} gives a uniform gap when $\lambda\le(1-\epsilon)\lambda_c(\Delta)$.

Choose $\alpha>0$ small enough so that all the branch choices in
\eqref{eq:zeroFreeness_twoLevel_update} are valid on \(D_\alpha\) for
\(\zeta\) near \(1\), and so that
\eqref{eq:zeroFreeness_phi_inverse_halfplane} holds. This is possible
because $\phi^{-1}$ maps $[0,W]$ onto $[0,\lambda_{\max}]$. We also take
\(\alpha\) small enough that the real derivative gap
\eqref{eq:zeroFreeness_real_twoLevel_jacobian_gap} persists on the
\(\alpha\)-thickening at \(\zeta=1\). Indeed, for each arity pattern
\[
(a,\mathbf b)=(a,b_1,\ldots,b_a),
\qquad 0\le a\le d,\qquad 0\le b_i\le d,
\]
there are only finitely many choices, and the corresponding parameter set
\[
0\le \lambda\le (1-\epsilon)\lambda_c(\Delta),
\qquad
t\in[-t_0,t_0],
\qquad
\mathbf q_1^0\in[0,W]^{b_1+\cdots+b_a}
\]
is compact. Every point of \(D_\alpha^{\,b_1+\cdots+b_a}\) is within
\(\alpha\), coordinatewise, of this real cube. The formula
\eqref{eq:zeroFreeness_twoLevel_update} shows that, for each \(\lambda>0\),
the maps are analytic in the variables \(q_{1,i,j}\) and in \(\zeta\) in a
fixed neighborhood of this compact set; their coordinate derivatives extend
continuously to \(\lambda=0\), where the map is constant. Hence uniform
continuity, together with the strict gap in
\eqref{eq:zeroFreeness_real_twoLevel_jacobian_gap}, lets us choose this
\(\alpha\) so that the derivative sum is still at most \(1-3\eta_0/4\) for
\(\zeta=1\) and all \(q_{1,i,j}\in D_\alpha\).

Now fix this \(\alpha\). Applying the same compactness and uniform
continuity argument in the \(\zeta\)-variable, and using again that there are
only finitely many arity patterns, we may choose \(\rho>0\) small enough so
that, whenever \(|\zeta-1|<\rho\) and each \(q_{1,i,j}\in D_\alpha\),
\begin{equation}\label{eq:zeroFreeness_complex_twoLevel_jacobian_gap}
\sum_{i=1}^a\sum_{j=1}^{b_i}
\left|
\frac{\partial}{\partial q_{1,i,j}}
\mathcal H_{\dagger,a,\mathbf b}^{\lambda,t,\zeta}(\mathbf q_1)
\right|
\le 1-\frac{\eta_0}{2},
\end{equation}
and, for every real $\mathbf q_1^0\in[0,W]^{b_1+\cdots+b_a}$,
\begin{equation}\label{eq:zeroFreeness_parameter_error}
\left|
\mathcal H_{\dagger,a,\mathbf b}^{\lambda,t,\zeta}(\mathbf q_1^0)
-
\mathcal H_{\dagger,a,\mathbf b}^{\lambda,t,1}(\mathbf q_1^0)
\right|
\le
\frac{\eta_0\alpha}{2}.
\end{equation}

Now fix $\mathbf q_1\in D_\alpha^{b_1+\cdots+b_a}$, and choose $\mathbf q_1^0\in[0,W]^{b_1+\cdots+b_a}$ such that
\[
\max_{i,j}|q_{1,i,j}-q^0_{1,i,j}|<\alpha.
\]
The set $D_\alpha$ is convex, so the line segment from $\mathbf q_1^0$ to $\mathbf q_1$ stays inside the domain on which \eqref{eq:zeroFreeness_complex_twoLevel_jacobian_gap} holds. Hence, by the fundamental theorem of calculus,
\[
\left|
\mathcal H_{\dagger,a,\mathbf b}^{\lambda,t,\zeta}(\mathbf q_1)
-
\mathcal H_{\dagger,a,\mathbf b}^{\lambda,t,\zeta}(\mathbf q_1^0)
\right|
\le
\left(1-\frac{\eta_0}{2}\right)\alpha.
\]
For real parameters and real inputs, the output ratio lies in
$[0,\lambda_{\max}]$, so
$\mathcal H_{\dagger,a,\mathbf b}^{\lambda,t,1}(\mathbf q_1^0)\in[0,W]$. The triangle inequality and
\eqref{eq:zeroFreeness_parameter_error} give
\begin{align*}
\operatorname{dist}\left(
\mathcal H_{\dagger,a,\mathbf b}^{\lambda,t,\zeta}(\mathbf q_1),
[0,W]
\right)
&\le
\left|
\mathcal H_{\dagger,a,\mathbf b}^{\lambda,t,\zeta}(\mathbf q_1)
-
\mathcal H_{\dagger,a,\mathbf b}^{\lambda,t,1}(\mathbf q_1^0)
\right| \\
&\le
\left|
\mathcal H_{\dagger,a,\mathbf b}^{\lambda,t,\zeta}(\mathbf q_1)
-
\mathcal H_{\dagger,a,\mathbf b}^{\lambda,t,\zeta}(\mathbf q_1^0)
\right|
+
\left|
\mathcal H_{\dagger,a,\mathbf b}^{\lambda,t,\zeta}(\mathbf q_1^0)
-
\mathcal H_{\dagger,a,\mathbf b}^{\lambda,t,1}(\mathbf q_1^0)
\right| \\
&<
\left(1-\frac{\eta_0}{2}\right)\alpha+\frac{\eta_0\alpha}{2}
=\alpha.
\end{align*}
This finishes the proof.
\end{proof}

\begin{proof}[Proof of Proposition \ref{prop:zero_free_tilted_partition}]
Write
\[
\epsilon:=\frac{\delta}{\lambda_c(\Delta)}.
\]
If \(\delta\ge \lambda_c(\Delta)\), then the interval for \(\lambda\) is empty, so we may assume \(0<\epsilon<1\). Applying Lemma \ref{lem:zeroFreeness_twoLevel_complex_thickening} with this \(\epsilon\) and \(t_0\), we obtain numbers \(\alpha,\rho>0\). Shrink \(\rho\) if necessary so that \(\rho<1\), and set
\[
\Omega:=\phi^{-1}(D_\alpha):=\{\phi^{-1}(w):w\in D_\alpha\}.
\]
By \eqref{eq:zeroFreeness_phi_inverse_halfplane}, we have
\begin{equation}\label{eq:zeroFreeness_ratio_domain_right_halfplane}
\Omega\subseteq\{x\in\C:\operatorname{Re}x>-1/2\}.
\end{equation}

Fix a complex number \(\zeta\) with \(|\zeta-1|<\rho\). Since the partition function is multiplicative over connected components of \(G\), it suffices to prove the lemma when \(G\) is connected.

Let \(H\) be an induced subgraph of \(G\), and let \(u\in V(H)\) have at most \(d=\Delta-1\) neighbors in \(H\). Whenever \(Z_{H-u}(\lambda;t,\zeta)\neq0\), write
\[
R_{H,u}
:=
\frac{\lambda_{\dagger(u)}(\zeta)\,
Z_{H\setminus N_H[u]}(\lambda;t,\zeta)}
{Z_{H-u}(\lambda;t,\zeta)},
\]
where \(\dagger(u)\in\{L,R\}\) denotes the side containing \(u\).
List the neighbors of \(u\) in \(H\) as \(v_1,\dots,v_a\), define \(H_0:=H-u\) and \(H_i:=H_{i-1}-v_i\), and then list the neighbors of \(v_i\) in \(H_{i-1}\) as \(w_{i,1},\dots,w_{i,b_i}\). Since \(u\) has been deleted from \(H_{i-1}\), we have \(b_i\le d\). Let
\[
K_{i,0}:=H_{i-1}-v_i,
\qquad\text{and}\qquad
K_{i,j}:=K_{i,j-1}-w_{i,j}\quad \text{for } 1\le j\le b_i.
\]
Suppose that the ratios \(R_{K_{i,j-1},w_{i,j}}\) are well defined and lie in \(\Omega\). Choose \(\omega_{i,j}\in D_\alpha\) with
\[
R_{K_{i,j-1},w_{i,j}}=\phi^{-1}(\omega_{i,j}).
\]
Substituting the one-level ratio recurrence twice gives
\begin{equation}\label{eq:zeroFreeness_boundary_ratio_update}
R_{H,u}
=
\phi^{-1}\left(
\mathcal H_{\dagger(u),a,\mathbf b}^{\lambda,t,\zeta}
\bigl((\omega_{i,j})_{1\le i\le a,\ 1\le j\le b_i}\bigr)
\right),
\qquad \mathbf b=(b_1,\dots,b_a).
\end{equation}
Lemma \ref{lem:zeroFreeness_twoLevel_complex_thickening} puts the point inside \(\phi^{-1}(\cdot)\) in \(D_\alpha\), so \eqref{eq:zeroFreeness_boundary_ratio_update} gives \(R_{H,u}\in\Omega\). In particular \(R_{H,u}\neq-1\) by \eqref{eq:zeroFreeness_ratio_domain_right_halfplane}.

Fix \(v_0\in V(G)\). We claim that, for every \(U\subseteq V(G)\setminus\{v_0\}\), with \(H=G[U]\), we have \(Z_H(\lambda;t,\zeta)\neq0\). Moreover, if \(u\in U\) has a neighbor in \(V(G)\setminus U\), then \(R_{H,u}\) is well defined and lies in \(\Omega\).

We prove the claim by induction on \(|U|\). The case \(U=\emptyset\) is immediate. Assume \(U\neq\emptyset\), and that the claim holds for all proper subsets of \(U\). Since \(G\) is connected and \(v_0\notin U\), there is a vertex \(u_0\in U\) with a neighbor in \(V(G)\setminus U\). Then \(u_0\) has at most \(d\) neighbors in \(H\), and \(Z_{H-u_0}(\lambda;t,\zeta)\neq0\) by induction. In the setup of \eqref{eq:zeroFreeness_boundary_ratio_update} for \(H,u_0\), every graph \(K_{i,j-1}\) is a proper induced subgraph of \(H\), and \(w_{i,j}\) has the neighbor \(v_i\) outside \(K_{i,j-1}\). Hence the induction hypothesis puts each grandchild ratio \(R_{K_{i,j-1},w_{i,j}}\) in \(\Omega\), and \eqref{eq:zeroFreeness_boundary_ratio_update} gives \(R_{H,u_0}\in\Omega\). Therefore
\[
Z_H(\lambda;t,\zeta)
=
Z_{H-u_0}(\lambda;t,\zeta)\,(1+R_{H,u_0})
\neq0.
\]
Now let \(u\in U\) be any vertex with a neighbor outside \(U\). Using \eqref{eq:zeroFreeness_boundary_ratio_update} for \(H,u\), the needed grandchild ratios again live in proper induced subgraphs and are covered by induction. Thus \(R_{H,u}\in\Omega\). This completes the induction.

We now prove nonvanishing for \(G\). If \(G\) has no vertices there is nothing to show. Otherwise fix \(v_0\in V(G)\), and list its neighbors as \(v_1,\dots,v_m\), where \(m\le\Delta\). Define
\[
G_0:=G-v_0,
\qquad
G_i:=G_{i-1}-v_i\quad (1\le i\le m).
\]
Applying the claim to \(U=V(G)\setminus\{v_0\}\) gives \(Z_{G-v_0}(\lambda;t,\zeta)\neq0\). For each \(i\), applying the claim to
\[
U_i:=V(G)\setminus\{v_0,v_1,\dots,v_{i-1}\}
\]
gives \(R_{G_{i-1},v_i}\in\Omega\), since \(v_i\) has the outside neighbor \(v_0\).

\begin{itemize}
\item If \(m\le d\), use \eqref{eq:zeroFreeness_boundary_ratio_update} with \(H=G\) and \(u=v_0\). The required grandchild ratios lie in proper induced subgraphs of \(G-v_0\), with the deleted parent outside, so the claim puts them in \(\Omega\). Hence \(R_{G,v_0}\in\Omega\), and therefore \(R_{G,v_0}\neq-1\). It follows that
\[
Z_G(\lambda;t,\zeta)=Z_{G-v_0}(\lambda;t,\zeta)(1+R_{G,v_0})\neq0.
\]

\item If \(m=\Delta\), set
\[
Q:=
\frac{\lambda_{\dagger(v_0)}(\zeta)}
{\prod_{i=1}^{m-1}(1+R_{G_{i-1},v_i})}
=R_{G,v_0}(1+R_{G_{m-1},v_m}).
\]
Expanding only the first \(m-1=d\) neighbor ratios one further level, the claim supplies all grandchild inputs in \(\Omega\). Lemma \ref{lem:zeroFreeness_twoLevel_complex_thickening} therefore gives \(Q\in\Omega\). We also have \(R_{G_{m-1},v_m}\in\Omega\). If \(R_{G,v_0}=-1\), then
\[
-1=Q+R_{G_{m-1},v_m},
\]
which is impossible by \eqref{eq:zeroFreeness_ratio_domain_right_halfplane}, since both summands have real part greater than \(-1/2\). Thus \(R_{G,v_0}\neq-1\), and again \(Z_G(\lambda;t,\zeta)\neq0\).
\end{itemize}

Since \(\zeta\) was arbitrary with \(|\zeta-1|<\rho\), this proves the claimed nonvanishing of \(Z_G(\lambda;t,\zeta)\).
\end{proof}

\subsection{A local central limit theorem for the imbalance} \label{subsec:lclt}

Throughout this section, let $G=(L\sqcup R,E)$ be a bipartite graph of maximum degree at most $\Delta$, and write $n:=|V(G)|=|L|+|R|$.

We now prove a local central limit theorem for the imbalance \(B(I)\) (defined in \eqref{eq:balance_definition}) under the
tilted hard-core measure. This will be crucial for the rejection sampling
step in Algorithm \ref{alg:balanced_hardcore_sampler_template}. Throughout, let
\[
N(x):=\frac{e^{-x^2/2}}{\sqrt{2\pi}}
\]
denote the density of the standard normal distribution. The main result for this section is as follows.

\begin{theorem} \label{thm:main_local_clt}
Fix $\Delta \ge 3$, $\delta>0$, $t_0>0$, and $\lambda\in(0,\lambda_c(\Delta)-\delta)$.
Let $I\sim \mu_{\lambda,t}$ for some $t\in[-t_0,t_0]$. Let
\[
\mu:=\mathbb E_{\mu_{\lambda,t}}[B(I)],
\qquad
\sigma^2:=\Var_{\mu_{\lambda,t}}[B(I)].
\]
Then
\[
\sup_{b\in\mathbb Z}
\left|
\sigma^{-1}N\!\left(\frac{b-\mu}{\sigma}\right)
-
\mathbb P_{\mu_{\lambda,t}}[B(I)=b]
\right|
=
O_{\Delta,\delta,t_0,\lambda}\!\left(
\frac{(\log n)^{5/2}}{n}
\right).
\]
\end{theorem}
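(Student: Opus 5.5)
The plan is the standard route from zero-freeness to a local CLT via Fourier inversion, in the spirit of the cumulant-bound approach (e.g.\ Jain--Perkins--Sah--Sawhney). Write the characteristic function of $B(I)$ under $\mu_{\lambda,t}$ as
\[
\varphi(\theta)=\E_{\mu_{\lambda,t}}\bigl[e^{i\theta B(I)}\bigr]=\frac{Z_G(\lambda;t,e^{i\theta})}{Z_G(\lambda;t,1)},
\]
so that $\mathbb P[B(I)=b]=\frac1{2\pi}\int_{-\pi}^{\pi}\varphi(\theta)e^{-i\theta b}\,d\theta$. We compare this integral with the analogous Gaussian integral, splitting $[-\pi,\pi]$ into a small window $|\theta|\le \theta_1$ and its complement.

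\textbf{Step 1 (cumulant control).} By Proposition~\ref{prop:zero_free_tilted_partition}, $\log Z_G(\lambda;t,\zeta)$ is analytic on $|\zeta-1|<\rho$. Because $Z$ is multiplicative over components and has degree at most $n$ in $\zeta^{\pm1}$, we get $|\log Z_G(\lambda;t,\zeta)-\log Z_G(\lambda;t,1)|=O(n)$ on a slightly smaller disc. One clean way to see this is to interpolate to a trivially zero-free point, or to use the standard bound for zero-free polynomials of degree $O(n)$ (Barvinok-type). Cauchy estimates in the variable $s=\log\zeta$ then give $|\kappa_k(B)|\le k!\,C^k n$ for all $k$, where $C$ depends only on $\Delta,\delta,t_0$.

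\textbf{Step 2 (variance lower bound).} We also need $\sigma^2=\Omega(n)$. I would prove this by conditioning on a large, well-separated set of vertices (distance $\ge 3$ apart), each with bounded degree. Conditional on everything outside their closed neighbourhoods, each such vertex contributes independent noise to $B$ with variance bounded below, because fugacities are bounded above and below for $t\in[-t_0,t_0]$. The law of total variance then gives $\sigma^2\ge cn$. This is presumably the content of Lemma~\ref{lem:balance_variance_bounds}; the upper bound $\sigma^2=O(n)$ follows from Step 1.

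\textbf{Step 3 (inner window).} For $|\theta|\le\theta_1$ with $\theta_1$ a small constant, Taylor-expand $\log\varphi(\theta)=i\mu\theta-\sigma^2\theta^2/2+r(\theta)$ with $|r(\theta)|\le Cn|\theta|^3$. On $|\theta|\le \sigma^{-1}\sqrt{K\log n}$ we bound $|\varphi-\text{Gaussian}|\lesssim e^{-\sigma^2\theta^2/2}\,n|\theta|^3$. Integrating this gives $O(n\sigma^{-4})=O(1/n)$ up to log factors; the $(\log n)^{5/2}$ comes from this truncation. On $\sigma^{-1}\sqrt{K\log n}\le|\theta|\le\theta_1$, the bound $\operatorname{Re}\log\varphi\le-\sigma^2\theta^2/4$ (valid for $\theta_1$ small, since $Cn\theta^3\ll\sigma^2\theta^2$) makes both integrands $O(n^{-K/4})$. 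The Gaussian integral over all of $\mathbb R$ equals $\sigma^{-1}N((b-\mu)/\sigma)$ exactly, and its tail outside the window is negligible.

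\textbf{Step 4 (outer region, the main obstacle).} For $\theta_1\le|\theta|\le\pi$, zero-freeness near $\zeta=1$ says nothing, so we need $|\varphi(\theta)|\le e^{-cn}$ directly. I would reuse the separated-vertex conditioning from Step 2. Conditional on the outside configuration, $B$ is a sum of independent bounded lattice variables, each taking two adjacent integer values ($0$ or $\pm1$) with probabilities bounded away from $0$ and $1$ for free vertices. Each such variable therefore has $|\E e^{i\theta X_v}|\le 1-c\theta^2\le 1-c'$ uniformly for $|\theta|\in[\theta_1,\pi]$. We need $\Omega(n)$ free vertices with probability $1-e^{-cn}$, which holds because each selected vertex has all neighbours unoccupied with conditional probability bounded below, uniformly in $t\in[-t_0,t_0]$; a Chernoff bound finishes it. Multiplying these bounds gives $|\varphi(\theta)|\le e^{-cn}$.

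Combining Steps 3 and 4 yields the uniform error $O((\log n)^{5/2}/n)$ in the statement.
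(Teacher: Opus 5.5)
Your proposal is correct, but it handles the low frequencies by a different mechanism than the paper.

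\textbf{Low frequencies.} On $|\theta|\le A\sqrt{\log n}$ (standardized units), the paper bounds $|\phi_X(\theta)-e^{-\theta^2/2}|$ in two steps. First it invokes the Michelen--Sahasrabudhe theorem, which turns the zero-free disc into a Kolmogorov-distance bound $O(\log n/\sigma)$. Then it uses the smoothing argument of Jain--Perkins--Sah--Sawhney to convert this into a characteristic-function bound. The losses in that conversion are exactly where the $(\log n)^{5/2}$ comes from.

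You instead read off $|\kappa_k(B)|\le k!\,C^k n$ directly from the zero-free disc. The root factorization of $\zeta^{|R|}Z_G(\lambda;t,\zeta)$, which the paper uses only for the variance upper bound in Lemma~\ref{lem:balance_variance_bounds}, gives this for every $k$ at once. You then Taylor-expand $\log\varphi$ with a cubic remainder. This is self-contained and in fact yields error $O(n/\sigma^4)=O(1/n)$. So your remark that the $(\log n)^{5/2}$ ``comes from this truncation'' undersells your argument: it proves a stronger statement.

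\textbf{High frequencies.} The paper proves the Gaussian-type bound $|\E e^{-iuB(I)}|\le e^{-c\lambda n u^2}$ on all of $[-\pi,\pi]$. It does this by conditioning on the complement of the closed neighbourhoods of a distance-$4$ set, so every selected vertex contributes noise and no concentration argument is needed. You need only $e^{-cn}$ on $[\theta_1,\pi]$, because your cubic-remainder bound already covers the intermediate range. However, your ``Chernoff bound'' for the number of free vertices is not literally Chernoff, since the events ``$v$ is free'' are dependent. It goes through by exposing the neighbourhoods one at a time and using that each vertex is unoccupied with probability at least $(1+\lambda_{\max})^{-1}$ conditional on anything else, then dominating the count by a binomial.

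\textbf{Step 2.} Separation $\ge 3$ does not make the closed neighbourhoods non-adjacent, so their contributions are not conditionally independent given the outside. Either separate at distance $\ge 4$ as the paper does, or condition on everything except the selected vertices themselves; then separation $\ge 2$ suffices, and this is what your Step~4 implicitly does. The paper's variance lower bound is simpler still: it conditions on $I\cap\bar J$ for the larger side $J$, after which $|I\cap J|$ is exactly binomial by bipartiteness.
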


The proof proceeds by the following standard Fourier inversion estimate.

\begin{lemma}(see Lemma 2.5 of \cite{jain2021approximatecountingsamplinglocal} or Lemma 3 of \cite{berkowitz2017quantitativelocallimittheorem}) \label{lem:fourier-inversion}
Let $X$ be a random variable supported on the lattice
$L=\alpha+\sigma^{-1}\mathbb{Z}$ for some $\alpha\in\mathbb R$ and
$\sigma>0$, and let $Z\sim \mathcal{N}(0,1)$. Then
\[
\sup_{x\in L}\left|\sigma^{-1}N(x)-\mathbb{P}[X=x]\right|
\le
\frac{1}{\sigma} \int_{-\pi\sigma}^{\pi\sigma}
\left|
\mathbb{E}\!\left[e^{i\theta X}\right]-\mathbb{E}\!\left[e^{i\theta Z}\right]
\right|\,d\theta
+
e^{-\pi^{2}\sigma^{2}/2}.
\]
\end{lemma}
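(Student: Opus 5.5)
The plan is to derive the estimate from the Fourier inversion formula for lattice random variables, compared against the same inversion formula for the Gaussian density, splitting the $\theta$-integral at $|\theta|=\pi\sigma$.

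\textbf{Step 1: lattice inversion for $X$.} Since $X$ takes values in $\alpha+\sigma^{-1}\mathbb Z$, the function $\theta\mapsto e^{-i\theta\alpha}\,\mathbb E[e^{i\theta X}]$ is $2\pi\sigma$-periodic; it is the Fourier series with coefficients $\mathbb P[X=\alpha+k/\sigma]$. Orthogonality of $e^{ik\theta/\sigma}$ on $[-\pi\sigma,\pi\sigma]$ gives, for every $x\in L$,
\[
\mathbb P[X=x]=\frac{1}{2\pi\sigma}\int_{-\pi\sigma}^{\pi\sigma}\mathbb E\!\left[e^{i\theta X}\right]e^{-i\theta x}\,d\theta .
\]

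\textbf{Step 2: Gaussian inversion.} Since $\mathbb E[e^{i\theta Z}]=e^{-\theta^2/2}$, the standard inversion formula gives
\[
\sigma^{-1}N(x)=\frac{1}{2\pi\sigma}\int_{\mathbb R}e^{-\theta^2/2}e^{-i\theta x}\,d\theta .
\]

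\textbf{Step 3: subtract and split.} Subtracting the two formulas and using $|e^{-i\theta x}|=1$ gives
\[
\left|\sigma^{-1}N(x)-\mathbb P[X=x]\right|
\le
\frac{1}{2\pi\sigma}\int_{-\pi\sigma}^{\pi\sigma}\left|\mathbb E[e^{i\theta X}]-\mathbb E[e^{i\theta Z}]\right|d\theta
+\frac{1}{\pi\sigma}\int_{\pi\sigma}^{\infty}e^{-\theta^2/2}\,d\theta .
\]
The first term is already bounded by the first term of the claim, with room to spare by a factor $2\pi$.

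\textbf{Step 4: the Gaussian tail, which is the only place needing care.} For $\sigma\ge 1/\pi$, the Mills-ratio bound $\int_A^\infty e^{-\theta^2/2}\,d\theta\le e^{-A^2/2}/A$ with $A=\pi\sigma$ makes the tail term at most $e^{-\pi^2\sigma^2/2}/(\pi^2\sigma^2)\le e^{-\pi^2\sigma^2/2}$. For $\sigma<1/\pi$ I would not use Mills; instead I would use the unspent slack in the constants. One option is to bound the left side directly: $\mathbb P[X=x]\le1$, and $\sigma^{-1}N(x)$ is controlled by the first integral plus $\frac{1}{2\pi\sigma}\int_{-\pi\sigma}^{\pi\sigma}e^{-\theta^2/2}\,d\theta$. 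Alternatively, one can simply restrict the lemma to $\sigma\ge1/\pi$, as is done in the cited sources. In our application $\sigma^2=\Theta(n)$ by the variance bounds, so only the large-$\sigma$ regime matters.

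I expect this small-$\sigma$ bookkeeping in Step 4 to be the only delicate point. Steps 1–3 are routine. Taking the supremum over $x\in L$ finishes the proof.
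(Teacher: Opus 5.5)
Your Steps 1--3, together with the Mills-ratio bound in Step~4 for $\sigma\ge 1/\pi$, are correct. They are the standard Fourier-inversion argument behind the cited lemma. The paper gives no proof of its own, only the citation, so in that regime there is nothing different to compare against.

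The problem is your first option for $\sigma<1/\pi$. It cannot work, because the inequality as stated is false for small $\sigma$. The right-hand side is always at most $4\pi+1$, since the integrand is at most $2$ on an interval of length $2\pi\sigma$. By contrast, $\sigma^{-1}N(x)$ is of order $\sigma^{-1}$ whenever a lattice point lies near $0$.

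Concretely, take $\alpha=0$ and $X\equiv 0$. At $x=0$ the left side is $\sigma^{-1}/\sqrt{2\pi}-1$. Using $|1-e^{-\theta^2/2}|\le \theta^2/2$, the right side is at most $\sigma^{-1}\int_{-\pi\sigma}^{\pi\sigma}\tfrac{\theta^2}{2}\,d\theta+1=\tfrac{\pi^3\sigma^2}{3}+1$. At $\sigma=1/10$ this is roughly $2.99$ against $1.10$.

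In particular, your claim that $\sigma^{-1}N(x)$ is controlled by the first integral plus $\frac{1}{2\pi\sigma}\int_{-\pi\sigma}^{\pi\sigma}e^{-\theta^2/2}\,d\theta$ is wrong. Both of those quantities are bounded by absolute constants, while $\sigma^{-1}N(x)$ is not.

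So the restriction is not an optional convenience. Either a lower bound such as $\sigma\ge 1/\pi$ must be added to the hypothesis, or the tail term must be kept in the form $\frac{1}{\pi\sigma}\int_{\pi\sigma}^{\infty}e^{-\theta^2/2}\,d\theta$. This is harmless for the paper. The lemma is only applied in the proof of Theorem~\ref{thm:main_local_clt}, where Lemma~\ref{lem:balance_variance_bounds} gives $\sigma^2\ge c\lambda n$. Small $n$ is absorbed into the $O(\cdot)$ there.
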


The proof follows the strategy of Section~3 of \cite{jain2021approximatecountingsamplinglocal}. Writing
\(X=(B(I)-\mu)/\sigma\), \(\phi_X(\theta):=\mathbb E[e^{i\theta X}]\), and
\(\phi_Z(\theta):=e^{-\theta^2/2}\), the Fourier integral in
Lemma~\ref{lem:fourier-inversion} is split at a cutoff
\(\theta_0\asymp\sqrt{\log n}\) as
\[
\int_{-\pi\sigma}^{\pi\sigma}
\bigl|\phi_X(\theta)-\phi_Z(\theta)\bigr|\,d\theta
=
\underbrace{\int_{|\theta|\le \theta_0}
\bigl|\phi_X(\theta)-\phi_Z(\theta)\bigr|\,d\theta}_{\text{low Fourier phases}}
+
\underbrace{\int_{\theta_0<|\theta|\le \pi\sigma}
\bigl|\phi_X(\theta)-\phi_Z(\theta)\bigr|\,d\theta}_{\text{high Fourier phases}}.
\]

The low frequency part is controlled by the zero-freeness result from the previous subsection, similar to the proof of Lemma~3.3 in \cite{jain2021approximatecountingsamplinglocal}. The high frequency part is bounded by conditioning on a collection of
well-separated vertices, adapting the arguments in \cite{jain2021approximatecountingsamplinglocal} Lemma~3.5. Note that \cite{jain2021approximatecountingsamplinglocal} gives a local CLT for $\abs{I}$ in a univariate hard-core model, which differs from  our variable of interest $B(I)$ and our tilted model with different fugacities in $L$ and $R$.

Following \cite{jain2021approximatecountingsamplinglocal}, we begin with a
variance bound.

\begin{lemma}\label{lem:balance_variance_bounds}
Let \(\lambda\in(0,\lambda_c(\Delta)-\delta)\) and \(t\in[-t_0,t_0]\).
Then there exist constants \(c_{\Delta,\delta,t_0},C_{\Delta,\delta,t_0}>0\) such that
\[
    c_{\Delta,\delta,t_0}\,\lambda n
    \;\le\;
    \Var_{\mu_{\lambda,t}}[B(I)]
    \;\le\;
    C_{\Delta,\delta,t_0}\,n.
\]
\end{lemma}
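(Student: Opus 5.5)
We prove the two bounds separately: the lower bound by conditioning on a well-separated set of vertices, and the upper bound by decay of covariances between distant vertices.

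\emph{Lower bound.} Fix $G$ with maximum degree $\Delta$ and $t\in[-t_0,t_0]$, and abbreviate $\lambda_L=\lambda e^{t}$, $\lambda_R=\lambda e^{-t}$.

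1. Greedily pick a set $S\subseteq V(G)$ of vertices at pairwise graph distance at least $3$. Each chosen vertex excludes at most $1+\Delta+\Delta(\Delta-1)\le \Delta^2+1$ vertices, so $|S|\ge n/(\Delta^2+1)$.

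2. Let $\mathcal F$ be the $\sigma$-algebra generated by the configuration $I\setminus S$. By the law of total variance, $\Var(B)\ge \E[\Var(B\mid\mathcal F)]$.

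3. Given $\mathcal F$, call $v\in S$ free if none of its neighbours lies in $I$. Distinct vertices of $S$ have disjoint closed neighbourhoods, so the free vertices of $S$ are occupied independently. A free $v$ is occupied with probability $p_v=\lambda_{\dagger(v)}/(1+\lambda_{\dagger(v)})$.

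4. Every $v\in S$ contributes $\pm1$ to $B$, so
\[
\Var(B\mid\mathcal F)=\sum_{v\in S\text{ free}}p_v(1-p_v).
\]
Because $t$ ranges over a compact interval and $\lambda$ is bounded above, $p_v(1-p_v)\ge c\lambda$ for a constant $c=c_{\Delta,\delta,t_0}>0$.

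5. The probability that $v$ is free is at least the probability that all its neighbours are unoccupied. Reveal the neighbours one at a time; conditional on any configuration elsewhere, a vertex $w$ is unoccupied with probability at least $1/(1+\lambda_{\dagger(w)})\ge 1/(1+\lambda_c e^{t_0})$. Hence $v$ is free with probability at least $(1+\lambda_ce^{t_0})^{-\Delta}$.

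6. Taking expectations, $\Var(B)\ge |S|\,c\lambda\,(1+\lambda_ce^{t_0})^{-\Delta}\ge c_{\Delta,\delta,t_0}\lambda n$, which is the claimed lower bound.

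\emph{Upper bound.} Write
\[
\Var(B)=\sum_{u,v}s_us_v\operatorname{Cov}(\1_{u\in I},\1_{v\in I}),
\]
where $s_u=+1$ on $L$ and $s_u=-1$ on $R$. It suffices to show $\sum_v|\operatorname{Cov}(\1_u,\1_v)|\le C$ uniformly in $u$.

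1. Write the covariance as
\[
\operatorname{Cov}(\1_u,\1_v)=\Pr(u\in I)\bigl[\Pr(v\in I\mid u\in I)-\Pr(v\in I)\bigr].
\]

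2. The difference $\Pr(v\in I\mid u\in I)-\Pr(v\in I)$ is bounded by the spread of the marginal at $v$ over boundary conditions that differ only at $u$. By Theorem~\ref{thm:ssm}, applied on $T_{\mathrm{SAW}}(G,v)$ with constants uniform over $t\in[-t_0,t_0]$, this spread is at most $Ce^{-c\,d(u,v)}$. Every copy of $u$ in the SAW tree lies at depth at least $d(u,v)$, which is what makes this distance the right one. Passing from root ratios to marginals is harmless because $R\mapsto R/(1+R)$ is $1$-Lipschitz.

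3. Sum over distances: the number of vertices at distance $k$ from $u$ is at most $\Delta(\Delta-1)^{k-1}$, so
\[
\sum_v|\operatorname{Cov}(\1_u,\1_v)|\le \sum_{k\ge0}\Delta(\Delta-1)^{k}\,Ce^{-ck}.
\]

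4. This sum converges only if $e^{c}>\Delta-1$, i.e.\ the decay rate $c$ must beat the growth rate of the tree. Theorem~\ref{thm:ssm} only guarantees some $c>0$.

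\emph{Main obstacle: closing the upper bound.} The bottleneck is step 4; the lower bound is routine. There are two ways to close it.

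\textbf{Route 1 (preferred): zero-freeness.} Set $f(\zeta)=\log Z_G(\lambda;t,\zeta)$. By Proposition~\ref{prop:zero_free_tilted_partition}, $Z_G$ has no zeros on $|\zeta-1|<\rho$, so $f$ is analytic there. The variance is $\Var(B)=(\zeta\partial_\zeta)^2 f$ evaluated at $\zeta=1$. By Cauchy's estimate, it suffices to bound $|f|=O(n)$ on a smaller disc. For this, write $Z_G$ as a telescoping product of at most $n$ factors $1+R_{H,u}$. Each $R_{H,u}$ lies in the compact set $\overline\Omega$, which avoids $-1$. So each factor has logarithm of modulus $O(1)$, and $|f|=O(n)$. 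This gives $\Var(B)\le C_{\Delta,\delta,t_0}\,n$ directly.

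\textbf{Route 2: sharper decay.} Use the two-level contraction $M<1$ together with the $\ell_1$ Jacobian bound of Proposition~\ref{prop:two-level-equal-input}. This should control the total influence of all vertices at distance $k$ by $M^{k/2}$ in summed form, which makes the sum over $v$ converge without counting vertices separately.
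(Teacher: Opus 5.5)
Your proposal is correct, and it differs from the paper in both halves.

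\emph{Lower bound.} The paper conditions on $K=I\cap\bar J$, where $J$ is the larger side. Bipartiteness then makes $|I\cap J|$ conditionally $\operatorname{Bin}(|J\setminus N(K)|,p_J)$, so the whole larger side contributes. You instead condition on $I\setminus S$ for a sparse set $S$; pairwise distance $2$ would already suffice. Your version never uses bipartiteness, at the cost of the weaker constant $1/(\Delta^2+1)$.

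\emph{Upper bound.} You rightly observe that Theorem~\ref{thm:ssm} gives an unquantified decay rate, which need not beat $(\Delta-1)^k$. Both you and the paper then fall back on Proposition~\ref{prop:zero_free_tilted_partition}, but the paper needs only its statement. It factors the polynomial $\zeta^{|R|}Z_G(\lambda;t,\zeta)$, of degree at most $n$ and nonzero at $0$, over its roots. This writes the variance as $-\sum_j\zeta_j/(\zeta_j-1)^2$, with each term $O(\rho^{-2})$.

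Your Route 1 also works: an $O(n)$ bound on an analytic $\log Z_G$ via the telescoping product, then Cauchy's estimate. However, it draws on the ratio confinement from inside the zero-freeness proof, and there you overstate slightly. The paper only establishes $R_{H,u}\in\Omega$ when $u$ has at most $\Delta-1$ neighbours in $H$. Full-degree vertices cannot be avoided in the telescoping order: the first vertex of every component may have degree $\Delta$, and there can be $\Theta(n)$ such components. For those factors, use the decomposition $Q=R_{H,u}(1+S)$ with $Q,S\in\Omega$, i.e.\ $1+R_{H,u}=(1+S+Q)/(1+S)$. Both numerator and denominator lie in compact subsets of the open right half-plane, so $\operatorname{Log}(1+S+Q)-\operatorname{Log}(1+S)$ is an analytic logarithm of modulus $O(1)$. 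Summing the factors' logarithms then gives the analytic branch of $\log Z_G$ to which Cauchy's estimate applies.
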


\begin{proof}
We first prove the upper bound. Since \(B(I)\) may be negative, we pass from
\(Z_G(\lambda;t,\zeta)\) to the polynomial
\[
    F_t(\zeta):=\zeta^{|R|}Z_G(\lambda;t,\zeta)
    =
    \sum_{I\in\cI_G}
    \lambda^{|I|}e^{\,tB(I)}\zeta^{B(I)+|R|}.
\]
Because $-|R|\le B(I)\le |L|,$
the exponent \(B(I)+|R|\) lies in \(\{0,1,\dots,n\}\), so \(F_t\) is a polynomial of degree at most \(n\).
By Proposition~\ref{prop:zero_free_tilted_partition}, there exists \(\rho=\rho(\Delta,\delta,t_0)\in(0,1)\) such that \(Z_G(\lambda;t,\zeta)\neq 0\) for all \(|\zeta-1|<\rho\).
Thus, \(F_t\) also has no zeros in \(\{|\zeta-1|<\rho\}\). Moreover, \(F_t(0) = \lambda^{|R|}e^{-t|R|} \neq 0\). Thus, we write
\[
    F_t(\zeta)=a_t\prod_{j=1}^N\left(1-\frac{\zeta}{\zeta_j}\right),
\]
where \(N\le n\) and \(\zeta_1,\dots,\zeta_N\) are the nonzero roots of \(F_t\), counted with multiplicity.
Then every \(\zeta_j\) satisfies $|\zeta_j-1|\ge \rho.$
Now, observe that (with some overloaded notation)
\[
    Z_G(\lambda;t+s)
    =
    \sum_{I\in\cI_G}
    \lambda^{|I|}e^{(t+s)B(I)}
    =
    Z_G(\lambda;t,e^s),
\]
and since \(F_t(e^s)=e^{s|R|}Z_G(\lambda;t,e^s)\),
we have \(\log F_t(e^s)=s|R|+\log Z_G(\lambda;t,e^s)\).
Therefore
\[
    \Var_{\mu_{\lambda,t}}[B(I)]
    =
    \left.\frac{d^2}{ds^2}\log Z_G(\lambda;t+s)\right|_{s=0}
    =
    \left.\frac{d^2}{ds^2}\log F_t(e^s)\right|_{s=0}.
\]
Using the factorization of \(F_t\), we have that $
\log F_t(e^s)=\log a_t+\sum_{j=1}^N \log\!\left(1-\frac{e^s}{\zeta_j}\right),$ so
\begin{align}
\frac{d}{ds}\log F_t(e^s)
=
-\sum_{j=1}^N \frac{e^s}{\zeta_j-e^s}
\qquad\text{and}\qquad
\frac{d^2}{ds^2}\log F_t(e^s)
=
-\sum_{j=1}^N \frac{e^s\zeta_j}{(\zeta_j-e^s)^2}.
\end{align}
Evaluating at \(s=0\), we obtain
\[
\Var_{\mu_{\lambda,t}}[B(I)]
=
-\sum_{j=1}^N \frac{\zeta_j}{(\zeta_j-1)^2}
\le
\sum_{j=1}^N \left|\frac{\zeta_j}{(\zeta_j-1)^2}\right|.
\]
Since \(|\zeta_j|\le|\zeta_j-1|+1\) and \(|\zeta_j-1|\ge\rho\), each summand is at most \(\rho^{-1}+\rho^{-2}=O_\rho(1)\). As \(N\le n\), this gives \(\Var_{\mu_{\lambda,t}}[B(I)]\le C_\rho n\), proving the upper bound.

We now prove the lower bound. Choose \(J\in\{L,R\}\) so that \(|J|\ge n/2\),
write \(\bar J:=V(G)\setminus J\), and set \(K:=I\cap\bar J\). Let
\(\lambda_J\) be the activity on \(J\), and set
\(p_J:=\lambda_J/(1+\lambda_J)\). Since \(G\) is bipartite, conditional on
\(K\), the vertices of \(J\setminus N(K)\) are mutually nonadjacent and
unblocked, while the vertices of \(N(K)\cap J\) are forced to be absent. Thus,
with \(U:=|J\setminus N(K)|\), we have
\(|I\cap J|\mid K\sim\operatorname{Bin}(U,p_J)\).  Moreover \(B(I)\) differs
from \(|I\cap J|\) only by a sign and an additive function of \(K\). Hence the
law of total variance gives
\[
    \Var[B(I)]
    \ge
    \mathbb E[\Var(B(I)\mid K)]
    =
    p_J(1-p_J)\,\mathbb E U.
\]

It remains to lower bound \(\mathbb E U\). Fix \(u\in J\), and let
\(v_1,\dots,v_d\) be its neighbors in \(\bar J\), where \(d\le\Delta\). Then
\[
\mathbb P(u\in J\setminus N(K))
=
\prod_{i=1}^d
\mathbb P\!\left(v_i\notin I\,\middle|\,v_1,\dots,v_{i-1}\notin I\right).
\]
Let \(\lambda_{\bar J}\) be the activity on \(\bar J\), and let \(A_i\) be the
event that \(v_1,\dots,v_{i-1}\notin I\). Deleting \(v_i\) from any independent
set in \(A_i\) that contains \(v_i\) gives an independent set in \(A_i\) that
does not contain \(v_i\), with the original weight larger by a factor
\(\lambda_{\bar J}\). Therefore
\[
\mathbb P(v_i\in I\mid A_i)
\le
\lambda_{\bar J}\,\mathbb P(v_i\notin I\mid A_i),
\qquad\text{so}\qquad
\mathbb P(v_i\notin I\mid A_i)\ge\frac{1}{1+\lambda_{\bar J}}.
\]
Since \(|t|\le t_0\), both \(\lambda_J\) and \(\lambda_{\bar J}\) lie in
\([\lambda e^{-t_0},\lambda e^{t_0}]\). Hence
\[
\mathbb P(u\in J\setminus N(K))
\ge
(1+\lambda_{\bar J})^{-d}
\ge
(1+\lambda e^{t_0})^{-\Delta},
\qquad
\mathbb E U
\ge
\frac n2(1+\lambda e^{t_0})^{-\Delta}.
\]
Also
\[
p_J(1-p_J)
=
\frac{\lambda_J}{(1+\lambda_J)^2}
\ge
\frac{\lambda e^{-t_0}}{(1+\lambda e^{t_0})^2}.
\]
Combining these estimates and using
\(\lambda\le\lambda_c(\Delta)-\delta\), we finally obtain
\[
    \Var_{\mu_{\lambda,t}}[B(I)]
    \ge
    \frac{e^{-t_0}}{2(1+(\lambda_c(\Delta)-\delta)e^{t_0})^{2+\Delta}}
    \,\lambda n.\qedhere
\]
\end{proof}

\paragraph{Low Fourier phases.}

Here we use the zero-freeness of the tilted hard-core
model from Proposition \ref{prop:zero_free_tilted_partition} as well as the following general result of
\cite{michelen2019centrallimittheoremsgeometry}.

\begin{theorem}[Theorem 1.2 of \cite{michelen2019centrallimittheoremsgeometry}] \label{thm:zero_freeness_converter}
Let $X$ be a random variable taking values in $\{0,1,\dots,n\}$ with mean $\mu$
and variance $\sigma^2$, and let $f_X(\zeta)=\sum_{k=0}^n \mathbb P[X=k]\zeta^k$ denote its probability generating function. Let
\(\xi_*:=\min\{|\xi-1|:f_X(\xi)=0\}\). Then
\[
\sup_{t\in\mathbb R}
\left|
\mathbb P\!\left[\frac{X-\mu}{\sigma}\le t\right]
-
\mathbb P[Z\le t]
\right|
=
O\!\left(\frac{\log n}{\xi_* \sigma}\right).
\]
\end{theorem}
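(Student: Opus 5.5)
The statement is a general probabilistic fact, cited from \cite{michelen2019centrallimittheoremsgeometry}, that converts a zero-free disk around $1$ for the probability generating function into a Kolmogorov-distance CLT. If I were to prove it rather than cite it, I would go through cumulants and Esseen's smoothing inequality. Throughout, $f_X$ denotes the probability generating function, $Z$ a standard normal, and $\phi_Z(\theta)=e^{-\theta^2/2}$.

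\textbf{Step 1 (smoothing).} By Esseen's inequality, for every $T>0$,
\[
\sup_t\left|\mathbb P\!\left[\tfrac{X-\mu}{\sigma}\le t\right]-\mathbb P[Z\le t]\right|\le C\int_{-T}^{T}\frac{|\phi_X(\theta)-\phi_Z(\theta)|}{|\theta|}\,d\theta+\frac{C}{T},
\]
where $\phi_X(\theta)=\mathbb E[e^{i\theta(X-\mu)/\sigma}]$. I would take $T=c\,\xi_*\sigma/\log n$, so that the error term $C/T$ is already of the claimed order.

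\textbf{Step 2 (cumulant bounds from zeros).} Factor $f_X(\zeta)=\prod_j(\zeta-\zeta_j)/(1-\zeta_j)$, with every root satisfying $|\zeta_j-1|\ge\xi_*$. Then the cumulant generating function $K(s)=\log f_X(e^s)=\sum_j\log\frac{e^s-\zeta_j}{1-\zeta_j}$ is analytic for $|s|\lesssim\xi_*$. Its cumulants are $\kappa_m=\sum_j \partial_s^m\log(e^s-\zeta_j)|_{s=0}$.

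A crude Cauchy estimate only gives $|\kappa_m|\le m!\,n\,(C/\xi_*)^m$. That bound is useless when $\sigma^2\ll n$. What I actually need is a bound relative to the variance, of the form
\[
|\kappa_m|\le m!\,\sigma^2\,(C\log n/\xi_*)^{m-2}\qquad (m\ge3).
\]
My plan to get it has three parts.
\begin{enumerate}[leftmargin=*]
\item Use that $f_X$ has real nonnegative coefficients, so the roots come in conjugate pairs and no root is a positive real.
\item Split the roots into those with $|\zeta_j-1|\le n^{C}$ and the rest. The far roots contribute negligibly to every $\kappa_m$.
\item For each near root (or conjugate pair), compare its $m$-th derivative term with its contribution to $\sigma^2=-\sum_j\zeta_j/(\zeta_j-1)^2$. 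A conjugate pair contributes a nonnegative amount to $\sigma^2$, comparable to $\operatorname{Re}$ of the pair's summand.
\end{enumerate}
The $\log n$ loss enters because the near roots are spread over $O(\log n)$ dyadic annuli in $|\zeta_j-1|$.

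\textbf{Step 3 (low frequencies).} For $|\theta|\le T$ I would Taylor expand
\[
\log\phi_X(\theta)=-\frac{\theta^2}{2}+\sum_{m\ge3}\frac{\kappa_m(i\theta/\sigma)^m}{m!}.
\]
Step 2 bounds the tail by $O(\theta^3\log n/(\xi_*\sigma))$, and this is small on $|\theta|\le T$ by the choice of $c$. Hence $|\phi_X(\theta)-\phi_Z(\theta)|\lesssim \frac{|\theta|^3\log n}{\xi_*\sigma}\,e^{-\theta^2/4}$ for $|\theta|$ up to a small multiple of $(\xi_*\sigma/\log n)^{1/3}$. On the remaining range up to $T$, I would bound both characteristic functions directly by $e^{-c\theta^2}$, which follows from the same expansion since the quadratic term dominates. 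Integrating against $1/|\theta|$ gives $O(\log n/(\xi_*\sigma))$, completing the bound.

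\textbf{Main obstacle.} I expect the main obstacle to be Step 2, namely controlling the higher cumulants by $\sigma^2$ rather than by $n$. A single root near the edge of the zero-free region can contribute disproportionately to $\kappa_m$ compared with its contribution to $\sigma^2$. The first thing I would try is the dyadic decomposition of the roots together with the conjugate-pair positivity in each dyadic class. If that fails, I would fall back to the route in \cite{michelen2019centrallimittheoremsgeometry} itself, which I would consult for the precise root-geometry estimate.
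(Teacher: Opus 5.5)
Your Esseen smoothing step and your Taylor-expansion step are standard, and they would go through once the cumulant bound of Step 2 is available. Step 2, however, rests on a false claim: a conjugate pair does \emph{not} in general contribute a nonnegative amount to $\sigma^2$.

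A root $\zeta$ contributes $-\zeta/(1-\zeta)^2$, so a pair contributes $-2\operatorname{Re}\bigl[\zeta/(1-\zeta)^2\bigr]$. Write $1-\zeta=re^{i\psi}$ with $r$ small. Then the pair contributes about $-2r^{-2}\cos 2\psi$. This is negative whenever the root approaches $1$ within angle $\pi/4$ of the real axis, and these nearby roots are exactly the ones that dominate $\kappa_m$.

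This happens for genuine probability generating functions. By P\'olya's theorem, $(1+z)^N\bigl((z-1-r)^2+\epsilon^2\bigr)$ has positive coefficients for $N$ large. After normalization, the pair $1+r\pm i\epsilon$ (the roots closest to $1$ when $r<2$) contributes about $-2(1+r)/r^2<0$ to $\sigma^2$. So $\sigma^2$ is a signed sum with real cancellation. Neither a root-by-root nor an annulus-by-annulus domination of $|\kappa_m|$ by $\sigma^2$ is available, and counting dyadic annuli does not address this.

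In fact, no argument using only the root-level information you invoke can work: conjugate symmetry, no positive real root, and $|\zeta_j-1|\ge\xi_*$ are not enough. Take the real polynomial with $16$ roots at $-1$ and roots $2\pm i\epsilon$. It has all these properties with $\xi_*\approx 1$. Yet $\kappa_2=4-4(1+\epsilon^2)^{-2}\approx 8\epsilon^2$, while $\kappa_3\approx -12$. (A root $\zeta$ contributes $\zeta(1+\zeta)/(1-\zeta)^3$ to $\kappa_3$, and this vanishes at $\zeta=-1$.) So your target inequality, with $\sigma^2:=\kappa_2$, fails for this polynomial. It escapes the theorem only because some of its coefficients are negative.

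What is missing is a global use of the nonnegativity of the coefficients. One example is $|f_X(e^{s})|\le f_X(e^{\operatorname{Re}s})$, which gives a one-sided bound on the real part of the cumulant generating function. Supplying such an argument is the heart of the cited result. The paper does not reprove it; it imports the theorem from \cite{michelen2019centrallimittheoremsgeometry}, so your fallback is, in effect, the proof.

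Note finally that your target bound, like the statement as written, only holds with $\xi_*$ replaced by $\min\{\xi_*,1\}$. For $X\sim\mathrm{Bin}(n,p)$ one has $\xi_*=1/p$ and $\kappa_3=(1-2p)\sigma^2$, and the Kolmogorov distance is of order $1/\sigma$ rather than $p\log n/\sigma$. This is harmless in the paper, which only uses $\xi_*\ge\rho$ with $\rho<1$.
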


In what follows, we will apply this with 
\begin{equation}\label{eq:lowFourier_notation}
    \mu:=\mathbb E_{\mu_{\lambda,t}}[B(I)],\qquad \sigma^2:=\operatorname{Var}_{\mu_{\lambda,t}}(B(I)), 
    \qquad\text{and}\qquad 
    X:=\frac{B(I)-\mu}{\sigma}.
\end{equation}

\begin{lemma} \label{lem:low_fourier_control}
Let $\lambda\in(0,\lambda_c(\Delta)-\delta)$ and $t\in[-t_0,t_0]$, and let \(\rho=\rho(\Delta,\delta,t_0)\in(0,1)\) be as in Proposition~\ref{prop:zero_free_tilted_partition}.
Then for every \(u\in\mathbb R\), for $Z \sim \cN(0,1)$,
\[
\left|
\mathbb E_{\mu_{\lambda,t}}\!\left[e^{iuX}\right]
-
\mathbb E\!\left[e^{iuZ}\right]
\right|
=
O_{\rho}\!\left(
\frac{|u|(\log n)^{3/2}+\log n}{\sigma}
\right).
\]
\end{lemma}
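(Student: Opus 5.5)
We follow the proof of Lemma~3.3 in \cite{jain2021approximatecountingsamplinglocal}: first obtain a Kolmogorov-distance bound from Theorem~\ref{thm:zero_freeness_converter}, then convert it into a bound on characteristic functions by integration by parts, with the tails handled by the variance bounds of Lemma~\ref{lem:balance_variance_bounds}.

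\emph{Step 1: Kolmogorov bound.} Let $Y:=B(I)+|R|$, which takes values in $\{0,1,\dots,n\}$. Its probability generating function is
\[
f_Y(\zeta)=F_t(\zeta)/F_t(1),
\]
where $F_t(\zeta)=\zeta^{|R|}Z_G(\lambda;t,\zeta)$ is the polynomial from the proof of Lemma~\ref{lem:balance_variance_bounds}. By Proposition~\ref{prop:zero_free_tilted_partition}, $f_Y$ has no zeros in $\{|\zeta-1|<\rho\}$, so $\xi_*\ge\rho$. Since $Y$ is a shift of $B(I)$, we have $(Y-\mathbb E Y)/\sqrt{\Var Y}=X$. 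Theorem~\ref{thm:zero_freeness_converter} then gives
\[
\sup_s|F_X(s)-\Phi(s)|=O\!\left(\frac{\log n}{\rho\sigma}\right)=:\kappa,
\]
where $F_X$ is the distribution function of $X$ and $\Phi$ is the standard normal one.

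\emph{Step 2: characteristic functions.} Fix a truncation level $A>0$ and split
\[
\mathbb E[e^{iuX}]-\mathbb E[e^{iuZ}]=\int_{\mathbb R}e^{ius}\,d(F_X-\Phi)(s)
\]
into the part with $|s|\le A$ and the part with $|s|>A$. On $[-A,A]$, integrating by parts bounds the contribution by the two boundary terms, each at most $\kappa$, plus
\[
|u|\int_{-A}^{A}|F_X-\Phi|\,ds\le 2A|u|\kappa.
\]
The tails are bounded by $\Pr(|X|>A)+\Pr(|Z|>A)$.

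\emph{Step 3: the tail of $X$ (main obstacle).} Chebyshev alone gives only $\Pr(|X|>A)\le A^{-2}$, which is too weak. Instead we get a sub-exponential tail from the same zero-freeness. The cumulant generating function $s\mapsto\log\mathbb E e^{s(B-\mu)}$ equals $\log F_t(e^s)$ minus its value and linear term at $s=0$. Because the zeros of $F_t$ satisfy $|\zeta_j-1|\ge\rho$ and there are at most $n$ of them, its second derivative is $O_\rho(n)$ for $|s|\le s_0(\rho)$. Hence
\[
\mathbb E e^{s(B-\mu)}\le e^{C_\rho n s^2}\qquad\text{for }|s|\le s_0 .
\]
Combining this with the variance lower bound $\sigma^2\ge c\lambda n$ and Chernoff, we take $s=\Theta(A/\sqrt n)$, which is admissible while $A\lesssim\sqrt n$. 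This yields $\Pr(|X|>A)\le 2e^{-cA^2}$, with the constant depending on $\lambda$ through the variance bound. Choosing $A=C\sqrt{\log n}$ makes both tail terms $O(n^{-1})=O(1/\sigma)$, since $\sigma=O(\sqrt n)$.

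\emph{Step 4: combine.} With this $A$, the three contributions sum to
\[
O(\kappa)+O(\kappa|u|\sqrt{\log n})+O(1/\sigma)=O_\rho\!\left(\frac{|u|(\log n)^{3/2}+\log n}{\sigma}\right),
\]
which is the claimed bound.
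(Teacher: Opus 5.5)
Your Steps 1, 2 and 4 match the paper's argument. That is the Kolmogorov bound from Theorem~\ref{thm:zero_freeness_converter} applied to $B(I)+|R|$ with $\xi_*\ge\rho$, followed by the integration-by-parts smoothing of \cite{jain2021approximatecountingsamplinglocal} on a window of width $\asymp\sqrt{\log n}$. The only departure is Step 3, and the obstacle you identify there is not real. The tail of $X$ is already controlled by the Kolmogorov estimate from Step 1:
\[
\Pr(|X|>A)\le \bigl(1-F_X(A)\bigr)+F_X(-A)\le \Pr(|Z|>A)+2\kappa ,
\]
and $2\kappa=O(\log n/(\rho\sigma))$ is within the target bound. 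Take $A=\sqrt{8\log n}$, which is the paper's cutoff. Then $\Pr(|Z|>A)\le 2n^{-4}\le 2\log n/\sigma$, because $\sigma\le n$. So the whole estimate closes with a constant depending only on $\rho$.

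Your sub-Gaussian argument is valid. The zero-free disk does give $K''(s)=O_\rho(n)$ for $|s|\le s_0(\rho)$, and Chernoff at $s\asymp A\sigma/n$ gives $\Pr(|X|>A)\le 2e^{-cA^2}$. However, it costs uniformity. To turn $e^{-A^2\sigma^2/(4C_\rho n)}$ into a Gaussian tail you use $\sigma^2\ge c_{\Delta,\delta,t_0}\lambda n$, so you need $A^2\gtrsim \log n/\lambda$. The interior term $2A|u|\kappa$ then picks up a factor of order $\lambda^{-1/2}$. Your implied constant therefore depends on $\lambda$, not only on $\rho$ as the lemma asserts uniformly over $\lambda\in(0,\lambda_c(\Delta)-\delta)$.

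This is harmless downstream, since Theorem~\ref{thm:main_local_clt} allows its error term to depend on $\lambda$. As written, though, you prove a slightly weaker statement than the lemma. Replacing Step 3 by the one-line Kolmogorov tail bound above removes both the detour and the extra dependence.
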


\begin{proof}
Set \(Y:=B(I)+|R|\). The shift by \(|R|\) converts the $B(I)$ into
an integer-valued variable in \(\{0,\ldots,n\}\), without changing its variance
or its standardized version. Its probability generating function is
\[
f_Y(\zeta):=\mathbb E_{\mu_{\lambda,t}}[\zeta^Y]
=\zeta^{|R|}\frac{Z_G(\lambda;t,\zeta)}{Z_G(\lambda;t,1)}.
\]
By Proposition~\ref{prop:zero_free_tilted_partition}, the \(Z_G\) factors have no
zero in \(|\zeta-1|<\rho\). The $\zeta^{\abs{R}}$ factor has zeros only at
\(\zeta=0\), which is outside this disk because \(\rho<1\). Hence the nearest zero
parameter \(\xi_*\) from Theorem~\ref{thm:zero_freeness_converter} satisfies
\(\xi_*\ge\rho\). Moreover,
\(\mathbb E[Y]=\mu+|R|\), \(\operatorname{Var}(Y)=\sigma^2\), and
$
\frac{Y-\mathbb E[Y]}{\sigma}
=\frac{B(I)-\mu}{\sigma}=X.
$
Applying Theorem~\ref{thm:zero_freeness_converter} to \(Y\) gives
\[
\sup_{x\in\mathbb R}
\left|\mathbb P[X\le x]-\mathbb P[Z\le x]\right|
=O_\rho\!\left(\frac{\log n}{\sigma}\right).
\]
To convert this tail difference estimate into a difference in characteristic functions, we follow the smoothing argument in \cite{jain2021approximatecountingsamplinglocal} proof of Lemma~3.3, specifically
equation~(3.1) and the displayed calculations  immediately following it. The correspondence is:
the variable \(Y=|I|\), mean \(\mu\), and standard
deviation \(\sigma\) in \cite{jain2021approximatecountingsamplinglocal} are replaced here by
\(Y=B(I)+|R|\), \(\mu+|R|\), and \(\sigma\) respectively. The required zero-freeness result is provided by
Proposition~\ref{prop:zero_free_tilted_partition}. Using the truncation cutoff \(\tau=\sqrt{8\log n}\), the calculations are identical and lead to the desired conclusion, so we omit the details.
\end{proof}

\paragraph{High Fourier phases.}

The following conditioning argument follows the proof of Lemma~3.5 of
\cite{jain2021approximatecountingsamplinglocal}; see also
\cite{DobrushinTirozzi1977}. We first require the following general lemma.

\begin{lemma}[Lemma 3.4 of \cite{jain2021approximatecountingsamplinglocal}] \label{lem:dist_4_set}
Let $G=(V,E)$ be a graph on $n$ vertices with maximum degree at most $\Delta$.
Then there exists a subset $S\subseteq V$ of size
$|S|=\Omega\!\left(\frac{n}{\Delta^3}\right)$
such that all vertices in $S$ are pairwise at distance at least $4$ in the graph metric.
Moreover, there is an algorithm to find such a subset $S$ in time $O_{\Delta}(n)$.
\end{lemma}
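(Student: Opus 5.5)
The plan is a greedy construction in the graph $G^{3}$, whose vertices are $V$ and whose edges join pairs at graph distance between $1$ and $3$ in $G$. A set $S\subseteq V$ has all pairwise distances at least $4$ exactly when $S$ is an independent set in $G^{3}$. So it suffices to find a large independent set in $G^{3}$, and the natural tool is a maximal independent set, built greedily.

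First I will bound the maximum degree of $G^{3}$. For a fixed vertex $v$, the number of vertices at distance exactly $j$ is at most $\Delta(\Delta-1)^{j-1}\le \Delta^{j}$. Summing over $j=1,2,3$ gives
\[
\Delta + \Delta(\Delta-1) + \Delta(\Delta-1)^2 \le \Delta^3 ,
\]
which holds for $\Delta\ge 2$ (for $\Delta\le 1$ a bound of $3$ trivially suffices). So every vertex has at most $D:=\Delta^3$ vertices at distance $1,2,$ or $3$.

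The algorithm maintains a set $A$ of available vertices, initially $V$, and $S=\varnothing$. While $A\neq\varnothing$, it picks any $v\in A$, adds $v$ to $S$, and removes from $A$ the closed ball $B_3(v)$ of radius $3$ around $v$, computed by a BFS truncated at depth $3$.

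Three checks complete the argument. Correctness: when $v$ is picked, every earlier $s\in S$ had its ball removed, so $d(v,s)\ge 4$. Size: each step removes at most $1+D$ vertices, so the loop runs at least $n/(1+\Delta^3)$ times, giving $|S|\ge n/(1+\Delta^3)=\Omega(n/\Delta^{3})$. Running time: each truncated BFS touches $O(\Delta^3)$ vertices and edges, and marking vertices unavailable in a boolean array costs $O(1)$ each. Scanning vertices in a fixed order and skipping the unavailable ones costs $O(n)$ overall. The total time is $O(n\Delta^3)=O_\Delta(n)$.

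There is no genuine obstacle here. The only points needing care are that the ball count uses $\Delta(\Delta-1)^{j-1}$ rather than something larger, and that the algorithm never recomputes $G^3$ explicitly, since doing so would still be linear but is unnecessary.
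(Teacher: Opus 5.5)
Your proof is correct. The paper gives no proof of its own and simply imports Lemma 3.4 of Jain--Perkins--Sah--Sawhney, whose argument is this same greedy construction of a maximal independent set in $G^{3}$ (maximum degree at most $\Delta^{3}$), deleting the radius-$3$ ball around each chosen vertex. Two small points: the truncated BFS must run in all of $G$, not only among still-available vertices; and your case split is unnecessary, since $\Delta+\Delta(\Delta-1)+\Delta(\Delta-1)^2=\Delta^3-\Delta^2+\Delta\le\Delta^3$ for every $\Delta\ge 1$.
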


\begin{lemma} \label{lem:high_fourier_control}
Let $\lambda\in(0,\lambda_c(\Delta)-\delta)$, $t\in[-t_0,t_0]$, and $\mu$, $\sigma$, and $X$ be defined as in \eqref{eq:lowFourier_notation}.
Then for all $\theta\in[-\pi\sigma,\pi\sigma]$,
\[
\left|\mathbb E_{\mu_{\lambda,t}}\!\left[e^{-i\theta X}\right]\right|
\le
\exp\!\left(
-c(\Delta,t_0)\,
\lambda n\,\frac{\theta^2}{\sigma^2}
\right).
\]
\end{lemma}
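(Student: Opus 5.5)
Write $s:=\theta/\sigma\in[-\pi,\pi]$, so that $\mathbb E[e^{-i\theta X}]=e^{i\theta\mu/\sigma}\,\mathbb E[e^{-isB(I)}]$. It therefore suffices to show $|\mathbb E_{\mu_{\lambda,t}}[e^{-isB(I)}]|\le\exp(-c\lambda n s^2)$ for $|s|\le\pi$. We follow the conditioning argument of Lemma~3.5 of \cite{jain2021approximatecountingsamplinglocal}.

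\emph{Step 1: a well-separated set.} Apply Lemma~\ref{lem:dist_4_set} to obtain $S\subseteq V(G)$ with $|S|=\Omega(n/\Delta^3)$ and pairwise distances at least $4$. Condition on $\sigma_{V\setminus S}$, the configuration of $I$ outside $S$. The neighborhoods $N(v)$ for $v\in S$ are pairwise disjoint, since distances are at least $4$. Hence, given this configuration, the spins $\{\sigma_v\}_{v\in S}$ are conditionally independent. Each $v$ is forced out if some neighbor is occupied. Otherwise $v$ is occupied with probability $p_v=\lambda_v/(1+\lambda_v)$, where $\lambda_v\in\{\lambda e^{t},\lambda e^{-t}\}\subseteq[\lambda e^{-t_0},\lambda e^{t_0}]$.

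\emph{Step 2: the conditional characteristic function.} Given $\sigma_{V\setminus S}$, we have $B(I)=B_0+\sum_{v\in S}\epsilon_v\sigma_v$, where $B_0$ is determined by the conditioning and $\epsilon_v=\pm1$ according to the side of $v$. Let $U\subseteq S$ be the set of unblocked vertices. Then
\[
\left|\mathbb E\!\left[e^{-isB(I)}\,\middle|\,\sigma_{V\setminus S}\right]\right|=\prod_{v\in U}\left|1-p_v+p_ve^{\mp is}\right|.
\]
Using $|1-p+pe^{is}|^2=1-2p(1-p)(1-\cos s)$ and $1-\cos s\ge 2s^2/\pi^2$ for $|s|\le\pi$, each factor is at most $\exp(-c_1 p_v(1-p_v)s^2)$. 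Moreover $p_v(1-p_v)\ge \lambda e^{-t_0}/(1+\lambda e^{t_0})^2\ge c_2\lambda$, with $c_2$ depending on $\Delta,t_0$, since $\lambda<\lambda_c(\Delta)$. So the conditional modulus is at most $\exp(-c_3\lambda s^2|U|)$.

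\emph{Step 3: many unblocked vertices.} This is the main point. The triangle inequality gives $|\mathbb E e^{-isB}|\le \mathbb E\exp(-c_3\lambda s^2|U|)$, so we need $|U|=\Omega(n)$ except on an event of probability $\exp(-\Omega(n))$. For each $v\in S$, the same deletion argument as in the lower bound of Lemma~\ref{lem:balance_variance_bounds}, applied sequentially to the neighbors of $v$, shows that conditionally on anything outside $N(v)$ all neighbors are unoccupied with probability at least $(1+\lambda e^{t_0})^{-\Delta}=:q$. The events $\{v\in U\}$ depend on disjoint neighborhoods, and the lower bound holds conditionally on the outside configuration. Therefore $|U|$ stochastically dominates $\mathrm{Bin}(|S|,q)$: reveal the neighborhoods one at a time and compare with a coupling or martingale. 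A Chernoff bound then gives $|U|\ge q|S|/2$ with probability at least $1-e^{-c_4 n}$.

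\emph{Step 4: combine.} This yields
\[
|\mathbb E e^{-isB}|\le \exp(-c_5\lambda n s^2)+e^{-c_4 n}.
\]
Since $\lambda s^2\le \pi^2\lambda_c(\Delta)$ is bounded, the additive term $e^{-c_4n}$ is at most $\exp(-c_6\lambda n s^2)$ after shrinking constants. Hence the sum is bounded by $2\exp(-c\lambda ns^2)$. The factor $2$ can be absorbed as well, except when $\lambda n s^2$ is tiny, where the trivial bound $|\cdot|\le 1$ together with shrinking $c$ handles it. A cleaner route is to avoid the additive term entirely by working with $\mathbb E\exp(-c_3\lambda s^2|U|)$ directly via the sequential domination, which gives $\prod(1-q+qe^{-c_3\lambda s^2})\le\exp(-c\lambda n s^2)$. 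Substituting $s=\theta/\sigma$ gives the claim.
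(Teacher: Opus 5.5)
Your proof is correct, but it conditions on more than the paper does, and that moves where the work happens.

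\textbf{What the paper does.} The paper conditions only on $J=I\cap T$ with $T=\{x:\operatorname{dist}_G(x,S)\ge2\}$. So for each $v\in S$ the whole closed star $H_v=G[\{v\}\cup N(v)]$ stays random, and the stars are conditionally independent. Since $v$ cannot be blocked by $J$, for \emph{every} $J$ the configurations ``star empty'' and ``only $v$ occupied'' have conditional probabilities at least $a_0$ and $a_1\lambda$. Because $W_v=B(I\cap V(H_v))$ is not two-valued, the paper bounds $|\mathbb E[e^{-iuW_v}\mid J]|^2=\mathbb E[\cos(u(W_v-W_v'))\mid J]$ by symmetrization. The payoff is that every $v\in S$ contributes a contraction deterministically in $J$, so no count of unblocked sites is needed.

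\textbf{What you do instead.} You condition on all of $V\setminus S$. This makes each per-site factor an exact Bernoulli characteristic function, so no symmetrization is needed. The price is that sites of $S$ can now be blocked, so you need your Step 3. There, sequential conditioning over the disjoint sets $N(v)$ gives $\mathbb P(v_j\in U\mid \sigma_{N(v_1)},\dots,\sigma_{N(v_{j-1})})\ge q$. You then still have to average $e^{-a|U|}$ with $a=c_3\lambda s^2$.

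\textbf{Step 4: keep only the cleaner route.} Only your ``cleaner route'' actually closes the argument. The first version of Step 4 fails when $\lambda n s^2$ is small. There the target $\exp(-c\lambda n s^2)$ is strictly below $1$ but close to it. So neither the factor $2$ nor the additive $e^{-c_4 n}$ can be absorbed, and the trivial bound $|\cdot|\le 1$ does not help. The estimate
$\mathbb E e^{-a|U|}\le(1-q(1-e^{-a}))^{|S|}\le\exp(-q(1-e^{-a})|S|)$,
combined with $1-e^{-a}\ge e^{-a_{\max}}a$ for $a\le a_{\max}:=c_3\pi^2\lambda_c(\Delta)$, holds uniformly for all $|s|\le\pi$. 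That is the version to keep.
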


\begin{proof}
We follow the conditioning calculation in the proof of Lemma~3.5 of
\cite{jain2021approximatecountingsamplinglocal}, recording the modifications needed
for the two activities in the present setting. It suffices to prove
\begin{equation}\label{eq:high_fourier_unstandardized}
\left|\mathbb E_{\mu_{\lambda,t}}\!\left[e^{-iuB(I)}\right]\right|
\le \exp(-c\lambda n u^2)
\qquad\text{for all }u\in[-\pi,\pi],
\end{equation}
for some \(c=c(\Delta,t_0)>0\). Indeed, substituting \(u=\theta/\sigma\)
and using
\[
\mathbb E_{\mu_{\lambda,t}}\!\left[e^{-i\theta X}\right]
=e^{i\theta\mu/\sigma}
\mathbb E_{\mu_{\lambda,t}}\!\left[e^{-i(\theta/\sigma)B(I)}\right]
\]
gives the stated bound whenever \(|\theta|\le\pi\sigma\).

Fix \(u\in[-\pi,\pi]\). The activities satisfy
\[
0<\lambda e^{-t_0}\le\lambda_L(t),\lambda_R(t)
\le\lambda_c(\Delta)e^{t_0}.
\]
Choose \(S\) as in Lemma~\ref{lem:dist_4_set}, so that
\(|S|=\Omega_\Delta(n)\) and distinct vertices of \(S\) have distance at
least \(4\). Let \(T:=\{x:\operatorname{dist}_G(x,S)\ge2\}\), sample
\(J:=I\cap T\), and condition on \(J\). For \(v\in S\), write
\[
H_v:=G[\{v\}\cup N(v)],\qquad
A_v(J):=\{w\in N(v):N(w)\cap J=\varnothing\},\qquad
m_v:=|A_v(J)|.
\]
Thus \(H_v\) is the graph consisting of \(v\) and its neighbours, and
\(A_v(J)\) is the subset of those neighbours that may still be occupied.
The graphs \(H_v\) are disjoint and have no edges between them. Hence the
random variables \(W_v:=B(I\cap V(H_v))\) are independent conditional on
\(J\), and
\[
B(I)\overset{d}{=}B(J)+\sum_{v\in S}W_v\qquad\text{conditional on }J.
\]

Let \(\varepsilon_v:=1\) for \(v\in L\) and \(\varepsilon_v:=-1\) for
\(v\in R\). If \(\lambda_v\) and \(\lambda_{\bar v}\) denote respectively
the activities on the side containing \(v\) and on the opposite side, then
\[
Z_v(J)=\lambda_v+(1+\lambda_{\bar v})^{m_v},\qquad
\mathbb P(W_v=0\mid J)=\frac1{Z_v(J)},\qquad
\mathbb P(W_v=\varepsilon_v\mid J)=\frac{\lambda_v}{Z_v(J)}.
\]
Since \(m_v\le\Delta\) and the activities obey the preceding uniform bounds,
there are \(a_0,a_1>0\), depending only on \(\Delta,t_0\), such that
\[
\mathbb P(W_v=0\mid J)\ge a_0,\qquad
\mathbb P(W_v=\varepsilon_v\mid J)\ge a_1\lambda. \tag{*}
\]

Let \(W_v'\) be an independent conditional copy of \(W_v\). Since
\(W_v-W_v'\) has a symmetric conditional distribution and
\(1-\cos u\ge2u^2/\pi^2\) for \(|u|\le\pi\), the two events in \((*)\) give
\[
\begin{aligned}
\left|\mathbb E[e^{-iuW_v}\mid J]\right|^2
&=\mathbb E[\cos(u(W_v-W_v'))\mid J]\\
&\le1-2\mathbb P(W_v-W_v'=\varepsilon_v\mid J)(1-\cos u)\\
&\le1-\frac{4a_0a_1}{\pi^2}\lambda u^2.
\end{aligned}
\]
Indeed, the probability in the middle line is at least
\(\mathbb P(W_v=\varepsilon_v\mid J)\mathbb P(W_v'=0\mid J)\).
Conditional independence, followed by \(|S|=\Omega_\Delta(n)\), now gives
\[
\left|\mathbb E[e^{-iuB(I)}\mid J]\right|
\le\prod_{v\in S}\left|\mathbb E[e^{-iuW_v}\mid J]\right|
\le\exp(-c\lambda n u^2).
\]
Averaging over \(J\) proves \eqref{eq:high_fourier_unstandardized}, and hence
the lemma.
\end{proof}

\paragraph{Completing the proof of Theorem \ref{thm:main_local_clt}.}

With Lemmas \ref{lem:balance_variance_bounds}, \ref{lem:low_fourier_control}, and \ref{lem:high_fourier_control} in hand, we are now able to complete the proof of Theorem \ref{thm:main_local_clt}.

\begin{proof}[Proof of Theorem \ref{thm:main_local_clt}]
Let
\[
X:=\frac{B(I)-\mu}{\sigma},
\qquad
\phi_X(\theta):=\mathbb E_{\mu_{\lambda,t}}\!\left[e^{i\theta X}\right],
\qquad
\phi_Z(\theta):=\mathbb E\!\left[e^{i\theta Z}\right]=e^{-\theta^2/2},
\]
where \(Z\sim N(0,1)\). Since \(B(I)\in \mathbb Z\), the random variable \(X\) is supported on the
lattice $-\mu/\sigma+\sigma^{-1}\mathbb Z$.
Therefore Lemma \ref{lem:fourier-inversion} gives
\[
\sup_{b\in\mathbb Z}
\left|
\sigma^{-1}N\!\left(\frac{b-\mu}{\sigma}\right)-\mathbb P_{\mu_{\lambda,t}}[B(I)=b]
\right|
\le
\frac1\sigma
\int_{-\pi\sigma}^{\pi\sigma}
\left|
\phi_X(\theta)-\phi_Z(\theta)
\right|\,d\theta
+
e^{-\pi^2\sigma^2/2}.
\]

Choose a constant \(A>0\), to be fixed later, and set $T:=A\sqrt{\log n}$. We split the Fourier integral at \(T\). Using Lemmas
\ref{lem:low_fourier_control}, \ref{lem:high_fourier_control}, and
\ref{lem:balance_variance_bounds}, we obtain
\begin{align*}
&\sup_{b\in\mathbb Z}
\left|
\sigma^{-1}N\!\left(\frac{b-\mu}{\sigma}\right)-\mathbb P_{\mu_{\lambda,t}}[B(I)=b]
\right| \\
&\le
\frac1\sigma
\int_{-\pi\sigma}^{\pi\sigma}
\left|
\phi_X(\theta)-\phi_Z(\theta)
\right|\,d\theta
+
e^{-\pi^2\sigma^2/2} \\
&\le
\frac1\sigma
\int_{|\theta|\le T}
\left|
\phi_X(\theta)-\phi_Z(\theta)
\right|\,d\theta
+
\frac1\sigma
\int_{T\le |\theta|\le \pi\sigma}
\left|
\phi_X(\theta)-\phi_Z(\theta)
\right|\,d\theta
+
e^{-\pi^2\sigma^2/2} \\
&\le
\frac{C_1}{\sigma^2}
\int_0^T
\bigl(\theta(\log n)^{3/2}+\log n\bigr)\,d\theta
+
\frac1\sigma
\int_{T\le |\theta|\le \pi\sigma}
\bigl(|\phi_X(\theta)|+|\phi_Z(\theta)|\bigr)\,d\theta
+
e^{-\pi^2\sigma^2/2} \\
&\le
\frac{C_2}{\sigma^2}
\bigl(T^2(\log n)^{3/2}+T\log n\bigr)
+
\frac1\sigma
\int_{T\le |\theta|\le \pi\sigma}
\exp\!\left(-c\,\lambda n\,\frac{\theta^2}{\sigma^2}\right)\,d\theta
+
\frac1\sigma
\int_T^\infty e^{-\theta^2/2}\,d\theta
+
e^{-\pi^2\sigma^2/2}.
\end{align*}
By Lemma \ref{lem:balance_variance_bounds}, we have \(\sigma^2\le C_3 n\), so $\lambda n\,\frac{\theta^2}{\sigma^2}
\ge
\frac{\lambda}{C_3}\theta^2.$ So, after changing constants,
\begin{align*}
&\sup_{b\in\mathbb Z}
\left|
\sigma^{-1}N\!\left(\frac{b-\mu}{\sigma}\right)-\mathbb P_{\mu_{\lambda,t}}[B(I)=b]
\right| \\
&\le
\frac{C_2}{\sigma^2}
\bigl(T^2(\log n)^{3/2}+T\log n\bigr)
+
\frac1\sigma
\int_{T\le |\theta|\le \pi\sigma}
e^{-c'\theta^2}\,d\theta
+
\frac1\sigma
\int_T^\infty e^{-\theta^2/2}\,d\theta
+
e^{-\pi^2\sigma^2/2} \\
&\le
\frac{C_5(\log n)^{5/2}}{\sigma^2}
+
2\pi e^{-c'T^2}
+
\frac{C_4}{\sigma T}e^{-T^2/2}
+
\frac{2}{\pi^2\sigma^2} \\
&=
\frac{C_5(\log n)^{5/2}}{\sigma^2}
+
2\pi n^{-c'A^2}
+
\frac{C_4}{\sigma\sqrt{\log n}}\,n^{-A^2/2}
+
\frac{2}{\pi^2\sigma^2} \\
&=
O\!\left(\frac{(\log n)^{5/2}}{\sigma^2}\right),
\end{align*}
where in the last step we used that \(T=A\sqrt{\log n}\), chose \(A\) sufficiently large, and used \(\sigma^2\le C_3 n\). Finally, the lower bound in Lemma~\ref{lem:balance_variance_bounds} gives
$
\sigma^2\ge c_{\Delta,\delta,t_0}\lambda n
$. This finishes the proof.
\end{proof}

\subsection{Acceptance probability estimate}

\begin{theorem}\label{thm:bounded_ratio_centered_tilt_acceptance}
Fix \(\Delta\ge3\), \(\lambda<\lambda_c(\Delta)\), and \(\gamma\ge1\), and suppose \eqref{eq:bounded_imbalance_regime} holds. Then there exists
\(t_0=t_0(\Delta,\lambda,\gamma)>0\) such that the following holds: Let \(t^*\) be the unique centering tilt from
Lemma~\ref{lem:unique_centering_tilt}, then we have
\begin{enumerate}[(i)]
    \item \label{item:bounded_ratio_compact_tstar}
    The centering tilt lies in the fixed compact interval
    \(t^*\in[-t_0,t_0]\).
    \item \label{item:bounded_ratio_acceptance}
    For every fixed \(A>0\), uniformly over
    \(t\in[-t_0,t_0]\) satisfying
    \(|\E_{\mu_{\lambda,t}}[B(I)]|\le A\),
    \[
    \Pr_{\mu_{\lambda,t}}\{B(I)=0\}
    =
    \Omega_{\Delta,\lambda,\gamma,A}\!\left(\frac1{\sqrt n}\right).
    \]
\end{enumerate}
\end{theorem}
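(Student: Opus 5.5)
For part (i), I will bound \(m(t)=\E_{\mu_{\lambda,t}}B(I)\) at \(\pm t_0\) and use the strict monotonicity from Lemma~\ref{lem:unique_centering_tilt}. It suffices to show \(m(t_0)>0\) and \(m(-t_0)<0\) for a suitable \(t_0=t_0(\Delta,\lambda,\gamma)\).

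For \(m(t_0)>0\), write \(m(t)=\sum_{u\in L}\Pr(u\in I)-\sum_{w\in R}\Pr(w\in I)\).
\begin{itemize}
\item Each right vertex satisfies \(\Pr(w\in I)\le \lambda_R/(1+\lambda_R)\le \lambda e^{-t}\).
\item Each left vertex \(u\) is occupied with probability at least \(\frac{\lambda_L}{1+\lambda_L}\Pr(N(u)\cap I=\varnothing)\).
\item The blocking argument in the proof of Lemma~\ref{lem:balance_variance_bounds} gives \(\Pr(N(u)\cap I=\varnothing)\ge (1+\lambda_R)^{-\Delta}\ge (1+\lambda)^{-\Delta}\) for \(t\ge0\).
\end{itemize}
Hence
\[
m(t)\ge |L|\,\frac{\lambda e^{t}}{1+\lambda e^{t}}(1+\lambda)^{-\Delta}-|R|\,\lambda e^{-t}.
\]
Since \(|L|\ge|R|\), choosing \(t_0\) large enough in terms of \(\Delta,\lambda\) makes this positive.

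For \(m(-t_0)<0\), the same argument with the roles of \(L\) and \(R\) exchanged gives
\[
m(-t)\le |L|\lambda e^{-t}-|R|\,\frac{\lambda e^{t}}{1+\lambda e^{t}}(1+\lambda)^{-\Delta}.
\]
Here \(|L|\le\gamma|R|\), so \(t_0\) also depends on \(\gamma\). This is exactly where the bounded-ratio hypothesis \eqref{eq:bounded_imbalance_regime} enters.

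For part (ii), I apply Theorem~\ref{thm:main_local_clt} with this \(t_0\). Since \(\lambda<\lambda_c(\Delta)\) is fixed, some \(\delta>0\) satisfies \(\lambda<\lambda_c(\Delta)-\delta\). With \(b=0\), the theorem gives
\[
\Pr\{B(I)=0\}\ge \sigma^{-1}N(\mu/\sigma)-O\!\left(\frac{(\log n)^{5/2}}{n}\right).
\]
Lemma~\ref{lem:balance_variance_bounds} gives \(c\lambda n\le\sigma^2\le Cn\) uniformly over \(t\in[-t_0,t_0]\). Since \(|\mu|\le A\), we get \(\mu/\sigma=O(A/\sqrt n)\), so \(N(\mu/\sigma)\ge N(1)\) once \(n\) is large. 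The main term is then at least \(N(1)/\sqrt{Cn}\), which dominates the error term, giving \(\Omega(n^{-1/2})\).

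For bounded \(n\), every independent set has weight bounded below uniformly over \(t\in[-t_0,t_0]\). Moreover, \(\varnothing\) has \(B=0\), so \(\Pr\{B=0\}\ge 1/Z\), where \(Z\le (1+\lambda e^{t_0})^n\) is bounded. This gives a positive constant, and all constants are uniform in \(t\).

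The main subtlety is the uniformity of the constants in the local CLT and the variance bounds over \(t\in[-t_0,t_0]\). The earlier results are already stated with dependence only on \((\Delta,\delta,t_0,\lambda)\), so the step requiring real thought is the choice of \(t_0\) in part (i), in particular where \(\gamma\) enters.
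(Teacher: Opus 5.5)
Your proposal is correct and follows essentially the same route as the paper. For (i) you bound each vertex's occupation probability above by its activity ratio and below by first forcing its neighbourhood empty, sum these to sign $m(\pm t_0)$, and use $\gamma$ only on the negative side; for (ii) you combine Theorem~\ref{thm:main_local_clt} with the variance bounds of Lemma~\ref{lem:balance_variance_bounds}, and your explicit treatment of bounded $n$ via $\Pr\{B(I)=0\}\ge 1/Z_G(\lambda;t)$ is a small extra that the paper leaves implicit.
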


\begin{proof}
Let \(\dagger^c\) denote the side opposite \(\dagger\). We first prove a
crude one-vertex marginal bound. Fix \(x\in\dagger\in\{L,R\}\). The occupation
probability of \(x\) is at most its activity ratio
\(\lambda_\dagger(t)/(1+\lambda_\dagger(t))\). For the lower bound, we first
force all neighbors of \(x\) to be unoccupied. Expose these neighbors one at a
time. At each step, the next neighbor has activity
\(\lambda_{\dagger^c}(t)\), so its conditional probability of being
unoccupied is at least \((1+\lambda_{\dagger^c}(t))^{-1}\).
Since \(x\) has at most \(\Delta\) neighbors,
\begin{equation}\label{eq:occupationProb_upperLowerBounds}
\frac{\lambda_\dagger(t)}{1+\lambda_\dagger(t)}
(1+\lambda_{\dagger^c}(t))^{-\Delta}
\le
\Pr_{\mu_{\lambda,t}}(x\in I)
\le
\frac{\lambda_\dagger(t)}{1+\lambda_\dagger(t)}.
\end{equation}
Writing \(\ell:=|L|\) and \(r:=|R|\), and using
\[
m(t)
=\sum_{u\in L}\Pr_{\mu_{\lambda,t}}(u\in I)
-\sum_{v\in R}\Pr_{\mu_{\lambda,t}}(v\in I),
\]
we get
\[
\ell\cdot \frac{\lambda_L(t)}{1+\lambda_L(t)}(1+\lambda_R(t))^{-\Delta}
-
r\cdot \frac{\lambda_R(t)}{1+\lambda_R(t)}
\le
m(t)
\le
\ell\cdot \frac{\lambda_L(t)}{1+\lambda_L(t)}
-
r\cdot \frac{\lambda_R(t)}{1+\lambda_R(t)}(1+\lambda_L(t))^{-\Delta}.
\]

Since \(1\le\ell/r\le\gamma\), these bounds imply that there exists
\(t_0=t_0(\Delta,\lambda,\gamma)>0\) such that
\[
m(-t_0)<0<m(t_0).
\]
Indeed, after dividing by \(r\), the lower bound tends to \(\ell/r\ge1\) as
\(t\to\infty\), while the upper bound tends to \(-1\) as \(t\to-\infty\),
uniformly over \(\ell/r\in[1,\gamma]\). Since \(m\) is strictly increasing,
the unique zero \(t^*\) lies in \([-t_0,t_0]\). This proves
\ref{item:bounded_ratio_compact_tstar}.

Set \(\delta:=(\lambda_c(\Delta)-\lambda)/2>0\). Having met the conditions
of Theorem~\ref{thm:main_local_clt}, it follows that, for every fixed \(A>0\),
uniformly over \(t\in[-t_0,t_0]\) satisfying \(|m(t)|\le A\),
\begin{align}
    \Pr_{\mu_{\lambda,t}}\{B(I)=0\}
    &=
    \sigma_t^{-1} N\!\left(\frac{-m(t)}{\sigma_t}\right)
    +
    O\!\left(\frac{(\log n)^{5/2}}{\sigma_t^2}\right) \nonumber\\
    &\ge
    \frac{c_3}{\sqrt n}
    -
    C_2\frac{(\log n)^{5/2}}{n} \nonumber \\
&=
\Omega\!\left(\frac{1}{\sqrt n}\right).
\end{align}
Indeed, the Gaussian density \(N(-m(t)/\sigma_t)\) is bounded below by a
positive constant, \(\sigma_t=\Theta(\sqrt n)\), and the error term is lower
order. This proves \ref{item:bounded_ratio_acceptance}.
\end{proof}

\subsection{Proof of Proposition~\ref{prop:bounded_imbalance_sampler}}

\begin{proof}[Proof of Proposition~\ref{prop:bounded_imbalance_sampler}]
Fix \(\epsilon\in(0,1)\).
Let \(t_0\) be as in
Theorem~\ref{thm:bounded_ratio_centered_tilt_acceptance}, and let \(t^*\) be
the unique centering tilt. Specialize
Step~\ref{line:balanced_sampler_template_regime_choice} by taking
\[
    \mathcal T=[-t_0,t_0],\qquad
    \mathsf{TiltedSampler}=\text{Algorithm~\ref{alg:tilted-sampler}},
    \qquad
    \theta=2.
\]
We implement the empirical search in Algorithm~\ref{alg:balanced_hardcore_sampler_template}
by bisection on \([a,b]=[-t_0,t_0]\), with
\[
    J:=\max\{1,\lceil\log_2(n^2(b-a)/\theta)\rceil\},\qquad
    N:=\lceil 8\theta^{-2}n^2\log(20J/\epsilon)\rceil,
\]
and we use
\[
    M:=\lceil C\sqrt n\log(10/\epsilon)\rceil,
    \qquad
    \tau:=\epsilon/(10(JN+M))
\]
for the final rejection stage and for the accuracy of each tilted-sampler call.
Choose \(C=C(\Delta,\lambda,\gamma)\) sufficiently large below. We first analyze an
idealized \emph{exact-proposal} version of Algorithm~\ref{alg:balanced_hardcore_sampler_template},
in which every call to \(\mathsf{TiltedSampler}\), both during the empirical
search and during the final rejection stage, returns an exact independent
sample from \(\mu_{\lambda,t}\).

Conclusion~\ref{item:bounded_ratio_compact_tstar} of
Theorem~\ref{thm:bounded_ratio_centered_tilt_acceptance} gives
\(t^*\in[-t_0,t_0]\). By Lemma~\ref{lem:empirical-balance-concentration} with
\(\eta=1\), a union bound gives an event \(\cE\) of probability at least
\(1-\epsilon/10\) on which every empirical mean queried during the search
is within \(1\) of its expectation. On \(\cE\), by Lemma~\ref{lem:balance_variance_bounds}
and the identity
\(m'(t)=\sigma_t^2\), there is a constant
\(C_3=C_3(\Delta,\lambda,\gamma)>0\) such that
\[
|m(t)-m(t^*)|
\le C_3n|t-t^*|
\qquad\text{for all }t\in[-t_0,t_0].
\]
An early stop gives \(|m(\widetilde t)|\le3\). Otherwise, at every iteration in
which the search continues, \(|\widehat m|>2\), so \(\widehat m\) and \(m(t)\)
have the same sign.
Thus \(t^*\) remains in the retained interval. Its final width is at most
\(2/n^2\), so \(|m(\widetilde t)|\le2C_3\). Hence, with
\(A:=\max\{3,2C_3\}\), Theorem ~\ref{thm:bounded_ratio_centered_tilt_acceptance} \ref{item:bounded_ratio_acceptance} gives
\[
\Pr_{\mu_{\lambda,\widetilde t}}\{B(I)=0\}
\ge \frac{c}{\sqrt n}
\]
for some \(c=c(\Delta,\lambda,\gamma)>0\). Choose \(C\) so that
$
\exp(-cM/\sqrt n)\le\epsilon/10.
$

Then, conditional on \(\cE\), the \(M\) rejection trials fail to produce a
balanced sample with probability at most \(\epsilon/10\); on this event the
algorithm outputs the empty independent set. Every successful output has law
\(\mu_{G,\lambda}^{\mathrm{bal}}\), since
\(e^{\widetilde t B(I)}=1\) on \(B(I)=0\). Thus the exact-proposal version is within total variation distance
\(\epsilon/5\) of \(\mu_{G,\lambda}^{\mathrm{bal}}\).

The actual implementation makes at most \(JN+M\)
\(\mathsf{TiltedSampler}\) calls. By the same coupling argument as in
\eqref{eq:tilted-sampler-coupling} and a union bound, its output law is within
\((JN+M)\tau\le\epsilon/10\) in total variation distance of that of the
exact-proposal version.

The total error is at most \( \epsilon/5 + \epsilon/10 =  3\epsilon/10\le\epsilon\). All searched
tilts lie in \([-t_0,t_0]\), where Algorithm~\ref{alg:tilted-sampler} runs in
polynomial time, and all parameters are polynomial in \(n\) and
\(1/\epsilon\).
\end{proof}

\section{FPTAS for the balanced partition function}
\label{sec:fptas_balanced_partition}

This section establishes the FPTAS assertion below the uniqueness threshold in
Theorem~\ref{thm:comp_thresh}. In fact, we prove the stronger assertion for
the class \(\widetilde{\cG}_{\Delta,\Delta}^\gamma\) defined in
\eqref{eq:tilde_graph_class}.

\begin{proposition}
\label{prop:compact_tilt_fptas_balanced_partition}
Fix \(\Delta\ge3\), \(0<\lambda<\lambda_c(\Delta)\), and \(\gamma\ge1\). Then there is an
FPTAS for \(Z_G^{\mathrm{bal}}(\lambda)\) on inputs
\(G\in\widetilde{\cG}_{\Delta,\Delta}^\gamma\).
\end{proposition}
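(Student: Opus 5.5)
We base the FPTAS on the identity
\[
Z_G^{\mathrm{bal}}(\lambda)=Z_G(\lambda;t)\,\Pr_{\mu_{\lambda,t}}\{B(I)=0\},
\]
valid for every real $t$, and approximate each factor deterministically at a suitable tilt $\widetilde t\in[-t_0,t_0]$. Here $t_0=t_0(\Delta,\lambda,\gamma)$ is as in Theorem~\ref{thm:bounded_ratio_centered_tilt_acceptance}.

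\emph{Step 1: approximating $Z_G(\lambda;t)$.} For each $t\in[-t_0,t_0]$ we use the standard telescoping product. Order the vertices $v_1,\dots,v_n$ and let $G_i=G-\{v_1,\dots,v_{i-1}\}$. Then
\[
Z_G(\lambda;t)=\prod_i \bigl(1-\mu_{G_i,\lambda,t}(\sigma_{v_i}=1)\bigr)^{-1}.
\]
Each marginal is computed by Algorithm~\ref{alg:saw-trunc}. By Theorem~\ref{thm:ssm} (with constants uniform over $t\in[-t_0,t_0]$), this is accurate to additive error $\epsilon/n^2$. Every unoccupation probability is bounded below by $(1+\lambda e^{t_0})^{-1}$, so the additive error converts to multiplicative error $1\pm O(\epsilon/n)$ per factor. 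This gives a $(1\pm\epsilon)$-approximation of $Z_G(\lambda;t)$ in polynomial time.

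\emph{Step 2: deterministic centering.} Each term of
\[
m(t)=\sum_{u\in L}\Pr(u\in I)-\sum_{v\in R}\Pr(v\in I)
\]
is a marginal, so the same oracle yields $\widehat m(t)$ with $|\widehat m(t)-m(t)|\le 1$ deterministically. We run bisection on $[-t_0,t_0]$ with threshold $\theta=2$, exactly as in the proof of Proposition~\ref{prop:bounded_imbalance_sampler}, but with deterministic estimates in place of empirical means. Monotonicity of $m$ (Lemma~\ref{lem:unique_centering_tilt}), the containment $t^*\in[-t_0,t_0]$, and the Lipschitz bound $|m'|=\sigma_t^2\le C_3 n$ (Lemma~\ref{lem:balance_variance_bounds}) then give, after $O(\log n)$ steps, a tilt $\widetilde t$ with $|m(\widetilde t)|\le A=O(1)$. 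Theorem~\ref{thm:bounded_ratio_centered_tilt_acceptance} then gives $p:=\Pr_{\mu_{\lambda,\widetilde t}}\{B(I)=0\}=\Omega(n^{-1/2})$.

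\emph{Step 3: approximating $p$ to relative error $\epsilon$.} This is the main obstacle. The local CLT of Theorem~\ref{thm:main_local_clt} has additive error $O((\log n)^{5/2}/n)$, which is only relative error $\tilde O(n^{-1/2})$, not $\epsilon$. So we algorithmize the Fourier inversion, following \cite{jain2021approximatecountingsamplinglocal}. Fourier inversion gives
\[
p=\frac1{2\pi}\int_{-\pi}^{\pi}\E_{\mu_{\lambda,\widetilde t}}\!\left[e^{iuB(I)}\right]du
=\frac1{2\pi}\int_{-\pi}^{\pi}\frac{Z_G(\lambda;\widetilde t,e^{iu})}{Z_G(\lambda;\widetilde t,1)}\,du.
\]

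We split the integral at $|u|=u_0:=K\sqrt{\log(n/\epsilon)/n}$.
\begin{itemize}
\item \emph{High frequencies.} By Lemma~\ref{lem:high_fourier_control}, the integrand is bounded by $\exp(-c\lambda n u^2)$. Choosing $K$ large makes the contribution of $|u|>u_0$ at most $\epsilon n^{-1/2}/10\le\epsilon p$.
\item \emph{Low frequencies.} For $|u|\le u_0$, we have $|e^{iu}-1|<\rho$ for large $n$, so by Proposition~\ref{prop:zero_free_tilted_partition} the function $\zeta\mapsto\log Z_G(\lambda;\widetilde t,\zeta)$ is analytic on the disk $|\zeta-1|<\rho$. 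We plan to approximate $\log\bigl(Z_G(\lambda;\widetilde t,e^{iu})/Z_G(\lambda;\widetilde t,1)\bigr)$ to additive error $\epsilon/10$ on this range.
\end{itemize}

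For the low-frequency approximation there are two routes. The first is a Barvinok-type truncated Taylor expansion in $\zeta$ around $1$, after conformally mapping the disk so the evaluation points lie well inside. The required $O(\log(n/\epsilon))$ Taylor coefficients are the cumulants of $B(I)$, computed via Patel--Regts in polynomial time for bounded degree. The alternative we would try first is to reuse the proof of Lemma~\ref{lem:zeroFreeness_twoLevel_complex_thickening}: run the complex two-level recursion on truncated SAW trees for complex activities, where the same contraction in $D_\alpha$ gives exponential decay of truncation error, and telescope as in Step 1 to obtain $\log Z_G(\lambda;\widetilde t,e^{iu})$ with additive error $\epsilon/(10n)$ per factor. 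One subtlety is that the SAW-tree identity for complex activities must hold along the induction, but this is exactly the structure already used in the proof of Proposition~\ref{prop:zero_free_tilted_partition}. In either case, $|e^{x}-e^{y}|\le 2|x-y|$ for $x,y$ with bounded real part, so the integrand is approximated to additive error $O(\epsilon)$ times its modulus-scale. Integrating numerically over a grid of $\mathrm{poly}(n/\epsilon)$ points (the integrand is $O(n)$-Lipschitz in $u$) therefore gives $p$ with additive error $\epsilon n^{-1/2}/10$, hence relative error $O(\epsilon)$ since $p=\Omega(n^{-1/2})$. Multiplying by the Step 1 estimate of $Z_G(\lambda;\widetilde t)$ completes the FPTAS.
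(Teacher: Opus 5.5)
Your plan follows the paper's proof step for step. It matches on deterministic bisection with SAW-tree marginals (Lemma~\ref{lem:deterministic_compact_tilt_centering}) and on the self-reduction for $Z_G(\lambda;\widetilde t)$ (Lemma~\ref{lem:real_compact_tilt_partition_oracle}). It also matches on Fourier inversion: the envelope of Lemma~\ref{lem:high_fourier_control} at high frequencies, and a complex SAW-tree ratio oracle plus a Riemann sum at low frequencies (Lemmas~\ref{lem:complex_ssm_ratio_oracle}, \ref{lem:low_frequency_complex_partition_oracle}, Proposition~\ref{prop:algorithmic_lclt_compact_tilts}).

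One step fails as written. You claim $|e^{iu}-1|<\rho$ for all $|u|\le u_0=K\sqrt{\log(n/\epsilon)/n}$ ``for large $n$''. This actually requires $K^2\log(n/\epsilon)\lesssim \rho^2 n$, so it fails once $\epsilon\le e^{-c_0 n}$. No other cutoff rescues this regime. Any cutoff inside the zero-free disk leaves a tail that the envelope $e^{-c\lambda n u^2}$ bounds only by $e^{-\Theta(n)}$, which is not $\le \epsilon n^{-1/2}$ for exponentially small $\epsilon$. Outside the disk, the complex oracle is unavailable. An FPTAS must run in time $\poly(n,1/\epsilon)$ for every $\epsilon\in(0,1)$, so you need the case split the paper makes. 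If $\epsilon<e^{-c_0 n}$ (or $n\le n_0$), compute $Z_G^{\mathrm{bal}}(\lambda)$ exactly by enumeration, at cost $2^n\poly(n)\le \poly(1/\epsilon)$. Otherwise your window lies in a closed subdisk $|\zeta-1|\le\rho'<\rho$, where all constants are uniform.

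A few smaller points should be made explicit.
\begin{itemize}
\item You want additive error $\epsilon/(10n)$ in $\log(1+R_{G_i,v_i}(\zeta))$ from additive errors in the complex ratios. This needs a uniform bound $|1+R_{H,u}|\ge\kappa$; in the real case this is automatic since $1+R\ge 1$. Here it comes from compactness of $\phi^{-1}(\overline{D_\alpha})\subset\{\operatorname{Re}z>-1/2\}$ on the closed subdisk. For vertices with $\Delta$ neighbours you also need the decomposition $Q=R_{H,u}(1+S)$ with $Q,S\in\Omega$ (Lemma~\ref{lem:complex_ssm_ratio_oracle}(i)).
\item Write $g(u)=\E_{\mu_{\lambda,\widetilde t}}e^{iuB(I)}$. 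A relative error $O(\epsilon)$ in $g$ gives additive error $O(\epsilon n^{-1/2})$ only once you bound $\int_{|u|\le u_0}|g(u)|\,du=O(n^{-1/2})$ by the same Gaussian envelope. Using $|g|\le 1$ on the window loses a factor $\sqrt{\log(n/\epsilon)}$. The paper instead shrinks the requested pointwise accuracy to $q=\min\{1/2,\delta/(16T)\}$.
\item Your Barvinok alternative does not work as described. Patel--Regts yields Taylor coefficients of $\log Z$ at zero activity, not the cumulants of $B(I)$ at $\zeta=1$. That route would need zero-freeness on a neighbourhood of a path from zero activity to $(\lambda_L(\zeta),\lambda_R(\zeta))$. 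Since your primary route is the SAW-tree oracle, this does not affect the proof.
\end{itemize}
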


Throughout this section, fix \(\gamma\ge1\) and
\(G=(L\sqcup R,E)\in\widetilde{\cG}_{\Delta,\Delta}^\gamma\) and write
\(n:=|V(G)|=|L|+|R|\). We also fix
\(0<\lambda<\lambda_c(\Delta)\). As in
Section~\ref{sec:below_uniqueness}, after possibly exchanging the two sides of
the bipartition, we assume that \(|L|\ge|R|\). Consequently,
\eqref{eq:bounded_imbalance_regime} holds.

Let \(t_0=t_0(\Delta,\lambda,\gamma)>0\) be the compact-tilt constant from
Theorem~\ref{thm:bounded_ratio_centered_tilt_acceptance}. By
\eqref{eq:bounded_imbalance_regime}, the centering tilt \(t^*\), characterized
by
\(\E_{\mu_{\lambda,t^*}}B(I)=0\), lies in the compact interval
\([-t_0,t_0]\). All tilts supplied to the subroutines below will lie in this
interval.

A proof overview is given next. The key basic identity is that, for every tilt \(t\),
\begin{equation}\label{eq:balanced_counting_identity}
    Z_G^{\mathrm{bal}}(\lambda)
    =
    Z_G(\lambda;t)\,
    \Pr_{\mu_{\lambda,t}}\{B(I)=0\}.
\end{equation}

Fix a constant \(A>0\), depending only on \(\Delta\) and \(\lambda\).
Based on \eqref{eq:balanced_counting_identity}, an FPTAS for $Z_G^{\mathrm{bal}}(\lambda)$ requires the following three tasks, which we want to implement in time polynomial in
\(n\) and \(1/\epsilon\) and with constants depending only on
\(\Delta,\lambda,\gamma,A\). Recall the notation
$
    m(t):=\E_{\mu_{\lambda,t}}B(I).
$
\begin{enumerate}[leftmargin=*]
    \item Find a deterministic tilt \(\widetilde t\in[-t_0,t_0]\) such that
    \(|m(\widetilde t)|\le A\).
    \item Given \(\widetilde t\), approximate \(Z_G(\lambda;\widetilde t)\) to relative error
    \(1\pm O(\epsilon)\).
    \item Given \(\widetilde t\), approximate
    \(\Pr_{\mu_{\lambda,\widetilde t}}\{B(I)=0\}\) to relative error
    \(1\pm O(\epsilon)\).
\end{enumerate}
Splitting the error budget among the last two multiplicative approximations
and using \eqref{eq:balanced_counting_identity} then gives an
\(\epsilon\)-relative approximation to \(Z_G^{\mathrm{bal}}(\lambda)\).
The first two steps are standard consequences of the compact-tilt SAW-tree
approximation algorithms. The last step involves "algorithmizing" the local CLT
proof, in the spirit of \cite{jain2021approximatecountingsamplinglocal}.

\subsection{\texorpdfstring{Finding a deterministic tilt and approximating \(Z_G\)}{Finding a deterministic tilt and approximating ZG}}

We first show that a deterministic tilt search can replace
the empirical search used in the sampling algorithm.

\begin{lemma}
\label{lem:deterministic_compact_tilt_centering}
Suppose \eqref{eq:bounded_imbalance_regime} holds, and let
\(A>0\). There is a deterministic algorithm which returns a tilt
\(\widetilde t\in[-t_0,t_0]\) satisfying
\[
    |m(\widetilde t)|\le A
\]
in time polynomial in \(n\), possibly depending on the constants
\(\Delta,\lambda,\gamma,A\).
\end{lemma}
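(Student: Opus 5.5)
We compute $m(t)$ deterministically to within additive error $1$ at any tilt $t\in[-t_0,t_0]$, and then run the same bisection as in the proof of Proposition~\ref{prop:bounded_imbalance_sampler}, with the empirical means replaced by these deterministic estimates. The key observation is linearity of expectation:
\[
m(t)=\sum_{u\in L}\Pr_{\mu_{\lambda,t}}(u\in I)-\sum_{v\in R}\Pr_{\mu_{\lambda,t}}(v\in I).
\]
Each marginal is the root marginal of the tilted model with activities $\lambda_L(t),\lambda_R(t)$. By Theorem~\ref{thm:ssm}, applied uniformly over the compact window $K=[-t_0,t_0]$, and by Remark~\ref{remark:SSM_to_marginalOracle}, Algorithm~\ref{alg:saw-trunc} run at truncation depth $L=O(\log(n/\eta))$ returns each marginal within additive error $\eta/n$. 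The running time is $\mathrm{poly}(n/\eta)$, and the constants $C,c$ depend only on $\Delta,\lambda,\gamma$ through $t_0$. Taking $\eta=1$ and summing over the $n$ vertices gives a deterministic estimate $\widehat m(t)$ with $|\widehat m(t)-m(t)|\le 1$.

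The search itself works as follows. Take $\theta=2$ and bisect on $[a,b]=[-t_0,t_0]$, which contains $t^*$ by Theorem~\ref{thm:bounded_ratio_centered_tilt_acceptance}\ref{item:bounded_ratio_compact_tstar}. At each midpoint $t$, if $|\widehat m(t)|\le 2$ we stop and output $t$; in that case $|m(t)|\le 3$. Otherwise $|\widehat m(t)|>2$ forces $m(t)$ to have the same sign as $\widehat m(t)$. Because $m$ is strictly increasing (Lemma~\ref{lem:unique_centering_tilt}), we keep the half containing $t^*$. After $J=\lceil\log_2(n^2\cdot 2t_0)\rceil$ steps the window has width at most $1/n^2$, and we output its midpoint $\widetilde t$.

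To bound $|m(\widetilde t)|$ in this second case, use $m'(t)=\sigma_t^2\le C_3 n$ on $[-t_0,t_0]$ from Lemma~\ref{lem:balance_variance_bounds}, applied with $\delta=(\lambda_c-\lambda)/2$. This gives $|m(\widetilde t)|=|m(\widetilde t)-m(t^*)|\le C_3 n\cdot n^{-2}\le C_3$. Hence $|m(\widetilde t)|\le\max\{3,C_3\}$ in all cases. The total running time is $J\cdot n\cdot\mathrm{poly}(n)$, which is polynomial.

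The main obstacle is matching the statement for an \emph{arbitrary} $A>0$, including $A<3$. The fix is to rescale the tolerances: estimate each marginal to additive error $A/(4n)$, so that $|\widehat m-m|\le A/4$, use threshold $\theta=A/2$ (which gives $|m|\le 3A/4$ at an early stop), and shrink the final window to width at most $A/(C_3 n)$. All of these remain polynomial for fixed $A$, since the SAW depth grows only like $\log(n/A)$ and $J=O(\log(nt_0C_3/A))$.

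The only point needing care is that the SSM constants are uniform over all queried tilts. This holds because every midpoint lies in $[-t_0,t_0]$, and Theorem~\ref{thm:ssm} gives uniformity on compact intervals.
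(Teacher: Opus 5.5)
Your proposal is correct and follows essentially the same route as the paper: SAW-tree marginal estimates summed by linearity give $\widehat m$ within a constant fraction of $A$, a bisection threshold larger than that error makes the sign test reliable, and a final window of width $O(A/n)$ is controlled by the variance bound $m'=\sigma_t^2\le Cn$ on $[-t_0,t_0]$. The only difference is that the paper works with general $A$ from the outset (taking $\eta=A/10$, threshold $2\eta$, and final width $A/(4C_{\mathrm{var}}n)$), rather than first doing the $\eta=1$ version and then rescaling.
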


\begin{proof}
Set \(\eta:=A/10\). At each queried tilt \(r\in[-t_0,t_0]\), use
Algorithm~\ref{alg:saw-trunc} to approximate every occupation marginal to
additive error \(\eta/n\). Equation~\eqref{eq:m_expectedBalance_def} then gives
an estimate \(\widehat m(r)\) with \(|\widehat m(r)-m(r)|\le\eta\).

Run bisection on \([-t_0,t_0]\) as in the proof of
Proposition~\ref{prop:bounded_imbalance_sampler}. If
\(|\widehat m(r)|<2\eta\), return \(r\), since then
\(|m(r)|<3\eta\le A\). Otherwise \(\widehat m(r)\) and \(m(r)\) have the same
sign, so the retained interval contains \(t^*\). If no early stop occurs,
continue until its width is at most \(A/(4C_{\mathrm{var}}n)\), where
\(\Var_{\mu_{\lambda,s}}(B(I))\le C_{\mathrm{var}}n\) on
\([-t_0,t_0]\) by Lemma~\ref{lem:balance_variance_bounds}. Its midpoint
\(\widetilde t\) then satisfies, by \(m'(s)=\Var_{\mu_{\lambda,s}}(B(I))\),
\[
|m(\widetilde t)|\le C_{\mathrm{var}}n|\widetilde t-t^*|
\le A/8\le A.
\]
The search makes \(O(\log n)\) bisection steps, each using \(n\)
marginal oracle calls, and is therefore polynomial-time.
\end{proof}

\begin{lemma}
\label{lem:real_compact_tilt_partition_oracle}
Suppose \eqref{eq:bounded_imbalance_regime} holds. There is a
deterministic algorithm which, given a tilt \(t\in[-t_0,t_0]\) and
\(\epsilon\in(0,1)\), outputs an
\(\epsilon\)-relative approximation to \(Z_G(\lambda;t)\) in time
\(\poly(n,1/\epsilon)\), with constants depending on
\(\Delta,\lambda,\gamma\).
\end{lemma}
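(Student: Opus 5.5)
We will use the standard self-reducibility telescoping product of Weitz~\cite{weitz2006counting}, driven by the truncated SAW-tree marginal oracle. Fix an ordering \(v_1,\dots,v_n\) of \(V(G)\) and let \(G_i:=G-\{v_1,\dots,v_{i-1}\}\), with inherited activities \(\lambda_L(t)=\lambda e^t\) on \(L\) and \(\lambda_R(t)=\lambda e^{-t}\) on \(R\). Since \(Z_{G_i}(\lambda;t)\) counts sets with \(v_i\) either absent or present, we have \(\mu_{G_i,\lambda,t}(\sigma_{v_i}=0)=Z_{G_{i+1}}(\lambda;t)/Z_{G_i}(\lambda;t)\). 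Telescoping gives
\[
Z_G(\lambda;t)=\prod_{i=1}^n \frac{1}{\mu_{G_i,\lambda,t}(\sigma_{v_i}=0)}.
\]
We will output the product of the reciprocals of estimates \(\widetilde p_i^{\,0}:=1-\widetilde p_i\), where \(\widetilde p_i\) is returned by Algorithm~\ref{alg:saw-trunc} on \(G_i\) at vertex \(v_i\).

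The key input is Theorem~\ref{thm:ssm}. Each \(G_i\) is an induced subgraph of \(G\), hence bipartite of maximum degree at most \(\Delta\), with the same two activities. Theorem~\ref{thm:ssm} therefore applies with constants \(C,c\) uniform over \(t\in[-t_0,t_0]\) and over all \(G_i\), \(v_i\). By Remark~\ref{remark:SSM_to_marginalOracle}, choosing depth \(L=O(\log(n/\epsilon))\) gives additive accuracy \(\epsilon'\) for each marginal in time \(\poly(n,1/\epsilon')\).

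To convert additive error into relative error, we use a uniform lower bound on the unoccupied probability. By \eqref{eq:occupationProb_upperLowerBounds} (or directly from \(R_{v}\le\lambda_{\dagger(v)}\)), we have
\[
\mu_{G_i,\lambda,t}(\sigma_{v_i}=0)\ge \frac{1}{1+\lambda e^{t_0}}=:p_{\min}>0,
\]
which depends only on \(\lambda,t_0\), hence only on \(\Delta,\lambda,\gamma\). Take \(\epsilon':=p_{\min}\epsilon/(4n)\). Then each factor satisfies \(\widetilde p_i^{\,0}/\mu_{G_i}(\sigma_{v_i}=0)\in[1-\epsilon/(4n),1+\epsilon/(4n)]\). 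The product of the \(n\) ratios then lies in \([(1-\epsilon/4n)^n,(1+\epsilon/4n)^{n}]\subseteq[1-\epsilon,1+\epsilon]\). Since \(\epsilon'=\Theta(\epsilon/n)\), the truncation depth remains \(O(\log(n/\epsilon))\) and the total cost is \(n\cdot O(\Delta^{L})=\poly(n,1/\epsilon)\). All steps are deterministic.

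No step here is a real obstacle. The only point needing care is uniformity. The SSM constants must not depend on the subgraph \(G_i\); this holds because Theorem~\ref{thm:ssm} is stated for all bipartite graphs of maximum degree at most \(\Delta\) and all \(t\) in a compact interval. The additive-to-relative conversion needs a bound on the unoccupied marginal that is bounded away from zero, which the crude activity bound supplies. This is exactly where compactness of the tilt window \([-t_0,t_0]\), guaranteed by Theorem~\ref{thm:bounded_ratio_centered_tilt_acceptance}\ref{item:bounded_ratio_compact_tstar}, enters.
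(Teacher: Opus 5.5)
Your proposal is correct and follows the same route as the paper. The paper's proof also uses Weitz's self-reduction $Z_G(\lambda;t)=\prod_i(1-p_i)^{-1}$, approximates each $p_i$ to additive accuracy $O(\epsilon/n)$ with the SAW-tree oracle (whose constants are uniform over $t\in[-t_0,t_0]$ by Theorem~\ref{thm:ssm}), and converts to relative error using that $1-p_i$ is uniformly bounded away from zero. One cosmetic point: your output is the reciprocal of the product you bounded, but the slack in $\epsilon/(4n)$ gives $[e^{-\epsilon/4},(1-\epsilon/4)^{-1}]\subseteq[1-\epsilon,1+\epsilon]$, so this is harmless.
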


\begin{proof}
This is the standard hard-core self-reduction
\cite[Section~5]{weitz2006counting}, also used for
Algorithm~\ref{alg:tilted-sampler}. For an ordering \(v_1,\ldots,v_n\), let
\(p_i\) be the occupation probability of \(v_i\) in the residual graph
\(G-\{v_1,\ldots,v_{i-1}\}\), with its inherited activities. Then
\[
    Z_G(\lambda;t)=\prod_i(1-p_i)^{-1}.
\]
These activities remain in the compact tilt window, so the SAW-tree oracle
approximates each \(p_i\) to additive accuracy \(O(\epsilon/n)\), while
\(1-p_i\) is uniformly bounded away from zero. The product above then gives
the claimed relative approximation.
\end{proof}

\subsection{Algorithmic local CLT for compact tilts}

The subroutine for estimating the factor $\Pr_{\mu_{\lambda,t}}\{B(I)=0\}$ in \eqref{eq:balanced_counting_identity} is based on an algorithmic version of the compact-tilt
local CLT, Theorem~\ref{thm:main_local_clt}. 

\begin{proposition}
\label{prop:algorithmic_lclt_compact_tilts}
Suppose \eqref{eq:bounded_imbalance_regime} holds, and let
\(A>0\). There is a deterministic algorithm which, given a tilt
\(t\in[-t_0,t_0]\) satisfying \(|m(t)|\le A\), and
\(\epsilon\in(0,1)\), outputs an \(\epsilon\)-relative approximation to
\[
    \Pr_{\mu_{\lambda,t}}\{B(I)=0\}
\]
in time \(\poly(n,1/\epsilon)\), with constants depending on
\(\Delta,\lambda,\gamma,A\).
\end{proposition}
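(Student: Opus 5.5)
The plan is to algorithmize the Fourier inversion proof of Theorem~\ref{thm:main_local_clt}, in the spirit of \cite{jain2021approximatecountingsamplinglocal}. Since \(B(I)\in\{-|R|,\dots,|L|\}\) is integer-valued,
\[
\Pr_{\mu_{\lambda,t}}\{B(I)=0\}=\frac{1}{2\pi}\int_{-\pi}^{\pi}\E_{\mu_{\lambda,t}}\bigl[e^{iuB(I)}\bigr]\,du
=\frac{1}{2\pi}\int_{-\pi}^{\pi}\frac{Z_G(\lambda;t,e^{iu})}{Z_G(\lambda;t,1)}\,du .
\]
By Theorem~\ref{thm:bounded_ratio_centered_tilt_acceptance}\ref{item:bounded_ratio_acceptance}, the target is \(\Omega(n^{-1/2})\). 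It therefore suffices to approximate this integral to additive error \(\epsilon\, c\, n^{-1/2}\).

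We split the integral at \(|u|\le u_0:=A'\sqrt{\log n}/\sigma\), with \(\sigma^2=\Theta(n)\) by Lemma~\ref{lem:balance_variance_bounds}. For \(|u|>u_0\), Lemma~\ref{lem:high_fourier_control} gives \(|\E e^{iuB}|\le \exp(-c\lambda n u^2)\le n^{-K}\) for \(A'\) large, so this part contributes \(O(n^{-K})\). For \(K\) large this is below \(\epsilon n^{-1/2}\) whenever \(\epsilon\ge n^{-K+1}\). When \(\epsilon\) is smaller than this, we fall back on exact brute force, which is \(\poly(1/\epsilon)\) since \(2^n\le \poly(1/\epsilon)\). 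That leaves the low range \(|u|\le u_0=O(\sqrt{\log n/n})\).

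On the low range we compute \(\log Z_G(\lambda;t,e^{iu})-\log Z_G(\lambda;t,1)\). Expanding in \(s\) around \(s=0\), where \(\zeta=e^{s}\), the cumulant series of \(B\) is
\[
\log \E e^{sB}=\sum_{k\ge1}\kappa_k s^k/k!.
\]
By Proposition~\ref{prop:zero_free_tilted_partition}, \(F_t(\zeta)=\zeta^{|R|}Z_G(\lambda;t,\zeta)\) has no zeros in \(|\zeta-1|<\rho\). As in the variance bound, this gives \(|\kappa_k|\le n\,k!\,C^k\) through the root factorization. For \(|u|\le u_0\), truncating the series at order \(k_0=O(\log(n/\epsilon))\) therefore incurs error \(n (Cu_0)^{k_0}\le \epsilon/n\).

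The main obstacle is computing the cumulants \(\kappa_1,\dots,\kappa_{k_0}\) deterministically to additive error \(\epsilon/(n\cdot\text{poly})\). I would first try Barvinok interpolation: the power sums \(\sum_j \zeta_j^{-k}\) of the shifted roots (after the Möbius/shift recentering \(\zeta=1+w\)) are determined by the first \(k\) Taylor coefficients of \(F_t(1+w)\). Those coefficients are weighted counts of \(\binom{B(I)+|R|}{k}\). Alternatively, and more robustly for bounded degree, I would use the cluster/Patel--Regts approach: \(\kappa_k\) is a sum over connected induced subgraphs of size at most \(k\). These can be enumerated in time \(n\,e^{O(k)}=\poly(n/\epsilon)\) for bounded \(\Delta\), with each local contribution computed exactly by brute force.

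A simpler fallback is to compute \(\kappa_1=m(t)\) and \(\kappa_2=\sigma^2\) to precision \(O(1)\) and \(O(\sqrt n)\) via SAW marginals and pairwise correlations. Plugging these into the Gaussian \(\sigma^{-1}N(-m/\sigma)\), however, only gives error \((\log n)^{5/2}/n\). That is not \(\epsilon\)-relative, so the higher cumulants are genuinely needed, which is why I expect this step to be hardest.

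With the truncated cumulant series in hand, we exponentiate and integrate numerically over \(|u|\le u_0\). This uses \(\poly(n/\epsilon)\) quadrature points, since the integrand is Lipschitz with constant \(O(n)\). Combining with Lemma~\ref{lem:real_compact_tilt_partition_oracle} is unnecessary here, because only the ratio enters. The total additive error is then \(O(\epsilon n^{-1/2})\), which is a relative error of \(O(\epsilon)\).
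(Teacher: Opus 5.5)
The outer frame of your proposal matches the paper: Fourier inversion, the $\Omega(n^{-1/2})$ lower bound from Theorem~\ref{thm:bounded_ratio_centered_tilt_acceptance}, the tail bound from Lemma~\ref{lem:high_fourier_control}, and a Riemann sum using the Lipschitz bound. But the step you flag as the main obstacle is never carried out, and neither suggested route works as written.

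\textbf{Route (b).} This rests on a false claim. $\kappa_k$ is a derivative in the tilt at a \emph{fixed positive} activity, not a Taylor coefficient of $\log Z$ in the activity around zero. So it is not a finite sum over connected induced subgraphs of size at most $k$; already $\kappa_1=m(t)=\sum_{v\in L}\Pr(v\in I)-\sum_{v\in R}\Pr(v\in I)$ depends on the whole graph. A genuine Patel--Regts argument would interpolate in an auxiliary activity-scaling variable starting from zero activity. That needs zero-freeness on a complex neighborhood of the entire segment, which Proposition~\ref{prop:zero_free_tilted_partition} (real $\lambda$, complex $\zeta$) does not provide as stated.

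\textbf{Route (a).} This expands at $\zeta=1$, where the zeroth coefficient of $F_t(1+w)$ is $Z_G(\lambda;t)$ itself and the others are $Z_G(\lambda;t)\,\E\binom{B(I)+|R|}{k}$. These are only approximately computable. Newton's identities turn factorial moments of size about $n^k$ into power sums of size $O(n)$, so you would need them to relative precision roughly $\epsilon n^{-k}$. The natural way to get them is via sums over $\binom nk$ vertex sets, which is superpolynomial for your $k_0=\Theta(\log(n/\epsilon))$.

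\textbf{Small-$\epsilon$ fallback.} Separately, this step is wrong. With the $\epsilon$-independent cutoff $u_0\asymp\sqrt{\log n/n}$ the tail is only $n^{-K}$, and $2^n\le\poly(1/\epsilon)$ fails when $\epsilon$ is merely polynomially small. The cutoff must scale like $\sqrt{\log(2/\epsilon)/n}$ (the paper's $T=C\sqrt{\log(2/\epsilon)}$ on the scale $\theta=\sqrt n\,u$), with brute force reserved for $\epsilon<e^{-c_0n}$. That restriction is exactly what keeps $e^{iu}$ inside the zero-free disk.

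The missing idea is that no cumulants are needed. The paper evaluates the integrand $Z_G(\lambda;t,\zeta)/Z_G(\lambda;t,1)$ directly at each quadrature point with $|\zeta-1|\le\rho'$. It uses the telescoping product $\prod_{i}\bigl(1+R_{G_i,v_i}(\zeta)\bigr)/\bigl(1+R_{G_i,v_i}(1)\bigr)$ of complex occupation ratios. Each $R_{G_i,v_i}(\zeta)$ is approximated by a SAW tree truncated at depth $O(\log(n/\epsilon))$. This is accurate because the two-level recursion contracts on the thickened domain $D_\alpha$ by \eqref{eq:zeroFreeness_complex_twoLevel_jacobian_gap}. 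The bound $|1+R_{H,u}|\ge\kappa$ inherited from the zero-freeness argument then converts additive error into $O(\epsilon/n)$ relative error per factor (Lemmas~\ref{lem:complex_ssm_ratio_oracle} and~\ref{lem:low_frequency_complex_partition_oracle}). This complex-SAW oracle, already built for the zero-freeness proof, replaces your cumulant computation and makes every step polynomial in $n$ and $1/\epsilon$.
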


Our proof strategy for Proposition \ref{prop:algorithmic_lclt_compact_tilts} is given next. Recall the tilted complex partition function \eqref{eq:tiltedComplexHardcorePartitionFunction}: for \(\zeta\in\CC\)
\[
    Z_G(\lambda;t,\zeta)
    =
    \sum_{I\in\cI_G}
    \lambda^{|I|}e^{tB(I)}\zeta^{B(I)}.
\]
Fourier inversion on the lattice \(\mathbb Z\), followed by the change of
variables \(\theta=\sqrt n\,u\), gives
\begin{equation}\label{eq:balance_zero_fourier_ratio}
    \Pr_{\mu_{\lambda,t}}\{B(I)=0\}
    =
    \frac{1}{2\pi\sqrt n}
    \int_{-\pi\sqrt n}^{\pi\sqrt n}
    \frac{Z_G(\lambda;t,e^{i\theta/\sqrt n})}
         {Z_G(\lambda;t)}
    \,d\theta.
\end{equation}
The algorithmic tasks are:
\begin{itemize}[leftmargin=*]
    \item Approximate the partition function ratio
    \(Z_G(\lambda;t,\zeta)/Z_G(\lambda;t,1)\) for \(\zeta\) near \(1\). This will be done using the complex SAW-tree which gives approximations to the
    occupation ratios.
    \item Approximate the Fourier integral separately for low and high frequencies. Let
    \(T=O(\sqrt{\log(1/\epsilon)})\). By
    taking the \(\sqrt n\)-scale, the quantity \(e^{i\theta/\sqrt n}\) is close to \(1\)
    on \(|\theta|\le T\), so the integrand is in the range of the ratio
    approximation from the first bullet. We approximate this part of the integral by
    evaluating the ratio at finitely many evenly spaced points. For \(|\theta|>T\),
    we reuse the compact-tilt high-frequency estimate from the local CLT proof,
    Lemma~\ref{lem:high_fourier_control}, to show that this part of the
    integral is negligible.
\end{itemize}
This approach is inspired by that in \cite{jain2021approximatecountingsamplinglocal}. A difference is that here, we estimate the partition function ratio in \eqref{eq:balance_zero_fourier_ratio} directly. The route taken in \cite{jain2021approximatecountingsamplinglocal} via
zero-freeness and Barvinok~\cite{barvinok2016combinatorics} and Patel--Regts \cite{patel2017deterministic} to approximate the characteristic function of $B(I)$ would likely also work. However, our approach leverages the complex SAW-tree machinery already developed in Section \ref{sec:zeroFreenessTiltedPartitionFunction}.

Our next result gives the complex ratio oracle needed for the low-frequency
partition function approximation. The proof builds on the zero-freeness argument in
Section~\ref{sec:zeroFreenessTiltedPartitionFunction}, especially
Lemma~\ref{lem:zeroFreeness_twoLevel_complex_thickening} and the proof of
Proposition~\ref{prop:zero_free_tilted_partition}.

Throughout the next two lemmas, for fixed \(t\) and \(\zeta\), an induced
subgraph \(H\subseteq G\), and a vertex \(u\in V(H)\), write
\[
    R_{H,u}
    :=
    \frac{\lambda_{\dagger(u)}(\zeta)\,
    Z_{H\setminus N_H[u]}(\lambda;t,\zeta)}
    {Z_{H-u}(\lambda;t,\zeta)}.
\]
This is defined whenever the denominator is nonzero. To emphasize
the dependence on \(\zeta\), we may write \(R_{H,u}(\zeta)\). An elementary
decomposition of independent sets according to whether \(u\) is absent or
present gives
\begin{equation}\label{eq:complex_self_reduction_one_step}
    Z_H(\lambda;t,\zeta)
    =
    Z_{H-u}(\lambda;t,\zeta)\bigl(1+R_{H,u}\bigr).
\end{equation}

\begin{lemma}
\label{lem:complex_ssm_ratio_oracle}
Suppose \eqref{eq:bounded_imbalance_regime} holds. Let
\(\delta_\lambda:=(\lambda_c(\Delta)-\lambda)/2\), and let \(\rho\) be
the radius from Proposition~\ref{prop:zero_free_tilted_partition}, applied with
slack \(\delta_\lambda\) and compact tilt interval \([-t_0,t_0]\). Then for every
\(\rho'<\rho\), there is a constant
\(\kappa>0\), depending only on \(\Delta,\lambda,\gamma,\rho'\), such that the
following holds.

Let \(t\in[-t_0,t_0]\), let \(\zeta\in\CC\) satisfy
\(|\zeta-1|\le\rho'\), let \(H\subseteq G\) be an induced subgraph, and let
\(u\in V(H)\). Then:
\begin{enumerate}[label=(\roman*),leftmargin=*]
    \item The ratio \(R_{H,u}\) is well defined and satisfies
    $
        |1+R_{H,u}|\ge \kappa .
    $
    \item There is a deterministic algorithm which, given
    \(\eta\in(0,1)\), outputs \(\widehat R_{H,u}\) satisfying
    \[
        |\widehat R_{H,u}-R_{H,u}|\le \eta
    \]
    in time \(\poly(|V(H)|,1/\eta)\), with constants depending only on
    \(\Delta,\lambda,\gamma,\rho'\).
\end{enumerate}
\end{lemma}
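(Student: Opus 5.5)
The plan is to reuse the inductive structure from the proof of Proposition~\ref{prop:zero_free_tilted_partition}, with the strict contraction of Lemma~\ref{lem:zeroFreeness_twoLevel_complex_thickening} doing double duty: it gives the invariance needed for (i) and the exponential decay needed for (ii).

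\textbf{Step 1: a uniform margin.} Apply Lemma~\ref{lem:zeroFreeness_twoLevel_complex_thickening} with \(\epsilon\) corresponding to \(\delta_\lambda\) and interval \([-t_0,t_0]\). This produces \(\alpha,\rho\). Since \(\rho'<\rho\), every \(\zeta\) with \(|\zeta-1|\le\rho'\) lies in a compact subset of the open disk of validity. The compactness argument in that lemma's proof then yields a uniform Jacobian bound \(1-\eta_0/2\) for the two-level map \(\mathcal H\) on \(D_\alpha\). By compactness again, \(\mathcal H\) maps \(D_\alpha\) into the smaller set \(\overline{D_{\alpha'}}\) for some \(\alpha'<\alpha\) depending only on \(\rho'\).

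\textbf{Step 2: part (i).} Run the induction from the proof of Proposition~\ref{prop:zero_free_tilted_partition} verbatim. It shows that every ratio \(R_{K,w}\) with \(w\) having an outside neighbour is well defined and lies in \(\Omega=\phi^{-1}(D_\alpha)\), and that all the relevant partition functions are nonzero. For a general pair \((H,u)\), including the case where \(u\) has \(\Delta\) neighbours in \(H\), repeat the final two-case argument with \(v_0=u\).
\begin{itemize}
\item If \(u\) has at most \(d\) neighbours, then \(R_{H,u}\in\phi^{-1}(\overline{D_{\alpha'}})\). This set is compact and contained in \(\{\operatorname{Re}x>-1/2\}\), so \(|1+R_{H,u}|\ge 1/2\).
\item If \(u\) has \(\Delta\) neighbours, write \(1+R_{H,u}=(1+Q+R')/(1+R')\), where \(Q,R'\in\Omega\). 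Then \(\operatorname{Re}(1+Q+R')>0\) is bounded below on the compact closure. Also \(|1+R'|\) is bounded above, because \(\Omega\) is bounded.
\end{itemize}
In both cases we get \(|1+R_{H,u}|\ge\kappa\) with \(\kappa\) depending only on \(\Delta,\lambda,\gamma,\rho'\). Note also that \(\phi^{-1}\) is Lipschitz on \(D_\alpha\), since it is analytic on a neighbourhood of that compact closure.

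\textbf{Step 3: part (ii).} Build the complex SAW-tree recursion, meaning the two-level unfolding used above, truncated at depth \(2\ell\). At the truncated leaves, substitute the value \(0\in D_\alpha\) (in \(\phi\)-coordinates) and propagate upward using \(\mathcal H\). The true values and the substituted values both lie in \(D_\alpha\), which is convex. Hence the fundamental theorem of calculus, combined with the Jacobian bound \(1-\eta_0/2\), shows that the error contracts by a factor \(1-\eta_0/2\) per two levels. The leaf error is at most \(\operatorname{diam}D_\alpha\), so the error at the top is \(O((1-\eta_0/2)^{\ell})\). The top level handles the root with up to \(\Delta\) children, as in Step 2; this is a map with bounded Lipschitz constant.

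Choosing \(\ell=O(\log(1/\eta))\), the tree has \(\Delta^{O(\ell)}=\poly(1/\eta)\) nodes. Each node evaluation is an elementary analytic operation, so the total running time is \(\poly(|V(H)|,1/\eta)\). Converting back through the Lipschitz map \(\phi^{-1}\) gives \(|\widehat R_{H,u}-R_{H,u}|\le\eta\).

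\textbf{Main obstacle.} The hardest step is to make sure the unfolding recursion is exactly the one covered by the induction. Every intermediate ratio must be of the form \(R_{K,w}\) with \(w\) having a deleted neighbour, so that it has at most \(d\) live neighbours and the induction applies. Equivalently, this is the bookkeeping that the sequential-deletion recursion coincides with Weitz's SAW tree. A secondary issue is the choice of square-root branches. These stay consistent because the domain is a thin neighbourhood of the positive reals and \(\rho'\) is bounded away from \(\rho\).
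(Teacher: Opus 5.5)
Your proposal is correct and is essentially the paper's argument: the uniform two-level Jacobian bound on $D_\alpha$ from Lemma~\ref{lem:zeroFreeness_twoLevel_complex_thickening} gives both invariance and exponential decay under truncation with unoccupied leaves (a root with $\Delta$ children is handled by a separate Lipschitz step), and $\kappa$ comes from the same decomposition $Q=R_{H,u}(1+R')$ with $Q,R'\in\Omega$, together with compactness of $\phi^{-1}(\overline{D_\alpha})\subset\{\operatorname{Re}x>-1/2\}$. The differences are minor: you compute with the sequential-deletion recursion, which is exact once the relevant partition functions are nonzero, rather than extending Weitz's SAW-tree identity to complex activities via rational functions; and the shrinking to $\overline{D_{\alpha'}}$ in your Step~1 is unnecessary (for $\deg_H u\le\Delta-1$, $R_{H,u}\in\Omega$ already gives $|1+R_{H,u}|>1/2$), and in any case it would need more than compactness to justify, e.g.\ the maximum modulus principle in $\zeta$.
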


\begin{proof}
This is the algorithmic version of the zero-freeness argument in
Proposition~\ref{prop:zero_free_tilted_partition}.  We use the same complex
domain \(D_\alpha\) and \(\Omega=\phi^{-1}(D_\alpha)\), chosen with slack
\(\delta_\lambda\), and work on the compact subdisk \(|\zeta-1|\le\rho'<\rho\).

The derivative estimate
\eqref{eq:zeroFreeness_complex_twoLevel_jacobian_gap} gives a uniform two-level
contraction for the complex SAW-tree recursion on \(D_\alpha\).  Iterating this
contraction as in the proof of Theorem~\ref{thm:ssm}, and treating the root
separately because it may have \(\Delta\) children, shows that changing boundary
data below depth \(L\) changes the root ratio by at most \(Ce^{-cL}\), uniformly
for \(t\in[-t_0,t_0]\) and \(|\zeta-1|\le\rho'\).

For positive real activities, Weitz's SAW-tree identity identifies \(R_{H,u}\)
with the root ratio of \(T_{\mathrm{SAW}}(H,u)\), with the prescribed
cycle-closing leaves. Both sides are rational functions of the activities, and
Proposition~\ref{prop:zero_free_tilted_partition} gives the nonvanishing needed
for these rational functions to be defined for
\(t\in[-t_0,t_0]\) and \(|\zeta-1|\le\rho'\).
Hence the identity extends to these complex activities.  Truncating the SAW tree
at depth \(L=O_{\Delta,\lambda,\gamma,\rho'}(\log(1/\eta))\), while keeping the
prescribed cycle-closing leaves exact, gives an additive \(\eta\)-approximation
to \(R_{H,u}\) in time \(\poly(|V(H)|,1/\eta)\).

It remains only to note that \(1+R_{H,u}\) is uniformly bounded away from zero.
This is exactly the final separation estimate in the proof of
Proposition~\ref{prop:zero_free_tilted_partition}: if \(u\) has at most
\(\Delta-1\) neighbors in \(H\), then \(R_{H,u}\in\Omega\); if \(u\) has
\(\Delta\) neighbors, the same decomposition used there writes
\(Q=R_{H,u}(1+S)\) with \(Q,S\in\Omega\). Since
\(\rho'<\rho\), the compactness of the subdomain and the containment
\(\phi^{-1}(\overline{D_\alpha})\subset\{z:\operatorname{Re}z>-1/2\}\) give a
constant \(\kappa=\kappa(\Delta,\lambda,\gamma,\rho')>0\) such that
\(|1+R_{H,u}|\ge\kappa\) in both cases.
\end{proof}

\begin{definition}
\label{def:complex_relative_approximation}
For \(z_1,z_2\in\CC\), we say that \(z_1\) is a
\(\delta\)-additive, \(\epsilon\)-relative approximation to \(z_2\) if
\[
    z_1=re^{i\theta}z_2+z_3
    \quad\text{for some}\quad
    e^{-\epsilon}\le r\le e^\epsilon,\quad
    |\theta|\le\epsilon,\quad
    |z_3|\le\delta .
\]
When \(\delta=0\), we simply call \(z_1\) an
\(\epsilon\)-relative approximation to \(z_2\).
\end{definition}

\begin{lemma}
\label{lem:low_frequency_complex_partition_oracle}
Suppose \eqref{eq:bounded_imbalance_regime} holds. Let
\(\delta_\lambda:=(\lambda_c(\Delta)-\lambda)/2\), and let \(\rho\) be
the zero-free radius from Proposition~\ref{prop:zero_free_tilted_partition}, applied
with slack \(\delta_\lambda\) and compact tilt interval \([-t_0,t_0]\). For every
\(\rho'<\rho\), there is a deterministic algorithm which, given
\(t\in[-t_0,t_0]\), \(\zeta\in\CC\) with
\(|\zeta-1|\le\rho'\), and \(\epsilon\in(0,1)\), outputs an
\(\epsilon\)-relative approximation to the ratio
\[
    \frac{Z_G(\lambda;t,\zeta)}{Z_G(\lambda;t,1)}
\]
in time \(\poly(n,1/\epsilon)\), with constants depending on
\(\Delta,\lambda,\gamma,\rho'\).
\end{lemma}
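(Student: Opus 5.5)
We would prove Lemma~\ref{lem:low_frequency_complex_partition_oracle} by the complex analogue of the self-reduction telescoping product used in Lemma~\ref{lem:real_compact_tilt_partition_oracle}. Fix an ordering \(v_1,\dots,v_n\) of \(V(G)\) and set \(G_i:=G-\{v_1,\dots,v_{i-1}\}\), so \(G_1=G\) and \(G_{n+1}\) is empty with \(Z_{G_{n+1}}=1\). Iterating \eqref{eq:complex_self_reduction_one_step} with \(H=G_i\), \(u=v_i\) gives
\[
\frac{Z_G(\lambda;t,\zeta)}{Z_G(\lambda;t,1)}
=\prod_{i=1}^n\frac{1+R_{G_i,v_i}(\zeta)}{1+R_{G_i,v_i}(1)}.
\]
Every factor is well defined and nonzero, with \(|1+R_{G_i,v_i}(\zeta)|\ge\kappa\) for all \(|\zeta-1|\le\rho'\), including \(\zeta=1\), by Lemma~\ref{lem:complex_ssm_ratio_oracle}(i), since each \(G_i\) is an induced subgraph of \(G\). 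This nonvanishing is exactly what the zero-freeness of Proposition~\ref{prop:zero_free_tilted_partition} buys us.

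Next, for each \(i\) we call the oracle of Lemma~\ref{lem:complex_ssm_ratio_oracle}(ii) twice. We obtain \(\widehat R_{G_i,v_i}(\zeta)\) and \(\widehat R_{G_i,v_i}(1)\), each with additive error \(\eta:=\kappa\epsilon/(8n)\), in time \(\poly(n,1/\epsilon)\). Since \(|1+R|\ge\kappa\), the estimate \(1+\widehat R\) equals \((1+R)(1+w)\) with \(|w|\le\eta/\kappa=\epsilon/(8n)\le 1/8\). Writing \(1+w=e^{\log(1+w)}\) with the principal branch gives \(|\log(1+w)|\le 2|w|\le\epsilon/(4n)\). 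Each computed factor is therefore an \(\epsilon/(4n)\)-relative approximation in the sense of Definition~\ref{def:complex_relative_approximation}: both the modulus error \(|\operatorname{Re}\log(1+w)|\) and the argument error \(|\operatorname{Im}\log(1+w)|\) are at most \(\epsilon/(4n)\).

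We then take the quotient of the two approximations for each \(i\) and multiply over all \(i\). Relative errors of this form compose additively: the moduli multiply, so the log-modulus errors add, and the arguments add. There are \(2n\) factors, so the total log-modulus error and the total argument error are each at most \(2n\cdot\epsilon/(4n)=\epsilon/2\le\epsilon\). This yields an \(\epsilon\)-relative approximation with \(\delta=0\). The running time is \(2n\) oracle calls, each polynomial, with constants depending only on \(\Delta,\lambda,\gamma,\rho'\).

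Nearly all of the difficulty sits in Lemma~\ref{lem:complex_ssm_ratio_oracle}; given it, this proof is bookkeeping. The one point needing care is that an additive error on \(R\) becomes a relative error on \(1+R\) only because of the uniform lower bound \(\kappa\). The argument must also be controlled together with the modulus, since the approximation is complex. We handle both at once via the branch of \(\log(1+w)\) for \(|w|\le 1/8\), and avoid comparing arguments of \(Z\) itself, which has no canonical branch.
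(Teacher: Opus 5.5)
Your proposal is correct and follows the paper's proof essentially step for step. Both arguments write the ratio as the telescoping product of \(\frac{1+R_{G_i,v_i}(\zeta)}{1+R_{G_i,v_i}(1)}\) over a vertex ordering. Both approximate each ratio to additive accuracy proportional to \(\kappa\epsilon/n\) using Lemma~\ref{lem:complex_ssm_ratio_oracle}, convert this to a multiplicative factor \(1+w\) with \(|w|=O(\epsilon/n)\) via the lower bound \(|1+R|\ge\kappa\), and bound the sum of principal logarithms to get Definition~\ref{def:complex_relative_approximation}. The only difference is in the constants.
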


\begin{proof}
Since \(|\zeta-1|\le\rho'<\rho<1\), we have \(\zeta\neq0\). Fix an ordering
\(v_1,\dots,v_n\) of \(V(G)\), and let
\(G_i:=G-\{v_1,\dots,v_{i-1}\}\). Iterating the self-reduction
\eqref{eq:complex_self_reduction_one_step}, first at \(\zeta\) and then at
\(\zeta=1\), gives
\begin{equation}\label{eq:complex_partition_product_quotients}
    \frac{Z_G(\lambda;t,\zeta)}{Z_G(\lambda;t,1)}
    =
    \prod_{i=1}^n
    \frac{1+R_{G_i,v_i}(\zeta)}{1+R_{G_i,v_i}(1)}.
\end{equation}

Let \(\kappa>0\) be the constant from
Lemma~\ref{lem:complex_ssm_ratio_oracle}, chosen small enough so that
\(\kappa\le1\), and set \(\eta:=\epsilon/(20n)\). Use that lemma to
approximate each \(R_{G_i,v_i}(\zeta)\) and \(R_{G_i,v_i}(1)\) to additive
accuracy \(\kappa\eta\), and denote the approximations by
\(\widehat R_i(\zeta)\) and \(\widehat R_i(1)\). The lower bound
\[
|1+R_{G_i,v_i}(\zeta)|,\ |1+R_{G_i,v_i}(1)|\ge\kappa
\]
implies that, for some \(\delta_i,\delta_i'\in\CC\) with
\(|\delta_i|,|\delta_i'|\le\eta\),
\[
1+\widehat R_i(\zeta)
=(1+R_{G_i,v_i}(\zeta))(1+\delta_i),
\qquad
1+\widehat R_i(1)
=(1+R_{G_i,v_i}(1))(1+\delta_i').
\]
The algorithm's output is taken as
\[
    \widehat Q
    :=
    \prod_{i=1}^n
    \frac{1+\widehat R_i(\zeta)}{1+\widehat R_i(1)}.
\]
Combining the above estimates gives
\[
    \widehat{Q} \bigg/ \frac{Z_G(\lambda;t,\zeta)}{Z_G(\lambda;t,1)}
    =
    \prod_{i=1}^n \frac{1+\delta_i}{1+\delta_i'}
    =
    \exp(S),
    \qquad\text{where}\qquad 
    S:=\sum_{i=1}^n
    \operatorname{Log}\frac{1+\delta_i}{1+\delta_i'}.
\]
Here \(\operatorname{Log}\) denotes the principal logarithm. Since
\(|\operatorname{Log}(1+z)|\le2|z|\) whenever \(|z|\le1/2\), we have
\[
|S|\le4n\eta=\epsilon/5.
\]
Writing \(S=a+ib\), both \(|a|\) and \(|b|\) are at most
\(\epsilon/5\).
Thus \(e^S=e^ae^{ib}\) satisfies Definition~\ref{def:complex_relative_approximation},
so \(\widehat Q\) is an \(\epsilon\)-relative approximation. There are
\(2n\) calls to Lemma~\ref{lem:complex_ssm_ratio_oracle}, each with requested
accuracy \(\kappa\epsilon/(20n)\), so the total running time is polynomial
in \(n\) and \(1/\epsilon\).
\end{proof}

\begin{proof}[Proof of Proposition~\ref{prop:algorithmic_lclt_compact_tilts}]
Write
\[
    g(\theta)
    :=
    \frac{Z_G(\lambda;t,e^{i\theta/\sqrt n})}{Z_G(\lambda;t,1)}.
\]
By \eqref{eq:balance_zero_fourier_ratio},
\[
    \Pr_{\mu_{\lambda,t}}\{B(I)=0\}
    =
    \frac{1}{2\pi\sqrt n}
    \int_{-\pi\sqrt n}^{\pi\sqrt n}g(\theta)\,d\theta .
\]

By Theorem~\ref{thm:bounded_ratio_centered_tilt_acceptance} and the assumption
\(|m(t)|\le A\), there is a constant \(c_A=c_A(\Delta,\lambda,\gamma,A)>0\) such that
\begin{equation}\label{eq:algorithmic-lclt-probability-lower-bound}
\Pr_{\mu_{\lambda,t}}\{B(I)=0\}\ge\frac{c_A}{\sqrt n}.
\end{equation}
Therefore, to prove the relative error statement, it suffices to show that our approximation $\widehat{p}$ satisfies 
\begin{equation}\label{eq:algorithmic-lclt-probability_additive-goal}
|\widehat p-\Pr_{\mu_{\lambda,t}}\{B(I)=0\}|
\le
\frac{\epsilon c_A}{\sqrt n}.
\end{equation}
Set
\[
T:=C\sqrt{\log(2/\epsilon)}
\qquad\text{and}\qquad
\delta:=\frac{\pi\epsilon c_A}{2}.
\]
By the definition of \(\mu_{\lambda,t}\), we can also express $g$ as 
$
g(\theta)=\E_{\mu_{\lambda,t}}e^{i\theta B(I)/\sqrt n}.
$
Lemma~\ref{lem:high_fourier_control}, in the form
\eqref{eq:high_fourier_unstandardized}, gives
\(|g(\theta)|\le e^{-c\theta^2}\) for every
\(\theta\in[-\pi\sqrt n,\pi\sqrt n]\). Choosing
\(C=C(\Delta,\lambda,\gamma,A)\) sufficiently large therefore gives
\begin{equation}\label{eq:algorithmic-lclt-tail-bound}
\int_{T<|\theta|\le\pi\sqrt n}|g(\theta)|\,d\theta\le\delta.
\end{equation}

Fix \(0<\rho'<\rho\), where \(\rho\) is the zero-free radius from
Proposition~\ref{prop:zero_free_tilted_partition}. Choose fixed constants
\(c_0=c_0(\Delta,\lambda,\gamma,A)>0\) and
\(n_0=n_0(\Delta,\lambda,\gamma,A)\in\mathbb N\) such that
\[
C\sqrt{c_0+\frac{\log 2}{n_0}}\le\rho'.
\]
If \(n\le n_0\), direct enumeration examines at most \(2^{n_0}\) subsets and
therefore runs in polynomial time. If \(\epsilon<e^{-c_0n}\), direct
enumeration also runs in time polynomial in \(1/\epsilon\), since
$
2^n\le (1/\epsilon)^{(\log 2)/c_0}.
$
We may therefore assume that \(n>n_0\) and
\(\epsilon\ge e^{-c_0n}\). For \(|\theta|\le T\), these assumptions give
\[
|e^{i\theta/\sqrt n}-1|
\le\frac{|\theta|}{\sqrt n}
\le\frac{T}{\sqrt n}
\le C\sqrt{c_0+\frac{\log 2}{n}}
\le\rho'
\]
Thus Lemma~\ref{lem:low_frequency_complex_partition_oracle} applies throughout
\([-T,T]\). To replace the integral over this interval by a finite sum, note
that
\begin{equation}\label{eq:algorithmic-lclt-lipschitz-bound}
|g(\theta)|\le1,
\qquad\text{and}\qquad
|g'(\theta)|
\le\frac{\E_{\mu_{\lambda,t}}|B(I)|}{\sqrt n}
\le\sqrt n.
\end{equation}

Let
\[
M:=\left\lceil\frac{8T^2\sqrt n}{\delta}\right\rceil,
\qquad
h:=\frac{2T}{M},
\qquad
q:=\min\left\{\frac12,\frac{\delta}{16T}\right\},
\]
and partition \([-T,T]\) into the intervals
\([\theta_j,\theta_j+h]\), where \(\theta_j:=-T+jh\) for
\(j=0,\dots,M-1\). At each left endpoint \(\theta_j\), use
Lemma~\ref{lem:low_frequency_complex_partition_oracle} with accuracy \(q\) to
obtain \(\widehat g_j\), and define
\[
\widehat I:=h\sum_{j=0}^{M-1}\widehat g_j.
\]
By Definition~\ref{def:complex_relative_approximation},
\(\widehat g_j=r_je^{i\varphi_j}g(\theta_j)\), where
\(e^{-q}\le r_j\le e^q\) and \(|\varphi_j|\le q\). Since \(q\le1/2\) and
\(|g(\theta_j)|\le1\), this gives
\(|\widehat g_j-g(\theta_j)|\le4q\). Hence
\begin{align}
\left|\widehat I-\int_{-T}^Tg(\theta)\,d\theta\right|
&\le
\sum_{j=0}^{M-1}\int_{\theta_j}^{\theta_j+h}
|g(\theta)-g(\theta_j)|\,d\theta
+h\sum_{j=0}^{M-1}|\widehat g_j-g(\theta_j)| \nonumber\\
&\le T\sqrt n\,h+8Tq 
\le \frac{\delta}{4}+\frac{\delta}{2}
<\delta.
\label{eq:algorithmic-lclt-finite-sum-error}
\end{align}
We take as our final output
\[
    \widehat p:=\operatorname{Re}\left(\frac{\widehat I}{2\pi\sqrt n}\right).
\]
Then by \eqref{eq:algorithmic-lclt-tail-bound} and
\eqref{eq:algorithmic-lclt-finite-sum-error},
\(
|\widehat p-\Pr_{\mu_{\lambda,t}}\{B(I)=0\}|
\le 2\delta/(2\pi\sqrt n)
=\epsilon c_A/(2\sqrt n),
\)
which proves \eqref{eq:algorithmic-lclt-probability_additive-goal}.
Finally,
\[
M=O\!\left(\frac{T^2\sqrt n}{\epsilon}\right),
\qquad
\frac1q=O\!\left(1+\frac{T}{\epsilon}\right),
\]
where the constants may depend on \(\Delta,\lambda,\gamma,A\). Thus the number of
evaluations and the accuracy requested from
Lemma~\ref{lem:low_frequency_complex_partition_oracle} are polynomial in
\(n\) and \(1/\epsilon\), so the algorithm runs in polynomial time.
\end{proof}

\subsection{Assembling the FPTAS}

\begin{proof}[Proof of Proposition~\ref{prop:compact_tilt_fptas_balanced_partition}]
Fix \(\epsilon\in(0,1)\).
By Theorem~\ref{thm:bounded_ratio_centered_tilt_acceptance}, the centering tilt
lies in a compact interval \([-t_0,t_0]\). Use
Lemma~\ref{lem:deterministic_compact_tilt_centering} to find
\(\widetilde t\) with \(|m(\widetilde t)|\le 1\). Then use
Lemma~\ref{lem:real_compact_tilt_partition_oracle} and
Proposition~\ref{prop:algorithmic_lclt_compact_tilts}, each with accuracy
\(\epsilon/3\), to obtain
\(\widehat Z=(1+\eta_Z)Z_G(\lambda;\widetilde t)\) and
\(\widehat p=(1+\eta_p)\Pr_{\mu_{\lambda,\widetilde t}}\{B(I)=0\}\), where
\(|\eta_Z|,|\eta_p|\le\epsilon/3\). By
\eqref{eq:balanced_counting_identity}, we output \(\widehat Z\widehat p\); its
relative error is at most
\(|\eta_Z|+|\eta_p|+|\eta_Z\eta_p|
\le2\epsilon/3+\epsilon^2/9<\epsilon\).
\end{proof}

\section{\texorpdfstring{Hardness when \(\lambda > \lambda_c\)}{lambda above uniqueness}} 
\label{sec:hardness}

We prove the hardness part of Theorem~\ref{thm:comp_thresh} by adapting the phase-coexistence gadget framework introduced by Sly~\cite{Sly10} for the hard-core model above the tree uniqueness threshold and subsequently used in several hardness reductions for two-spin systems~\cite{sly2012computational,GalanisSV15,galanis2012improvedinapproximabilityresultscounting,carlson2022computational}. The basic building block is a random bipartite $\Delta$-regular \emph{gadget}, modified by deleting a small number of matching edges and attaching finite $(\Delta-1)$-ary trees in order to create terminal vertices. In the non-uniqueness regime $\lambda>\lambda_c(\Delta)$, the hard-core model on such a gadget has two dominant phases: in the (+)-phase the left side of the core is more heavily occupied, while in the ($-$)-phase the right side is more heavily occupied. Conditioned on either phase, the terminal occupations are approximately independent with phase-dependent marginals. 

In our proof, the role of the global balance constraint is similar to that of the fixed-magnetization in the Ising model in \cite{carlson2022computational}. 
Given an input graph $H$, we replace each vertex $x\in V(H)$ by a copy $G_x$ of the gadget. 
A phase vector $Y\in \{+,-\}^{V(H)}$ therefore assigns a phase to each gadget copy, and hence plays the role of a spin configuration on $H$. 
We then join terminal vertices of different gadget copies according to the edges of $H$, and also add isolated vertices to provide an independent source of fluctuation for the global balance variable. The first part of the proof analyzes the disjoint union $\widehat H_s^G$, before the inter-gadget edges are added. 
In this decoupled graph, the gadget copies are independent, and we prove 
concentration and point probability estimates for the global imbalance 
$B(I) = |I \cap L(\widehat H_s^G)| - |I \cap R(\widehat H_s^G)|.$ 
These estimates show that, after conditioning on exact balance $B(I) = 0$, phase vectors with unequal numbers of $(+)$ and $(-)$ gadgets are exponentially suppressed, while phase vectors with equal numbers of $(+)$ and $(-)$ gadgets contribute at scale at least $e^{-O(h\log n)}(nh)^{-1/2}$. This weaker lower bound is still sufficient because it creates only an $O(h\log n)$ error after taking logarithms.

The second part of the proof reintroduces the inter-gadget edges. Since the terminal spins are approximately independent under the phase-conditioned measures, the effect of the inter-gadget edges can be computed explicitly in terms of the phase vector $Y$. With our choice of terminal matchings, an edge of $H$ whose endpoints have equal phases receives a larger compatibility weight than an edge whose endpoints have opposite phases. Consequently, among the balanced phase vectors, the dominant contribution to the balanced partition function comes from those minimizing the number of cut edges in $H$. Thus the balanced hard-core partition function on $H_s^G$ encodes the minimum bisection value of $H$. Comparing $Z^{\mathrm{bal}}_{H_s^G}$ with the partition function $Z_{\widehat H_s^G}$ then allows us to recover the minimum bisection value from sufficiently accurate multiplicative approximations, giving the desired hardness reduction.

The remaining task is to estimate the partition function of the decoupled graph. We do this without using existing algorithms for random bipartite graphs, whose available guarantees in the present regime require $\Delta$ sufficiently large. Here, instead, the expected partition function of a single gadget can be computed explicitly, and small subgraph conditioning shows that a typical gadget differs from this expectation by at most a polynomial factor. This loss is small enough for the reduction.

We will use the following standard hardness fact.

\begin{fact}[\cite{garey1974some}]\label{fact:hardness_MINBisection}
The exact \textbf{MIN-BISECTION} problem is $\textbf{NP-hard}$: given a graph $H$ on an even number of vertices, it is $\textbf{NP-hard}$ to compute the minimum number of edges crossing a bisection of $V(H)$.
\end{fact}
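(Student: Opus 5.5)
We reduce from \textbf{MAX-CUT}, whose exact (decision) version is \textbf{NP}-hard by Karp's classical reduction (from NAE-3SAT or, equivalently, from \textbf{PARTITION} via Garey--Johnson--Stockmeyer~\cite{garey1974some}). We take this as the base hardness fact. The reduction to \textbf{MIN-BISECTION} has two polynomial-time steps. The first passes from arbitrary cuts to balanced cuts. The second passes from maximization to minimization by complementation.

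\emph{Step 1 (cuts to bisections).} Given a graph $G$ on $n$ vertices, let $G'$ be $G$ together with $n$ new isolated vertices, so $|V(G')|=2n$ is even. We show that the maximum bisection value of $G'$ equals $\mathrm{maxcut}(G)$. Given any cut $(S,V(G)\setminus S)$ of $G$, we add $n-|S|$ isolated vertices to $S$ and the remaining $|S|$ isolated vertices to the other side. This gives a bisection of $G'$ with $n$ vertices per side and the same number of crossing edges, since isolated vertices contribute no edges. Conversely, any bisection of $G'$ restricts to a cut of $G$ with the same crossing count.

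\emph{Step 2 (maximum to minimum bisection).} Let $H:=\overline{G'}$ be the complement of $G'$ on the same $2n$ vertices. For any bisection $(A,B)$ with $|A|=|B|=n$, every one of the $n^2$ pairs in $A\times B$ is an edge of exactly one of $G'$ and $H$. Hence
\[
e_H(A,B)=n^2-e_{G'}(A,B),
\]
and minimizing the left side over bisections is the same as maximizing $e_{G'}(A,B)$. Combined with Step 1, this gives $\mathrm{minbis}(H)=n^2-\mathrm{maxcut}(G)$. Both constructions are clearly polynomial in $n$. Therefore an algorithm computing the minimum bisection value of $H$ computes $\mathrm{maxcut}(G)$, which proves the Fact.

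The only real content is the base \textbf{NP}-hardness of \textbf{MAX-CUT}; the two steps above are bookkeeping. Step 2 is the one to get right: the identity $e_H(A,B)=n^2-e_{G'}(A,B)$ depends on both sides having exactly $n$ vertices. This is why the balancing in Step 1 must come before complementation.

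If one instead wants bounded-degree instances $H$, which the gadget reduction in this paper may need, complementation destroys degree bounds. In that case I would instead cite the known \textbf{NP}-hardness of \textbf{MIN-BISECTION} on bounded-degree graphs, or reduce through the standard degree-reduction gadgets for \textbf{MAX-CUT} on cubic graphs. Checking that those gadgets preserve the balance constraint is the main obstacle in that variant.
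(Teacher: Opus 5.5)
Your reduction is correct and is exactly the classical Garey--Johnson--Stockmeyer argument: pad a simple \textbf{MAX-CUT} instance with $n$ isolated vertices, then complement, so that $\mathrm{minbis}(H)=n^2-\mathrm{maxcut}(G)$. The paper invokes this argument only by citation. The one thing to fix is the attribution of the base fact: you need hardness of \emph{unweighted} \textbf{MAX-CUT}, which is due to Garey--Johnson--Stockmeyer, whereas Karp's reduction from \textbf{PARTITION} only gives the weighted version, to which your complementation step does not apply.

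Your closing bounded-degree concern is unnecessary here. The paper's gadget construction accepts arbitrary (even dense) $H$: it only uses $\deg_H(x)<h$ and $kh\le m$ to ensure each terminal receives at most one inter-gadget edge, which keeps $H_s^G$ in $\cG_{\Delta,\Delta}$.
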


The main technical result of this section is the following reduction. Theorem~\ref{thm:comp_thresh}, in the non-uniqueness regime, follows immediately from Theorem~\ref{thm:N_eta_approx_reduction} and Fact~\ref{fact:hardness_MINBisection}.

\begin{theorem}\label{thm:N_eta_approx_reduction}
Fix $\Delta \ge 3$ and $\lambda > \lambda_c(\Delta)$. There exists $\zeta > 0$ such that the following holds. If there is a randomized polynomial-time algorithm which, on every $N$-vertex input graph $F \in \mathcal{G}_{\Delta,\Delta}$, gives an $e^{N^\zeta}$-factor approximation to $Z^{\text{bal}}_F(\lambda)$, meaning that its output $\widehat Z$ satisfies
\[
    e^{-N^\zeta}Z^{\mathrm{bal}}_F(\lambda)
    \le \widehat Z\le
    e^{N^\zeta}Z^{\mathrm{bal}}_F(\lambda)
\]
with probability at least $2/3$, then \textbf{MIN-BISECTION} can be solved in randomized polynomial time. Moreover, if there is an efficient sampling scheme for $\mu^{\text{bal}}_{F,\lambda}$ on all $F \in \mathcal{G}_{\Delta,\Delta}$, then \textbf{MIN-BISECTION} can be solved in randomized polynomial time.
\end{theorem}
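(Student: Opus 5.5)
The reduction follows Sly's phase-coexistence gadget scheme, adapted to a global constraint as in \cite{carlson2022computational}. Given a MIN-BISECTION instance $H$ on $k$ vertices (even), we first pad $H$ so the gadgets can be built and the degree constraints hold. We take a bipartite gadget $G$ on $n=\poly(k)$ vertices, obtained from a random $\Delta$-regular bipartite graph by deleting a few matching edges and attaching $(\Delta-1)$-ary trees, which produces $h$ terminals per side. We form $\widehat H_s^G$ as the disjoint union of copies $G_x$, $x\in V(H)$, together with $s$ isolated vertices on each side. We then form $H_s^G$ by joining terminals of $G_x$ and $G_y$ for each $xy\in E(H)$, via a fixed matching pattern. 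The matching is chosen so that the bipartition is preserved, degrees stay at most $\Delta$, and $|L|=|R|$; this places the graph in $\mathcal G_{\Delta,\Delta}$.

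We decompose $Z^{\mathrm{bal}}_{H_s^G}$ by phase vectors $Y\in\{+,-\}^{V(H)}$. For a typical gadget (w.h.p.\ over the randomness), two facts hold.
\begin{itemize}
\item Non-uniqueness gives that each gadget's measure is concentrated, up to $e^{-\Omega(n)}$ error, on the $(+)$ and $(-)$ phases, with equal weight by the $L\leftrightarrow R$ symmetry of the construction.
\item Conditional on a phase, the terminal spins are approximately i.i.d.\ with marginals $q_\pm$ on the two sides.
\end{itemize}
In the decoupled graph the gadget imbalances are independent. Conditional on $Y$ with $k_+$ plus-phases, $B(I)$ has mean approximately $(k_+-k_-)\,n(\alpha_+-\alpha_-)/2$, and its fluctuations are Gaussian-like of order $\sqrt{kn+s}$.

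The first step is a Chernoff/concentration bound showing that when $k_+\neq k_-$, the probability of $B=0$ is $e^{-\Omega(n)}$. The second step is a point-probability lower bound when $k_+=k_-$, namely $\Pr[B=0\mid Y]\ge e^{-O(h\log n)}(nk)^{-1/2}$. For this I would use the isolated vertices as a smoothing source: they are binomial, so conditioning on the gadget imbalance and letting the isolated vertices absorb a window of $O(\sqrt{kn})$ gives a clean local bound. The third step reintroduces the edges. Each $H$-edge contributes a compatibility factor $w_{=}$ or $w_{\neq}$ according to whether its endpoints have equal phases, with $w_=>w_{\neq}$ by the choice of matching. Therefore
\[
Z^{\mathrm{bal}}_{H_s^G}\approx Z_{\widehat H^G_s}\,2^{-k}\sum_{Y\text{ balanced}}\Pr[B=0\mid Y]\;w_=^{|E|-\mathrm{cut}(Y)}w_{\neq}^{\mathrm{cut}(Y)},
\]
with multiplicative error $e^{O(h\log n)}$. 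Because $(w_=/w_{\neq})^{h}$ per cut edge exceeds this error (choose $h$ large relative to $\log n$ factors suitably, with $n$ polynomial), the cut value can be read off.

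To recover the bisection value we need $Z_{\widehat H_s^G}$, which factorizes as $Z_G^k(1+\lambda)^{2s}$. For this I would compute $\E Z_G$ explicitly via the configuration model and use small subgraph conditioning to show that $Z_G$ is within a polynomial factor of $\E Z_G$ w.h.p.; raising to the power $k$ costs only $O(k\log n)$ in the exponent, which is absorbed. Given an $e^{N^\zeta}$-approximation with $\zeta$ small, the minimum cut is then determined by rounding $\log(\widehat Z/Z_{\widehat H})$ divided by $h\log(w_=/w_{\neq})$. For sampling, the standard route is that the phase vector is read off from a sample (by comparing the side densities of each gadget), so samples concentrate on minimum bisections; we then output the best cut found, or reduce counting to sampling by self-reducibility.

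The main obstacle is the point-probability lower bound in the second step together with its uniformity over $Y$ under the coupled phase conditioning; in particular, making the error per cut edge dominate the $e^{O(h\log n)}$ losses from the terminal approximations and from $Z_G$'s fluctuation.
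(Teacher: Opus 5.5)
Your plan follows the paper's route step for step. The shared components are:
\begin{itemize}
\item Sly-type gadgets glued along $H$, with isolated vertices as a binomial smoothing source;
\item an exponential penalty when $D(Y)\neq0$, and a point-probability lower bound losing only $e^{-O(|V(H)|\log n)}$ when $D(Y)=0$;
\item the compatibility factor $\Gamma^{2k|E(H)|}(\Theta/\Gamma)^{2k\,\mathrm{cut}(Y)}$, where $k$ is the paper's matching size per edge of $H$;
\item an explicit first moment plus small subgraph conditioning to fix $\log Z_{\widehat H^G_s}$ within $O(|V(H)|\log n)$;
\item rounding, and reading phase vectors off samples.
\end{itemize}

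Two of your intermediate claims are false as stated, though. First, the $L\leftrightarrow R$ symmetry holds only in distribution over $G$. For the realized gadget $Z_{G,+}\neq Z_{G,-}$ in general, and polynomial concentration of $Z_G$ alone does not bound their ratio. The lower bound $Z_{\widehat H^G_s}(Y^\star)\ge e^{-O(|V(H)|\log n)}Z_{\widehat H^G_s}$ and the sampling step both need $Z_{G,+}/Z_{G,-}=n^{O(1)}$. The paper gets this by running small subgraph conditioning on each phase-restricted $Z_{G,\pm}$ and using $\mathbb{E}Z_{G,+}=\mathbb{E}Z_{G,-}$ (items (6)--(8) of Lemma~\ref{lem:all_prelim}). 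Accordingly, your factor $2^{-|V(H)|}$ should be $(Z_{G,+}Z_{G,-})^{|V(H)|/2}/Z_G^{|V(H)|}$; this is at least identical for every balanced $Y$, since all copies are the same graph.

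Second, the penalty for $D(Y)\ne0$ is not $e^{-\Omega(n)}$. Your own fluctuation scale $\sqrt{n|V(H)|+s}$ means a shift of order $n$ costs only $e^{-\Theta(n/|V(H)|)}$; this is what Lemma~\ref{lem:hardcore13} proves, by Chernoff at $t=\Theta(1/|V(H)|)$. That weaker penalty still suffices, but only once you impose $k|V(H)|^2=o(n/|V(H)|)$, so that the total compatibility factor and the $e^{O(|V(H)|\log n)}$ losses are negligible against it.

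The step you flag as the main obstacle is the genuinely missing piece. Your display factorizes $\Pr[B=0\mid Y]$ against the compatibility weight, which presumes that restricting to legal terminal patterns does not disturb the imbalance. The paper makes this rigorous by decomposing over the terminal pattern, $Z^{\mathrm{bal}}_{H^G_s}(Y)=\sum_{\tau}Z_{\widehat H^G_s}(Y,\tau)\,\mu_{\widehat H^G_s,Y,\tau}(B=0)\prod_{uv\in\mathcal E}(1-\tau_u\tau_v)$. It then bounds $\mu_{\widehat H^G_s,Y,\tau}(B=0)$ uniformly in $\tau$ (Lemma~\ref{lem:decoupled-imbalance}).
\begin{itemize}
\item The upper bound $O((n|V(H)|)^{-1/2})$ is free, because the isolated vertices are independent of everything.
\item The lower bound and the Chernoff bound need moment and MGF estimates for each core imbalance under $\mu_{G,\varsigma,\tau}$, uniformly in $\tau$ (Lemma~\ref{lem:moment_mgf_bounds_X}, Corollary~\ref{cor:moment_mgf_bounds_B}).
\item MGFs transfer by dividing by $\mu_{G,\varsigma}(\sigma_T=\tau)\ge e^{-O(|T|)}$, which is harmless since $|T|=o(t^2n)$.
\item Polynomial moments transfer by conditioning on the shared leaves $W$. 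Given $W$, the core and the appended trees are independent, and the terminal law is within $1\pm n^{-3\theta}$ of product outside a bad set of mass $e^{-n^{2\theta}}$.
\end{itemize}
These bounds pin each conditional core mean only to $O(\sqrt{n\log n})$. So for balanced $Y$ the summed mean is only $O(|V(H)|\sqrt{n\log n})$, not inside an $O(\sqrt{n|V(H)|})$ window, which is why smoothing with $s=\Theta(n|V(H)|)$ costs $e^{-O(|V(H)|\log n)}$.

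Two smaller points. Getting $w_=>w_{\neq}$ requires both matchings $T^L_x$--$T^R_y$ and $T^R_x$--$T^L_y$ per edge: same phase gives $(1-q_+q_-)^2$, opposite gives $(1-q_+^2)(1-q_-^2)$, smaller by $(q_+-q_-)^2$. With one orientation the weight is not a function of $\mathrm{cut}(Y)$. For sampling, keep the phase-readout argument; counting-to-sampling self-reducibility is not standard under a global balance constraint.
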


\subsection{The gadget graph and its properties}

We begin with the gadget construction used throughout the reduction. The gadget $G=G(\Delta,n,\theta,\psi)$, where $\theta,\psi$ are constants chosen later, is the random bipartite hard-core gadget introduced in~\cite{Sly10} and analyzed in its growing-terminal form in~\cite{galanis2016inapproximability}, with an asymptotically equivalent rounding of its size parameters; see also~\cite{carlson2022computational}. It is a balanced bipartite graph on $(2+o(1))n$ vertices, with all but $o(n)$ vertices of degree $\Delta$, and with $m=O(n^\theta)$ terminal vertices of degree $\Delta-1$ on each side. We now describe the construction and then record the one-gadget properties needed later in the reduction.

Fix $\Delta \ge 3$ and a fugacity $\lambda > \lambda_c(\Delta)$ in the non-uniqueness region of the
hard-core model on the infinite $\Delta$-regular tree. Fix constants
\[
\theta,\psi \in (0,1/8),
\qquad
\theta+\psi<1/4.
\]
For each $n$ define
\[
    d_n := 2\bigl\lfloor \tfrac{\psi}{2} \log_{\Delta-1} n \bigr\rfloor,
\qquad
L_n := (\Delta-1)^{d_n},
\qquad
m := \lfloor n^\theta \rfloor,
\qquad
m' := mL_n.
\]
Then $m'=\BigO(n^{\theta+\psi})=o(n)$. To construct the gadget $G=G(\Delta,n,\theta,\psi)$, let $G' = G'(\Delta,n,\theta,\psi)$
be a random bipartite graph with $n+m'$ vertices on each side, obtained by choosing $\Delta$
perfect matchings between the two sides uniformly at random and then deleting a uniformly random set of $m'$ edges from the
last matching. The probabilistic analysis uses this unconditioned matching
multigraph. Before it is used as an algorithmic instance, parallel copies of
an edge are suppressed. This preserves every independent set and its weight
and can only decrease degrees; the labels below refer to multigraph degrees.

Let $U$ be the set of vertices of degree $\Delta$ in $G'$, and let $W$ be the set of vertices
of degree $\Delta-1$ in $G'$. Then $|U|=2n,$ and $|W|=2m'$ and we can write the two sides of the bipartition of $G'$ as $ U^L \sqcup W^L$ and $U^R \sqcup W^R$,
so that $|U^L|=|U^R|=n$ and $|W^L|=|W^R|=m'$.

To form $G$ from $G'$, on each side partition the $m'$ vertices of degree $\Delta-1$ into $m$
disjoint groups of size $L_n$, and attach to each group the leaves of a $(\Delta-1)$-ary tree of depth
$d_n$. The roots of these added trees are the only vertices of degree $\Delta-1$ in the final
graph $G$; denote their union by $T$. Thus $T$ is the terminal set through which different
copies of the gadget will later be joined, and write $T=T^L\sqcup T^R$
according to the bipartition. (The even choice of $d_n$ places each root on
the same side as its leaves.) By construction, $|T|=2m=\BigO\left(n^\theta\right),$
and the total number of vertices of $G$ outside the $2n+2m'$ vertices of $G'$ is $\BigO(mL_n)=\BigO(n^{\theta+\psi})=o(n).$

We next fix notation for a single canonical gadget. When copies of the gadget are used later, the same notation will be decorated by the copy index. 
For a configuration $\sigma \in \{0,1\}^{V(G)}$,  set
\[
    X_L = \sum_{v \in U^L} \sigma_v, \quad
    X_R = \sum_{v \in U^R} \sigma_v,
\]
define the single-gadget imbalance on $G$ by
\begin{equation}\label{eq:def_gadget_balance}
     B_{\mathrm{gad}}(\sigma):= X_{L}(\sigma) - X_{R}(\sigma).
\end{equation}
 Let $\Sigma = \{+,-\}$ be the set of phases. Independently of the
configuration, let $C_G$ be a uniform random sign in $\Sigma$. We define the
phase $Y_G(\sigma,C_G)\in\Sigma$ by
\begin{equation} \label{eq:phase_of_gadget}
    Y_{G}(\sigma,C_G)=
    \begin{cases}
        +,& B_{\mathrm{gad}}(\sigma)>0, \quad \text{or}\quad B_{\mathrm{gad}}(\sigma)=0\ \text{and }C_G=+, \\
        -,& B_{\mathrm{gad}}(\sigma)<0, \quad \text{or}\quad B_{\mathrm{gad}}(\sigma)=0\ \text{and }C_G=-.
    \end{cases}
\end{equation}
Thus tied configurations are split equally between the two phases. Equivalently,
the phase-restricted partition functions are
\[
    Z_{G,\varsigma}:=\frac12\sum_{c\in\Sigma}
    \sum_{I\in\cI_G:\,Y_G(I,c)=\varsigma}\lambda^{|I|},
    \qquad \varsigma\in\Sigma,
\]
so $Z_G=Z_{G,+}+Z_{G,-}$. In every phase-conditioned statement, we augment
$\mu_G$ by the independent fair sign $C_G$ and then condition on $Y_G$; the
marginal on independent sets remains the ordinary hard-core measure. We write
the induced measures as $\mu_{G,+}$ and $\mu_{G,-}$.

Finally, for $\alpha\in\{0,\frac1n,\dots,1\}$ and
$\varsigma\in\Sigma$, define the phase-restricted slice partition function by
\[
    Z_{G,\alpha,\varsigma}:=\frac12\sum_{c\in\Sigma}
    \sum_{\substack{I\in\cI_G:\,X_L(I)=\alpha n,\\
    Y_G(I,c)=\varsigma}}\lambda^{|I|}.
\]

Let \(\lambda>\lambda_c(\Delta)\). Let \(\mu_+\) and \(\mu_-\) denote the two semi-translation-invariant Gibbs measures of the hard-core model on the infinite \(\Delta\)-regular tree \(T_\Delta\), obtained as weak limits with occupied boundary conditions on even and odd levels, respectively. Then $\alpha_{\pm}$ denote the marginal occupation probability of the root under $\mu_{\pm}$.
Conditioned on the (+)-phase, the empirical occupation densities on the two sides of the gadget concentrate around \((\alpha_+,\alpha_-)\); conditioned on the $(-)$-phase, they concentrate around \((\alpha_-,\alpha_+)\).

The occupation marginals of the terminal vertices are described by a different pair of parameters. Let \(q_+\) and \(q_-\) denote the occupation
probabilities at the root of the rooted \((\Delta-1)\)-ary tree as the same weak limit measure above. They satisfy
\[
    q_+ = \frac{\alpha_+}{1-\alpha_-},
\qquad
    q_- = \frac{\alpha_-}{1-\alpha_+}.
\]
We define the product measures $Q^{+}_{T}$ (respectively $Q^-_{T}$) on configurations $\sigma_{T}$ on $T$ so that the spins are i.i.d.\ Bernoulli with probability $q_+$ (resp.\ $q_-$) on $T^L$ and $q_-$ (resp.\ $q_+$) on $T^R$. 
That is,
\[
    Q_T^\pm(\sigma_T)
    :=
    (q_\pm)^{\sum_{v\in T^L}\sigma_v}
    (1-q_\pm)^{m-\sum_{v\in T^L}\sigma_v}
    \cdot
    (q_\mp)^{\sum_{v\in T^R}\sigma_v}
    (1-q_\mp)^{m-\sum_{v\in T^R}\sigma_v}.
\]

The following lemma collects previous results about the quantities defined above on the gadget, most notably the near independence of the terminal spins conditioned on a phase.

\begin{lemma}\label{lem:all_prelim}
Let $G$ be the random graph described above with parameters $\theta, \psi \in (0, 1/8)$.
There exist constants $c,C>0$ such that the following holds for all sufficiently large $n$.
\[
\frac{\mathbb{E}Z_{G,\alpha,+}}{\mathbb{E}Z_{G,+}}
\le
\frac{C}{\sqrt n}\exp\!\bigl(-c n(\alpha-\alpha_+)^2\bigr),
\qquad
\alpha\in\Bigl\{0,\frac1n,\dots,1\Bigr\},
\tag{1}
\]
and
\[
\frac{\mathbb{E}Z_{G,\alpha,-}}{\mathbb{E}Z_{G,-}}
\le
\frac{C}{\sqrt n}\exp\!\bigl(-c n(\alpha-\alpha_-)^2\bigr),
\qquad
\alpha\in\Bigl\{0,\frac1n,\dots,1\Bigr\}.
\tag{2}
\]

 Moreover, there exist choices of constants $\theta, \psi \in (0, 1/8)$ and $C'>0$ such that all of the following hold simultaneously for both signs $\varsigma\in\{+,-\}$ with probability at least $99/100$ over the choice of $G$:

\begin{enumerate}[label=\textup{(\roman*)}]
    \item For every $\tau\in\{0,1\}^{T}$,
    \[
        \left|
        \frac{\mu_{G,\varsigma}(\sigma_{T}=\tau)}{Q^\varsigma_{T}(\tau)}-1
        \right|
        \le n^{-2\theta}.
        \tag{3}
    \]

    \item There exists a set $B_\varsigma\subseteq \{0,1\}^{W}$ such that
    \[
        \mu_{G,\varsigma}(\sigma_{W}\in B_\varsigma)\le \exp(-n^{2\theta}),
        \tag{4}
    \]
    and for every $\eta\in \{0,1\}^{W}\setminus B_\varsigma$ and every $\tau\in\{0,1\}^{T}$,
    \[
        \left|
        \frac{\mu_{G,\varsigma}(\sigma_{T}=\tau\mid \sigma_{W}=\eta)}{Q^\varsigma_{T}(\tau)}-1
        \right|
        \le n^{-3\theta}.
        \tag{5}
\]

    \item We have the bound  \[
        n^{-C'}\,\mathbb{E}Z_{G,\varsigma}
        \le
        Z_{G,\varsigma}
        \le
        n^{C'}\,\mathbb{E}Z_{G,\varsigma}.
        \tag{6}
        \]
\end{enumerate}
Moreover,
\[
    \mathbb{E}Z_{G,+}=\mathbb{E}Z_{G,-}.
    \tag{7}
\]
Consequently, on the event in \textup{(6)},
\[
    n^{-2C'} \le \frac{Z_{G,+}}{Z_{G,-}}
    \le n^{2C'}.
    \tag{8}
\]
\end{lemma}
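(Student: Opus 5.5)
The lemma collects known one-gadget facts, so the plan is to import each item from the phase-coexistence literature and check that our conventions do not break the imported arguments. The conventions to check are the tie-splitting phase definition via the fair sign $C_G$, the slice variable $X_L$ on $U^L$ only, and the rounding of $d_n, L_n, m$.

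\emph{First moment and Gaussian slice bounds (1),(2),(7).} I would compute $\mathbb{E}Z_{G,\alpha,\beta}$, the expected weight of independent sets with $\alpha n$ occupied vertices in $U^L$ and $\beta n$ in $U^R$, using the configuration model on $\Delta$ random matchings. The output is $\exp(n\Phi(\alpha,\beta)+O(n^{\theta+\psi}))$, where the $O(n^{\theta+\psi})$ term accounts for the deleted matching edges and the attached trees. Here $\Phi$ is Sly's bipartite first-moment functional. Its global maxima for $\lambda>\lambda_c(\Delta)$ are exactly $(\alpha_+,\alpha_-)$ and $(\alpha_-,\alpha_+)$, and both Hessians are nondegenerate \cite{Sly10,galanis2016inapproximability}. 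Restricting to the half-plane $\alpha>\beta$ (resp. $\alpha<\beta$) isolates one maximum. A Laplace-type summation over $\beta$ then gives $\mathbb{E}Z_{G,\alpha,+}\asymp n^{-1/2}\exp(n\Phi^*-c n(\alpha-\alpha_+)^2)\cdot(\text{common factor})$. Summing over $\alpha$ gives $\mathbb{E}Z_{G,+}$ at the same scale but without the $n^{-1/2}$, and dividing the two yields (1) and (2). For uniformity over all $\alpha$, including those far from $\alpha_\pm$, I would rely on strict concavity near the maxima together with a uniform gap $\Phi\le\Phi^*-c'$ away from them. Tie configurations with $B_{\mathrm{gad}}=0$ lie on the diagonal, which is exponentially suppressed, so the $C_G$ convention does not affect these bounds. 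Identity (7) is exact rather than asymptotic: the distribution of $G$ is invariant under swapping $L$ and $R$, since the matchings are uniform, the edge deletion is symmetric, and the tree attachment is the same on both sides. This swap exchanges the $+$ and $-$ phases, and tie-splitting by a fair coin preserves the symmetry.

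\emph{Terminal near-independence (3)--(5).} I would cite \cite{Sly10} and its growing-terminal version \cite[Section on gadget properties]{galanis2016inapproximability}, which prove these with error $n^{-2\theta}$ or smaller for suitable choices of $\theta$ and $\psi$ with $\theta+\psi<1/4$. The mechanism is the following. Conditioned on the phase and on a typical configuration $\eta$ on $W$, the depth-$d_n$ trees attached to distinct groups are disjoint. Tree recursions from the leaves contract in non-uniqueness along the semi-translation-invariant fixed point, at a rate giving error $L_n^{-\Omega(1)}=n^{-\Omega(\psi)}$ per terminal. Choosing $\theta$ small relative to $\psi$ makes the product over the $2m$ terminals within $1\pm n^{-3\theta}$. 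The bad set $B_\varsigma$ consists of $W$-configurations whose group-wise densities deviate from $\alpha_\pm$. Its probability bound (4) comes from concentration of $\sigma_W$ under $\mu_{G,\varsigma}$, which in turn follows from second-moment control combined with a union bound over groups. Averaging (5) over $\eta$ and using (4) gives (3).

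\emph{Concentration (6) and the consequence (8).} This step is the main obstacle, because we want it for every fixed $\Delta\ge3$ rather than only for large $\Delta$. I would apply small subgraph conditioning. The plan is to compute $\mathbb{E}Z_{G,\varsigma}^2/(\mathbb{E}Z_{G,\varsigma})^2\to\exp(\sum_k\mu_k\delta_k^2)$, where $\mu_k$ is the expected number of $2k$-cycles and $\delta_k$ its effect on the phase partition function. By Janson's theorem, this gives $Z_{G,\varsigma}/\mathbb{E}Z_{G,\varsigma}$ converging to a positive limit $W$ with $\mathbb{E}W=1$. In particular, the ratio lies in $[n^{-C'},n^{C'}]$ with probability at least $1-o(1)$; even constant bounds suffice. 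The second-moment maximization here is exactly the one \cite{Sly10,galanis2016inapproximability} verified for all $\Delta\ge3$ in the phase-restricted setting. The $o(n)$ modifications change the ratio by at most $e^{o(n)}$ deterministically, which is too crude, so I would redo the cycle-count computation with the perturbed degree sequence and treat the added trees via the exact tree recursion, in the style of \cite{carlson2022computational}. Finally, I would intersect the four events (3), (4)--(5), and (6) for both signs, each holding with probability at least $1-1/800$, to get probability $99/100$. Dividing (6) for the two signs and using (7) gives (8) directly.
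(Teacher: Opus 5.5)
Your overall route matches the paper's. Items (1)--(6) are imported from the Sly/Galanis--\v{S}tefankovi\v{c}--Vigoda gadget analysis, (7) is the left--right reflection with the tie sign flipped, and (8) follows from (6) and (7). Your derivation of (3) from (4)--(5) is also correct, since $e^{-n^{2\theta}}$ is negligible against $Q^\varsigma_T(\tau)\ge e^{-O(n^\theta)}$.

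\textbf{Gap in (1)--(2).} You write the slice first moment as $\exp\bigl(n\Phi(\alpha,\beta)+O(n^{\theta+\psi})\bigr)$ and then treat the boundary correction as a ``common factor''. It is not common. It comes from the $m'$ deleted edges and from summing the occupations of the $2m'$ degree-$(\Delta-1)$ vertices against the tree weights. That sum shifts the arguments of the matching-avoidance probabilities by up to $m'$, and the size of the effect depends on $(\alpha,\beta)$.

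With only an unstructured $e^{\pm Cn^{\theta+\psi}}$ error, what you actually obtain for (1) is $\frac{C}{\sqrt n}e^{Cn^{\theta+\psi}-cn(\alpha-\alpha_+)^2}$. This is vacuous throughout $|\alpha-\alpha_+|\lesssim n^{(\theta+\psi-1)/2}$. The $O(\sqrt{n\log n})$ moment bounds later derived from (1)--(2) live inside that window, so polynomial precision is essential here.

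The missing step is that the correction equals an $\alpha$-independent term plus $O\bigl(m'|\alpha-\alpha_\varsigma|+(m')^2/n\bigr)$. The paper takes this from the growing-boundary first-moment formula of \cite[Lemma 20 and Section 7.2.1]{galanis2016inapproximability}. Completing the square then gives $-cnx^2+Cm'x\le-\tfrac c2nx^2+O\bigl((m')^2/n\bigr)$. Since $m'=O(n^{\theta+\psi})=o(n^{1/4})$, the last term is $o(1)$. This is where the hypothesis $\theta+\psi<1/4$ enters, and your sketch never uses it.

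\textbf{Smaller points.}
\begin{enumerate}
\item As you describe it, your mechanism for (4) cannot work. Second-moment (Chebyshev-type) control plus a union bound over $m$ groups gives only polynomial tails. The bound $e^{-n^{2\theta}}$ needs a first-moment large-deviation estimate for bad $W$-patterns, transferred to a typical $G$ via (6) and Markov, as in the cited gadget lemmas.
\item \cite{Sly10} verifies the first- and second-moment maximization only for $\lambda$ slightly above $\lambda_c(\Delta)$. To cover every $\lambda>\lambda_c(\Delta)$ you need \cite{GalanisSV15}, which is what the paper invokes.
\item You do not need to redo small subgraph conditioning for the tree-augmented gadget. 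It is already in \cite[Lemmas 19 and 23]{galanis2016inapproximability}, whose polynomial bounds suffice for (6).
\end{enumerate}
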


\begin{proof}
The phase restricted Gaussian slice estimates \textup{(1)}--\textup{(2)}
follow from the strict maximality and Laplace calculation in
\cite[Proof of Lemma B.3]{GalanisSV15}.  The first moment
formula with growing boundary in \cite[Lemma 20 and Section 7.2.1]{galanis2016inapproximability}
adds only $O(m'|\alpha-\alpha_\varsigma|+(m')^2/n)$ to the exponent, which is
absorbed by completing the square since $m'=o(n^{1/4})$.  Summing the resulting
uniform boundary conditioned estimate against the nonnegative tree weights
gives \textup{(1)}--\textup{(2)}.

Parts \textup{(i)}--\textup{(iii)} are the growing-tree gadget conclusions
proved in \cite[Lemmas 19 and 23]{galanis2016inapproximability};
in particular, that proof already transfers the small-subgraph estimate to
the tree-augmented gadget by summing against the appended tree weights.  Its
displayed lower and upper bounds are polynomial in $n$, which gives
\textup{(6)} after enlarging $C'$.  Applicability throughout the hard-core
non-uniqueness region is supplied by
\cite[Theorem 1.4, Lemma 3.2, and the discussion following Theorem 1.5]{GalanisSV15}.
The cited proof uses a fixed tie convention.  The strict gap between the two
dominant phases and the diagonal sector makes the tie contribution
$e^{-\Omega(n)}$ relative to either phase, so splitting ties equally does not
affect these conclusions.

For \textup{(7)}, let $\widetilde G$ be the reflected version of the random
gadget obtained by swapping the left and right sides. Couple the auxiliary
signs by $C_{\widetilde G}=\overline{C_G}$. Reflection negates
$B_{\mathrm{gad}}$, and the coupled sign swaps the phase on a tie. Thus
$\widetilde G$ has the same law as $G$, while the weight-preserving bijection
$(I,C_G)\mapsto(\widetilde I,\overline{C_G})$ gives
\[
Z_{\widetilde G,+}=Z_{G,-}.
\]
Taking expectations gives
\[
\mathbb{E}Z_{G,+}=\mathbb{E}Z_{\widetilde G,+}=\mathbb{E}Z_{G,-}.
\]
Finally, increase $n_0$ if necessary to obtain overall probability at least $99/100$. The ratio bound \textup{(8)} follows immediately from \textup{(6)} and \textup{(7)}.
\end{proof}

Now, let $\mu_{G,\varsigma}$ denote the hard-core measure on $G$ conditioned on $\set{Y_G=\varsigma}$ and let $\langle\cdot\rangle_{G,\varsigma}$ be the corresponding expectation.
Similarly let $\mu_{G,\varsigma,\tau}$ and $\langle\cdot\rangle_{G,\varsigma,\tau}$ denote the hard-core measure and the expectation on $G$ conditioned on $\set{Y_G=\varsigma,\ \sigma_{T}=\tau}.$

Our next step is to upgrade these one-gadget structural facts to quantitative probabilistic
control that is uniform in the terminal condition \(\tau\). In particular, we will need
bounds on moments and exponential moments of the occupation variable under
\(\mu_{G,\varsigma,\tau}\). 

\begin{lemma}\label{lem:moment_mgf_bounds_X}
    There exists a constant $C_0<\infty$ with the following property. If $t=t(n)>0$ satisfies
    \begin{equation}
        |T|=o(t^2 n),
        \label{eq:t-condition}
    \end{equation}
    Then for all sufficiently large $n$, with probability at least $19/20$ over the choice of the random gadget $G$, the following statements hold simultaneously for every
    $\tau\in\{0,1\}^{T}$ and both signs $\varsigma \in \{+,-\}$.

    Then,  writing $X = X_L$
    \begin{align}
        \left\langle \abs{X-\alpha_\varsigma n}\right\rangle_{G,\varsigma,\tau}
        &=\BigO(\sqrt{n\log n}), \label{eq:lem12-first}\\
        \left\langle \abs{X-\alpha_\varsigma n}^{2}\right\rangle_{G,\varsigma,\tau}
        &=\BigO(n\log n), \label{eq:lem12-second}\\
        \left\langle \abs{X-\alpha_\varsigma n}^{3}\right\rangle_{G,\varsigma,\tau}
        &=\BigO((n\log n)^{3/2}), \label{eq:lem12-third}
    \end{align}
    and with the same constant $C_0$
    \begin{align}
        \left\langle e^{t(X-\alpha_\varsigma n)}\right\rangle_{G,\varsigma,\tau}
        &\le \exp(C_0 t^2 n), \label{eq:lem12-mgf-plus}\\
        \left\langle e^{t(\alpha_\varsigma n-X)}\right\rangle_{G,\varsigma,\tau}
        &\le \exp(C_0 t^2 n). \label{eq:lem12-mgf-minus}
    \end{align}
\end{lemma}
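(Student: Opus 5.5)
The plan is to turn the slice estimates (1)--(2) of Lemma~\ref{lem:all_prelim}, which concern expectations over $G$, into a quenched point-probability tail for $X=X_L$ under $\mu_{G,\varsigma}$. Then I transfer it to $\mu_{G,\varsigma,\tau}$ in two ways: a crude way, which costs a factor $e^{O(|T|)}$ and is used for the exponential moments, and a refined way via (5), which costs only a factor $1+o(1)$ and is used for the polynomial moments.

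\emph{Step 1 (quenched tail).} For each $\alpha\in\{0,\frac1n,\dots,1\}$ and each $\varsigma$, apply Markov's inequality to $Z_{G,\alpha,\varsigma}$ with threshold $n^{3}\,\mathbb E Z_{G,\alpha,\varsigma}$, and union bound over the $2(n+1)$ pairs $(\alpha,\varsigma)$. This failure probability is $O(n^{-2})$. Intersect with the event of (6), so that $Z_{G,\varsigma}\ge n^{-C'}\mathbb E Z_{G,\varsigma}$, and with the events of (3)--(5); the total failure probability is at most $1/20$ for large $n$. On this event, (1)--(2) give, for every $\alpha$,
\[
\mu_{G,\varsigma}(X=\alpha n)=\frac{Z_{G,\alpha,\varsigma}}{Z_{G,\varsigma}}\le n^{C_1}\exp\bigl(-cn(\alpha-\alpha_\varsigma)^2\bigr).
\]

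\emph{Step 2 (exponential moments, crude transfer).} Since $q_\pm\in(0,1)$ are constants, $Q^\varsigma_T(\tau)\ge e^{-C_2|T|}$, and (3) gives $\mu_{G,\varsigma}(\sigma_T=\tau)\ge\frac12 e^{-C_2|T|}$. Hence
\[
\mu_{G,\varsigma,\tau}(X=\alpha n)\le 2e^{C_2|T|}\,n^{C_1}e^{-cn(\alpha-\alpha_\varsigma)^2}.
\]
Summing $e^{ts}$ against this Gaussian tail over $s=\alpha n-\alpha_\varsigma n$ (there are at most $n+1$ terms, each at most $e^{ts-cs^2/n}\le e^{t^2n/(4c)}$) gives
\[
\bigl\langle e^{\pm t(X-\alpha_\varsigma n)}\bigr\rangle_{G,\varsigma,\tau}\le 2(n+1)\,n^{C_1}\,e^{C_2|T|}\,e^{t^2n/(4c)}.
\]
Hypothesis \eqref{eq:t-condition}, $|T|=o(t^2n)$, does the absorbing. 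Since $|T|=2m\asymp n^\theta\gg\log n$, it also gives $t^2n\gg\log n$, so the prefactor $e^{C_2|T|+O(\log n)}$ is at most $e^{t^2n}$ for large $n$. This yields \eqref{eq:lem12-mgf-plus}--\eqref{eq:lem12-mgf-minus} with $C_0=1+1/(4c)$. This is the only place the condition on $t$ is used.

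\emph{Step 3 (polynomial moments, refined transfer).} The crude factor $e^{C_2n^\theta}$ would ruin the $\sqrt{n\log n}$ scale, so here I use (4)--(5). $X$ depends only on the core $U^L$, and the trees meet the core only through $W$. So by the Markov property, $X$ and $\sigma_T$ are conditionally independent given $\sigma_W$ (and $Y_G$; the phase is essentially a core statistic, with ties of weight $e^{-\Omega(n)}$). Writing $\mu=\mu_{G,\varsigma}$, Bayes gives
\[
\mu(A\mid\tau)=\sum_\eta\mu(A\mid\eta)\,\mu(\eta)\,\frac{\mu(\tau\mid\eta)}{\mu(\tau)}.
\]
For $\eta\notin B_\varsigma$, (3) and (5) make the last ratio $1+O(n^{-2\theta})$. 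The $\eta\in B_\varsigma$ contribute at most $\mu(B_\varsigma)/\mu(\tau)\le 2e^{-n^{2\theta}+C_2n^\theta}$. Therefore
\[
\mu_{G,\varsigma,\tau}(A)\le(1+o(1))\,\mu_{G,\varsigma}(A)+e^{-n^{2\theta}/2}.
\]
Taking $A=\{|X-\alpha_\varsigma n|\ge s\}$ with the Step~1 tail gives the tail bound $n^{C_1+1}e^{-cs^2/n}+e^{-n^{2\theta}/2}$. Since $|X-\alpha_\varsigma n|\le n$, for $k=1,2,3$ I split at $s_0=K\sqrt{n\log n}$: the bulk contributes $s_0^k$, the Gaussian tail beyond $s_0$ is polynomially small for $K$ large, and the additive term contributes at most $n^ke^{-n^{2\theta}/2}=o(1)$. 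This yields \eqref{eq:lem12-first}--\eqref{eq:lem12-third} uniformly in $\tau$.

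The main obstacle is Step~3: justifying the conditional independence of $X$ and $\sigma_T$ given $\sigma_W$ in the presence of the phase conditioning, and making sure the good event $B_\varsigma^c$ from (4)--(5) is used with the correct Bayes inversion. A second delicate point is that the Markov and union bound argument of Step~1 must hold simultaneously for all $\alpha$ on the same $19/20$ event on which Lemma~\ref{lem:all_prelim} holds.
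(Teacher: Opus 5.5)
Your proposal is correct and follows essentially the same route as the paper. That route is: Markov plus a union bound over the $n+1$ slices, combined with (1)--(2) and (6), for a quenched Gaussian point-probability bound; a crude $e^{O(|T|)}$ transfer to terminal conditioning, absorbed by $|T|=o(t^2n)$, for the exponential moments; and a refined transfer for the polynomial moments, using the bad set $B_\varsigma$, (3), (5), and conditional independence of $X$ and $\sigma_T$ given $\sigma_W$.

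The differences are cosmetic. You transfer point and tail probabilities rather than the moments themselves. Your hedge about ties is unnecessary, since $Y_G$ is exactly a function of $\sigma_U$ and the independent sign $C_G$.
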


\begin{proof}
Fix $t=t(n)>0$ satisfying \eqref{eq:t-condition}. We first prove the bounds under
$\mu_{G,\varsigma}$ and then transfer them to $\mu_{G,\varsigma,\tau}$. 
By Markov's inequality and a union bound over both signs and all $n+1$
density slices, with probability $1-o(1)$,
\[
    Z_{G,\alpha,\varsigma}
    \le n^3\mathbb E Z_{G,\alpha,\varsigma}
    \qquad\text{for every $\alpha$ and $\varsigma$.}
\]
Intersecting this event with the event from Lemma~\ref{lem:all_prelim} has
probability greater than $19/20$ for all sufficiently large $n$. Fix a graph
$G$ in this event.
By \textup{(1)}, \textup{(2)}, and \textup{(6)}, there are constants
$K,C,c>0$ such that, for either sign and
$\ell=\alpha n-\alpha_\varsigma n$,
\[
    \mu_{G,\varsigma}(X=\alpha n)
    \le Cn^K e^{-c\ell^2/n}.
\]
Splitting at $|\ell|=A\sqrt{n\log n}$, with $A$ sufficiently large, and
summing the Gaussian tail gives
\[
    \left\langle |X-\alpha_\varsigma n|^k\right\rangle_{G,\varsigma}
    =O((n\log n)^{k/2}),
    \qquad k=1,2,3.
\]
Completing the square in the same bound gives
\begin{align*}
\left\langle e^{t(X-\alpha_\varsigma n)}\right\rangle_{G,\varsigma}
&\le Cn^K\sum_{\ell\in\mathbb Z}e^{t\ell-c\ell^2/n}\\
&\le Cn^{K+1/2}\exp\!\left(\frac{t^2n}{4c}\right)
\le \exp\!\bigl(\BigO(t^2n)\bigr),
\end{align*}
where the last inequality uses $\log n=o(t^2n)$, which follows from
$|T|=\Theta(n^\theta)=o(t^2n)$.
Thus
\[
\left\langle e^{t(X-\alpha_\varsigma n)}\right\rangle_{G,\varsigma}
\le \exp(C_1 t^2n)
\]
for some constant $C_1<\infty$ independent of $n$, $t$, and $\varsigma$. The same argument with
$\xi(\ell)=e^{-t\ell}$ gives
\[
\left\langle e^{t(\alpha_\varsigma n-X)}\right\rangle_{G,\varsigma}
\le \exp(C_1 t^2n).
\]

We will now transfer to terminal conditioning for the exponential moments. Fix $\varsigma\in\{+,-\}$ and $\tau\in\{0,1\}^{T}$. By Lemma
\ref{lem:all_prelim},
\[
    \mu_{G,\varsigma}(\sigma_{T}=\tau)
=
    Q^\varsigma_{T}(\tau)\bigl(1+\BigO(n^{-2\theta})\bigr).
\]
Since $Q^\varsigma_{T}$ is a product measure with one-site marginals bounded away from $0$ and $1$, there exists $b<\infty$ such that for all $\varsigma,\tau$, $ Q^\varsigma_{T}(\tau)\ge e^{-b|T|} $.
Hence for all sufficiently large $n$, $
\mu_{G,\varsigma}(\sigma_{T}=\tau)^{-1}\le 2e^{b|T|}.$
Therefore
\begin{align*}
    \left\langle e^{t(X-\alpha_\varsigma n)}\right\rangle_{G,\varsigma,\tau}
    &=
    \frac{\left\langle e^{t(X-\alpha_\varsigma n)}\1_{\{\sigma_{T}=\tau\}}\right\rangle_{G,\varsigma}}{\mu_{G,\varsigma}(\sigma_{T}=\tau)}
    \le
    2e^{b|T|} \left\langle e^{t(X-\alpha_\varsigma n)}\right\rangle_{G,\varsigma}\\
    &\le
    \exp\bigl(C_1 t^2n+b|T|+O(1)\bigr) 
    \le
    \exp((C_1+1)t^2n).
\end{align*}

The last inequality holds since $|T|=o(t^2n)$ and this implies that \eqref{eq:lem12-mgf-plus}--\eqref{eq:lem12-mgf-minus} hold with
$C_0:=C_1+1.$

Lastly, we transfer to terminal conditioning for the polynomial moments. Fix $k\in\{1,2,3\}$ and write $B_\varsigma$ for the bad set from Lemma~\ref{lem:all_prelim}\textnormal{(ii)}.
Decompose
\[
    \left\langle
    \abs{X-\alpha_\varsigma n}^k \1_{\{\sigma_{T}=\tau\}}
    \right\rangle_{G,\varsigma}
=
    A_{\mathrm{good}}+A_{\mathrm{bad}},
\]
according to whether $\sigma_{W}\notin B_\varsigma$ ($A_{\mathrm{good}}$) or $\sigma_{W}\in B_\varsigma$ ($A_{\mathrm{bad}}$).

Since $0\le X\le n$,
\[
    A_{\mathrm{bad}}
    \le
    n^k\,\mu_{G,\varsigma}(\sigma_{W}\in B_\varsigma)
    \le
    n^k e^{-n^{2\theta}}.
\]
After division by $\mu_{G,\varsigma}(\sigma_{T}=\tau)\ge \tfrac12 e^{-b|T|}$ this remains negligible.

For the $A_{\mathrm{good}}$, we condition on $\sigma_{W}=\eta$ with $\eta\notin B_\varsigma$. Once
$\sigma_{W}$ is fixed, the hard-core Gibbs weight factorizes between the degree-$\Delta$ core
and the attached trees, because these subgraphs intersect only through $W$ and the spins on
$W$ have been frozen. The random variable $X$ and the event $\{Y=\varsigma\}$ are measurable
with respect to the core configuration and $C_G$, whereas the event $\{\sigma_{T}=\tau\}$ is measurable
with respect to the tree configurations. Hence under the conditioning $ \{Y=\varsigma,\sigma_{W}=\eta\}$, the variable $X$ is independent of $\{\sigma_{T}=\tau\}$. 
Therefore
\begin{align*}
    A_{\mathrm{good}}
    &=
    \sum_{\eta\notin B_\varsigma}
    \mu_{G,\varsigma}(\sigma_{W}=\eta)\,
    \mu_{G,\varsigma}(\sigma_{T}=\tau\mid \sigma_{W}=\eta)\,
    \E\!\left[
    \abs{X-\alpha_\varsigma n}^{k}
    \,\middle|\,
    Y=\varsigma,\ \sigma_{W}=\eta
    \right] \nonumber \\
    &= 
    \sum_{\eta\notin B_\varsigma}
    \mu_{G,\varsigma}(\sigma_{W}=\eta)\,
    Q^\varsigma_{T}(\tau)\bigl(1+\BigO(n^{-3\theta})\bigr)\,
    \E\!\left[
    \abs{X-\alpha_\varsigma n}^{k}
    \,\middle|\,
    Y=\varsigma,\ \sigma_{W}=\eta
    \right] \nonumber \\
    &= 
    Q^\varsigma_{T}(\tau)\bigl(1+\BigO(n^{-3\theta})\bigr)
    \left\langle
    \abs{X-\alpha_\varsigma n}^{k}
    \1_{\{\sigma_{W}\notin B_\varsigma\}}
    \right\rangle_{G,\varsigma}.
\end{align*}
where in the second equality, we used Lemma \ref{lem:all_prelim}(ii).
Dividing by
\[
    \mu_{G,\varsigma}(\sigma_{T}=\tau)
    =
    Q^\varsigma_{T}(\tau)\bigl(1+\BigO(n^{-2\theta})\bigr)
\]
and using the unconditional bounds gives
\[
\left\langle
\abs{X-\alpha_\varsigma n}^{k}
\right\rangle_{G,\varsigma,\tau}
=
\BigO((n\log n)^{k/2}),
\qquad k=1,2,3.
\]
This proves \eqref{eq:lem12-first}--\eqref{eq:lem12-third}.
\end{proof}

We establish the following corollary which says that the imbalance parameter $B_{\mathrm{gad}}(I)$ as moment and exponential moment bounds analogous to those for $X$. 
This will be useful in the point-probability estimates for the global imbalance later.

\begin{corollary}\label{cor:moment_mgf_bounds_B}

    There exists a constant $C_0<\infty$ with the following property. If $t=t(n)>0$ satisfies
    $
        |T|=o(t^2 n),
    $
    then for all sufficiently large $n$, with probability at least $9/10$ over the choice of the random gadget $G$, the following statements hold simultaneously for every $\tau\in\{0,1\}^{T}$ and both signs $\varsigma \in \Sigma$.
    \begin{align}
        \left\langle \abs{B_{\mathrm{gad}} - (\alpha_\varsigma - \alpha_{\Bar{\varsigma}}) n} \right\rangle_{G,\varsigma,\tau}
        &= 
        \BigO(\sqrt{n\log n}), \label{eq:cor3.4-first} \\
        \left\langle \abs{B_{\mathrm{gad}} - (\alpha_\varsigma -  \alpha_{\Bar{\varsigma}}) n}^{2} \right\rangle_{G,\varsigma,\tau}
        &= 
        \BigO(n\log n), \label{eq:cor3.4-second} 
        \\
        \left\langle \abs{B_{\mathrm{gad}} - (\alpha_\varsigma -  \alpha_{\Bar{\varsigma}}) n}^{3} \right\rangle_{G,\varsigma,\tau}
        &= 
        \BigO((n\log n)^{3/2}), \label{eq:cor3.4-third}
\end{align}
and
\begin{align}
    \left\langle e^{t(B_{\mathrm{gad}} - (\alpha_\varsigma -  \alpha_{\Bar{\varsigma}}) n)} \right\rangle_{G,\varsigma,\tau}
    &\le \exp(C_0 t^2 n), \label{eq:cor3.4-mgf-plus} 
    \\
    \left\langle e^{t((\alpha_\varsigma -  \alpha_{\Bar{\varsigma}}) n - B_{\mathrm{gad}})} \right\rangle_{G,\varsigma,\tau}
    &\le \exp(C_0 t^2 n). \label{eq:cor3.4-mgf-minus}
\end{align}
\end{corollary}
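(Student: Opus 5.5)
The natural plan is to reduce Corollary~\ref{cor:moment_mgf_bounds_B} to Lemma~\ref{lem:moment_mgf_bounds_X}, applied once to each side. Write
\[
B_{\mathrm{gad}}-(\alpha_\varsigma-\alpha_{\bar\varsigma})n=(X_L-\alpha_\varsigma n)-(X_R-\alpha_{\bar\varsigma}n).
\]
Lemma~\ref{lem:moment_mgf_bounds_X} controls the first term. The second term should satisfy the same bounds by left--right symmetry: in phase $\varsigma$ the right side has density near $\alpha_{\bar\varsigma}$.

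To get the right-side bounds, rerun the proof of Lemma~\ref{lem:moment_mgf_bounds_X} with $X_R$ in place of $X_L$. This needs the right-side analogues of the slice estimates (1)--(2) of Lemma~\ref{lem:all_prelim}, namely the Gaussian bound for $X_R/n$ concentrating at $\alpha_{\bar\varsigma}$ under phase $\varsigma$. These should follow from the reflection argument used for (7). Swapping sides, with the coupled sign $C_{\widetilde G}=\overline{C_G}$, maps the gadget law to itself. It exchanges $X_L$ and $X_R$ and exchanges the phases. Hence the expected phase-restricted slice partition functions for $X_R$ in phase $\varsigma$ equal those for $X_L$ in phase $\bar\varsigma$, and (1)--(2) transfer.

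The rest of the proof is unchanged. It uses Markov's inequality over the $n+1$ slices, then (6) and (8), the transfer to terminal conditioning via (3), and the bad-set decomposition via (4)--(5). Each of these is symmetric in the two sides, so we obtain the analogous moment and mgf bounds for $X_R-\alpha_{\bar\varsigma}n$ on an event of probability at least $19/20$. We intersect the two events. Combined with the two earlier events, we should lose at most $1/20$ in total, though the exact accounting may need adjustment; the $9/10$ in the statement leaves slack.

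Next, combine the two sides. For the polynomial moments, use $|a-b|^k\le 2^{k-1}(|a|^k+|b|^k)$ for $k=1,2,3$. This immediately gives \eqref{eq:cor3.4-first}--\eqref{eq:cor3.4-third}.

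For the exponential moments, $X_L$ and $X_R$ are dependent, so we cannot multiply mgfs directly. Instead apply Cauchy--Schwarz:
\[
\langle e^{t(D_L-D_R)}\rangle\le\langle e^{2tD_L}\rangle^{1/2}\langle e^{-2tD_R}\rangle^{1/2},
\]
where $D_L:=X_L-\alpha_\varsigma n$ and $D_R:=X_R-\alpha_{\bar\varsigma}n$. Replacing $t$ by $2t$ preserves the hypothesis $|T|=o(t^2n)$. Then \eqref{eq:lem12-mgf-plus} and the right-side version of \eqref{eq:lem12-mgf-minus} each give a factor at most $\exp(4C_0t^2n)$, so the product is at most $\exp(4C_0t^2n)$. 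The same argument, with signs reversed, handles \eqref{eq:cor3.4-mgf-minus}. We then enlarge the constant to $4C_0$; it remains independent of $n$, $t$, $\varsigma$ and $\tau$.

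The main obstacle is justifying the right-side slice estimates, since Lemma~\ref{lem:all_prelim} states (1)--(2) only for $X_L$. I would argue this via the reflection coupling of the gadget law as described above, noting that the tie convention is handled exactly as in the proof of (7). If that route fails, the fallback is to cite the first-moment Laplace computation of \cite{GalanisSV15}. That computation is already joint in the two side densities and gives the joint Gaussian bound directly.
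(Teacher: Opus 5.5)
Your proposal is correct and follows essentially the same route as the paper. You split $B_{\mathrm{gad}}-(\alpha_\varsigma-\alpha_{\bar\varsigma})n$ into left and right deviations, get the right-side bounds from the left--right reflection symmetry of the gadget law, and combine the polynomial moments via $|a+b|^k\le 2^{k-1}(|a|^k+|b|^k)$ and the exponential moments via Cauchy--Schwarz with Lemma~\ref{lem:moment_mgf_bounds_X} at $2t$. The only difference is that the paper applies the reflection directly to the conclusion of Lemma~\ref{lem:moment_mgf_bounds_X}: the $X_R$-event on $G$ is the $X_L$-event on the reflected gadget, which has the same law. This avoids rerunning the proof from the slice estimates and makes the probability accounting simply $1-2\cdot\tfrac{1}{20}=\tfrac{9}{10}$.
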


\begin{proof}
Recall that $B_{\mathrm{gad}} = X_L - X_R$ where $X_L:=|I\cap U^L|$ and $X_R:=|I\cap U^R|$.
By Lemma~\ref{lem:moment_mgf_bounds_X}, with probability at least $19/20$ over the choice of $G$, (\ref{eq:lem12-first})-(\ref{eq:lem12-mgf-minus}) hold for $X_L$ and $\alpha_{\varsigma}n$.

To pass from the corresponding bounds for $X_L, \alpha_{\varsigma}n$ to those for $X_R, \alpha_{\Bar{\varsigma}}n$, we use the left--right symmetry of the gadget. Indeed, interchanging the two bipartition classes sends the random gadget to one with the same distribution, while exchanging the roles of the two phases (including flipping the auxiliary sign on a tie) and replacing the left-core count by the right-core count. Accordingly, the statement of the previous lemma for the left side immediately yields the analogous statement for the right side, with the center $\alpha_\varsigma n$ replaced by $\alpha_{\Bar{\varsigma}}n$. We may therefore assume that the same moment and exponential moment bounds hold simultaneously for both $X_L$ and $X_R$, uniformly over all $\tau\in\{0,1\}^T$ and $\varsigma\in\{+,-\}$ with probability at least 9/10.

Now for $k = 1,2,3$, since for any real numbers $a,b$
\[
    |a+b|^k \le (|a|+|b|)^k \le 2^{k-1}|a|^k+2^{k-1}|b|^k,
\]
it follows that
\[
    \left\langle |B_{\mathrm{gad}}-(\alpha_{\varsigma}-\alpha_{\Bar{\varsigma}})n|^k \right\rangle_{G,\varsigma,\tau}
    \le
    2^{k-1}\left\langle \left|X_L-\alpha_\varsigma n\right|^k \right\rangle_{G,\varsigma,\tau}
    +
    2^{k-1}\left\langle |\alpha_{\Bar{\varsigma}}n-X_R|^k \right\rangle_{G,\varsigma,\tau}
    =
    \BigO((n\log n)^{k/2}).
\]

We now prove the exponential moment bounds. Since $|T|=o(t^2n)$, we also have $|T| = o\!\left((2t)^2n\right)$,
so the previous lemma may be applied with parameter $2t$.
By Cauchy--Schwarz,
\[
    \left\langle e^{t(B_{\mathrm{gad}}-(\alpha_{\varsigma}-\alpha_{\Bar{\varsigma}})n)} \right\rangle_{G,\varsigma,\tau}
    \le
    \left\langle e^{2t(X_L-\alpha_\varsigma n)} \right\rangle_{G,\varsigma,\tau}^{1/2}
    \left\langle e^{2t(\alpha_{\Bar{\varsigma}}n-X_R)} \right\rangle_{G,\varsigma,\tau}^{1/2}
    \le \exp(4Ct^2n),
\]
the other bound follows similarly. 
\end{proof}

\subsection{Characterizing the global phases}

We now give the reduction from \textbf{MIN-BISECTION} to the problem of approximating the balanced
hard-core partition function. Let $H$ be an input graph on an even number $h$ of vertices; the odd case may be reduced to this one by adding one isolated vertex.

We choose the gadget size $n = n(h)$ polynomially large in $h$, with the 
constant in the polynomial chosen so that
$$h = \Theta\left(n^{\theta/4}\right), \quad kh \le m, \quad \frac{h\log n}{k} = o(1),$$
where
$$k := \left\lfloor n^{3\theta/4} \right\rfloor, \quad m := \left\lfloor n^\theta \right\rfloor.$$

Given the gadget graph $G$ as above and an even integer $s = \Theta(nh) \geq 0$, we construct a graph $H^G_s$ of maximum degree $\Delta$ 
on $N := h|G| + s$ vertices as follows:

\begin{itemize}
    \item For each vertex $x \in V(H)$, include a copy $G_x$ of $G$. We write $U^L_x, U^R_x, W^L_x, W^R_x, T^L_x, T^R_x$ for the corresponding subsets of this copy, and $T_x := T^L_x \sqcup T^R_x$.
    
    \item Add $s/2$ isolated vertices to the left side and $s/2$ isolated vertices to the right side.
    
    \item For each edge $xy \in E(H)$, add a matching of size $k$ between $T^L_x$ and $T^R_y$, and another matching of size $k$ between $T^R_x$ and $T^L_y$. The matchings are chosen so that each terminal is used at most once. This is possible because each vertex of $H$ is incident to at most $h$ edges and $kh \le m$.
\end{itemize}

Let $\widehat{H}^G_s$ denote the graph obtained from $H^G_s$ by deleting  all inter-gadget edges; thus $\widehat{H}^G_s$ is the disjoint union of  the $h$ gadget copies and the isolated vertices. Let $\mathcal{E}$ be the set of  inter-gadget edges. The isolated vertices are included to give an explicit  independent variance source for the global imbalance. More precisely, under the hard-core measure on $\widehat{H}^G_s$, even after conditioning on the gadget phases and terminal occupations, the isolated vertices remain independent Bernoulli random variables with parameter $\lambda/(1 + \lambda)$.

Let $L_{\mathrm{iso}}$ and $R_{\mathrm{iso}}$ denote the isolated vertices on the left and right side of $\widehat H_s^G$.
For a configuration $\sigma$ on $\widehat H_s^G$, define the imbalance of those isolated vertices as
\[
    B_{\mathrm{iso}}(\sigma)
    :=
    \sum_{u\in L_{\mathrm{iso}}}\sigma_u-\sum_{v\in R_{\mathrm{iso}}}\sigma_v.
\]
For each gadget copy $G_x$, recall that
\[
    B_x(\sigma)
    :=
    \sum_{v\in U_x^L}\sigma_v
    -
    \sum_{v\in U_x^R}\sigma_v
\]
is its core imbalance. Define the imbalance contributed by the remaining, non-core vertices of all gadget copies by
\[
    B_{\mathrm{rem}}(\sigma)
    :=
    \sum_{x\in V(H)}
    \left(
        \sum_{v\in L(G_x)\setminus U_x^L}\sigma_v
        -
        \sum_{v\in R(G_x)\setminus U_x^R}\sigma_v
    \right).
\]
We reserve $B$ for the true full imbalance of the entire decoupled graph:
\[
    B(\sigma)
    :=
    \sum_{v\in L(\widehat H_s^G)}\sigma_v
    -
    \sum_{v\in R(\widehat H_s^G)}\sigma_v.
\]
Thus
\[
    B
    =
    \sum_{x\in V(H)}B_x
    +B_{\mathrm{iso}}
    +B_{\mathrm{rem}}.
\]

Since each gadget contains $O(n^{\theta+\psi})$ vertices outside $U_x^L\sqcup U_x^R$, there is a constant $C<\infty$ such that, deterministically,
$ |B_{\mathrm{rem}}|\le Chn^{\theta+\psi}=o(n),$
and also $hn^{\theta+\psi}=o(h\sqrt n).$
The first estimate will be used to separate phase vectors with $D(Y)\ne0$, while the second allows the non-core contribution to be absorbed into the $O(h\sqrt{n\log n})$ window used in the balance point-probability lower bound.

For each gadget copy $G_x$, independently draw a fair auxiliary sign $C_x$,
and define the phase vector by
$Y(\sigma,C)=\left(Y_{G_x}(\sigma|_{G_x},C_x)\right)_{x\in V(H)}$.
Equivalently, all phase-restricted partition functions below sum over the
auxiliary signs with weight $2^{-h}$. This augmentation leaves the hard-core
law of $\sigma$ and every unconditioned partition function unchanged.
For $Y \in \{+,-\}^{V(H)}$, define its phase imbalance by
$$D(Y) := \left|\{x \in V(H) : Y_x = +\}\right| - \left|\{x \in V(H) : Y_x = -\}\right|.$$
Thus $D(Y) = 0$ means that the gadget phases form a bisection of $V(H)$. 
We also write
$$\mathrm{cut}(Y) := \left|\{xy \in E(H) : Y_x \neq Y_y\}\right|.$$
Let $T := \bigcup_{x \in V(H)} T_x$. 
For $Y \in \{+,-\}^{V(H)}$ and $\tau \in \{0,1\}^{T}$, let $\mu_{\widehat{H}^G_s, Y, \tau}$ and $\langle \cdot \rangle_{\widehat{H}^G_s, Y, \tau}$ denote the hard-core measure and expectation operator on $\widehat{H}^G_s$ conditioned on the phase vector $Y$ and terminal 
pattern $\tau$. We use analogous notation when only the phase vector is conditioned on.

The first estimate controls the core-plus-isolated imbalance $B-B_{\mathrm{rem}}$. Combined with the deterministic bound $|B_{\mathrm{rem}}|=o(n)$, it will imply that if the phase imbalance is nonzero, then exact global balance $B=0$ is exponentially unlikely.

\begin{lemma} \label{lem:hardcore13}

For $Y \in \{+,-\}^{V(H)}$, set
$$\nu_0(Y) := n \sum_{x \in V(H)} (\alpha_{Y_x} - \alpha_{\bar{Y}_x}) = n(\alpha_+ - \alpha_-)D(Y).$$
For every fixed $\delta > 0$, there are
$t_\delta=\Theta_\delta(1/h)$ and $c=c(\delta)>0$ such that, whenever
$G$ satisfies the conclusions of Corollary~\ref{cor:moment_mgf_bounds_B}
at $t=t_\delta$, the following holds for all sufficiently large $n$,
uniformly over $Y \in \{+,-\}^{V(H)}$ and $\tau \in \{0,1\}^{T}$:
$$\mu_{\widehat{H}^G_s, Y, \tau} \left( |(B-B_{\mathrm{rem}}) - \nu_0(Y)| \ge \delta n \right) \le \exp(-cn/h).$$
\end{lemma}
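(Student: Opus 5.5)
We prove the bound by the exponential Chernoff method, using that the conditioned measure $\mu_{\widehat H^G_s,Y,\tau}$ factorizes. Because $\widehat H^G_s$ is a disjoint union of the gadget copies and the isolated vertices, and both the phase vector $Y$ and the terminal pattern $\tau$ are determined copy by copy, the measure is a product: copy $G_x$ is distributed as $\mu_{G_x,Y_x,\tau|_{T_x}}$, and the isolated vertices are independent Bernoulli$(\lambda/(1+\lambda))$. The variable $B-B_{\mathrm{rem}}=\sum_x B_x+B_{\mathrm{iso}}$ is therefore a sum of $h+1$ independent pieces. The centering is $\nu_0(Y)$, since $B_x$ is centered at $(\alpha_{Y_x}-\alpha_{\bar Y_x})n$ and $B_{\mathrm{iso}}$ has mean zero (there are $s/2$ isolated vertices on each side).

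For $t>0$ we bound the upper tail by Markov's inequality:
\[
\mu\bigl((B-B_{\mathrm{rem}})-\nu_0\ge \delta n\bigr)\le e^{-t\delta n}\prod_x\bigl\langle e^{t(B_x-(\alpha_{Y_x}-\alpha_{\bar Y_x})n)}\bigr\rangle_{G_x,Y_x,\tau_x}\cdot \E e^{tB_{\mathrm{iso}}}.
\]
Each gadget factor is at most $\exp(C_0t^2n)$ by Corollary~\ref{cor:moment_mgf_bounds_B}, which holds uniformly over $\tau$ and both signs, and all copies are the same graph $G$. The isolated factor is a product of $s$ centered bounded terms, so Hoeffding's lemma gives $\E e^{tB_{\mathrm{iso}}}\le e^{st^2/2}$, and $s=\Theta(nh)$. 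The total bound is therefore
\[
\exp\bigl(-t\delta n+C_0t^2nh+O(t^2nh)\bigr)=\exp\bigl(-t\delta n+C't^2nh\bigr).
\]

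We take $t=t_\delta:=\delta/(2C'h)=\Theta_\delta(1/h)$. This gives $\exp(-\delta^2 n/(4C'h))$, so $c=\delta^2/(4C')$. The lower tail follows in the same way from the minus-version bound \eqref{eq:cor3.4-mgf-minus} and the symmetric Hoeffding bound, and adding the two tails costs only a factor $2$, which is absorbed by slightly decreasing $c$ for large $n$.

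The one step that needs checking is the hypothesis $|T|=o(t^2n)$ of Corollary~\ref{cor:moment_mgf_bounds_B} at $t=t_\delta$. Here $|T|=2m=\Theta(n^\theta)$ and $t^2n=\Theta(n/h^2)=\Theta(n^{1-\theta/2})$, so the hypothesis holds because $\theta<1/8$. I expect the main subtlety to be confirming the exact product structure, namely that conditioning on $(Y,\tau)$ does not couple different copies. This holds because the auxiliary signs $C_x$ are independent across copies, so the conditioning event is a product event.
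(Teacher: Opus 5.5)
Your proposal is correct and follows essentially the same route as the paper: factorize the conditioned measure over the gadget copies and isolated vertices, bound each gadget's exponential moment via Corollary~\ref{cor:moment_mgf_bounds_B}, control $B_{\mathrm{iso}}$ (you use Hoeffding's lemma, while the paper computes the binomial moment generating function explicitly), and apply Chernoff at $t_\delta=\delta/(2C'h)$ to get $\exp(-\delta^2 n/(4C'h))$. Your check that $|T|=o(t_\delta^2 n)$ holds at this $t_\delta$ is the same as the paper's, as is your argument that conditioning on $(Y,\tau)$ is a product event across copies.
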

\begin{proof}
    We prove the upper tail; the proof for the lower tail is identical. We first handle $B_{\mathrm{iso}}$. As discussed above we have $B_{\mathrm{iso}} = J^{L}_{s/2} - J^{R}_{s/2}$, where $J^{L}_{s/2}, J^{R}_{s/2} \sim \mathrm{Bin}\left(\frac{s}{2}, \frac{\lambda}{1+\lambda}\right)$.
    Then for some constant $c_1(\lambda)>0$ we have
    \begin{align*}
        \left\langle e^{tB_{\mathrm{iso}}} \right\rangle_{\widehat{H}^G_s, Y, \tau}
        = 
        \left(\frac{(1+\lambda e^{t})(1+\lambda e^{-t})}{(1+\lambda)^2}\right)^{\frac{s}{2}}
        \leq e^{c_1st^2}
    \end{align*}
    uniformly for all $|t|\le t_0(\lambda)$.
    
    Since $\widehat{H}^G_s$ is a disjoint union after the phase and terminal pattern are fixed, the moment generating function factors over the gadget copies and the isolated vertices. Let $\nu_x := n(\alpha_{Y_x} - \alpha_{\bar{Y}_x})$. Combining Corollary~\ref{cor:moment_mgf_bounds_B} at $t=\Theta(1/h)$ with $s=\Theta(nh)$ gives a constant $C_*>0$ such that
    \begin{align*}
        \left\langle e^{t((B-B_{\mathrm{rem}})-\nu_0(Y))} \right\rangle_{\widehat{H}^G_s, Y, \tau} 
        = \left\langle e^{tB_{\mathrm{iso}}} \right\rangle_{\widehat{H}^G_s, Y, \tau}
        \prod_{x\in V(H)}  \left\langle e^{t(B_x-\nu_x)} \right\rangle_{G_x, Y_x, \tau_x}
        \leq 
        e^{c_1st^2}e^{C_0nht^2}
        \leq
        e^{C_*t^2nh}.
    \end{align*}
    Here $\tau_x$ denotes the restriction of $\tau$ to $T_x$, and the use of
    the corollary is legitimate because
    $|T_x|=O(n^\theta)=o(nt^2)=o(n/h^2)$. Choose
    $t_\delta=\delta/(2C_*h)$. Chernoff's bound gives
    \[
        \mu_{\widehat{H}^G_s, Y, \tau}(B-B_{\mathrm{rem}} \geq \nu_0(Y) + \delta n)
        \leq e^{-t \delta n} \Bigl\langle e^{t ((B-B_{\mathrm{rem}}) - \nu_0(Y))} \Bigr\rangle_{\widehat{H}^G_s, Y, \tau}
        \leq e^{-t \delta n + C_*t^2nh}
        = e^{ -\frac{\delta^2 n}{4C_*h}}.
    \]
    The same estimate holds for the lower tail. A union bound contributes a
    factor of $2$, which is absorbed by decreasing $c(\delta)>0$ for all
    sufficiently large $n$.
\end{proof}

The next estimates treat the complementary case in which the phase vector is balanced. We
no longer need a full local central limit theorem for the total imbalance. Instead, it is enough to
use the isolated vertices as an explicit smoothing variable. The first lemma records the elementary
point-probability bounds for the isolated contribution, and the second lemma transfers these bounds
to the full decoupled graph $\widehat H_s^G$.

\begin{lemma}
\label{lem:isolated-imbalance}
Fix $p \in (0, 1)$. There exist constants $\delta, c, C > 0$, depending only on $p$, such that the following holds. Let $s$ be a positive even integer, and let $X_L, X_R$ be independent $\text{Bin}(s/2, p)$ random variables. Then, for every integer $a$,
$$\mathbb{P}(X_L - X_R = a) \le \frac{C}{\sqrt{s}}.$$
Moreover, for every integer $a$ with $|a| \le \delta s$,
$$\mathbb{P}(X_L - X_R = a) \ge \frac{c}{\sqrt{s}} \exp \left( -C \frac{a^2}{s} \right).$$
\end{lemma}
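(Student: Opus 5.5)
The plan is to write $D:=X_L-X_R$ as a lattice random variable with an explicit characteristic function. Since $X_L,X_R$ are independent $\mathrm{Bin}(s/2,p)$, we have
\[
\mathbb E\bigl[e^{iuD}\bigr]=\bigl|1-p+pe^{iu}\bigr|^{s},
\]
which is real and nonnegative. Equivalently, $D$ is a sum of $s/2$ i.i.d.\ copies of $\xi=\beta-\beta'$ with $\beta,\beta'$ i.i.d.\ $\mathrm{Ber}(p)$. Its mean is $0$, its variance is $\sigma^2=sp(1-p)$, and it is supported on $\mathbb Z$ with span $1$, since $\xi$ takes the values $0,\pm1$.

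\textbf{Upper bound.} I would use Fourier inversion:
\[
\mathbb P(D=a)\le\frac1{2\pi}\int_{-\pi}^{\pi}\bigl|1-p+pe^{iu}\bigr|^{s}\,du .
\]
The elementary identity
\[
\bigl|1-p+pe^{iu}\bigr|^2=1-2p(1-p)(1-\cos u)\le \exp\!\bigl(-4p(1-p)u^2/\pi^2\bigr)
\]
holds for $|u|\le\pi$. Integrating the resulting Gaussian bound gives $C/\sqrt s$ at once, and this is uniform in $a$.

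\textbf{Lower bound.} I would split into two regimes.

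In the moderate regime $|a|\le A\sqrt s$, the plan is to invoke the classical local limit theorem for sums of i.i.d.\ lattice variables with span $1$, or to redo the Fourier computation directly. Either route gives
\[
\mathbb P(D=a)=\frac{1}{\sqrt{2\pi}\,\sigma}e^{-a^2/(2\sigma^2)}+o(s^{-1/2})
\]
uniformly in $a$. This is $\ge c/\sqrt s$ once $s\ge s_0(p,A)$. Small $s$ is handled by a direct positivity argument: $\mathbb P(D=a)\ge p^{|a|}(1-p)^{s}\cdots>0$ for the finitely many pairs $(s,a)$ with $|a|\le s/2$, and these can be absorbed into the constants.

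In the large-deviation regime $A\sqrt s\le|a|\le\delta s$, I would use an exponential tilt (Cramér's change of measure). Take $h$ solving $\Lambda'(h)=a/(s/2)$, where
\[
\Lambda(h)=\log\mathbb E\, e^{h\xi}.
\]
This $h$ is $O(a/s)$ and is small when $\delta$ is small. Under the tilted law, $D$ is again a sum of i.i.d.\ lattice variables, now with mean $a$ and variance $\Theta(s)$ uniformly in small $h$. The local CLT bound from the first regime, applied uniformly over the compact family of tilts, gives $\tilde{\mathbb P}(D=a)\ge c/\sqrt s$. Undoing the tilt yields
\[
\mathbb P(D=a)=\tilde{\mathbb P}(D=a)\exp\!\bigl(-ha+(s/2)\Lambda(h)\bigr).
\]
The exponent equals $-(s/2)I(2a/s)$, and Taylor expansion gives $I(x)=\Theta(x^2)$ for small $x$. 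This produces $e^{-Ca^2/s}$.

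\textbf{Main obstacle.} The step needing the most care is the uniformity of the local limit estimate over the tilt parameter $h\in[-h_0,h_0]$. I would secure it by proving the tilted Fourier estimate directly. The tilted $\xi$ takes values $\{-1,0,1\}$ with probabilities bounded below uniformly for small $h$. This gives the same type of high-frequency bound, $|\varphi_h(u)|\le e^{-cu^2}$ on $[-\pi,\pi]$, together with a uniform third-moment control near $u=0$. The standard Berry–Esseen-type local estimate then has constants depending only on $p$ and $h_0$.
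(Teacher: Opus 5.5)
Your proposal is correct, but it takes a heavier route than the paper. Write $M=s/2$.

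For the upper bound, the paper does not use Fourier inversion. It writes $\mathbb P(X_L-X_R=a)=\sum_j\mathbb P(X_R=j)\,\mathbb P(X_L=j+a)\le\sup_i\mathbb P(X_L=i)$ and invokes the standard $O(M^{-1/2})$ bound on the largest binomial point mass. Your bound $|1-p+pe^{iu}|^s\le e^{-2p(1-p)su^2/\pi^2}$ on $[-\pi,\pi]$ is equally valid and self-contained.

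For the lower bound, the paper again uses the convolution structure instead of a local limit theorem. It takes the one-variable Stirling estimate $\mathbb P(\mathrm{Bin}(M,p)=r)\ge \frac{c_0}{\sqrt M}e^{-C_0(r-Mp)^2/M}$ for $|r-Mp|\le\delta_0 M$. It then restricts the convolution sum to the $\Theta(\sqrt M)$ integers $j$ with $|j-(Mp-a/2)|\le c_1\sqrt M$. For these $j$, both $j$ and $j+a$ lie within $|a|/2+c_1\sqrt M$ of $Mp$, so each product term is at least $\frac{c}{M}e^{-Ca^2/M}$. Summing the $\Theta(\sqrt M)$ terms gives the claim.

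This avoids your two-regime split and the Cram\'er tilt. Above all, it avoids the step you flagged as the main obstacle: uniformity of the local CLT over the tilt parameter. In the paper's route, all the large-deviation content is already in the explicit binomial probability mass function.

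What your approach buys is generality and sharpness. It applies verbatim to any sum of i.i.d.\ span-one lattice variables with nondegenerate steps, and it identifies the exact exponent $-(s/2)I(2a/s)$. The paper's argument is tailored to the binomial but needs nothing beyond Stirling's formula.

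Incidentally, your tilt argument over $h\in[-h_0,h_0]$ already includes tilts near $h=0$. It therefore covers $|a|\le A\sqrt s$ as well, so the separate moderate-deviation regime is unnecessary.
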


\begin{proof}
Write $M = s/2$. The upper bound follows from the standard binomial anti-concentration estimate $\sup_j \mathbb{P}(\text{Bin}(M, p) = j) \le C_0/\sqrt{M}$. Indeed,
$$\mathbb{P}(X_L - X_R = a) = \sum_{j} \mathbb{P}(X_R = j)\mathbb{P}(X_L = j + a) \le \sup_{i} \mathbb{P}(X_L = i) \le \frac{C}{\sqrt{s}}.$$

For the lower bound, we use a standard consequence of Stirling's formula. There are constants $\delta_0, c_0, C_0 > 0$, depending only on $p$, such that, whenever $|r - Mp| \le \delta_0 M$,
$$\mathbb{P}(\text{Bin}(M, p) = r) \ge \frac{c_0}{\sqrt{M}} \exp \left( -C_0 \frac{(r - Mp)^2}{M} \right).$$

Choose $\delta > 0$ sufficiently small in terms of $p$. For a fixed integer $a$ with $|a| \le \delta s$, let $I_a$ be the set of integers $j$ satisfying
$$\left| j - \left( Mp - \frac{a}{2} \right) \right| \le c_1 \sqrt{M},$$
where $c_1 > 0$ is a small constant depending only on $p$. For all $j \in I_a$, both $j$ and $j + a$ lie in the range where the preceding Stirling estimate applies, for all sufficiently large $M$. Hence
$$\mathbb{P}(X_R = j)\mathbb{P}(X_L = j + a) \ge \frac{c_2}{M} \exp \left( -C_2 \frac{a^2}{M} \right), \quad j \in I_a.$$

Since $|I_a| \ge c_3 \sqrt{M}$, summing over $j \in I_a$ gives
\begin{align*}
\mathbb{P}(X_L - X_R = a) &\ge \sum_{j \in I_a} \mathbb{P}(X_R = j)\mathbb{P}(X_L = j + a) \\
&\ge \frac{c}{\sqrt{M}} \exp \left( -C \frac{a^2}{M} \right).
\end{align*}
Since $M = s/2$, this is the desired lower bound, after adjusting constants. The finitely many small values of $s$ are absorbed by decreasing $c$.
\end{proof}

\begin{lemma}
\label{lem:decoupled-imbalance}
Assume the gadget $G$ satisfies the conclusions of Corollary~\ref{cor:moment_mgf_bounds_B}. Suppose $s = \Theta(nh)$, with $s/2$ isolated vertices added to each side of the bipartition. 
Then there is a constant $C < \infty$ such that, uniformly over all phase vectors $Y \in \{+,-\}^{V(H)}$, all terminal configurations $\tau \in \{0,1\}^{T}$, and all integers $r$,
$$\mu_{\widehat{H}^G_s, Y, \tau} (B = r) \le \frac{C}{\sqrt{nh}}.$$
Moreover, uniformly over all $Y \in \{+,-\}^{V(H)}$ with $D(Y) = 0$ and all $\tau \in \{0,1\}^{T}$,
$$
    \mu_{\widehat{H}^G_s, Y, \tau} (B = 0) \ge \frac{e^{-Ch\log n}}{\sqrt{nh}}.
$$
\end{lemma}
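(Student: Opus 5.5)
Both bounds come from the product structure of $\mu_{\widehat H^G_s,Y,\tau}$. Once $Y$ and $\tau$ are fixed, the graph $\widehat H^G_s$ is a disjoint union of the gadget copies and the isolated vertices. Hence $B=B_{\mathrm{iso}}+B'$ with $B':=\sum_x B_x+B_{\mathrm{rem}}$, where $B_{\mathrm{iso}}$ is independent of $B'$ and has the law of $X_L-X_R$ with $X_L,X_R\sim\mathrm{Bin}(s/2,\lambda/(1+\lambda))$. (The auxiliary signs and the phase conditioning only involve the gadget coordinates, so independence survives.) We therefore write
\[
\mu_{\widehat H^G_s,Y,\tau}(B=r)=\sum_{b}\mu_{\widehat H^G_s,Y,\tau}(B'=b)\,\mathbb P(B_{\mathrm{iso}}=r-b),
\]
and everything reduces to Lemma~\ref{lem:isolated-imbalance} with $p=\lambda/(1+\lambda)$.

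\emph{Upper bound.} Bound each factor $\mathbb P(B_{\mathrm{iso}}=r-b)$ by $C/\sqrt{s}$ and sum over $b$ using $\sum_b\mu(B'=b)=1$. Since $s=\Theta(nh)$, this gives $C/\sqrt{nh}$ uniformly in $Y,\tau,r$.

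\emph{Lower bound.} Take $D(Y)=0$, so $\nu_0(Y)=0$ and each gadget has centre $\nu_x=\pm(\alpha_+-\alpha_-)n$, with these centres cancelling in the sum. By Corollary~\ref{cor:moment_mgf_bounds_B}, \eqref{eq:cor3.4-first}, uniformly in $\tau_x$ and $Y_x$,
\[
\langle|B_x-\nu_x|\rangle_{G_x,Y_x,\tau_x}=O(\sqrt{n\log n}).
\]
Summing over the $h$ gadgets (triangle inequality, no independence needed) gives $\langle|\sum_x B_x|\rangle=O(h\sqrt{n\log n})$. The deterministic bound gives $|B_{\mathrm{rem}}|\le Chn^{\theta+\psi}=o(h\sqrt n)$. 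Markov's inequality then yields a window $W:=\{|b|\le K h\sqrt{n\log n}\}$ with $\mu(B'\in W)\ge 1/2$, for a suitable constant $K$.

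For $b\in W$, check that $|b|\le\delta s$, which holds because $h\sqrt{n\log n}=o(nh)$. Lemma~\ref{lem:isolated-imbalance} then gives
\[
\mathbb P(B_{\mathrm{iso}}=-b)\ge \frac{c}{\sqrt s}\exp\bigl(-Cb^2/s\bigr)\ge \frac{c}{\sqrt{nh}}\exp\bigl(-C'K^2 h\log n\bigr),
\]
using $b^2/s=O(h^2n\log n/(nh))=O(h\log n)$. Multiplying by $\mu(B'\in W)\ge1/2$ and absorbing constants into the exponent gives
\[
\mu_{\widehat H^G_s,Y,\tau}(B=0)\ge \frac{e^{-Ch\log n}}{\sqrt{nh}}.
\]

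\emph{Main obstacle.} The key step is the quadratic loss $b^2/s$. A first-moment/Markov window of width $h\sqrt{n\log n}$ costs only $e^{-O(h\log n)}$, which is exactly the slack the statement allows. So no concentration sharper than Corollary~\ref{cor:moment_mgf_bounds_B} is needed, and no local CLT for the gadgets themselves. One must also check that parity does not obstruct the argument: $B_{\mathrm{iso}}$ takes every integer value in range, and the binomial lower bound is stated for every integer $a$, so it does not.
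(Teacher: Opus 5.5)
Your proposal is correct and follows the paper's proof essentially step for step. It uses the same splitting $B=B_{\mathrm{iso}}+\bigl(\sum_x B_x+B_{\mathrm{rem}}\bigr)$ into independent pieces under the phase- and terminal-conditioned product measure, the binomial anti-concentration bound from Lemma~\ref{lem:isolated-imbalance} for the upper bound, and a constant-probability window of width $O(h\sqrt{n\log n})$ combined with the local binomial lower bound for the second estimate. The one minor difference is how you obtain that window: you use Markov's inequality on the first absolute moment \eqref{eq:cor3.4-first} plus the triangle inequality, which needs no independence between gadget copies, whereas the paper sums the mean and variance bounds over the independent copies and applies Chebyshev; both give the same $e^{-O(h\log n)}$ loss.
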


\begin{proof}
Write
\[
    B=S_{\mathrm{full}}+B_{\mathrm{iso}},
    \quad
    S_{\mathrm{full}}:=S+B_{\mathrm{rem}}, \quad 
    S:=\sum_{x\in V(H)} B_x.
\]
The random variable $S_{\mathrm{full}}$ contains the entire imbalance contribution of the gadget copies, including all non-core vertices. 
Since $\widehat H_s^G$ is a disjoint union and the conditioning on $Y,\tau$ concerns only the gadget copies, $S_{\mathrm{full}}$ and $B_{\mathrm{iso}}$ are independent under $\mu_{\widehat H_s^G,Y,\tau}$.

Let $p = \lambda / (1 + \lambda)$. The isolated contribution has the form $B_{\text{iso}} = X_L - X_R,$ where $X_L$ and $X_R$ are independent $\text{Bin}(s/2, p)$ random variables. Therefore, by conditioning on $S_{\mathrm{full}}$ and applying Lemma~\ref{lem:isolated-imbalance},
\begin{align*}
    \mu_{\widehat{H}^G_s, Y, \tau} (B = r) &= \mathbb{E}\left[ \mu_{\widehat{H}^G_s, Y, \tau} (B_{\text{iso}} = r - S_{\mathrm{full}} \mid S_{\mathrm{full}}) \right] \\
    &= \mathbb{E}\left[ \mu_{\widehat{H}^G_s, Y, \tau} (B_{\text{iso}} = r - S_{\mathrm{full}}) \right] \\
    &\le \frac{C}{\sqrt{s}} \le \frac{C}{\sqrt{nh}}.
\end{align*}
This proves the upper bound. It remains to prove the lower bound when $D(Y) = 0$.  By Corollary~\ref{cor:moment_mgf_bounds_B}, uniformly in the phase and terminal conditioning of a single gadget,
$$
    \left| \langle B_x \rangle_{G_x, Y_x, \tau_x} - n(\alpha_{Y_x} - \alpha_{\bar{Y}_x}) \right| \le C_1 \sqrt{n\log n}, \quad \text{Var}_{G_x, Y_x, \tau_x}(B_x) \le C_1 n\log n.
$$
Summing over the independent gadget copies gives
$$
    \langle S \rangle_{\widehat{H}^G_s, Y, \tau} = n (\alpha_+ - \alpha_-)D(Y) + O(h\sqrt{n\log n}),\quad
    \text{Var}_{\widehat{H}^G_s, Y, \tau}(S) \le C_2 nh\log n.
$$
Since $D(Y) = 0$, the leading phase contribution cancels. 
Thus $\left| \langle S \rangle_{\widehat{H}^G_s, Y, \tau} \right| \le C_3 h \sqrt{n\log n}.$
By Chebyshev's inequality, after increasing the constant $C_4$ if necessary,
$$
    \mu_{\widehat{H}^G_s, Y, \tau} \left( |S| \le C_4 h \sqrt{n\log n} \right) \ge c_4
$$
for some constant $c_4 > 0$.

On the event
$\mathcal A:=\{|S|\le C_4h\sqrt{n\log n}\}$, the bound
$|B_{\mathrm{rem}}|=o(h\sqrt n)$ implies that, for all sufficiently large $n$,
\[
    |S_{\mathrm{full}}|
    =
    |S+B_{\mathrm{rem}}|
    \le
    C_5h\sqrt{n\log n}.
\]

For all sufficiently large $n$, the inequality $|a| \le C_5 h \sqrt{n\log n}$ implies $|a| \le \delta s$, where $\delta$ is the constant from Lemma~\ref{lem:isolated-imbalance}; this uses $s = \Theta(nh)$. 
Hence Lemma~\ref{lem:isolated-imbalance} gives, uniformly for all integers $a$ with $|a| \le C_5 h \sqrt{n\log n}$,
$$
    \mu_{\widehat{H}^G_s, Y, \tau} (B_{\text{iso}} = a) \ge \frac{c}{\sqrt{s}} \exp \left( -C \frac{a^2}{s} \right) \ge \frac{e^{-C_6 h\log n}}{\sqrt{nh}},
$$
because $a^2/s = O(h\log n)$ throughout this window. Therefore,
\[
\begin{aligned}
    \mu_{\widehat H_s^G,Y,\tau}(B=0)
    &=
    \mathbb E_{\widehat H_s^G,Y,\tau}
    \left[
        \mu_{\widehat H_s^G,Y,\tau}
        \left(
            B_{\mathrm{iso}}=-S_{\mathrm{full}}
            \mid
            S_{\mathrm{full}}
        \right)
    \right] \\
    &\ge
    \frac{e^{-C_6h\log n}}{\sqrt{nh}}
    \mu_{\widehat H_s^G,Y,\tau}
    \left(
        |S|\le C_4h\sqrt{n\log n}
    \right) \\
    &\ge
    \frac{e^{-Ch\log n}}{\sqrt{nh}},
\end{aligned}
\]
after absorbing the fixed constant $c_4$ into the exponential term.
\end{proof}

\begin{lemma}
\label{lem:exact-gadget-annealed-normalization}
For the fixed choices of $\theta,\psi$ supplied by
Lemma~\ref{lem:all_prelim}, define
\[
    \mathcal A_n:=\mathbb E_G Z_G(\lambda),
\]
where the expectation is over the exact gadget distribution of
Section~\ref{sec:hardness}.  There is a constant
$C_{\mathrm{pf}}>0$, depending only on the fixed gadget parameters, such
that, with probability at least $99/100$ over the gadget $G$,
\begin{equation}
    n^{-C_{\mathrm{pf}}}\mathcal A_n
    \le Z_G(\lambda)\le
    n^{C_{\mathrm{pf}}}\mathcal A_n.
    \label{eq:exact-gadget-polynomial-window}
\end{equation}
Moreover, in polynomial time, one can obtain a deterministic $\ell_n$ such
that $|\ell_n-\log\mathcal A_n|\le1$.
\end{lemma}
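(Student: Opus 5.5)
The plan has two parts: the polynomial window comes from the phase decomposition already in Lemma~\ref{lem:all_prelim}, and the estimate $\ell_n$ comes from an explicit first-moment computation. Write $Z_G(\lambda)=Z_{G,+}+Z_{G,-}$ and $\mathcal A_n=\mathbb E Z_{G,+}+\mathbb E Z_{G,-}=2\,\mathbb E Z_{G,+}$, using (7) of Lemma~\ref{lem:all_prelim}. On the probability-$99/100$ event of that lemma, (6) gives
\[
n^{-C'}\,\mathbb E Z_{G,\varsigma}\le Z_{G,\varsigma}\le n^{C'}\,\mathbb E Z_{G,\varsigma}
\]
for both signs. Summing over $\varsigma$ yields \eqref{eq:exact-gadget-polynomial-window} with $C_{\mathrm{pf}}=C'$. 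Linearity of expectation makes this step immediate; the only care needed is that the tie convention with the auxiliary sign $C_G$ keeps $Z_G=Z_{G,+}+Z_{G,-}$ exact, and it does.

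For the computable normalization, $\mathcal A_n$ is an explicit first moment over the configuration model of $\Delta$ matchings with $m'$ edges deleted from the last matching and $(\Delta-1)$-ary trees of depth $d_n$ attached. I would write $\mathcal A_n$ as a finite sum over ``types'':
\begin{itemize}
\item the occupied counts $k_L,k_R$ in $U^L,U^R$;
\item the occupied counts $j_L,j_R$ in $W^L,W^R$;
\item a type profile describing how tree leaves attach.
\end{itemize}
Each summand has a closed form. The matching contribution is a product of ratios of factorials: for a uniform perfect matching between sets of size $N$, the probability that occupied sets $A,B$ of sizes $a,b$ have no edges between them is $\binom{N-a}{b}/\binom{N}{b}$. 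The deleted-edge pattern adds one further hypergeometric-type factor that is explicit in the sizes. The trees factor over their roots: because the $W$-vertices are partitioned into groups of $L_n$, the tree contribution given the occupied set of $W$ depends only on how many occupied leaves each group has. Averaging over the uniform random partition gives a computable multinomial-type sum, and tree partition functions with prescribed leaf states follow from the hard-core recursion in $O(d_n)$ levels.

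Because the exact combinatorial sum over all group profiles may have superpolynomially many terms, I would instead use a dynamic program. First process the $m$ groups one at a time, tracking the running number of occupied $W$-vertices; this is $O(m\cdot m')$ states. Then combine with the core sums over $(k_L,k_R,j_L,j_R)$, which is polynomially many terms, each an exact rational that can be evaluated in floating point to relative accuracy $1/10$. Taking logarithms then gives $\ell_n$ within $1$.

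The main obstacle is handling the deleted-edge and tree-attachment coupling exactly. If exact computation is too messy, my fallback is to compute $\log\mathcal A_n$ only up to $O(1)$ additive error. This would use the Laplace-method asymptotics from \cite{galanis2016inapproximability,GalanisSV15}: the exponential rate $n\Phi(\alpha_+,\alpha_-)$ plus the explicit $O(m'+(m')^2/n)$ boundary correction and a computable polynomial prefactor. A constant additive error can then be absorbed, if needed, by enlarging $C_{\mathrm{pf}}$ in the reduction.
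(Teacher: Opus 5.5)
Your polynomial-window argument is exactly the paper's: sum (6) of Lemma~\ref{lem:all_prelim} over the two phases and take $C_{\mathrm{pf}}=C'$. Your plan for $\ell_n$ also follows the paper's route in substance. You write an exact first-moment sum with matching-avoidance factors $\binom{M-a}{b}/\binom{M}{b}$, treat the surviving part of the last matching by a count such as $\binom{n}{u}\binom{n-u}{v}$, and aggregate the trees by the number of occupied leaves.

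However, one step is stated incorrectly. It is not true that the tree weight, given the occupied set of $W$, depends only on how many leaves in each group are occupied. In a depth-$2$ binary tree, two occupied sibling leaves leave total internal weight $1+2\lambda$, whereas two occupied non-sibling leaves leave $1+\lambda$.

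What is true, and what makes the computation work, is the opposite symmetry. After exposing the last matching and relabeling, the remaining $\Delta-1$ matchings are uniform, so the avoidance probability depends only on the total occupied counts $u+w$ and $v+z$ on the two sides. You may therefore sum the tree weights over all leaf subsets of a given size. The paper does this with leaf-marked polynomials: $P_0^0=1$, $P_0^1=x$, $P_{j+1}^0=(P_j^0+P_j^1)^{\Delta-1}$, $P_{j+1}^1=\lambda(P_j^0)^{\Delta-1}$, and then the coefficients $c_w$ of $(P_{d_n}^0+P_{d_n}^1)^m$.

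Your group-by-group dynamic program is exactly this $m$-fold convolution, provided the per-group input is the leaf-count generating polynomial. It should not be a per-subset tree weight, which would need $2^{L_n}$ evaluations. No averaging over a random partition is required.

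Drop the fallback. A boundary correction known only to order $O(m'+(m')^2/n)$, with $m'=\Theta(n^{\theta+\psi})$, does not give $|\ell_n-\log\mathcal A_n|\le 1$, or even $O(1)$. In the reduction this error is multiplied by $h$ inside $L_{\mathrm{dec}}$. The resulting $h n^{\theta+\psi}$ error would swamp the $k=\Theta(n^{3\theta/4})$ scale used to recover $b$.
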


We defer the proof of Lemma~\ref{lem:exact-gadget-annealed-normalization} to Subsection~\ref{subsec:proof-exact-gadget-annealed-normalization}, after completing the main reduction.

\subsection{Proof of Theorem~\ref{thm:N_eta_approx_reduction}}

\begin{proof}[Proof of Theorem \ref{thm:N_eta_approx_reduction}]

Let $H$ be the input graph for \textbf{MIN-BISECTION}, and construct $H^G_s$ and 
$\widehat{H}^G_s$ as above.
We write $Z^{\text{bal}}_{H^G_s}$ and $Z^{\text{bal}}_{\widehat{H}^G_s}$ for the corresponding balanced partition functions and $Z_{\widehat{H}^G_s}$ for the ordinary partition function of the decoupled graph.
For $Y \in \{+,-\}^{V(H)}$, let $Z^{\text{bal}}_{H^G_s}(Y)$ be the contribution from balanced independent sets whose gadget phase vector is $Y$, and define $Z_{\widehat{H}^G_s}(Y)$, $Z_{\widehat{H}^G_s}(Y, \tau)$, and 
$Z^{\text{bal}}_{\widehat{H}^G_s}(Y, \tau)$ analogously.

We begin by decomposing the balanced partition function according to the induced phase vector:
\[
Z^{\mathrm{bal}}_{H^G_s}
=
\sum_{Y} Z^{\mathrm{bal}}_{H^G_s}(Y)
=
\sum_{Y:D(Y)=0} Z^{\mathrm{bal}}_{H^G_s}(Y)
+
\sum_{Y:D(Y)\neq 0} Z^{\mathrm{bal}}_{H^G_s}(Y).
\]
Thus it suffices to understand separately the contribution of the phase vectors with the correct macroscopic balance and those with the wrong macroscopic balance.

For a fixed phase vector $Y$, the only difference between $H^G_s$ and 
$\widehat{H}^G_s$ is the set $\mathcal{E}$ of inter-gadget edges. Hence, decomposing according to the global terminal pattern $\tau \in \{0,1\}^T$,

\begin{align}\label{eq: Z_bal_HGsY_to_Z_bal_HatHGsY}
    Z^{\mathrm{bal}}_{H^G_s}(Y)
    =
    \sum_{\tau\in \{0,1\}^{T}} Z^{\mathrm{bal}}_{\widehat H^G_s}(Y,\tau) \prod_{uv\in \mathcal{E}}(1-\tau_u\tau_v),
\end{align}
since $\tau$ is compatible with the inter-gadget edges exactly when no edge $uv\in E$ has both endpoints occupied.

Next we separate the exact balancedness constraint from the unconstrained gadget weight. For fixed $Y$ and $\tau$,
\[
    Z^{\mathrm{bal}}_{\widehat H^G_s}(Y,\tau)
    = 
    Z_{\widehat H^G_s}(Y,\tau)\cdot \mu_{\widehat H^G_s,Y,\tau}(B=0).
\]

First suppose $D(Y ) = 0$ and by Lemma~\ref{lem:decoupled-imbalance}, we have that:
there is a constant $C>0$ such that for every sufficiently large $n$ the following holds uniformly in $\tau$:
\[
    \mu_{\widehat H^G_s,Y,\tau}(B=0)
    \le 
    \frac{C}{\sqrt{nh}}.
\]
Hence, for the good phase vectors $Y$ satisfying $D(Y)=0$,
\[
    Z^{\mathrm{bal}}_{\widehat H^G_s}(Y,\tau)
    \le
    \frac{C}{\sqrt{nh}}\,
    Z_{\widehat H^G_s}(Y,\tau).
\]
Substituting this into (\ref{eq: Z_bal_HGsY_to_Z_bal_HatHGsY}) and summing over $Y$ with $D(Y) = 0$ give
\begin{align} \label{ineq: upper_bound_sum_Zbal_HGsY}
    \sum_{Y:D(Y)=0} Z^{\mathrm{bal}}_{H^G_s}(Y)
    \le
    \frac{C}{\sqrt{nh}}
    \sum_{Y:D(Y)=0}
    \sum_{\tau\in \{0,1\}^{T}}
    Z_{\widehat H^G_s}(Y,\tau)\,
    \prod_{uv \in \mathcal{E}}(1-\tau_u\tau_v).
\end{align}

Now we apply the phase-conditioned nearly-independent terminal law to upper bound $Z_{\widehat H^G_s}(Y,\tau)$ for each such $Y$. By Lemma ~\ref{lem:all_prelim}(i) (and the fact that $(1+\BigO(n^{-2\theta}))^h = 1+o(1)$), we have
\[
    Z_{\widehat H^G_s}(Y,\tau)
    =
    Z_{\widehat H^G_s}(Y)\,
    \mu_{\widehat H^G_s,Y}(\sigma_T=\tau)
    \leq
     Z_{\widehat H^G_s}(Y) (1+o(1))\,Q_T^Y(\tau),
\]
where $ Q_T^Y(\tau)=\prod_{x\in V(H)} Q_{T_x}^{Y_x}(\tau_{T_x})$
is the product of the single-gadget terminal laws in the prescribed phases. 
Absorbing the $1+o(1)$ factor into the constant $C$, we obtain
\[
    \sum_{Y:D(Y)=0} Z^{\mathrm{bal}}_{H^G_s}(Y)
    \le
    \frac{C}{\sqrt{nh}}
    \sum_{Y:D(Y)=0}
    Z_{\widehat H^G_s}(Y)
    \sum_{\tau\in \{0,1\}^{T}}
    Q_T^Y(\tau)
    \prod_{uv \in \mathcal{E}}(1-\tau_u\tau_v).
\]

It remains to evaluate the compatibility probability under the product law $Q_T^Y$. 
Define
$$\Gamma := 1 - q_+ q_-, \quad \Theta_+ := 1 - q_+^2, \quad \Theta_- := 1 - q_-^2, \quad \Theta := \sqrt{\Theta_+ \Theta_-}.$$
If $Y_x = Y_y$, each of the $2k$ linking edges corresponding to $xy \in E(H)$ 
is legal with probability $\Gamma$. If $Y_x \neq Y_y$, one of the two matchings 
contributes $\Theta_+^k$ and the other contributes $\Theta_-^k$. Since no terminal 
is used more than once, the compatibility probability factors over $E(H)$, giving
\[
    \sum_{\tau\in \{0,1\}^{T}}Q_T^Y(\tau)\prod_{uv\in \mathcal{E}}(1-\tau_u\tau_v)
    =
    \Gamma^{2k|E(H)|}\left(\frac{\Theta}{\Gamma}\right)^{2k\,\mathrm{cut}(Y)}.
\]
Also note that $\Theta < \Gamma$, since $\Gamma^2 - \Theta^2 = (q_+ - q_-)^2 > 0.$

Substituting this identity into (\ref{ineq: upper_bound_sum_Zbal_HGsY}) yields
\[
    \sum_{Y:D(Y)=0} Z^{\mathrm{bal}}_{H^G_s}(Y)
    \le
    \frac{C}{\sqrt{nh}}
    \sum_{Y:D(Y)=0}
     Z_{\widehat H^G_s}(Y)\,
    \Gamma^{2k|E(H)|}
    \left(\frac{\Theta}{\Gamma}\right)^{2k\,\mathrm{cut}(Y)}.
\]
If $D(Y ) = 0$, then $Y$ induces a bisection of H, by definition of the minimum bisection value $b$, we have $\mathrm{cut}(Y)\geq b$, which gives
\[
    \left(\frac{\Theta}{\Gamma}\right)^{2k\,\mathrm{cut}(Y)}
    \le
    \left(\frac{\Theta}{\Gamma}\right)^{2kb}.
\]
Together with an obvious upper bound
$
    \sum_{Y:D(Y)=0}Z_{\widehat H^G_s}(Y) \le Z_{\widehat H^G_s},
$
we have
\[
    \sum_{Y:D(Y)=0} Z^{\mathrm{bal}}_{H^G_s}(Y)
    \le
    \frac{C}{\sqrt{nh}}\, \Gamma^{2k|E(H)|}
    \left(\frac{\Theta}{\Gamma}\right)^{2kb}
    Z_{\widehat H^G_s}.
\]
Now we control the contribution of phase vectors $Y$ with $D(Y) \neq 0$. By definition,
\[
    \nu_0(Y)= (\alpha_+-\alpha_-)n\,D(Y).
\]
Since \(\alpha_+>\alpha_-\) and \(D(Y)\neq0\), we have
$|\nu_0(Y)|\ge(\alpha_+-\alpha_-)n$. Moreover,
$|B_{\mathrm{rem}}|=o(n)$. Hence, on the event $\{B=0\}$ and for all
sufficiently large $n$,
\[
\left|(B-B_{\mathrm{rem}})-\nu_0(Y)\right|
=\left|-B_{\mathrm{rem}}-\nu_0(Y)\right|
\ge \delta n
\]
for a constant $\delta>0$. Therefore
\[
    \{B=0\}\subseteq \{|(B-B_{\mathrm{rem}})-\nu_0(Y)|\ge \delta n\}.
\]
Applying Lemma~\ref{lem:hardcore13} yields
\[
    \mu_{\widehat H_s^G,Y,\tau}(B=0)
    \le
    \mu_{\widehat H_s^G,Y,\tau}(|(B-B_{\mathrm{rem}})-\nu_0(Y)|\ge \delta n)
    \le e^{-\Omega(n/h)}.
\]
Consequently,
\begin{align*}
    \sum_{Y:D(Y) \neq 0} Z^{\mathrm{bal}}_{H_s^G}(Y)
    &= 
    \sum_{Y:D(Y) \neq 0} \sum_{\tau\in \{0,1\}^{T}} Z^{\mathrm{bal}}_{\widehat H^G_s}(Y,\tau) \prod_{uv\in \mathcal{E}}(1-\tau_u\tau_v)
    \\
    &=
    \sum_{Y:D(Y) \neq 0} \sum_{\tau\in \{0,1\}^{T}} Z_{\widehat H_s^G}(Y,\tau)\,\mu_{\widehat H_s^G,Y,\tau}(B=0) \prod_{uv\in \mathcal{E}}(1-\tau_u\tau_v) \\
    &\leq
    e^{-\Omega(n/h)} 
    \sum_{Y:D(Y) \neq 0} \sum_{\tau\in \{0,1\}^{T}}Z_{\widehat H_s^G}(Y,\tau) \\
    &\leq
     e^{-\Omega(n/h)} Z_{\widehat H_s^G}.
\end{align*}
Thus the total contribution of the non-balanced phase vectors is exponentially small.

Combining this with the estimate for the balanced phase vectors gives
\begin{equation}\label{eq:upper_total_HGs_bal}
    Z^{\mathrm{bal}}_{H_s^G}
    \le
    \left(\frac{C}{\sqrt{nh}}\, \Gamma^{2k|E(H)|}
    \left(\frac{\Theta}{\Gamma}\right)^{2kb} + e^{-\Omega(n/h)}\right) Z_{\widehat H_s^G}.
\end{equation}

We now prove the matching lower bound. 
Let \(Y^\star\in\{+,-\}^{V(H)}\) be any phase vector with \(D(Y^\star)=0\) and \(\mathrm{cut}(Y^\star)=b\), i.e. \(Y^\star\) encodes a minimum bisection of \(H\). 
By the lower bound of Lemma~\ref{lem:decoupled-imbalance}, for every $\tau \in \{0,1\}^{T}$ and some constant $C>0$
\[
    \mu_{\widehat H_s^G,Y^\star,\tau}(B=0)\ge \frac{e^{-Ch\log n}}{\sqrt{nh}}.
\]
Using the phase-conditioned terminal law as before, we obtain the following lower bound
\begin{align} \label{eq:lower_total_HGs_bal}
    Z^{\mathrm{bal}}_{H_s^G}
    &\ge
    Z^{\mathrm{bal}}_{H_s^G}(Y^\star) \nonumber\\
    &=
    \sum_{\tau\in \{0,1\}^{T}}
    Z^{\mathrm{bal}}_{\widehat H_s^G}(Y^\star,\tau)
    \prod_{uv\in \mathcal{E}}(1-\tau_u\tau_v) \nonumber\\
    &= \sum_{\tau\in \{0,1\}^{T}}
    Z_{\widehat H_s^G}(Y^\star,\tau)\,
    \mu_{\widehat H_s^G,Y^\star,\tau}(B=0) 
    \prod_{uv\in \mathcal{E}}(1-\tau_u\tau_v) \nonumber\\
    &\ge
    \frac{e^{-Ch\log n}}{\sqrt{nh}}
    \sum_{\tau\in \{0,1\}^{T}}
    Z_{\widehat H_s^G}(Y^\star,\tau)\, 
    \prod_{uv\in \mathcal{E}}(1-\tau_u\tau_v) \nonumber\\
    &\ge
    \frac{e^{-Ch\log n}}{\sqrt{nh}}
    Z_{\widehat H_s^G}(Y^\star)
    \sum_{\tau\in \{0,1\}^{T}}
    Q_T^{Y^\star}(\tau)
    \prod_{uv\in \mathcal{E}}(1-\tau_u\tau_v)\nonumber \\
    &=
    \frac{e^{-Ch\log n}}{\sqrt{nh}}\,
    Z_{\widehat H_s^G}(Y^\star)\,
    \Gamma^{2k|E(H)|}
    \left(\frac{\Theta}{\Gamma}\right)^{2kb}.
\end{align}

We now compare $Z_{\widehat H_s^G}(Y^\star)$ estimates to $Z_{\widehat{H}^G_s}$. 
Since $\widehat{H}^G_s$ is the disjoint union of $h$ gadget copies and $s$ isolated  vertices,
$$Z_{\widehat{H}^G_s} = (1 + \lambda)^s (Z_{G,+} + Z_{G,-})^h.$$
For the minimum-bisection phase vector $Y^\star$, which has exactly $h/2$ plus phases  and $h/2$ minus phases, $$Z_{\widehat{H}^G_s}(Y^\star) = (1 + \lambda)^s Z_{G,+}^{h/2} Z_{G,-}^{h/2}.$$
Using the ratio $Z_{G,+} / Z_{G,-} = n^{O(1)}$ from Lemma~\ref{lem:all_prelim}, we get
$
    Z_{\widehat{H}^G_s}(Y^\star) \ge e^{-C_0h\log n} Z_{\widehat{H}^G_s},
$
for some constants \(C_0>0\).

Dividing \eqref{eq:upper_total_HGs_bal} and \eqref{eq:lower_total_HGs_bal} by $Z_{\widehat{H}^G_s}$, and absorbing the factor $\sqrt{nh}$ into $e^{O(h\log n)}$, gives constants $C_1, C_2 < \infty$ such that
\begin{equation}\label{eq:ratio_bounds_main}
    e^{-C_1 h\log n} \Gamma^{2k|E(H)|} \left( \frac{\Theta}{\Gamma} \right)^{2kb} \le \frac{Z^{\text{bal}}_{H^G_s}}{Z_{\widehat{H}^G_s}} \le e^{C_1 h\log n} \Gamma^{2k|E(H)|} \left( \frac{\Theta}{\Gamma} \right)^{2kb} + e^{-C_2 n / h}.
\end{equation}

The additive term is negligible on the logarithmic scale because $k|E(H)| = O(kh^2) = o(n/h)$. Therefore,
$$
    \log Z^{\text{bal}}_{H^G_s} - \log Z_{\widehat{H}^G_s} = 2k|E(H)| \log \Gamma + 2kb \log \left( \frac{\Theta}{\Gamma} \right) + O(h\log n).
$$
Equivalently,
\begin{equation}\label{eq:reduceing_b_from_Z}
    b = \frac{\log Z_{\widehat{H}^G_s} - \log Z^{\text{bal}}_{H^G_s} + 2k|E(H)| \log \Gamma}{2k \log(\Gamma / \Theta)} + O\left( \frac{h\log n}{k} \right).
\end{equation}

Let $N = |V(H^G_s)|$. Since $s = \Theta(nh)$ and
$h=\Theta(n^{\theta/4})$, we have
\[
    N=\Theta(nh)=\Theta(n^{1+\theta/4}),
    \qquad
    k=\Theta(n^{3\theta/4}).
\]
Fix
\begin{equation}
    0<\zeta<\frac{3\theta}{4+\theta}.
    \label{eq:airtight-zeta-choice}
\end{equation}
Then $h\log n=o(k)$ and $N^\zeta=o(k)$.

Assume that we have the randomized polynomial-time $e^{N^\zeta}$-factor approximation algorithm from the statement of the theorem. Applying it to the coupled graph $H^G_s$ gives $\widetilde{Z}^{\text{bal}}_{H^G_s}$ with
$$\left| \log \widetilde{Z}^{\text{bal}}_{H^G_s} - \log Z^{\text{bal}}_{H^G_s} \right| \le N^\zeta$$
with probability at least $2/3$.

For the ordinary decoupled partition function, let $\ell_n$ be supplied by
Lemma~\ref{lem:exact-gadget-annealed-normalization},
and define
\begin{equation}
    L_{\mathrm{dec}}
    :=s\log(1+\lambda)+h\ell_n.
    \label{eq:airtight-decoupled-center}
\end{equation}
Since
\[
    Z_{\widehat H_s^G}=(1+\lambda)^s Z_G^h,
\]
Lemma~\ref{lem:exact-gadget-annealed-normalization} gives
\begin{equation}
    \left|L_{\mathrm{dec}}-\log Z_{\widehat H_s^G}\right|
    \le h(1+C_{\mathrm{pf}}\log n)
    =O(h\log n).
    \label{eq:airtight-decoupled-log-error}
\end{equation}

Now define
\[
    \widetilde{b}
    :=
    \frac{L_{\mathrm{dec}}
    -\log \widetilde{Z}^{\text{bal}}_{H^G_s}
    +2k|E(H)|\log\Gamma}
    {2k\log(\Gamma/\Theta)}.
\]
By Lemma~\ref{lem:exact-gadget-annealed-normalization},
\eqref{eq:airtight-decoupled-log-error}, and
\eqref{eq:reduceing_b_from_Z},
\[
    |\widetilde b-b|
    =O\left(\frac{N^\zeta}{k}+\frac {h\log n}k\right)
    =o(1).
\]
For all sufficiently large $h$ this error is less than $1/3$; finitely
many smaller inputs can be handled by brute force. Since $b$ is an
integer, rounding $\widetilde b$ recovers the exact minimum-bisection value.

The conclusions of Lemma~\ref{lem:all_prelim} hold with probability at
least $99/100$.  The simultaneous left--right conclusions needed from
Corollary~\ref{cor:moment_mgf_bounds_B} hold with probability at least
$9/10$.  Their intersection consequently has probability at least
$89/100$.  Conditional on every resulting simple graph, the assumed
approximation algorithm succeeds with probability at least $2/3$.
Thus one execution of the counting reduction succeeds with probability at
least
\[
    \frac{89}{100}\cdot\frac23=\frac{89}{150}>\frac12.
\]
Independent repetition and majority vote amplify this probability in the
standard way. This proves the counting part of the theorem. In particular,
an FPRAS for $Z^{\text{bal}}_F(\lambda)$ would also imply such an algorithm,
since an FPRAS gives a much stronger approximation than an
$e^{N^\zeta}$-factor approximation.

The same estimates also rule out an efficient approximate sampler.  Comparing
the non-balanced-phase upper bound with \eqref{eq:lower_total_HGs_bal} gives
\[
\mu^{\mathrm{bal}}_{H_s^G}(D(Y)\ne0)
\le \exp\!\left(-\Omega(n/h)+O(kh^2+h\log n)\right)=o(1),
\]
since $kh^2+h\log n=o(n/h)$.  Among phase vectors with $D(Y)=0$, increasing the
cut value by one multiplies the terminal-compatibility factor by
$(\Theta/\Gamma)^{2k}$, while all balance-point-probability, phase-counting,
and multiplicity losses contribute only $e^{O(h\log n)}$ in total.  Hence
\[
\mu^{\mathrm{bal}}_{H_s^G}(D(Y)=0,\ \mathrm{cut}(Y)>b)
\le e^{O(h\log n)}(\Theta/\Gamma)^{2k}=o(1),
\]
because $k\gg h\log n$.
Therefore, after independently tossing a fair auxiliary sign for every tied
gadget copy, a sampler within, say, total variation distance $1/10$ would,
with probability bounded away from zero, yield a phase vector that is a
minimum bisection of $H$. Repeating the sampler polynomially many times would
solve \textbf{MIN-BISECTION} in randomized polynomial time.
\end{proof}

\subsection{Proof of Lemma~\ref{lem:exact-gadget-annealed-normalization}}
\label{subsec:proof-exact-gadget-annealed-normalization}

\begin{proof}[Proof of Lemma~\ref{lem:exact-gadget-annealed-normalization}]
As specified in the gadget construction, $\mathcal A_n$ and all
probabilities below refer to the unconditioned matching law; suppressing
parallel copies preserves every partition function used here.

By Lemma~\ref{lem:all_prelim}\textnormal{(iii)}, with probability at least
$99/100$, simultaneously for $\varsigma\in\{+,-\}$,
\[
    n^{-C'}\mathbb E Z_{G,\varsigma}
    \le Z_{G,\varsigma}\le
    n^{C'}\mathbb E Z_{G,\varsigma}.
\]
Since $Z_G=Z_{G,+}+Z_{G,-}$, summing these inequalities gives
\[
    n^{-C'}\mathbb E Z_G
    \le Z_G\le
    n^{C'}\mathbb E Z_G.
\]
Thus \eqref{eq:exact-gadget-polynomial-window} holds with
$C_{\mathrm{pf}}=C'$. This is the small-subgraph-conditioning conclusion
for the exact gadget, including the appended trees, already recorded in
Lemma~\ref{lem:all_prelim}; no comparison with a different random-graph
ensemble is involved.

It remains to obtain the stated estimate $\ell_n$. 
Put $d:=\Delta-1$.
For a complete rooted $d$-ary tree of depth $d_n$ appearing in the gadget,
we use leaf-marked independence polynomials to keep track of the number of occupied leaves.
For $s\in\{0,1\}$, let $P_j^s(x)$ denote the partition function of a
depth-$j$ rooted $d$-ary tree, conditioned on the root having occupation
state $s$. Each occupied non-leaf vertex contributes its usual activity
$\lambda$, while each occupied leaf contributes the marking variable $x$.
Consequently, the coefficient of $x^r$ in $P_j^s(x)$ is the total weight of
configurations with root state $s$ and exactly $r$ occupied leaves.

At depth zero, the tree consists of a single leaf, so
\[
P_0^0(x)=1,
\qquad
P_0^1(x)=x.
\]
The activity of an occupied leaf is not included here because each leaf is
also a vertex of the matching core, and its activity will be accounted for
separately below. For $j\ge 0$, conditioning on the state of the root gives
\begin{equation}
P_{j+1}^0(x)
=\bigl(P_j^0(x)+P_j^1(x)\bigr)^d,
\qquad
P_{j+1}^1(x)
=\lambda\bigl(P_j^0(x)\bigr)^d.
\label{eq:exact-tree-recursion}
\end{equation}
Indeed, when the root is unoccupied, each child root may be either occupied
or unoccupied, whereas an occupied root forces all of its children to be
unoccupied.

There are $m$ such trees on each side of the gadget. We therefore write
\begin{equation}
\bigl(P_{d_n}^0(x)+P_{d_n}^1(x)\bigr)^m
=\sum_{w=0}^{m'} c_w x^w.
\label{eq:exact-forest-polynomial}
\end{equation}
Thus, $c_w$ is the total weight contributed by the internal vertices of the
forest on one side, summed over all configurations in which exactly $w$ of
the $m'$ leaves shared with the matching core are occupied.

Recall that $M = n+m'$ is the size of each bipartition class of the matching core $G^{'}$.
For $a,b\ge0$, set
\[
    \rho_M(a,b):=
    \begin{cases}
    \displaystyle\frac{\binom{M-a}{b}}{\binom Mb},
        &a+b\le M,\\[1.2ex]
    0,&a+b>M.
    \end{cases}
\]
This is the probability that a uniform perfect matching contains no edge
between a fixed $a$-set on its left and a fixed $b$-set on its right.
Expose the last of the $\Delta$ matchings in the construction of $G'$ and
relabel its endpoints. Its $m'$ deleted edges leave a fixed size-$n$
matching between $U^L$ and $U^R$, while the other $\Delta-1$ matchings
remain independent uniform perfect matchings on $M$ vertices per side.
Consequently,
\begin{equation}
    \mathcal A_n
    =
    \sum_{\substack{u,v\ge0\\u+v\le n}}
    \binom nu\binom{n-u}{v}
    \sum_{w,z=0}^{m'}
    c_wc_z\,\lambda^{u+v+w+z}
    \rho_M(u+w,v+z)^{\Delta-1}.
    \label{eq:exact-annealed-first-moment}
\end{equation}
Indeed, $u,v$ are the occupation numbers in $U^L,U^R$;
$\binom nu\binom{n-u}{v}$ counts the choices compatible with the
surviving size-$n$ matching; $w,z$ count the occupied shared leaves; and
each remaining matching contributes the displayed avoidance probability.

The recursion \eqref{eq:exact-tree-recursion} and the convolution in
\eqref{eq:exact-forest-polynomial} involve polynomials of degree at most
$m'$. Equation~\eqref{eq:exact-annealed-first-moment} has
$O(n^2(m')^2)$ nonnegative summands. Since
$m'=O(n^{\theta+\psi})$, the displayed recursions and sum yield the required
$\ell_n$ in polynomial time.
\end{proof}

\section{Hardness for Fixed Slice Sampling} \label{sec:density_hardness}
In this section we prove Theorem~\ref{Thm:FixedDensityHardness}. The reduction uses the same phase-coexistence gadget construction and terminal-compatibility calculation as Section~\ref{sec:hardness}; the new ingredient is the centering
of a two-dimensional fixed slice.

Let
\(\lambda_0:=\lambda(\alpha)\), and set
\[
    d_0:=\alpha_+(\lambda_0)-\alpha_-(\lambda_0),
    \qquad
    \delta:=\alpha_L-\alpha_R .
\]
The assumption in Theorem~\ref{Thm:FixedDensityHardness} is equivalent to
\(|\delta|<d_0\). Choose \(\lambda_\star\in(\lambda_c(\Delta),\lambda_0)\) sufficiently close to
\(\lambda_0\), and write
\[
    a_\pm:=\alpha_\pm(\lambda_\star),
    \qquad
    d:=a_+-a_-,
    \qquad
    p:=\frac{\lambda_\star}{1+\lambda_\star}.
\]
Since \(\lambda_\star<\lambda_0\) is close to \(\lambda_0\), we have
\((a_++a_-)/2<\alpha\) and \(p>\alpha\). Define
\[
    u:=\frac{\alpha-(a_++a_-)/2}{p-\alpha}>0,
    \qquad
    \gamma:=\frac12\left(1+\frac{\delta(1+u)}{d}\right).
\]
By taking \(\lambda_\star\) sufficiently close to \(\lambda_0\), and perturbing
slightly if necessary, we may assume \(\gamma\in(0,1)\cap\mathbb Q\). Finally,
set $\beta_L:=\gamma a_+ +(1-\gamma)a_-$ and $\beta_R:=\gamma a_- +(1-\gamma)a_+.$ Then we have that
\begin{equation}
\label{eq:fixed_density_centering}
    \alpha_L(1+u)=\beta_L+up,
    \qquad
    \alpha_R(1+u)=\beta_R+up .
\end{equation}

The $\textbf{NP-hard}$ problem we reduce from is \(\gamma\)-\textsc{MEBC} which is defined below.

\begin{fact}[\(\gamma\)-\textsc{MEBC} {\cite[p.~644]{feige2003cutting}}]
\label{fact:gamma_MEBC_hard}
Fix any rational $\gamma\in(0,1).$ Given a graph $H$ on $h$ vertices with
$\gamma h\in\mathbb Z,$ it is $\textbf{NP-hard}$ to compute
\[
    b_\gamma(H)
    :=
    \min\left\{
        |E_H(S,V(H)\setminus S)|:
        |S|=\gamma h
    \right\}.
\]
\end{fact}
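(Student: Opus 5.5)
We reduce from \textbf{MIN-BISECTION} (Fact~\ref{fact:hardness_MINBisection}). We pad the input graph with two large cliques whose sizes are chosen so that any set of size exactly $\gamma N$ is forced to meet $V(H)$ in exactly half of its vertices. Two reductions come first. The value $b_\gamma(H)$ is invariant under replacing $S$ by its complement, which exchanges $\gamma$ and $1-\gamma$. So we may assume $\gamma\le 1/2$. The case $\gamma=1/2$ is exactly \textbf{MIN-BISECTION}, so we assume $\gamma<1/2$ from now on.

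The construction is as follows. Given $H$ on $h$ vertices with $h$ even, form $H'=H\sqcup K_a\sqcup K_c$ on $N=h+a+c$ vertices. We choose $c$ huge and $a$ so that $\gamma N-a=h/2$, that is, $a=(\gamma(h+c)-h/2)/(1-\gamma)$. Write $\gamma=p/q$. We choose $c$ in a suitable residue class modulo a multiple of $q(1-\gamma)^{-1}$ denominators, so that $a$ is a positive integer and $\gamma N\in\mathbb Z$. Such a $c$ exists with $c=O(h^{2})$, so the construction is polynomial. We also make $a,c>|E(H)|+1$.

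The key step is to show that in an optimal set $S$ of size $\gamma N$ each clique lies entirely on one side. Splitting $K_r$ costs at least $r-1>|E(H)|$. On the other hand, a cut of cost at most $|E(H)|$ always exists once the clique placement is feasible, namely $S\supseteq K_a$ together with any $h/2$ vertices of $H$. Hence every optimal $S$ keeps both cliques whole.

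Among the four whole placements, $|S\cap V(H)|$ equals $\gamma N$, $\gamma N-a$, $\gamma N-c$, or $\gamma N-a-c$, and it must lie in $[0,h]$.
\begin{itemize}
\item The value $\gamma N-a-c$ is negative since $\gamma<1$ and $c$ is large.
\item The value $\gamma N>h$ once $c$ is large.
\item The value $\gamma N-c<0$ requires $c(1-\gamma)>\gamma(h+a)$. Since $a\approx\gamma c/(1-\gamma)$, this reduces asymptotically to $\gamma^2<(1-\gamma)^2$, which holds precisely because $\gamma<1/2$.
\end{itemize}
So the only feasible placement is $K_a\subseteq S$, $K_c\cap S=\varnothing$, $|S\cap V(H)|=h/2$. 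The cut value is then exactly the bisection cut in $H$, giving $b_\gamma(H')=b(H)$.

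The main obstacle is the arithmetic in the last step. We must check that the inequality $c(1-\gamma)>\gamma(h+a)$ holds with room to spare, uniformly once $c\ge C_\gamma h$. We must also check that the integrality constraints on $a$ and $\gamma N$ can be met simultaneously. I would verify the inequality first by substituting the exact formula for $a$, which turns it into $c((1-\gamma)^2-\gamma^2)>\gamma h(1-\gamma)-\gamma h/2+\dots$, a linear condition in $c$. Integrality then only restricts $c$ to an arithmetic progression with bounded modulus. Once both are settled, the reduction is polynomial and the hardness of $\gamma$-\textsc{MEBC} follows.
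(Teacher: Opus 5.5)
Your reduction is correct. It differs from the paper only in that the paper gives no argument at all: it records $\gamma$-\textsc{MEBC} hardness as a Fact with a citation to \cite{feige2003cutting}.

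Your two-clique padding derives the statement directly from Fact~\ref{fact:hardness_MINBisection}, which the paper already relies on in Section~\ref{sec:hardness}. It therefore makes the fixed-$\gamma$ form needed in Section~\ref{sec:density_hardness} self-contained, at the cost of a short computation. The individual steps all check out:
\begin{itemize}
\item The complement symmetry $b_\gamma=b_{1-\gamma}$ holds.
\item Splitting a clique costs $r-1>|E(H)|$, while the set $K_a\cup(\text{any } h/2 \text{ vertices of } H)$ is feasible with cost at most $|E(H)|$, so an optimal set never splits a clique.
\item The three other whole placements are correctly ruled out. The only-$K_c$ case reduces exactly to $(1-2\gamma)c>\gamma h/2$, which is where $\gamma<1/2$ is used.
\end{itemize}

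The one item you left open, integrality, closes cleanly. Your description of the modulus is garbled; the actual constraint is $c\equiv -h/2 \pmod{q-p}$ when $\gamma=p/q$ in lowest terms, using that $h$ is even. Equivalently, take an integer $k$ and set
\[
N=kq,\qquad a=kp-h/2,\qquad c=k(q-p)-h/2.
\]
Then $\gamma N=kp\in\mathbb Z$ and $\gamma N-a=h/2$ hold automatically. The placement conditions become $kp>h$ and $k(q-2p)>h/2$; together these already force $c>h/2$. Adding $a,c\ge|E(H)|+2$, all requirements are met with $k=O(h^2)$, so the reduction is polynomial.
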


Let $H$ be an input graph on $h$ vertices with $\gamma h\in\mathbb Z$. Set
\[
    j_0:=\gamma h,
    \qquad
    D_0:=2j_0-h=(2\gamma-1)h.
\]
Except for the choice of fugacity and the number of isolated vertices, we use
the same construction as in Section~\ref{sec:hardness}. In particular, the graph $H_s^G$, the decoupled graph $\widehat H_s^G$, the global terminal set
$T$, the inter-gadget edge set $\mathcal E$, the product terminal law $Q_T^Y$,
and the compatibility factors $\Gamma,\Theta$ are defined as in
Section~\ref{sec:hardness}, but now with the gadget analyzed at fugacity
$\lambda_\star$. The only new centering condition is that the fixed slice selects
phase vectors with $D(Y)=D_0$, rather than phase vectors with $D(Y)=0$.
As there, independent fair auxiliary signs resolve tied gadgets, and every
phase-restricted quantity averages those signs with weight $2^{-h}$.

We choose the gadget size $n = n(h)$ polynomially large in $h$, with the 
constant in the polynomial chosen so that
$$h = \Theta\left(n^{\theta/4}\right), \quad kh \le m, \quad \frac{h\log n}{k} = o(1),$$
where
$$k := \left\lfloor n^{3\theta/4} \right\rfloor, \quad m := \left\lfloor n^\theta \right\rfloor.$$
We also choose the constants so that $kh^2=o(n/h)$ and $n^\theta=o(n/h^2).$
Given the gadget graph $G$ as above, write $g_n := |L(G)|=|R(G)|.$ By construction, $g_n=n+o(\sqrt n).$ 
Define the even integer $s := 2\lfloor uhn\rfloor.$
The graph $H_s^G$ is obtained from $h$ copies of $G$, together with $s/2$
isolated vertices on each side, by adding the same terminal matchings as in
Section~\ref{sec:hardness}. Both sides of $H_s^G$ have size
\[
    M:=h g_n+s/2.
\]
The target fixed slice is
\begin{equation} \label{eq:fixed_density_target_slice}
    r_L := \lfloor \alpha_L M \rfloor, \quad r_R := \lfloor \alpha_R M \rfloor.
\end{equation}

We retain the balance notation from Section~\ref{sec:hardness}. For each gadget copy $G_x$, let
\[
    X_{L,x}:=|I\cap U_x^L|,
    \qquad
    X_{R,x}:=|I\cap U_x^R|,
    \qquad
    B_x:=X_{L,x}-X_{R,x},
\]
so $X_{L,x},X_{R,x}$ and $B_x$ are the core occupation variables and core imbalance used in Section~\ref{sec:hardness}. Let
\[
    X_{L,x}^{\mathrm{rem}}
    :=
    |I\cap(L(G_x)\setminus U_x^L)|,
    \qquad
    X_{R,x}^{\mathrm{rem}}
    :=
    |I\cap(R(G_x)\setminus U_x^R)|.
\]
Accordingly, the non-core imbalance $B_{\mathrm{rem}}$ defined in Section~\ref{sec:hardness} is
\[
    B_{\mathrm{rem}}
    =
    \sum_{x\in V(H)}
    \left(
        X_{L,x}^{\mathrm{rem}}-X_{R,x}^{\mathrm{rem}}
    \right).
\]
Let $J_L$ and $J_R$ be the occupation counts of the isolated vertices on the left and right sides, respectively, and set
\[
    B_{\mathrm{iso}}:=J_L-J_R.
\]
The full side-occupation counts on $\widehat H_s^G$ are
\[
    X_L
    :=
    \sum_{x\in V(H)}
    \left(X_{L,x}+X_{L,x}^{\mathrm{rem}}\right)+J_L,
    \qquad
    X_R
    :=
    \sum_{x\in V(H)}
    \left(X_{R,x}+X_{R,x}^{\mathrm{rem}}\right)+J_R,
\]
and we write $\mathbf X:=(X_L,X_R)$. As in Section~\ref{sec:hardness}, $B$ denotes the true full imbalance, so
\[
    B:=X_L-X_R
    =
    \sum_{x\in V(H)}B_x+B_{\mathrm{iso}}+B_{\mathrm{rem}}.
\]
Thus the event $\mathbf X=(r_L,r_R)$ is exactly the target fixed-slice event, and on this event $B=r_L-r_R$.
The moment and exponential-moment estimates from Lemma~\ref{lem:moment_mgf_bounds_X} and Corollary~\ref{cor:moment_mgf_bounds_B} apply to the core variables, while the full constraint is expressed through $B$ and $\mathbf X$.

To adapt the hardness reduction utilized in the previous section, we need two inputs: an exponential penalty when the global phase is not aligned with the fixed slice and a sub-exponential penalty when the global phase is aligned with it. For the first input, we have the following.

\begin{lemma} \label{lem:fix_exp_penalty}
There exists $c>0$ such that the following holds for all sufficiently large $n$. Fix a phase vector $Y\in\{+,-\}^{V(H)},$ a terminal configuration $\tau\in\{0,1\}^T,$ and a target fixed slice $(r_L,r_R)\in\mathbb Z^2.$ Suppose that
$$
    \left|
        ndD(Y)
        -
        (r_L-r_R)
    \right|
    \ge
    c_0 n
$$
for some fixed constant $c_0>0$, and suppose that $G$ satisfies
Corollary~\ref{cor:moment_mgf_bounds_B} at the value $t=t_{c_0/2}$ from
Lemma~\ref{lem:hardcore13}. Then
$$
    \mu_{\widehat H_s^G,Y,\tau}
    \left(
        \mathbf X=(r_L,r_R)
    \right)
    \le
    \exp(-cn/h).
$$
In particular, for the slice \eqref{eq:fixed_density_target_slice}, this holds for every $Y$ with $D(Y)\ne D_0.$
\end{lemma}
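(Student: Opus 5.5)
The plan is to reduce the fixed-slice event to a large deviation of the core-plus-isolated imbalance and then invoke Lemma~\ref{lem:hardcore13}. On the event $\mathbf X=(r_L,r_R)$ we have $B=X_L-X_R=r_L-r_R$, so
\[
    \mu_{\widehat H_s^G,Y,\tau}\bigl(\mathbf X=(r_L,r_R)\bigr)
    \le
    \mu_{\widehat H_s^G,Y,\tau}\bigl(B=r_L-r_R\bigr).
\]
Recall that $\nu_0(Y)=n(a_+-a_-)D(Y)=ndD(Y)$ at fugacity $\lambda_\star$. Since $|B_{\mathrm{rem}}|\le Chn^{\theta+\psi}=o(n)$ deterministically, on $\{B=r_L-r_R\}$ and for $n$ large we get
\[
    \bigl|(B-B_{\mathrm{rem}})-\nu_0(Y)\bigr|
    \ge |ndD(Y)-(r_L-r_R)|-|B_{\mathrm{rem}}|
    \ge c_0 n-o(n)\ge \tfrac{c_0}{2}n .
\]
Lemma~\ref{lem:hardcore13} with $\delta=c_0/2$ then gives the bound $\exp(-c n/h)$ with $c=c(c_0/2)$, uniformly in $Y$, $\tau$, and the slice. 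That lemma was stated for fugacity $\lambda$ and an even $s=\Theta(nh)$; here the gadget is at $\lambda_\star$ and $s=2\lfloor uhn\rfloor=\Theta(nh)$. Its proof (Chernoff bound for binomial isolated vertices combined with Corollary~\ref{cor:moment_mgf_bounds_B} at $t=\Theta(1/h)$) carries over verbatim, which justifies the hypothesis that $G$ satisfies the corollary at $t_{c_0/2}$.

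For the ``in particular'' clause, I would take $D(Y)\ne D_0$ with $D(Y)\equiv h\pmod 2$, so that $|D(Y)-D_0|\ge 2$. It then suffices to show $r_L-r_R=ndD_0+o(n)$, since this gives
\[
    |ndD(Y)-(r_L-r_R)|\ge 2nd-o(n)\ge c_0 n
\]
with $c_0:=d$.

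The main computation is this centering. We have $M=hg_n+s/2=hn(1+u)+o(n)$, using $g_n=n+o(\sqrt n)$ and the fact that the floors contribute $O(1)$. Then $r_L-r_R=(\alpha_L-\alpha_R)M+O(1)=\delta(1+u)hn+o(n)$. By the definition of $\gamma$, $\delta(1+u)=d(2\gamma-1)$, so
\[
    r_L-r_R=d(2\gamma-1)hn+o(n)=ndD_0+o(n).
\]
The same identity is also encoded in \eqref{eq:fixed_density_centering}, since $\beta_L-\beta_R=(2\gamma-1)d$.

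The only delicate point is the error bookkeeping: the $o(n)$ terms coming from $hg_n$ versus $hn$ and from $B_{\mathrm{rem}}$ must be genuinely $o(n)$ rather than $o(hn)$. The term $hn^{\theta+\psi}$ is fine because $h=\Theta(n^{\theta/4})$ and $\theta+\psi+\theta/4<1$. The term $h\cdot o(\sqrt n)$ is fine as well.
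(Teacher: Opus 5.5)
Your proof is correct and follows the paper's argument essentially verbatim. You contain the slice event in $\{|(B-B_{\mathrm{rem}})-ndD(Y)|\ge c_0n/2\}$ using $|B_{\mathrm{rem}}|=o(n)$, invoke Lemma~\ref{lem:hardcore13} at fugacity $\lambda_\star$, and then verify $r_L-r_R=ndD_0+o(n)$ from the definition of $\gamma$ together with $|D(Y)-D_0|\ge 2$. The differences are cosmetic: you take $c_0=d$ where the paper takes $c_0=d/2$ (either works), and the parity condition you impose is automatic, since $D(Y)=2|\{x:Y_x=+\}|-h$.
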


\begin{proof}
Applying Lemma~\ref{lem:hardcore13} at fugacity $\lambda_\star$, with the
phase-density difference $\alpha_+-\alpha_-$ there equal to $d=a_+-a_-$ here
and with $\eta=c_0/2$, gives
\[
    \mu_{\widehat H_s^G,Y,\tau}
    \left(
        \left|(B-B_{\mathrm{rem}})-ndD(Y)\right|
        \ge
        \frac{c_0}{2}n
    \right)
    \le
    \exp(-cn/h)
\]
for some $c=c(c_0)>0$.

On the event $\mathbf X=(r_L,r_R)$, the full imbalance satisfies $B=r_L-r_R$. Hence, using $|B_{\mathrm{rem}}|=o(n)$, $
    \left|(B-B_{\mathrm{rem}})-ndD(Y)\right|
    \ge
    \frac{c_0}{2}n
$, for all sufficiently large $n$.
Therefore $
    \{\mathbf X=(r_L,r_R)\}
    \subseteq
    \left\{
        \left|(B-B_{\mathrm{rem}})-ndD(Y)\right|
        \ge
        \frac{c_0}{2}n
    \right\},
$
and Lemma~\ref{lem:hardcore13} yields
\[
    \mu_{\widehat H_s^G,Y,\tau}
    \left(\mathbf X=(r_L,r_R)\right)
    \le
    \exp(-cn/h).
\]

It remains to verify the final assertion. From~\eqref{eq:fixed_density_target_slice},
$g_n=n+o(\sqrt n)$, and $s/2=uhn+O(1)$, we have
\[
\begin{aligned}
    r_L-r_R
    &=
    (\alpha_L-\alpha_R)(hg_n+s/2)+O(1) \\
    &=
    \delta hn(1+u)+O(h\sqrt n) \\
    &=
    dnD_0+O(h\sqrt n),
\end{aligned}
\]
where the last identity follows from the definition of $\gamma$. If
$D(Y)\ne D_0$, then $|D(Y)-D_0|\ge 2$, and hence
\[
    |ndD(Y)-(r_L-r_R)|
    \ge
    dn-O(h\sqrt n)
    \ge
    \frac d2 n
\]
for all sufficiently large $n$.
\end{proof}

We now show that when $D(Y) = D_0$, we incur only a sub-exponential penalty.

\begin{lemma}
\label{lem:fix_poly_penalty}
There exists a constant \(C<\infty\) such that, for all sufficiently large
\(n\), provided $G$ satisfies the simultaneous left--right moment estimates
used in the proof of Corollary~\ref{cor:moment_mgf_bounds_B}, uniformly over
every phase vector \(Y\in\{+,-\}^{V(H)}\) with \(D(Y)=D_0\) and
every terminal configuration \(\tau\in\{0,1\}^T\),
\[
    \frac{e^{-Ch\log n}}{nh}
    \le
    \mu_{\widehat H_s^G,Y,\tau}
    \left(
        \mathbf X=(r_L,r_R)
    \right)
    \le
    \frac{C}{nh}.
\]
\end{lemma}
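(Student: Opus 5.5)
Under $\mu_{\widehat H_s^G,Y,\tau}$ the isolated vertices are independent of the gadget copies, and $J_L,J_R$ are independent $\mathrm{Bin}(s/2,p)$ variables with $p=\lambda_\star/(1+\lambda_\star)$. Set $S_L:=\sum_x(X_{L,x}+X^{\mathrm{rem}}_{L,x})$ and $S_R$ analogously. Conditioning on $(S_L,S_R)$ gives
\[
\mu_{\widehat H_s^G,Y,\tau}\bigl(\mathbf X=(r_L,r_R)\bigr)
=\mathbb E\Bigl[\Pr(J_L=r_L-S_L)\,\Pr(J_R=r_R-S_R)\Bigr].
\]
So the statement reduces to a two-dimensional version of Lemma~\ref{lem:decoupled-imbalance}, in which the two isolated binomials smooth the two coordinates separately. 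For the upper bound, apply the standard anticoncentration estimate $\sup_j\Pr(\mathrm{Bin}(M',p)=j)\le C_0/\sqrt{M'}$ to each factor. With $M'=s/2=\Theta(nh)$ this gives $C/(nh)$ uniformly in $Y,\tau$ and $(r_L,r_R)$.

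For the lower bound, use the local Stirling estimate from the proof of Lemma~\ref{lem:isolated-imbalance}: for $|r-M'p|\le\delta_0M'$,
\[
\Pr(\mathrm{Bin}(M',p)=r)\ge \frac{c_0}{\sqrt{M'}}\,e^{-C_0(r-M'p)^2/M'}.
\]
It then suffices to show that, with probability bounded below by a constant $c_4$ under $\mu_{\widehat H_s^G,Y,\tau}$, both $|r_L-S_L-pM'|$ and $|r_R-S_R-pM'|$ are $O(h\sqrt{n\log n})$. On that event each factor is at least $e^{-O(h\log n)}/\sqrt{nh}$, because the squared deviation over $M'$ is $O(h\log n)$. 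Taking the product and the expectation gives $e^{-Ch\log n}/(nh)$.

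The main step is the centering computation. Let $j_+=j_0$ be the number of plus phases, so $D(Y)=D_0$ forces $j_+=\gamma h$. Lemma~\ref{lem:moment_mgf_bounds_X}, together with its left--right reflected version used in the proof of Corollary~\ref{cor:moment_mgf_bounds_B} (both at fugacity $\lambda_\star$), gives for each copy, uniformly in $\tau_x$,
\[
\langle X_{L,x}\rangle = n\,a_{Y_x}+O(\sqrt{n\log n}),\qquad \mathrm{Var}(X_{L,x})=O(n\log n),
\]
and similarly $\langle X_{R,x}\rangle=n\,a_{\bar Y_x}+O(\sqrt{n\log n})$. Summing over the independent copies,
\[
\langle S_L\rangle=n\bigl(\gamma h a_+ +(1-\gamma)h a_-\bigr)+O(h\sqrt{n\log n})+O(hn^{\theta+\psi})=hn\beta_L+O(h\sqrt{n\log n}),
\]
where the term $hn^{\theta+\psi}$ bounds the non-core vertices, and $\mathrm{Var}(S_L)=O(nh\log n)$. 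The same holds for $S_R$ with $\beta_R$. Using $g_n=n+o(\sqrt n)$ and $s/2=uhn+O(1)$, we get $r_L=\alpha_L(1+u)hn+O(h\sqrt n)$. The centering identity \eqref{eq:fixed_density_centering} then gives
\[
r_L-hn\beta_L-p\,s/2=O(h\sqrt n),
\]
and likewise on the right. So the target deviations $r_L-S_L-pM'$ and $r_R-S_R-pM'$ have mean $O(h\sqrt{n\log n})$. Chebyshev's inequality applied to $S_L$ and $S_R$ separately, followed by a union bound, shows that both deviations are at most $C_4h\sqrt{n\log n}$ with probability at least $1/2$.

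Finally, check the window condition: $h\sqrt{n\log n}=o(s)$, so both targets lie in the range $|r-M'p|\le\delta_0M'$ where the Stirling bound applies. This is the only place requiring care. The error terms from replacing $\alpha_\pm$ by $a_\pm$, and from the floors in $r_L$ and $s$, must all be $O(h\sqrt{n\log n})$; they are, because \eqref{eq:fixed_density_centering} is an exact identity, so no $\Theta(n)$ discrepancy survives.
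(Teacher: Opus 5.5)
Your proposal is correct and follows the paper's proof essentially step for step. You condition on the gadget contribution and use binomial anticoncentration for the upper bound. For the lower bound you combine the simultaneous left--right moment estimates, Chebyshev with a union bound, the exact centering identity \eqref{eq:fixed_density_centering}, and the local binomial lower bound on an $O(h\sqrt{n\log n})$ window.

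The only cosmetic difference is that you put the non-core vertices into $S_L,S_R$ before applying Chebyshev. The paper instead applies Chebyshev to the core counts and absorbs the $O(hn^{\theta+\psi})=o(h\sqrt n)$ non-core term deterministically afterwards. Both routes work.
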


\begin{proof}
Define the core gadget counts
\[
    S_L:=\sum_{x\in V(H)}X_{L,x},
    \qquad
    S_R:=\sum_{x\in V(H)}X_{R,x},
\]
and the non-core gadget counts
\[
    S_L^{\mathrm{rem}}
    :=
    \sum_{x\in V(H)}X_{L,x}^{\mathrm{rem}},
    \qquad
    S_R^{\mathrm{rem}}
    :=
    \sum_{x\in V(H)}X_{R,x}^{\mathrm{rem}}.
\]
Thus
\[
    X_L=S_L+S_L^{\mathrm{rem}}+J_L,
    \qquad
    X_R=S_R+S_R^{\mathrm{rem}}+J_R.
\]
Write
\[
    \widetilde S_L:=S_L+S_L^{\mathrm{rem}},
    \qquad
    \widetilde S_R:=S_R+S_R^{\mathrm{rem}}
\]
for the full contributions of all gadget vertices. 
The isolated variables
$J_L,J_R\stackrel{\mathrm{ind}}{\sim}\operatorname{Bin}(s/2,p)$ are independent of $(\widetilde S_L,\widetilde S_R)$ under $\mu_{\widehat H_s^G,Y,\tau}$.

For the upper bound, conditioning on the full gadget contribution gives
\[
\begin{aligned}
    \mu_{\widehat H_s^G,Y,\tau}\left(\mathbf X=(r_L,r_R)\right)
    &=
    \mathbb E_{\widehat H_s^G,Y,\tau}
    \left[
        \mathbb P(J_L=r_L-\widetilde S_L)\,
        \mathbb P(J_R=r_R-\widetilde S_R)
    \right] \\
    &\le
    \sup_a\mathbb P(J_L=a)\,
    \sup_b\mathbb P(J_R=b)
    \le
    \frac{C}{nh},
\end{aligned}
\]
because $s=\Theta(nh)$ and the maximum point mass of a binomial $\operatorname{Bin}(s/2,p)$ random variable is $O(s^{-1/2})$.

For the lower bound, use $D(Y)=D_0$. Then $Y$ has exactly $j_0=\gamma h$ plus
phases and $h-j_0=(1-\gamma)h$ minus phases. Hence the leading centers of the
    core gadget contributions are $hn\beta_L$ and $hn\beta_R$. By the
    simultaneous left--right version of Lemma~\ref{lem:moment_mgf_bounds_X}
    established in the proof of Corollary~\ref{cor:moment_mgf_bounds_B}, after
    summing over the independent gadget copies,
\[
    \left|
        \mathbb E_{\widehat H_s^G,Y,\tau}S_L-hn\beta_L
    \right|
    \le C_0h\sqrt{n\log n},
    \qquad
    \left|
        \mathbb E_{\widehat H_s^G,Y,\tau}S_R-hn\beta_R
    \right|
    \le C_0h\sqrt{n\log n},
\]
and
\[
    \operatorname{Var}_{\widehat H_s^G,Y,\tau}(S_L)
    +
    \operatorname{Var}_{\widehat H_s^G,Y,\tau}(S_R)
    \le C_0nh\log n.
\]
Choose $A>0$ sufficiently large. Chebyshev's inequality and a union bound imply that, for some constant $c_0>0$,
\[
    \mu_{\widehat H_s^G,Y,\tau}(\mathcal G_S)
    \ge c_0,
\]
where
\[
    \mathcal G_S:=
    \left\{
        |S_L-hn\beta_L|\le Ah\sqrt{n\log n},
        \ |S_R-hn\beta_R|\le Ah\sqrt{n\log n}
    \right\}.
\]
Since each gadget has $O(n^{\theta+\psi})$ non-core vertices on each side,
\[
    0\le S_L^{\mathrm{rem}}\le Chn^{\theta+\psi},
    \qquad
    0\le S_R^{\mathrm{rem}}\le Chn^{\theta+\psi},
\]
and $hn^{\theta+\psi}=o(h\sqrt n)$. Therefore, after increasing $A$ if necessary, the event $\mathcal G_S$ implies
\[
    |\widetilde S_L-hn\beta_L|\le 2Ah\sqrt{n\log n},
    \qquad
    |\widetilde S_R-hn\beta_R|\le 2Ah\sqrt{n\log n}.
\]

On the other hand, by~\eqref{eq:fixed_density_centering}, $s/2=uhn+O(1)$, $g_n=n+o(\sqrt n)$, and the definition of $M$, $
    r_L
    =
    \alpha_L(hg_n+s/2)+O(1) =
    hn\beta_L+(s/2)p+O(h\sqrt n),
$
and similarly $
    r_R=hn\beta_R+(s/2)p+O(h\sqrt n)
$.

After increasing $A$ once more if necessary, the event $\mathcal G_S$ implies
\[
    |r_L-\widetilde S_L-(s/2)p|
    \le
    3Ah\sqrt{n\log n},
    \qquad
    |r_R-\widetilde S_R-(s/2)p|
    \le
    3Ah\sqrt{n\log n}.
\]
A standard local lower bound for binomial random variables gives constants $c_1,C_1>0$, depending only on $p$, such that whenever
$|j-(s/2)p|\le 3Ah\sqrt{n\log n}$,
\[
    \mathbb P(\operatorname{Bin}(s/2,p)=j)
    \ge
    \frac{c_1}{\sqrt s}
    \exp\left(-C_1\frac{(j-(s/2)p)^2}{s}\right).
\]
Since $s=\Theta(nh)$, the exponent above is $O(h\log n)$ throughout this window.
Thus, on $\mathcal G_S$,
\[
    \mathbb P(J_L=r_L-\widetilde S_L)
    \ge
    \frac{e^{-C_2h\log n}}{\sqrt{nh}},
    \qquad
    \mathbb P(J_R=r_R-\widetilde S_R)
    \ge
    \frac{e^{-C_2h\log n}}{\sqrt{nh}}.
\]
Therefore
\[
\begin{aligned}
    \mu_{\widehat H_s^G,Y,\tau}\left(\mathbf X=(r_L,r_R)\right)
    &\ge
    \mathbb E_{\widehat H_s^G,Y,\tau}
    \left[
        \mathbb P(J_L=r_L-\widetilde S_L)\,
        \mathbb P(J_R=r_R-\widetilde S_R)\,
        \mathbf 1_{\mathcal G_S}
    \right] \\
    &\ge
    \frac{e^{-C_3h\log n}}{nh}\,
    \mu_{\widehat H_s^G,Y,\tau}(\mathcal G_S)
    \ge
    \frac{e^{-Ch\log n}}{nh},
\end{aligned}
\]
\end{proof}

\begin{proof}[Proof of Theorem~\ref{Thm:FixedDensityHardness}]
Sample $G$ from the gadget distribution. Let $\mathcal G$ be the intersection
of the event in Lemma~\ref{lem:all_prelim} and the simultaneous left--right
event used in the proof of Corollary~\ref{cor:moment_mgf_bounds_B}, at the fixed
value $t=t_{d/4}$ needed for the final assertion of
Lemma~\ref{lem:fix_exp_penalty}.
Then $\mathbb P(\mathcal G)\ge 89/100$. We condition on $\mathcal G$ until the
success-probability calculation below.

Let \(H\) be the input graph for \(\gamma\)-\textsc{MEBC}, and write
\(b:=b_\gamma(H)\). We use the graph \(H_s^G\), the decoupled graph
\(\widehat H_s^G\), and the target slice \((r_L,r_R)\) constructed above. For
notational convenience, in the rest of the proof we write \(H^G\) and
\(\widehat H^G\) for \(H_s^G\) and \(\widehat H_s^G\).

All partition functions and measures below are taken at fugacity
\(\lambda_\star\). We write \(Z^{\mathrm{fix}}_{H^G}\) for the
\(\lambda_\star\)-weighted partition function over independent sets satisfying
\[
    |I\cap L(H^G)|=r_L,
    \qquad
    |I\cap R(H^G)|=r_R.
\]
Equivalently, the corresponding unweighted fixed-slice count differs from this
weighted quantity by the known factor \(\lambda_\star^{r_L+r_R}\).

For a phase vector \(Y\in\Sigma_H=\{+,-\}^{V(H)}\), write
\(Z^{\mathrm{fix}}_{H^G}(Y)\) for the contribution from fixed-size independent
sets whose induced phase vector is \(Y\), and write \(Z_{\widehat H^G}(Y)\) for
the unconstrained contribution on \(\widehat H^G\) from independent sets with
phase vector \(Y\). For a terminal occupation pattern
\(\tau\in\Omega_T:=\{0,1\}^T\), define
\(Z^{\mathrm{fix}}_{\widehat H^G}(Y,\tau)\) and
\(Z_{\widehat H^G}(Y,\tau)\) analogously.

We begin by decomposing the fixed-size partition function according to the induced phase vector:
\[
Z^{\mathrm{fix}}_{H^G}
=
\sum_{Y\in \Sigma_H} Z^{\mathrm{fix}}_{H^G}(Y)
=
\sum_{Y:D(Y)=D_0} Z^{\mathrm{fix}}_{H^G}(Y)
+
\sum_{Y:D(Y)\neq D_0} Z^{\mathrm{fix}}_{H^G}(Y).
\]

Fix a phase vector $Y$. Since the only edges connecting distinct gadget copies are the edges in \(\mathcal E\), we may decompose \(Z^{\mathrm{fix}}_{H^G}(Y)\) according to the occupation pattern on the global terminal set \(T\):

\begin{align}
    Z^{\mathrm{fix}}_{H^G}(Y)
    &=
    \sum_{\tau\in \Omega_T}
    Z^{\mathrm{fix}}_{\widehat H^G}(Y,\tau)
    \prod_{uv\in \mathcal E}(1-\tau_u\tau_v) \nonumber\\
    &=
    \sum_{\tau\in \Omega_T}
    Z_{\widehat H^G}(Y,\tau)\,
    \mu_{\widehat H^G,Y,\tau}\!\left(\mathbf X=(r_L,r_R)\right)
    \prod_{uv\in \mathcal E}(1-\tau_u\tau_v).
    \label{eq:Z_fix_HGY_to_Z_HatHGY}
\end{align}
since \(\tau\) is compatible with the inter-gadget edges exactly when no edge
\(uv\in\mathcal E\) has both endpoints occupied. Here,
\(\mu_{\widehat H^G,Y,\tau}\) is the hard-core measure on \(\widehat H^G\)
conditioned on the phase vector being \(Y\) and the terminal pattern being
\(\tau\).

Now, by Lemma~\ref{lem:fix_poly_penalty} and Lemma~\ref{lem:fix_exp_penalty}, there is a constant \(C>0\) such that for every sufficiently large \(n\) the following bounds hold uniformly in \(Y\) and \(\tau\):
\[
    \mu_{\widehat H^G,Y,\tau}\!\left(\mathbf X=(r_L,r_R)\right)
    \le 
    \frac{C}{nh}
    \qquad\text{whenever } D(Y)=D_0,
\]
whereas
\[
    \mu_{\widehat H^G,Y,\tau}\!\left(\mathbf X=(r_L,r_R)\right)
    \le 
    \exp(-\Omega(n/h))
    \qquad\text{whenever } D(Y)\neq D_0.
\]

Hence, for the good phase vectors \(Y\) satisfying \(D(Y)=D_0\),
\[
    Z^{\mathrm{fix}}_{\widehat H^G}(Y,\tau)
    \le
    \frac{C}{nh}\,
    Z_{\widehat H^G}(Y,\tau).
\]
Substituting this into \eqref{eq:Z_fix_HGY_to_Z_HatHGY} and summing over \(Y\) with \(D(Y)=D_0\) gives
\begin{align} \label{ineq:upper_bound_sum_Zfix_HGY}
    \sum_{Y:D(Y)=D_0} Z^{\mathrm{fix}}_{H^G}(Y)
    \le
    \frac{C}{nh}
    \sum_{Y:D(Y)=D_0}
    \sum_{\tau\in\Omega_T}
    Z_{\widehat H^G}(Y,\tau)\,
    \prod_{uv \in \mathcal E}(1-\tau_u\tau_v).
\end{align}

Now we apply the phase-conditioned nearly-independent terminal law to upper bound \(Z_{\widehat H^G}(Y,\tau)\) for each such \(Y\). By Lemma~\ref{lem:all_prelim}(i) and the fact that \((1+\BigO(n^{-2\theta}))^h=1+o(1)\), we have
\[
    Z_{\widehat H^G}(Y,\tau)
    =
    Z_{\widehat H^G}(Y)\,
    \mu_{\widehat H^G,Y}(\sigma_T=\tau)
    =
    Z_{\widehat H^G}(Y)(1+o(1))\,Q_T^Y(\tau).
\]

Absorbing the \(1+o(1)\) factor into the constant \(C\), and by the terminal-compatibility calculation from Section~\ref{sec:hardness}, 
\begin{align*}
\sum_{Y:D(Y)=D_0} Z^{\mathrm{fix}}_{H^G}(Y)
    &\le
    \frac{C}{nh}
    \sum_{Y:D(Y)=D_0}
    Z_{\widehat H^G}(Y)
    \sum_{\tau\in\Omega_T}
    Q_T^Y(\tau)
    \prod_{uv \in \mathcal E}(1-\tau_u\tau_v) \\
    &\leq
    \frac{C}{nh}
    \sum_{Y:D(Y)=D_0}
    Z_{\widehat H^G}(Y)
    \Gamma^{2k|E(H)|}
    \left(\frac{\Theta}{\Gamma}\right)^{2k\mathrm{cut}(Y)}.
\end{align*}
Since \(D(Y)=D_0\), the phase vector \(Y\) has exactly \(j_0=\gamma h\) plus
phases, and hence encodes a feasible solution to \(\gamma\)-\textsc{MEBC}. By
definition of \(b=b_\gamma(H)\), we have \(\mathrm{cut}(Y)\ge b\) and hence
\[
    \sum_{Y:D(Y)=D_0} Z^{\mathrm{fix}}_{H^G}(Y)
    \le
    \frac{C}{nh}\, \Gamma^{2k|E(H)|}
    \left(\frac{\Theta}{\Gamma}\right)^{2kb}
    Z_{\widehat H^G}.
\]

It remains to control the contribution of phase vectors \(Y\) with \(D(Y)\neq D_0\). By Lemma~\ref{lem:fix_exp_penalty}, for every such \(Y\) and every terminal pattern \(\tau\in\Omega_T\), we have
\begin{align}
    \mu_{\widehat H^G,Y,\tau}\!\left(\mathbf X=(r_L,r_R)\right)
    \le
    \exp(-\Omega(n/h)).
\end{align}
Consequently,
\begin{align*}
    \sum_{Y:D(Y)\neq D_0} Z^{\mathrm{fix}}_{H^G}(Y)
    &=
    \sum_{Y:D(Y)\neq D_0}
    \sum_{\tau\in\Omega_T}
    Z_{\widehat H^G}(Y,\tau)\,
    \mu_{\widehat H^G,Y,\tau}\!\left(\mathbf X=(r_L,r_R)\right)
    \prod_{uv\in \mathcal E}(1-\tau_u\tau_v)
    \\
    &\le
    \exp(-\Omega(n/h))
    \sum_{Y:D(Y)\neq D_0}
    \sum_{\tau\in\Omega_T}
    Z_{\widehat H^G}(Y,\tau)
    \prod_{uv\in \mathcal E}(1-\tau_u\tau_v)
    \\
    &\le
    \exp(-\Omega(n/h))
    \sum_{Y:D(Y)\neq D_0}
    \sum_{\tau\in\Omega_T}
    Z_{\widehat H^G}(Y,\tau)
    \\
    &\le
    \exp(-\Omega(n/h)) Z_{\widehat H^G}.
\end{align*}
Thus the total contribution of the phase vectors with \(D(Y)\neq D_0\) is exponentially small. Combining this with the estimate for the phase vectors satisfying \(D(Y)=D_0\) gives
\begin{equation}\label{eq:upper_total_HG_fix}
    Z^{\mathrm{fix}}_{H^G}
    \le
    \left(\frac{C}{nh}\, \Gamma^{2k|E(H)|}
    \left(\frac{\Theta}{\Gamma}\right)^{2kb} + e^{-\Omega(n/h)}\right) Z_{\widehat H^G}.
\end{equation}

We now prove the matching lower bound. 
Let \(Y^\star\in\Sigma_H\) be any phase vector with
\(D(Y^\star)=D_0\) such that \(\mathrm{cut}(Y^\star)=b\), i.e. \(Y^\star\)
encodes a minimum \(\gamma\)-fixed-cardinality cut of \(H\). By
Lemma~\ref{lem:fix_poly_penalty}, for every \(\tau\in\Omega_T\),
\begin{align}
    \mu_{\widehat H^G,Y^\star,\tau}\!\left(\mathbf X=(r_L,r_R)\right)
    \ge
    \frac{e^{-O(h\log n)}}{nh}.
\end{align}
Using the phase-conditioned terminal law as before, we obtain the following lower bound
\begin{align} \label{eq:lower_total_HG_fix}
    Z^{\mathrm{fix}}_{H^G}
    &\ge
    Z^{\mathrm{fix}}_{H^G}(Y^\star) \nonumber\\
    &=
    \sum_{\tau\in\Omega_T}
    Z^{\mathrm{fix}}_{\widehat H^G}(Y^\star,\tau)
    \prod_{uv\in \mathcal E}(1-\tau_u\tau_v) \nonumber\\
    &=
    \sum_{\tau\in\Omega_T}
    Z_{\widehat H^G}(Y^\star,\tau)\,
    \mu_{\widehat H^G,Y^\star,\tau}\!\left(\mathbf X=(r_L,r_R)\right)
    \prod_{uv\in \mathcal E}(1-\tau_u\tau_v) \nonumber\\
    &\ge
    \frac{e^{-O(h\log n)}}{nh}
    \sum_{\tau\in\Omega_T}
    Z_{\widehat H^G}(Y^\star,\tau)\,
    \prod_{uv\in \mathcal E}(1-\tau_u\tau_v) \nonumber\\
    &\ge
    \frac{(1-o(1))e^{-O(h\log n)}}{nh}
    Z_{\widehat H^G}(Y^\star)
    \sum_{\tau\in\Omega_T}
    Q_T^{Y^\star}(\tau)
    \prod_{uv\in \mathcal E}(1-\tau_u\tau_v) \nonumber\\
    &\ge
    \frac{e^{-O(h\log n)}}{nh}\,
    Z_{\widehat H^G}(Y^\star)\,
    \Gamma^{2k|E(H)|}
    \left(\frac{\Theta}{\Gamma}\right)^{2kb}.
\end{align}

Next we compare these bounds with the fixed-size partition function on \(\widehat H^G\).
Because $\widehat H^G$ consists of identical gadget copies, whenever
$D(Y)=D_0$ we have the exact identity
\[
    Z_{\widehat H^G}(Y)
    =(1+\lambda_\star)^s Z_{G,+}^{j_0}Z_{G,-}^{h-j_0}.
\]
Denote this common value by $W_0$.
Since there are \(\binom{h}{j_0}\) such phase vectors, Lemma~\ref{lem:fix_poly_penalty} and the estimate
for \(D(Y)\neq D_0\) imply
\begin{equation}\label{eq:hatHG_fix_comparison}
    \frac{e^{-O(h\log n)}}{nh} \binom{h}{j_0} W_0
    \le
    Z^{\mathrm{fix}}_{\widehat H^G}
    \le
    \frac{e^{O(h\log n)}}{nh} \binom{h}{j_0} W_0.
\end{equation}
Here the contribution of \(D(Y)\neq D_0\) is absorbed into the upper bound,
because it is at most \(e^{-\Omega(n/h)}Z_{\widehat H^G}\), while
\(Z_{\widehat H^G}\le e^{O(h\log n)}\binom{h}{j_0}W_0\) by
the ratio bound \textup{(8)} in Lemma~\ref{lem:all_prelim}, and \(n/h\gg h\log n\).

As \(Z_{\widehat H^G}(Y)=W_0\) for every \(Y\) with \(D(Y)=D_0\), the upper bound for the contribution of such phase vectors gives
\[
    \sum_{Y:D(Y)=D_0} Z^{\mathrm{fix}}_{H^G}(Y)
    \le
    \frac{C}{nh}
    \binom{h}{j_0}
    W_0\,
    \Gamma^{2k|E(H)|}
    \left(\frac{\Theta}{\Gamma}\right)^{2kb}.
\]
Together with the estimate for \(D(Y)\neq D_0\), and using
$kh^2+h\log n=o(n/h)$ to absorb that contribution, this gives
\[
    Z^{\mathrm{fix}}_{H^G}
    \le
    \frac{e^{O(h\log n)}}{nh}
    \binom{h}{j_0}
    W_0\,
    \Gamma^{2k|E(H)|}
    \left(\frac{\Theta}{\Gamma}\right)^{2kb}.
\]
Since \(Z_{\widehat H^G}(Y^\star)=W_0\), dividing
\eqref{eq:lower_total_HG_fix} and the preceding upper bound by
\eqref{eq:hatHG_fix_comparison} yields
\begin{equation}\label{eq:ratio_bounds_fix}
e^{-O(h\log n)}
\Gamma^{2k|E(H)|}
\left(\frac{\Theta}{\Gamma}\right)^{2kb}
\le
\frac{Z^{\mathrm{fix}}_{H^G}}{Z^{\mathrm{fix}}_{\widehat H^G}}
\le
e^{O(h\log n)}
\Gamma^{2k|E(H)|}
\left(\frac{\Theta}{\Gamma}\right)^{2kb}.
\end{equation}
Taking logarithms gives
\[
\log Z^{\mathrm{fix}}_{H^G}
-
\log Z^{\mathrm{fix}}_{\widehat H^G}
=
2k|E(H)|\log \Gamma
+
2kb \log\!\left(\frac{\Theta}{\Gamma}\right)
+
O(h\log n).
\]
Since \(\Theta<\Gamma\), we have \(\log(\Gamma/\Theta)>0\), and therefore
\[
b
=
\frac{
\log Z^{\mathrm{fix}}_{\widehat H^G}
-
\log Z^{\mathrm{fix}}_{H^G}
+
2k|E(H)|\log \Gamma
}{
2k\log(\Gamma/\Theta)
}
+
O\!\left(\frac{h\log n}{k}\right).
\]

By construction \(k=\lfloor n^{3\theta/4}\rfloor\) and
\(h=\Theta(n^{\theta/4})\), so \(h\log n/k=o(1)\). Amplify the two FPRAS calls so
that, jointly with probability at least $9/10$, both have a fixed small
relative error. This gives an \(O(1)\) additive error in each logarithm, and
therefore an \(o(1)\) error in the recovered value of \(b_\gamma(H)\). Hence one
can recover the integer \(b_\gamma(H)\) by rounding. On the fixed slice, the
\(\lambda_\star\)-weighted partition functions differ from the corresponding
unweighted fixed-slice counts by the known factor
\(\lambda_\star^{r_L+r_R}\), so an FPRAS for
\(\mathrm{FixedSlice}(\alpha_L,\alpha_R)\) would give the required
approximations. Therefore such an FPRAS would imply a randomized polynomial-time
algorithm for \(\gamma\)-\textsc{MEBC}: together with the gadget event, its
success probability is at least $(89/100)(9/10)>1/2$, and standard repetition
amplifies it. Standard search-to-decision followed by deterministic
verification yields an RP algorithm. Thus no such FPRAS exists unless
$\mathbf{NP}=\mathbf{RP}$.

On the same event $\mathcal G$, the same estimates rule out an efficient
sampler. Under the exact fixed-slice measure on $H^G$,
\[
\mu^{\mathrm{fix}}_{H^G}(D(Y)\ne D_0)
\le \exp\!\left(-\Omega(n/h)+O(kh^2+h\log n)\right)=o(1).
\]
Among phase vectors with $D(Y)=D_0$, increasing $\mathrm{cut}(Y)$ by one multiplies the terminal-compatibility factor by $(\Theta/\Gamma)^{2k}$, while
all phase-counting, point-probability, and gadget-ratio losses contribute only
$e^{O(h\log n)}$. Hence
\[
\mu^{\mathrm{fix}}_{H^G}(D(Y)=D_0,\ \mathrm{cut}(Y)>b)
\le e^{O(h\log n)}(\Theta/\Gamma)^{2k}=o(1),
\]
since $k\gg h\log n$. Given a sampled independent set, the phase vector
is computed from~\eqref{eq:phase_of_gadget} on each copy $G_x$, tossing an
independent fair auxiliary sign whenever that copy is tied.
Therefore an efficient sampler within total variation distance,
say, $1/10$ from the fixed-slice distribution would output a minimum
\(\gamma\)-cardinality cut with probability bounded away from zero. Repetition would solve \(\gamma\)-\textsc{MEBC} in randomized polynomial time. This completes the proof.
\end{proof}

\section*{Statement of AI use}

ChatGPT 5.5 Plus was used for checking mathematical proofs and for assistance with the calculations in the proof of Theorem~\ref{thm:ssm}. The authors assume responsibility for all content.

\section*{Acknowledgments}

WP supported in part by NSF grant CCF-2309708.

\end{document}